\begin{document}

\title{Sparse Spikes Deconvolution on Thin Grids}

\author{Vincent Duval$^1$, Gabriel Peyr{\'e}$^2$}
\address{$^1$ INRIA, MOKAPLAN}
\ead{vincent.duval@inria.fr}
\address{$^2$ CNRS and CEREMADE, Universit\'e Paris Dauphine}
\ead{gabriel.peyre@ceremade.dauphine.fr}




\begin{abstract}
This article analyzes the recovery performance of two popular finite dimensional approximations of the sparse spikes deconvolution problem over Radon measures. 
We examine in a unified framework both the $\ell^1$ regularization (often referred to as \lasso or Basis-Pursuit) and the Continuous Basis-Pursuit (C-BP) methods. The \lasso is the de-facto standard for the sparse regularization of inverse problems in imaging. It performs a nearest neighbor interpolation of the spikes locations on the sampling grid. The C-BP method,  introduced by Ekanadham, Tranchina and Simoncelli, uses a linear interpolation of the locations to perform a better approximation of the infinite-dimensional optimization problem, for positive measures. 
We show that, in the small noise regime, both methods estimate twice the number of spikes as the number of original spikes. Indeed, we show that they both detect two neighboring spikes around the locations of an original spikes.
These results for deconvolution problems are based on an abstract analysis of the so-called extended support of the solutions of $\ell^1$-type problems (including as special cases the \lasso and C-BP for deconvolution), which are of an independent interest. They precisely characterize the support of the solutions when the noise is small and the regularization parameter is selected accordingly. We illustrate these findings to analyze for the first time the support instability of compressed sensing recovery when the number of measurements is below the critical limit (well documented in the literature) where the support is provably stable.

\end{abstract}



\section{Introduction}

We consider the problem of estimating an unknown Radon measure $m_0 \in \Mm(\TT)$ from low-resolution noisy observations
\eql{\label{eq-observ-measures}
	y=\Phi(m_0)+\noise \in \Ldeux(\TT)
} 
where $\noise \in \Ldeux(\TT)$ is some measurement noise, and $\Phi : \Mm(\TT) \rightarrow \Ldeux(\TT)$ is an integral transform with smooth kernel $\phi \in \Cont^2(\TT \times \TT)$, i.e.
\eql{\label{eq-def-phi}
	\foralls x \in \TT, \quad
	(\Phi m)(x) = \int_{\TT} \phi(x,y) \d m(y).
}
A typical example of such an operation is the convolution, where $\varphi(x,y)=\tilde{\varphi}(x-y)$ for some smooth function $\tilde{\varphi}$ defined on the torus $\TT=\RR/\ZZ$ (i.e. an interval with periodic boundary conditions).
We focus our attention here for simplicity on the compact 1-D domain $\TT$, but the algorithms considered (\lasso and C-BP) as well as our theoretical analysis can be extended to higher dimensional settings (see Section~\ref{sec-extensions}).

\subsection{Sparse Regularization}

The problem of inverting~\eqref{eq-observ-measures} is severely ill-posed. A particular example is when $\Phi$ is a low pass filter, which is a typical setting for many problems in imaging. In several applications, it makes sense to impose some sparsity assumption on the data to recover. This idea has been introduced first in the geoseismic literature, to model the layered structure of the underground using sparse sums of Dirac masses~\cite{Claerbout-geophysics}. Sparse regularization has later been studied by David Donoho and co-workers, see for instance~\cite{Donoho-superresol-sparse}.  

In order to recover sparse measures (i.e. sums of Diracs), it makes sense to consider the following regularization
\eql{\label{eq-blasso}
	\umin{m \in \Mm(\TT)} \frac{1}{2}\norm{y-\Phi(m)}^2 + \la |m|(\TT)
}
where $|m|(\TT)$ is the total variation of the measure $m$, defined as
\eql{\label{eq-totalvariation}
	|m|(\TT) \eqdef \sup \enscond{ \int_\TT \psi(x) \d m(x) }{ \psi \in \Cont(\TT), \: \normi{\psi} \leq 1 }.
} 
This formulation of the recovery of sparse Radon measures has recently received lots of attention in the literature, see for instance the works of~\cite{Bredies-space-measures,deCastro-beurling,Candes-toward}.

\subsection{\lasso}

The optimization problem~\eqref{eq-blasso} is convex but infinite dimensional, and while there exists solvers when $\Phi$ is measuring a finite number of Fourier frequency (see~\cite{Candes-toward}), they do not scale well with the number of frequencies. Furthermore, the case of an arbitrary linear operator $\Phi$ is still difficult to handle, see~\cite{Bredies-space-measures} for an iterative scheme. The vast majority of practitioners thus approximate~\eqref{eq-blasso} by a finite dimensional problem computed over a finite grid $\Gg \eqdef \enscond{z_i}{i\in \seg{0}{P-1}} \subset \TT$, by restricting their attention to measures of the form
\eq{
  m_{a,\Gg} \eqdef \sum_{i=0}^{P-1} a_i \de_{z_i} \in \Mm(\TT).
}
For such a discrete measure, one has $|m|(\TT) =\sum_{i=0}^{|\Gg|-1} |a_i|  =\norm{a}_1$, which can be interpreted as the fact that $|\cdot|(\TT)$ is the natural extension of the $\ell^1$ norm from finite dimensional vectors to the infinite dimensional space of measures. 
Inserting this parametrization in~\eqref{eq-blasso} leads to the celebrated Basis-Pursuit problem~\cite{chen1999atomi}, which is also known as the \lasso method in statistics~\cite{tibshirani1996regre}, 
\eql{\label{eq-bp}
	\umin{a \in \RR^N} \frac{1}{2}\norm{y-\Phi_\Gg a}^2 + \la \norm{a}_1
}
where in the following we make use of the notations
\begin{align}\label{eq-phix}
  \Phi_\Gg a \eqdef \Phi(m_{a,\Gg}) = \sum_{i=0}^{\taillegrid-1} a_i \phi(\cdot,z_i), \\
  \Phi'_\Gg b \eqdef \Phi'(m_{b,\Gg}) = \sum_{i=0}^{\taillegrid-1} b_i \partial_2 \phi(\cdot,z_i), 
\end{align}
and $\partial_2 \phi$ is the derivative with respect to the second variable. 
One can understand~\eqref{eq-bp} as performing a nearest neighbor interpolation of the Dirac's locations. 

This approximation is however quite crude, and we recently showed in~\cite{2013-duval-sparsespikes} that it leads to imperfect estimation of both the number of spikes and their locations.  Indeed, this problem typically recovers up to twice as many spikes as the input measures, because spikes of $m_0$ gets duplicated as the two nearest neighbors on the grid $\Gg$.

Note that while we focus in this paper on convex recovery method, and in particular $\ell^1$-type regularization, there is a vast literature on the subject, which makes use of alternative algorithms, see for instance~\cite{Odendaal-music,Blu-fri} and the references therein.

\subsection{Continuous Basis-Pursuit (C-BP)}

To obtain a better approximation of the infinite dimensional problem, \cite{Ekanadham-CBP} proposes to perform a first order approximation of the kernel. This method assumes that the unknown measure is positive. 
To ease the exposition, we consider a uniform grid $\Gg \eqdef \enscond{i/\taillegrid}{i\in\seg{0}{\taillegrid-1}}$ of $\taillegrid$ points, so that the grid size is $\stepsize \eqdef 1/\taillegrid$. 
The C-BP method of~\cite{Ekanadham-CBP} solves
\eql{\label{eq-c-bp-intro}
	\umin{(a,b) \in \RR^\taillegrid \times \RR^\taillegrid}
		\frac{1}{2} \norm{ y - \Phi_\Gg a - \Phi'_\Gg b }^2 + \la \norm{a}_1
		\quad\text{subject to}\quad
		|b| \leq \frac{\stepsize}{2} a, 
}
where the inequality should be understood component-wise. Note also that the obtained $a$ is always nonnegative, hence the C-BP method is tailored for the recovery of positive measures.
This is a convex optimization problem, which can be solved using traditional conic optimization methods.
As detailed in Section~\ref{sec-cbp-another-param}, this problem can also be re-cast as a \lasso in dimension $2P$ with positivity constraints (see Section~\ref{sec-cbp-another-param}). Hence it can be solved using a large variety of first order proximal method, the most simple one being the Forward-Backward, see~\cite{BauschkeCombettes11} and the references therein.


If $(a^\star,b^\star)$ are solutions of~\eqref{eq-c-bp-intro}, one recovers an output discrete measure defined by
\eql{\label{eq-defn-recov-measure}
	m^\star = \sum_{a^\star_i \neq 0} a^\star_i \de_{x_i^\star}
	\qwhereq
		x_i^\star \eqdef i\stepsize + \frac{b_i^\star}{a_i^\star}, 
}
where we set $\frac{b_i^\star}{a_i^\star}=0$ whenever $a_i^\star=0$.
The rationale behind~\eqref{eq-c-bp-intro} is to perform a first order Taylor approximation of the operator $\Phi$, where the variable $\tau_i \eqdef  b_i / a_i \in [-h/2,h/2]$ encodes the horizontal shift of the Dirac location with respect to the grid sample $i\stepsize$. The landmark idea introduced in~\cite{Ekanadham-CBP} is that, while the optimization is non-convex with respect to the pair $(a,\tau)$, it is convex with respect to the pair $(a,b)$. 


\subsection{Extensions}
\label{sec-extensions}

While we restrict here the exposition to 1-D problems, the C-BP formulation~\eqref{eq-c-bp-intro} can be extended to cope with measures in arbitrary dimension $d \geq 1$, i.e. to consider $m_0 \in \Mm(\TT^d)$. This requires to define at each sampling grid point indexed by $i$ a vector $b_i = (b_{i,k})_{k=1}^d \in \RR^d$ together with the constraint $\normi{b_i} \leq \frac{h}{2} a_i$, and also to use a matrix $\Phi_\Gg'$ defined as 
\eq{
  \Phi_\Gg' b \eqdef \sum_{i\in\Gg} \sum_{k=1}^d b_{i,k} \partial_{k}  \phi(\cdot,x_i) \in \Ldeux(\TT^d)
} 
where $\partial^{k}$ denote the differential operator with respect to the $k^{\text{th}}$ direction in $\RR^d$. Our analysis carries over to this setting without major difficulties. 

The paper~\cite{Ekanadham-CBP} also proposes other interpolation schemes than a first order Taylor expansion at the grid points. In particular, they develop a ``polar'' interpolation which makes use of two adjacent grid points. This method seems to outperform the linear interpolation in practice, and has been employed to perform spikes sorting in neuronal recordings~\cite{Ekanadham-Neuro}. 

Extending the results we propose in the present paper to these higher dimensional settings and alternative interpolation schemes is an interesting avenue for future work.  

Let us also mention that an important problem is to extend the C-BP method~\eqref{eq-defn-recov-measure} to measures with arbitrary signs and that can even be complex-valued. Unfortunately, the corresponding constraint $|b| \leq |a|$ is then non-convex, which makes the mathematical analysis apparently much more involved. A non-convex and non-smooth optimization solver is proposed for this problem in~\cite{FLORESCU-CBP-NonCvx}, and shows promising practical performance for spectrum estimation.

\subsection{Previous Works}

Most of the early work to assess the performance of convex sparse regularization has focussed its attention on the finite dimensional case, thus considering only the \lasso problem~\eqref{eq-bp}. While the literature on this subject is enormous, only very few works actually deal with deterministic and highly correlated linear operators such as low-pass convolution kernels. The initial works of Donoho~\cite{Donoho-superresol-sparse} study the Lipschitz behavior of the inverse map $y \mapsto a^\star$, where $a^\star$ is a solution of~\eqref{eq-bp}, as a function of the bandwidth of the bandpass filter. The first work to address the question of spikes identification (i.e. recovery of the exact location of the spikes over a discrete grid) is~\cite{DossalMallat}. This work uses the analysis of $\ell^1$ regularization introduced by Fuchs in~\cite{fuchs2004on-sp}. This type of analysis ensures that the support of the input measure is stable under small noise perturbation of the measurements. Our finding is that this is however never the case (support is always unstable) when the grid is thin enough, and we thus introduce the notion of ``extended support'', which is in some sense the smallest extension of the support which is stable. The idea of extending the support to study the recovery performance of $\ell^1$ methods can be found in the work of Dossal~\cite{dossal2011necessary} who focusses on noiseless recovery and stability in term of $\ell^2$ error. 

Recently, a few works have studied the theoretical properties of the recovery over measures~\eqref{eq-blasso}. Cand\`es and Fernandez-Granda show in~\cite{Candes-toward} that this convex program does recover exactly the initial sparse measure when $w=0$ and $\la \rightarrow 0$, if the spikes are well-separated. The robustness to noisy measurements is analyzed by the same authors in~\cite{Candes-superresol-noisy} using an Hilbertian norm, and in~\cite{Fernandez-Granda-support,Azais-inaccurate} in terms of spikes localization. The work of~\cite{Tang-linea-spectral} analyzes the reconstruction error. Lastly, \cite{2013-duval-sparsespikes} provides a condition ensuring that~\eqref{eq-blasso} recovers the same number of spikes as the input measure and that the error in terms of spikes localization and elevation has the same order as the noise level. 

Very few works have tried to bridge the gap between these grid-free methods over the space of measures, and finite dimensional discrete approximations that are used by practitioners. The convergence (in the sense of measures) of the solutions of the discrete problem toward to ones of the grid-free problem is shown in~\cite{TangConvergence}, where a speed of convergence is shown using tools from semi-infinite programming~\cite{StillDiscretizationSDP}. The same authors show in~\cite{Bhaskar-line-spectral} that the discretized problem achieves a similar prediction $L^2$ error as the grid-free method. In~\cite{2013-duval-sparsespikes}, we have shown that solutions of the discrete \lasso problem estimate in general as much as twice the number of spikes as the input measure. We detail in the following section how the present work gives a much more precise and general analysis of this phenomenon.

\subsection{Contributions}

Our first contribution is an improvement over the known analysis of the \lasso in an abstract setting (that is~\eqref{eq-bp} when $\Phi_\Gg$ is replaced with any linear operator $\RR^\taillegrid\rightarrow L^2(\RR)$). Whereas Fuchs' result~\cite{fuchs2004on-sp} characterizes the support recovery of the \lasso at low noise, our previous work~\cite{2013-duval-sparsespikes} has pointed out that when Fuchs'criterion is not satisfied, the nonzero components of the solutions of the Basis-Pursuit at low noise are contained in the \textit{extended support}, that is the saturation set of some \textit{minimal norm dual certificate}. In this work, we provide a characterization of this minimal norm certificate, and we give a sufficient condition which holds generically and which ensures that all the components of the extended support are actually nonzero (with a prediction on the signs). Our main result in this direction is Theorem~\ref{thm-stability-dbp}.

Our second contribution applies this result to Problem~\eqref{eq-bp} on thin grids. After recalling the convergence properties of Problem~\eqref{eq-bp} towards~\eqref{eq-blasso}, we show that under some assumption, if the input measure $m_0 = m_{\al_0,x_0}= \sum_{\nu=1}^N \alpha_{0,\nu}\delta_{x_{0,\nu}}$ has support on the grid (\ie $x_{0,\nu}\in\Gg$ for all $\nu$), the model  at low noise actually reconstructs pairs of Dirac masses, i.e. solutions of the form
\begin{align}
  m_\la=\sum_{\nu=1}^N \left(\alpha_{\la,\nu}\delta_{x_{0,\nu}}+ \beta_{\la,\nu}\delta_{x_{0,\nu}+\varepsilon_\nu\stepsize} \right),\qwhereq & \varepsilon_\nu\in\{-1,+1\},\\
\qandq  \sign(\alpha_{\la,\nu})=\sign(\beta_{\la,\nu})&=\sign(\alpha_{0,\nu}).
\end{align}
The precise statement of this result can be found in Theorem~\ref{thm-lasso-extended}.
Compared to~\cite{2013-duval-sparsespikes} where it is predicted that spikes could appear at most in pairs, this result states that all the pairs do appear, and it provides a closed-form expression for the shift $\varepsilon$. That closed-form expression does not vary as the grid is refined, so that the side on which each neighboring spike appears is in fact intrinsic to the measure, we call it the \textit{natural shift}. Moreover, we characterize the low noise regime as $\frac{\norm{w}_2}{\la}=O(1)$ and $\la = O(\stepsize)$.

Then, we turn to the Continuous Basis-Pursuit~\eqref{eq-c-bp-intro}. We first study this problem in an abstract setting, where it is reformulated as a \lasso with positivity constraints. We derive similar noise robustness properties as for the \lasso, and we characterize the extended support, see Theorem~\ref{thm-abstract-cbp}.  Working on a thin grid, we show the $\Gamma$-convergence of Problem~\eqref{eq-c-bp-intro} towards the Beurling \lasso~\eqref{eq-blasso} with positivity constraints and we give a fine analysis of the support of the solutions as the grid stepsize tends to zero. We also study the low noise behavior when the measure has support on the grid: under a suitable assumption, the recovered spikes appear again in pairs,
\eq{
  	m_\la = \sum_{\nu=1}^N \left(
		\alpha_{\la,\nu}\delta_{x_{0,\nu}+t_\nu}+ \beta_{\la,\nu}\delta_{x_{0,\nu}+\varepsilon_\nu\stepsize/2} 
		\right)
		\qwhereq 
		\choice{ 
			\varepsilon_\nu\in\{-1,+1\},\\
			-\stepsize/2<t_\nu<\stepsize/2,
		}
}
see Theorem~\ref{thm-cbpasso-extended}.
A closed form expression for $\varepsilon$ is given, which depends on some corresponding \textit{natural shift} intrinsic to the measure (which differs from the one of the \lasso). The corresponding low noise regime is characterized by $\frac{\norm{w}_2}{\la}=O(1)$ and $\la = O(\stepsize^3)$.

It is important to realize that, in this setting of convolution on thin grids, our contributions give important information about the structure of the recovered spikes when the noise $\noise$ is small. This is especially important since, on contrary to common belief, the spikes locations for \lasso and C-BP are not stable: even for an arbitrary small noise $\noise$, neither methods retrieve the correct input spikes locations.

Eventually, we illustrate in Section~\ref{sec-numerics} these theoretical results with numerical experiments. We first display the evolution of the solution path $\la \mapsto a_\la$ (a solution of~\eqref{eq-bp}) and $\la \mapsto (a_\la,b_\la)$ (a solution of~\eqref{eq-c-bp-intro}). These paths are piecewise-affine, and our contributions (Theorems~\ref{thm-lasso-extended} and \ref{thm-cbpasso-extended}) precisely characterize the first affine segment of these paths, which perfectly matches the numerical observations. We then illustrate our abstract analysis of the \lasso problem~\eqref{eq-bp} (as provided by Theorem~\ref{thm-stability-dbp}) to characterize numerically the behavior of the \lasso for compressed sensing (CS) recovery (i.e. when one replaces the filtering $\Phi_\Gg$ appearing in~\eqref{eq-bp} with a random matrix). The literature on CS only describes the regime where enough measurements are available so that the support is stable, or does not study support stability but rather $\ell^2$ stability. Theorem~\ref{thm-stability-dbp} allows us to characterize numerically how much the support becomes unstable (in the sense that the extended support's size increases) as the number of measurements decreases (or equivalently the sparsity increases).

\subsection{Notations and preliminaries}

The set of Radon measures (resp. positive Radon measures) is denoted by $\Mm(\TT)$ (resp. $\Mm^+(\TT)$). Endowed with the total variation norm~\eqref{eq-totalvariation}, $\Mm(\TT)$ is a Banach space. Another useful topology on $\Mm(\TT)$ is the weak* topology: a sequence of measures $(m_n)_{n\in\NN}$ weak* converges towards $m\in \Mm(\TT)$ if and only if for all $\psi\in C(\TT)$, $\lim_{n\to+\infty}\int_\TT \psi \d m_n= \int_\TT \psi \d m$.
Any bounded subset of $\Mm(\TT)$ (for the total variation) is relatively sequentially compact for the weak* topology. Moreover the topology induced by the total variation is stronger than the weak* topology, and the total variation is sequentially lower semi-continuous for the weak* topology. 
Throughout the paper, given $\alpha\in\RR^N$ and $x_0\in\TT^N$, the notation $m_{\alpha,x_0} \eqdef \sum_{\nu=1}^N \alpha_\nu \delta_{x_{0,\nu}}$ hints that $\alpha_\nu\neq 0$ for all $\nu$  (contrary to the notation $m_{a,\Gg}$), and that the $x_{0,\nu}$'s are pairwise distinct.

The properties of $\Phi:\Mm(\TT)\rightarrow L^2(\TT)$ and its adjoint are recalled in Proposition~\ref{lem-phi-compact} in Appendix. The $\infty,2$-operator norm of $\Phi^*:L^2(\TT)\rightarrow  \Cont(\TT)$ is defined as $\norm{\Phi^*}_{\infty,2}\eqdef \sup\enscond{\norm{\Phi^*w}_\infty}{w\in L^2(\TT),  \norm{w}_{L^2}\leq 1}$ (and the $\infty,2$ operator norm of a matrix is defined similarly).
Given a vector $x_0\in\TT^N$, $\Phi_{x_0}$ refers to the linear operator $\RR^N\rightarrow L^2(\TT)$, with 
\begin{align*}
\forall \alpha\in\RR^N,\quad   \Phi_{x_0} \alpha \eqdef \Phi(m_{\alpha,x_0}) = \sum_{\nu=1}^{N} \alpha_\nu \phi(\cdot,x_{0,\nu}).
\end{align*}
It may also be seen as the restriction of $\Phi$ to measures supported on the set $\enscond{x_{0,\nu}}{\nu\in\seg{1}{N}}$. A similar notation is adopted for $\Phi'_{x_0}$ (replacing $\phi(\cdot,x_{0,\nu})$ with $\partial_2\varphi(\cdot,,x_{0,\nu})$. The concatenation of $\Phi_{x_0}$ and $\Phi'_{x_0}$ is denoted by $\Gamma_{x_0}\eqdef \begin{pmatrix}
   \Phi_{x_0} & \Phi'_{x_0}
\end{pmatrix}$.

We shall rely on the notion of set convergence. Given a sequence $(C_n)_{n\in\NN}$ of subsets of $\TT$, we define 
\begin{align}
  \limsup_{n\to+\infty} C_n= \enscond{x\in\TT}{\liminf_{n\to+\infty} d(x,C_n)=0}\\
  \liminf_{n\to+\infty} C_n= \enscond{x\in\TT}{\limsup_{n\to+\infty} d(x,C_n)=0}
\end{align}
where $d$ is defined by $d(x,C)=\inf_{x'\in C}|x'-x|$ and $|x-x'|$ refers to the distance between $x$ and $x'$ on the torus. 
If both sets are equal, let $C$ be the corresponding set (then $C$ is necessarily closed), we write
\begin{align}
  \lim_{n\to+\infty} C_n= C.
\end{align}
If the sequence $(C_n)_{n\in\NN}$ is nondecreasing ($C_n\subset C_{n+1}$), then $\lim_{n\to\infty}C_n= \overline{\bigcup_{n\in\NN} C_n}$, and if it is nonincreasing  ($C_n\supset C_{n+1}$) then $\lim_{n\to\infty}C_n= \bigcap_{n\in\NN} \overline{C_n}$ (where $\overline{C}$ denotes the closure of $C$).
We refer the reader to~\cite{rockafellarwets} for more detail about set convergence. We shall also use this notion in Hilbert spaces, with obvious adaptations.

\section{Abstract analysis of the \lasso}
\label{sec-discrete-lasso}

The aim of this section is to study the low noise regime of the \lasso problem in an abstract finite dimensional setting, regardless of the grid stepsize. In this framework, the columns of the (finite dimensional) degradation operator need not be the samples of a continuous (\textit{e.g.} convolution) operator, and the provided analysis holds  for any general \lasso problem. We extend the initial study of Fuchs of the basis pursuit method (see~\cite{fuchs2004on-sp}) which gives the analytical expression of the solution when the noise is low and the support is stable. Here, provided we have access to a particular dual vector $\certdiscO$, we give an explicit parametrization the solutions of the basis pursuit at low noise \textit{even when the support is not stable}. This is especially relevant for the deconvolution problem since the support is not stable when the grid is thin enough.

\subsection{Notations and optimality conditions}

We consider in this section observations in an arbitrary Hilbert space $\Hh$, which might be for instance $L^2(\TT)$ (as in the previous section) or a finite dimensional vector space. 
The linear degradation operator is then denoted as $\Op:\RR^\taillegrid \rightarrow \Hh$.
Let us emphasize that in this section, for $\vecdisc\in\RR^\taillegrid$,  $\|\vecdisc\|_\infty \eqdef \max_{0\leq k\leq \taillegrid-1} |\vecdisc_k|$. 

Given an observation $y_0=\Op \vecdiscO\in \Hh$ (or $y=y_0+w$, where $w\in \Hh$), we aim at reconstructing the vector $\vecdiscO\in \RR^\taillegrid$ by solving the \lasso problem for $\la>0$,
\begin{align}
  \umin{\vecdisc \in \RR^\taillegrid} \frac{1}{2}\norm{y-\Op \vecdisc}^2 + \la \norm{\vecdisc}_1 \tag{$\Pp_\la(y)$}\label{eq-abstract-lasso}
\end{align}
 and for $\la =0$ we consider the (Basis-Pursuit) problem 
\begin{align}
  \umin{\vecdisc\in \RR^\taillegrid} \norm{\vecdisc}_1 \mbox{ such that } \Op \vecdisc=y_0.  \tag{$\Pp_0(y_0)$}\label{eq-abstract-bp}
\end{align}

If $\vecdisc\in \RR^\taillegrid$, we denote by $I(\vecdisc)$, or $I$ when the context is clear, the support of $\vecdisc$, \textit{i.e.} $I(\vecdisc) \eqdef \enscond{i\in \seg{0}{\taillegrid-1}}{\vecdisc_i\neq 0}$. Also, we let $s_I\eqdef\sign(\vecdisc_I)$, and $\suppm(\vecdisc)\eqdef\enscond{(i,s_i)}{i\in I}$ the \textit{signed support} of $\vecdisc$.

The optimality conditions for Problems~\eqref{eq-abstract-lasso} and~\eqref{eq-abstract-bp} are quite standard, as detailed in the following proposition.

\begin{prop}
  Let $y\in \Hh$, and $\vecdisc_\la \in \RR^\taillegrid$. Then $a_\la$ is a solution to \eqref{eq-abstract-lasso} if and only if there exists $p_\la\in \Hh$ such that
  \begin{align}
    \|\Op^*p_\la \|_{\infty} \leq 1,\qandq (\Op^*p_\la)_I=\sign(\vecdisc_{\la,I}),\label{eq-optimal-subdiff}\\
    \la  \Op^*p_\la + \Op^*(\Op \vecdisc_\la-y) =0.\label{eq-optimal-lagrange}
  \end{align}

  Similarly, if $\vecdiscO\in \RR^\taillegrid$, then $\vecdiscO$ is a solution to~\eqref{eq-abstract-bp} if and only if $\Op \vecdiscO=y_0$ and there exists $p\in \Hh$ such that.
\begin{align}
  \|\Op^*p\|_\infty \leq 1 \qandq (\Op^* p)_{I}=\sign(\vecdisc_{0,I}).\label{eq-optimal-source}
\end{align}
\end{prop}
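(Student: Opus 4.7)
The plan is to prove both characterizations by standard subdifferential calculus, handling the two problems in turn and checking both directions of the ``if and only if'' through the Fermat optimality rule.

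For the \lasso characterization, I would start from the observation that the objective $f(\vecdisc) = \frac{1}{2}\|y-\Op \vecdisc\|^2 + \la \|\vecdisc\|_1$ is convex, proper and continuous on the whole of $\RR^\taillegrid$, so $\vecdisc_\la$ minimizes $f$ if and only if $0 \in \partial f(\vecdisc_\la)$. Since the quadratic data-fidelity is smooth and everywhere finite, the Moreau--Rockafellar sum rule applies without any constraint qualification and yields
\begin{equation*}
0 \;\in\; \Op^*(\Op \vecdisc_\la - y) + \la\, \partial \|\cdot\|_1(\vecdisc_\la).
\end{equation*}
Recalling the classical description of the subdifferential of the $\ell^1$ norm, namely $\partial \|\cdot\|_1(\vecdisc) = \{q \in \RR^\taillegrid : \|q\|_\infty \leq 1,\ q_i = \sign(\vecdisc_i) \text{ whenever } \vecdisc_i \neq 0\}$, the condition becomes the existence of $q \in \RR^\taillegrid$ with $\|q\|_\infty \leq 1$, $q_I = \sign(\vecdisc_{\la,I})$ and $\Op^*(\Op \vecdisc_\la - y) + \la q = 0$. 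Setting $p_\la := \la^{-1}(y - \Op \vecdisc_\la) \in \Hh$ one checks $q = \Op^* p_\la$, which is precisely~\eqref{eq-optimal-subdiff}--\eqref{eq-optimal-lagrange}. Conversely, given such a $p_\la$, the vector $\Op^* p_\la$ is a valid subgradient of $\|\cdot\|_1$ at $\vecdisc_\la$, and \eqref{eq-optimal-lagrange} is exactly $0 \in \partial f(\vecdisc_\la)$, so $\vecdisc_\la$ minimizes $f$.

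For the Basis-Pursuit characterization, I would invoke KKT conditions for a convex program with an affine equality constraint. Because $\Op \vecdisc = y_0$ is affine, the standard constraint qualification holds automatically (the constraint cone is polyhedral, so no Slater-type assumption is needed), and $\vecdisc_0$ is a minimizer if and only if it is feasible and there exists a Lagrange multiplier $p\in \Hh$ with $0 \in \partial \|\cdot\|_1(\vecdisc_0) - \Op^* p$. Rewriting $\Op^* p \in \partial \|\cdot\|_1(\vecdisc_0)$ via the same description of the $\ell^1$ subdifferential gives~\eqref{eq-optimal-source}. Alternatively, this condition can be recovered by letting $\la \to 0^+$ in the \lasso optimality system with $y = y_0$, which is a useful sanity check but not essential to the proof.

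I do not expect any real obstacle: the statement is a textbook instance of the Fermat rule and KKT conditions for $\ell^1$-regularized least squares, and the only subtle formal point is that the dual variable $p_\la$ (resp.\ $p$) lives in the possibly infinite-dimensional Hilbert space $\Hh$, but it enters the conditions only through $\Op^* p_\la \in \RR^\taillegrid$, so no functional-analytic delicacy arises. The real \emph{content} of the proposition is notational rather than analytical: it packages optimality in terms of a ``dual certificate'' $p_\la$ (or $p$), which is the object that will drive the refined analysis of the extended support in the rest of the section.
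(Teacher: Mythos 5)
Your argument is correct and is precisely the standard one the paper has in mind — indeed the paper does not write out a proof, dismissing it as ``quite standard.'' Both parts are handled properly: for the \lasso you apply Fermat's rule and the Moreau--Rockafellar sum rule (no qualification needed since the data-fidelity term is finite and smooth everywhere), identify $p_\la = \la^{-1}(y-\Op \vecdisc_\la)$ so that $\Op^* p_\la$ is the $\ell^1$-subgradient and \eqref{eq-optimal-lagrange} is the stationarity equation; for Basis-Pursuit you correctly note that the equality constraint $\Op \vecdisc = y_0$ is affine with finite-rank $\Op$, so the normal cone to the feasible set is exactly $\Im(\Op^*)$ and KKT multipliers exist without any Slater-type assumption, which yields \eqref{eq-optimal-source}. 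The remark that $p$ lives in $\Hh$ but only enters through $\Op^* p \in \RR^\taillegrid$ is exactly the right way to dispose of the infinite-dimensional subtlety.
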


Conditions~\eqref{eq-optimal-subdiff} and~\eqref{eq-optimal-source} merely express the fact that $\certdisc_\la \eqdef \Op^*p_\la$ (resp. $\eta \eqdef \Op^*p$) is in the subdifferential of the $\ell^1$-norm at $\vecdisc_\la$ (resp. $\vecdiscO$). In that case we say that $\certdisc_\la$ (resp. $\certdisc$) is a \textit{dual certificate} for $\vecdisc_\la$ (resp. $\vecdiscO$). Condition~\eqref{eq-optimal-source} is also called the \textit{source condition} in the literature~\cite{Burger_Osher04}.

The term \textit{dual certificate} stems from the fact that $p_\la$ (resp. $p$) is a solution to the dual problem to~\eqref{eq-abstract-lasso} (resp.~\eqref{eq-abstract-bp}), 
\begin{align}
  \inf_{p\in C} &\left\|\frac{y}{\la}-p  \right\|_2^2 \tag{$\Dd_\la(y)$},\label{eq-abstract-dual-lasso}\\
  \mbox{resp. }\quad  \sup_{p\in C} &\langle y_0, p\rangle, \tag{$\Dd_0(y_0)$}\label{eq-abstract-dual-bp}\\
  \qwhereq C& \eqdef \enscond{p\in \Hh}{\max_{k\in\seg{0}{\taillegrid-1}}|(\Op^*p)_k| \leq 1}.
\end{align}
If $\vecdisc$ is a solution to~\eqref{eq-abstract-lasso} and $p_\la$ is a solution to~\eqref{eq-abstract-dual-lasso}, then~\eqref{eq-optimal-subdiff} and~\eqref{eq-optimal-lagrange} hold. Conversely, for any $\vecdisc\in\RR^P$ and any $p_\la\in \Hh$, if~\eqref{eq-optimal-subdiff} and~\eqref{eq-optimal-lagrange} hold, then  $\vecdisc$ is a solution to~\eqref{eq-abstract-lasso} and $p_\la$ is a solution to~\eqref{eq-abstract-dual-lasso}. A similar equivalence hold for~\eqref{eq-abstract-bp} and~\eqref{eq-abstract-dual-bp}.

\begin{rem}
In general, the solutions to~\eqref{eq-abstract-lasso} and~\eqref{eq-abstract-bp} need not be unique. However, the dual certificate $\certdisc_\la=\Op^*p_\la$ which appears in~\eqref{eq-optimal-subdiff} and~\eqref{eq-optimal-lagrange} is unique. On the contrary, the dual certificate $\eta=\Op^*p$ which appears in~\eqref{eq-optimal-source} is not unique in general.
\end{rem}

We say that a vector $\vecdiscO$ is \textit{identifiable} if it is the unique solution to~\eqref{eq-abstract-bp} for the input $y=\Op \vecdiscO$.
The following classical result gives a sufficient condition for $\vecdiscO$ to be identifiable.

\begin{prop}
  Let $\vecdiscO\in \RR^\taillegrid$ such that $\Op_I$ is injective and that there exists $p\in \Hh$ such that 
\begin{align} 
  \norm{ (\Op^*p)_{I^c} }_\infty < 1 
  \qandq 
  (\Op^* p)_{I}=\sign(\vecdiscOI),\label{eq-optimal-strictsource}
\end{align}
where $I^c=\seg{1}{\taillegrid}\setminus I$. Then $\vecdiscO$ is identifiable.
\end{prop}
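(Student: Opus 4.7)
The plan is to show that any minimizer $a^\star$ of \eqref{eq-abstract-bp} must coincide with $a_0$, using $p$ as a certificate in a two-step argument: first constrain the support of $a^\star$ to $I$, then use injectivity of $\Op_I$ to pin down its values.

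First I would pick any solution $a^\star$ of \eqref{eq-abstract-bp}, so that $\Op a^\star = y_0 = \Op a_0$ and $\|a^\star\|_1 \leq \|a_0\|_1$. Setting $\eta \eqdef \Op^* p$, the key identity exploits the adjoint:
\begin{align*}
\langle \eta, a^\star \rangle = \langle p, \Op a^\star \rangle = \langle p, \Op a_0 \rangle = \langle \eta, a_0 \rangle = \sum_{i \in I} \sign(a_{0,i})\, a_{0,i} = \|a_0\|_1,
\end{align*}
where the last equality uses $\eta_I = \sign(a_{0,I})$ and $(a_0)_{I^c} = 0$. On the other hand, since $\|\eta\|_\infty \leq 1$, H\"older's inequality gives $\langle \eta, a^\star \rangle \leq \|a^\star\|_1$.

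Combining these with the optimality inequality $\|a^\star\|_1 \leq \|a_0\|_1$ forces equality throughout: $\|a^\star\|_1 = \langle \eta, a^\star\rangle = \|a_0\|_1$. The equality $\|a^\star\|_1 = \sum_i \eta_i a_i^\star$ with $|\eta_i| \leq 1$ means that for every index $i$ either $a_i^\star = 0$ or $\eta_i = \sign(a_i^\star)$. Because $|\eta_i| < 1$ strictly on $I^c$ by hypothesis, the second alternative is impossible there, so $a_i^\star = 0$ for all $i \in I^c$. Thus $\mathrm{supp}(a^\star) \subset I$.

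Finally, $\Op_I a_I^\star = \Op a^\star = y_0 = \Op a_0 = \Op_I (a_0)_I$, and the assumed injectivity of $\Op_I$ yields $a_I^\star = (a_0)_I$, hence $a^\star = a_0$. The only even mildly delicate point is the strict inequality on $I^c$ in~\eqref{eq-optimal-strictsource}, which is precisely what promotes the mere dual feasibility of $p$ into unique identifiability of $a_0$; without it one would only recover $a_0$ as \emph{a} solution, not as the unique one.
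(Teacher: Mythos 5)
Your proof is correct and is the standard duality/slack argument for this well-known fact; the paper itself omits the proof, simply labeling the proposition "classical." The chain $\|a_0\|_1 = \langle \eta, a^\star\rangle \leq \|a^\star\|_1 \leq \|a_0\|_1$, the complementary-slackness conclusion $\supp(a^\star)\subset I$ from the strict inequality on $I^c$, and the final use of injectivity of $\Op_I$ are exactly the expected ingredients, so your argument matches the implicit reference.
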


Conversely, if $\vecdiscO$ is identifiable, there exists $p\in L^{2}(\TT)$ such that~\eqref{eq-optimal-strictsource} holds and $\Op_I$ is injective (see~\cite[Lemma~4.5]{Grasmair-cpam}).

\subsection{Extended support of the \lasso}

From now on, we assume that the vector $\vecdiscO\in \RR^\taillegrid$ is identifiable (\ie $\vecdiscO$ is the unique solution to~\eqref{eq-abstract-bp} where $y_0=\Op \vecdiscO$).
We denote by $I=\supp(\vecdiscO)$ and $s_I=\sign(\vecdiscOI)$ the support and the sign of $\vecdiscO$. 

It is well known that~\eqref{eq-abstract-bp} is the limit of \eqref{eq-abstract-lasso} for $\la \to 0$ (see~\cite{chen1999atomi} for the noiseless case and~\cite{Grasmair-cpam} when the observation is $y=y_0+w$ and the noise $w$ tends to zero as a multiple of $\la$) at least in terms of the $\ell^2$ convergence.
In terms of the support of the solutions, the study in~\cite{2013-duval-sparsespikes}, which extends the one by Fuchs~\cite{fuchs2004on-sp}, emphasizes the role of a specific \textit{minimal-norm certificate} $\certdiscO$ which governs the behavior of the model at low noise regimes. 

\begin{defn}[Minimal-norm certificate and extended support]
  Let $\vecdiscO\in \RR^\taillegrid$, and let $p_0$ be the solution to~\eqref{eq-abstract-dual-bp} with minimal $L^2$-norm. The minimal-norm certificate of $a_0$ is defined as $\certdiscO \eqdef \Op^* p_0$.
The set of indices $\ext(\vecdiscO) \eqdef \enscond{1\leq j\leq \taillegrid   }{|(\certdiscO)_j|=1}$ is called the extended support of $\vecdiscO$, and 
the set $\extpm(\vecdiscO) \eqdef \enscond{(j,(\certdiscO)_j)}{j\in \ext(\vecdiscO)}\subset \seg{0}{\taillegrid-1}\times\{-1,1\}$ is called the extended signed support of $\vecdiscO$.
\end{defn}

\begin{rem}
  In the case where $a_0$ is a solution to~\eqref{eq-abstract-bp} (which is the case here since we assume that $a_0$ is an identifiable vector for~\eqref{eq-abstract-bp}), we have $(I,\sign(\vecdiscOI))\subset \extpm(\vecdiscO)$. The minimal norm certificate thus turns out to be
\eql{\label{eq-eta0}
	\certdiscO = \Op^* p_0
	\qwhereq
  p_0 = \uargmin{p\in \Hh} \enscond{\norm{p}_2}{ \normi{\Op^* p} \leq 1 \qandq \Op_I^* p = s_I }.
}
\end{rem}

It is shown in~\cite{2013-duval-sparsespikes} that there exists a low noise regime where the (signed) support of any solution $\tilde{\vecdisc}_{\la}$ of~$\Pp_\la(y_0+w)$ is included in $\extpm(\vecdiscO)$, $\suppm \tilde{\vecdisc}_\la\subset \extpm(\vecdiscO)$. It is therefore crucial to understand precisely the behavior of $\certdiscO$ and the structure of the extended (signed) support  $\extpm(\vecdiscO)$. The following (new) result gives a characterization of $\certdiscO$.

\begin{lem}\label{lem-eta0}
  Let $(J,s_J)\subset \seg{0}{\taillegrid-1} \times \{-1,1\}$ such that $(I,\sign((\vecdiscO)_I))\subset (J,s_J)$ and $\Op_J$ has full rank. Define $v_J=(\Op_J^*\Op_J)^{-1}s_J$.

  Then $(J,s_J)$ is the extended signed support of $\vecdiscO$, \textit{i.e.}  $(J,s_J)=\extpm(\vecdiscO)$, if and only if  the following two conditions hold:
\begin{itemize}
  \item for all $j\in J\setminus I$, $v_j=0$ or $s_j = -\sign(v_j)$,
  \item $\|\Op_{J^c}^* \Op_J v_J\|_{\infty}<1$.
\end{itemize}
In that case, the minimal norm certificate is given by $\certdiscO=\Op^*\Op_J^{+,*}s_J$.
\end{lem}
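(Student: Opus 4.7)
The plan is to characterize the minimal-norm certificate $p_0$ of~\eqref{eq-eta0} through the Karush--Kuhn--Tucker (KKT) conditions of that convex quadratic program, so that both conditions of the lemma can be read off the Lagrange multipliers. In both directions the candidate will be $\tilde{p} = \Op_J v_J$, the minimum-norm element of the affine space $\{p:\Op_J^*p=s_J\}$, which is well defined by the full-rank hypothesis on $\Op_J$; the goal is to show it coincides with $p_0$.

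I would start from the Lagrangian of~\eqref{eq-eta0},
\[
L(p,\mu,\lambda^{\pm}) = \frac{1}{2}\norm{p}^2 + \sum_{i\in I}\mu_i\bigl(s_i-(\Op^*p)_i\bigr) + \sum_{i\notin I}\lambda_i^+\bigl((\Op^*p)_i-1\bigr) + \sum_{i\notin I}\lambda_i^-\bigl(-(\Op^*p)_i-1\bigr),
\]
with $\lambda_i^{\pm}\geq 0$. Stationarity in $p$ yields $p_0=\Op\alpha$ with $\alpha_i=\mu_i$ on $I$ and $\alpha_i=\lambda_i^- -\lambda_i^+$ off $I$, and complementary slackness forces $\supp(\alpha)\subset \ext(\vecdiscO)$, because off the extended support $|(\Op^*p_0)_i|<1$ strictly and both $\lambda_i^{\pm}$ must vanish. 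When $(J,s_J)=\extpm(\vecdiscO)$, the full-rank assumption then imposes $\alpha_J = v_J$, yielding the closed-form $\certdiscO = \Op^*\Op_J^{+,*}s_J$.

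For necessity, assuming $(J,s_J)=\extpm(\vecdiscO)$, condition (b) is just the restatement that $|\certdiscO_j|<1$ for $j\in J^c$ combined with the closed-form formula above. For condition (a), fix $j\in J\setminus I$: complementary slackness prevents $\lambda_j^+$ and $\lambda_j^-$ from being simultaneously positive (they would force $(\Op^*p_0)_j$ to equal both $+1$ and $-1$), so whenever $v_j=\lambda_j^- - \lambda_j^+$ is nonzero its sign is dictated by the single active multiplier, which in turn matches the active side $(\Op^*p_0)_j=s_j$ of the box constraint, giving $s_j=-\sign(v_j)$. Conversely, for sufficiency, I start from $(J,s_J)$ satisfying (a)--(b) and set $\tilde{p}=\Op_J v_J$; feasibility for~\eqref{eq-eta0} is immediate, since $\Op_I^*\tilde p=s_I$ follows from $\Op_J^*\tilde p=s_J$ together with $(I,s_I)\subset (J,s_J)$, while $|(\Op^*\tilde p)_J|=1$ and $\norm{(\Op^*\tilde p)_{J^c}}_\infty<1$ by (b).

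Condition (a) then lets me construct multipliers certifying $\tilde p$ as a KKT point: take $\alpha=v_J$ (extended by zero outside $J$), $\mu_i=v_i$ on $I$, and for each $j\in J\setminus I$ with $v_j\neq 0$ assign exactly one of $\lambda_j^+=-v_j$ or $\lambda_j^-=v_j$ (whichever is nonnegative, by (a)) with the other zero, so that stationarity $\alpha_j=\lambda_j^- -\lambda_j^+$ and complementary slackness on the active side of the box both hold. Sufficiency of KKT for the convex problem~\eqref{eq-eta0} then gives $p_0=\tilde p$, hence $\certdiscO=\Op^*\Op_J v_J$, and reading this formula shows $(\certdiscO)_J=s_J$ together with $|\certdiscO_{J^c}|<1$, identifying $(J,s_J)=\extpm(\vecdiscO)$. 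The main delicate point is the sign bookkeeping on $J\setminus I$: matching the sign of the active Lagrange multiplier (which selects the active side of the $\infty$-norm box) with that of $v_j$ and of the prescribed $s_j$ is exactly what turns condition (a) into a KKT certificate and back, and I expect this to be the only subtlety in the otherwise routine KKT calculation.
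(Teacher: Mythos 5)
Your proof is correct and takes essentially the same route as the paper's: both characterize $p_0$ by writing the KKT conditions of the convex quadratic program~\eqref{eq-eta0}, use stationarity to express $p_0$ as a linear combination of columns indexed by the active set, and read off the sign condition on $J\setminus I$ from complementary slackness; your $\lambda^{\pm}$ play the role of the paper's $u_{\pm}$, and the construction of multipliers in the converse direction matches the paper's $u_{+,j}=\frac12\max(v_j,0)$, $u_{-,j}=\frac12\max(-v_j,0)$.
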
 

\begin{proof}
  Writing the optimality conditions for~\eqref{eq-eta0}, we see that $p\in L^{2}(\TT)$ is equal to $p_0$ if and only if ${\|\Op^*p\|_{\infty}}\leq 1$, $\Op_I^*p=\sign (\vecdiscOI)$, and there exists $u_+\in (\RR^+)^\taillegrid$ and $u_-\in (\RR^+)^\taillegrid$ such that:
\begin{align}
  2p+ \Op u_+ - \Op u_- =0,
  \label{eq-optim-contrainte}
\end{align}
where for $i\in I^c$, $u_{+,i}$ (resp. $u_{-,i}$) is a Lagrange multiplier for the constraint $(\Op^*p)_i\leq 1$ (resp. $(\Op^*p)_i\geq -1$) which satisfies  the complementary slackness condition: $u_{+,i}((\Op^*p)_i- 1)=0$ (resp $u_{-,i}((\Op^*p)_i+1)=0$), and for $i\in I$, $(u_{+,i}-u_{-,i})$ is the Lagrange multiplier for the constraint $(\Op^*p)_i=\sign(\vecdiscO)_i$.

Now, let $(J,s_J)=\extpm(\vecdiscO)$ (so that $J$ determines the set of active constraints) and $p=p_0$. Using the complementary slackness condition we may reformulate~\eqref{eq-optim-contrainte} as 
\begin{align*}
  p_0- \Op_J v_J=0,
\end{align*}
for some $v\in \RR^{\taillegrid}$, where $v_j=0$ or $\sign v_j= -(\Op^*p_0)_j$ for $j\in J\setminus I$, and $v_j=0$ for $j\in\seg{0}{\taillegrid-1}\setminus J$. Inverting this relation, we obtain $v_J=(\Op_J^*\Op_J)^{-1}(\certdiscO)_J$, and the stated conditions hold.

Conversely, let $(J,s_J)\subset \seg{0}{\taillegrid-1}\times\{-1,1\}$ (not necessarily equal to $\extpm(\vecdiscO)$) such that $(I,\sign((\vecdiscO)_I))\subset (J,s_J)$ and that the conditions of the lemma hold, with $v_J=(\Op_J^*\Op_J)^{-1}s_J$. Then, setting $p=-\Op_J v_J$, we see that $\|\Op^* p\|_\infty\leq 1$, $\Op_I^*p=\sign (\vecdiscO)_I$, and~\eqref{eq-optim-contrainte} holds with the complementary slackness when setting $u_{+,j}=\frac{1}{2} \max(v_j,0)$, $u_{-,j}=\frac{1}{2} \max(-v_j,0)$ for $j\in J$ and $u_{\pm,j}=0$ for $j\notin J$. Then $p=p_0$ and the equivalence is proved.
\end{proof}

As mentioned above, the minimal norm certificate governs the (signed) support of the solution at low noise regimes insofar as the latter is contained in the extended signed support. The following theorem shows that, in the generic case, both signed supports are equal.

\begin{thm}\label{thm-stability-dbp}
  Let $\vecdiscO\in \RR^\taillegrid\setminus\{0\}$ be an identifiable signal , $J \eqdef \ext(\vecdiscO)$ such that $\Op_J$ has full rank, and $v_J \eqdef (\Op_J^* \Op_J)^{-1} \sign(\certdiscOJ)$. Assume that for all $j\in J\setminus I$, $v_j\neq 0$. Then, there exists constants $C^{(1)}>0$, $C^{(2)}>0$ (which depend only on $\Op$, $I$ and $\sign(\vecdisc_{0,I})$) such that for $\la \leq C^{(1)}\left( \umin{i \in I} |\vecdiscOI| 
  \right)$  and all $w\in \Hh$ with $\|w\|\leq C^{(2)}\la$ 
the solution $\soldisc$ of~\eqref{eq-abstract-lasso} is unique, $\supp(\soldisc) = J$ and it reads
	\eq{
    \soldiscJ = \vecdiscOJ + \Op_J^+ w - \la (\Op_J^* \Op_J)^{-1}\sign(\certdiscOJ),
	}
  where $\Op_J^+=(\Op_J^*\Op_J)^{-1}\Op_J^*$.
\end{thm}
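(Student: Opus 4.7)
Write $s_J \eqdef \sign \certdiscOJ$. The plan is to exhibit a candidate primal-dual pair $(\tilde \vecdisc, \tilde p)$ with $\supp \tilde \vecdisc \subset J$ satisfying the optimality conditions~\eqref{eq-optimal-subdiff}--\eqref{eq-optimal-lagrange}, and then deduce uniqueness from strict slackness off $J$.

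Imposing $\supp \tilde \vecdisc \subset J$ together with $(\Op^* \tilde p)_J = s_J$ (the active constraints picked out by Lemma~\ref{lem-eta0}), and combining the Lagrangian identity $\la \tilde p = y - \Op_J \tilde \vecdisc_J$ with the decomposition $y = \Op_J \vecdiscOJ + w$ (using that $\vecdiscO$ vanishes on $J \setminus I$), the full-rank assumption on $\Op_J$ forces
\[
  \tilde \vecdisc_J = \vecdiscOJ + \Op_J^+ w - \la (\Op_J^* \Op_J)^{-1} s_J, \qquad
  \tilde p = \Op_J^{+,*} s_J + \frac{1}{\la}(\mathrm{Id} - \Op_J \Op_J^+) w,
\]
which matches the formula in the statement. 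Since $\Op_J^*(\mathrm{Id} - \Op_J \Op_J^+) = 0$, one automatically has $(\Op^* \tilde p)_J = s_J$ for any $w$. It remains to verify (i) $\sign \tilde \vecdisc_j = s_j$ for every $j \in J$ and (ii) $|(\Op^* \tilde p)_j| < 1$ strictly for every $j \notin J$.

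For (i), split $J = I \sqcup (J \setminus I)$. On $I$, the dominant term is $(\vecdiscO)_i$ of sign $s_i$, while the correction $(\Op_J^+ w - \la (\Op_J^* \Op_J)^{-1} s_J)_i$ is bounded in $\ell^\infty$ by $C(\|w\| + \la)$ with $C$ depending only on $\Op_J$; the sign is thus preserved as soon as $\la \leq C^{(1)} \min_{i \in I} |(\vecdiscO)_i|$ and $\|w\| \leq C^{(2)} \la$ for suitably small constants. On $J \setminus I$ one has $(\vecdiscO)_j = 0$, hence $\tilde \vecdisc_j = (\Op_J^+ w)_j - \la v_j$; the hypothesis $v_j \neq 0$ combined with the necessary sign rule $s_j = -\sign(v_j)$ granted by Lemma~\ref{lem-eta0} ensures that $-\la v_j$ has sign $s_j$ with magnitude at least $\la \min_{j \in J \setminus I} |v_j|$, dominating the $(\Op_J^+ w)_j$ perturbation once $\|w\|/\la$ is small enough.

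For (ii), Lemma~\ref{lem-eta0} gives $\certdiscO = \Op^* \Op_J^{+,*} s_J$, and since $J$ is exactly the saturation set of $\certdiscO$ we have $\rho \eqdef \max_{j \notin J} |(\certdiscO)_j| < 1$. The decomposition
\[
  (\Op^* \tilde p)_j = (\certdiscO)_j + \frac{1}{\la} \bigl(\Op^*(\mathrm{Id} - \Op_J \Op_J^+) w\bigr)_j
\]
then yields $|(\Op^* \tilde p)_j| < 1$ off $J$ as soon as $\|w\|/\la$ is smaller than $(1-\rho)/\|\Op^*(\mathrm{Id} - \Op_J \Op_J^+)\|$. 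The KKT conditions~\eqref{eq-optimal-subdiff}--\eqref{eq-optimal-lagrange} are thus satisfied by $(\tilde \vecdisc, \tilde p)$, so $\tilde \vecdisc$ solves~\eqref{eq-abstract-lasso}. For uniqueness, any other solution shares the same (unique) dual variable $\Op^* p_\la$, hence by (ii) its support is contained in $J$; injectivity of $\Op_J$ then pins its values on $J$ down to those of $\tilde \vecdisc$. The main obstacle is the quantitative bookkeeping required to aggregate the three smallness requirements (sign preservation on $I$, sign emergence on $J \setminus I$, strict dual feasibility on $J^c$) into a single pair $(C^{(1)}, C^{(2)})$ depending only on $\Op$, $I$ and $\sign \vecdiscOI$.
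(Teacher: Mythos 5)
Your proposal is correct and follows essentially the same route as the paper: the same explicit candidate $\tilde\vecdisc_J=\vecdiscOJ+\Op_J^+w-\la(\Op_J^*\Op_J)^{-1}s_J$ with dual variable $\frac{1}{\la}(y-\Op\tilde\vecdisc)$, sign preservation on $I$ by dominance of $\vecdiscOJ$, sign emergence on $J\setminus I$ via the rule $s_j=-\sign(v_j)$ from Lemma~\ref{lem-eta0}, and strict dual feasibility off $J$ from $\normi{\certdisc_{0,J^c}}<1$ plus a $\norm{w}/\la$ bound. The final aggregation you defer is exactly the paper's routine combination of three affine inequalities in $(\norm{w},\la)$ into the constants $C^{(1)},C^{(2)}$.
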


\begin{proof}
	We define a candidate solution $\hat \vecdisc$ by 
	\eq{
		\hatvecdiscJ = \vecdiscOJ + \Op_J^+ w - \la v_{J}, \quad \hat \vecdisc_{J^c}=0
	} 
  and we prove that $\hat \vecdisc$ is the unique solution to~$(\Pp_\la(y_0+w))$ using the optimality conditions~\eqref{eq-optimal-subdiff} and~\eqref{eq-optimal-lagrange}. 
	
We first exhibit a condition for $\sign(\hatvecdiscJ)=\sign(\certdiscOJ)$. To shorten the notation, we write $s_J=\sign(\certdiscOJ)$.
Since for $i\in I$, $\vecdisc_{0,i}\neq 0$, the constraint $\sign(\hatvecdiscI)=s_I$ is implied by 
	\eq{
		\norm{R_I \Op_J^+}_{\infty,2} \norm{w} + \norm{v_I}_{\infty} \la < T, \qwhereq T=\umin{i \in I} |\vecdiscOI| > 0,
	}
and  $R_I: u\mapsto u_I$  is the restriction operator.
As for $ K=J\setminus I$, for all $k\in K$ $\vecdisc_{0,k}=0$ but we know from Lemma~\ref{lem-eta0} that $\sign(v_{k}) = -s_k$.
	The constraint $\sign(\hatvecdisc_K)=s_K$ is thus implied by 
	\eq{
    \norm{R_K \Op_J^+}_{\infty,2} \norm{w} \leq \la \underbrace{\left(\umin{k \in K} |v_{k}|\right)}_{>0}.
	}
  Hence, we have $\sign\hatvecdiscJ= \sign \certdiscOJ=s_J$, and by construction $\supp(\hatvecdisc) = J$ with
	\begin{align}
		\Op_J^*( y-\Op \hat a ) = \la s_J.
	\end{align}

  To ensure that $\hatvecdisc$ is the unique solution to~$(\Pp_\la(y))$ with $y=y_0+w$, it remains to check that
  \begin{align}
    \normi{\Op_{J^c}^*(y-\Op \hatvecdisc)}<\la.
    \label{eq-abstract-strict}
  \end{align}

	Since we have
	\eq{
    y - \Op \hatvecdisc = w-\Op_J(\Op_J^*\Op_J)^{-1}\Op_J^*w + \la \Op_J (\Op_J^*\Op_J)^{-1} s_J  = P_{\ker(\Op_J^*)} w + \la \Op_J^{+,*} s_J,
	}
  we see that~\eqref{eq-abstract-strict} is implied by
  \eq{\label{eq-rough-snr}
		\norm{ \Op_{J^c}^* P_{\ker(\Op_J^*)} }_{2,\infty} \norm{w}
		- \la ( 1 - \normi{\eta_{0,J^c}} ) < 0
	}
	where by construction $\normi{\eta_{0,J^c}}<1$.
	
	Putting everything together, one sees that $\hat a$ is the unique solution of~$(\Pp_\la(y))$ if the following affine inequalities hold simultaneously 
    \begin{align}
    	c_1 \norm{w} + c_2 \la &< T \qwhereq
		    \choice{
			    c_1 \eqdef \norm{R_I \Op_J^+}_{\infty,2}, \\
			    c_2 \eqdef \norm{v_I}_{\infty}, 	
		    }\\
        \norm{w} &\leq c_3\la  
      \qwhereq c_3\eqdef (\norm{R_K \Op_J^+}_{\infty,2})^{-1}\left(\umin{k \in K} |v_{k}|\right)>0,\\
      c_4 \norm{w} - c_5 \la &< 0
	    	\qwhereq
		    \choice{
			      c_4 \eqdef  \norm{ \Op_{J^c}^* P_{\ker(\Op_J^*)} }_{2,\infty}, \\
			      c_5 \eqdef  1 - \normi{\eta_{0,J^c}} >0.
        }
    \end{align}
Hence, for $\|w\| <\min(c_3,\frac{c_5}{c_4}) \la$ and $\left(\frac{c_1c_5}{c_4}+c_2 \right)\la<T$, the first order optimality conditions hold.
\end{proof}

\begin{rem}[Comparison with the analysis of Fuchs]\label{rem-fuchs}
	When $J=I$, Theorem~\ref{thm-stability-dbp} recovers exactly the result of Fuchs~\cite{fuchs2004on-sp}. Note that this result has been extended beyond the $\ell^1$ setting,see in particular~\cite{2015-vaiter-piecewiseregular,2014-vaiter-ps-consistency} for a unified treatment of arbitrary partly smooth convex regularizers. For this result to hold, i.e. to obtain $I=J$, one needs to impose that the following pre-certificate
	\eql{\label{eq-fuchs-precertif}
		\fuchsdisc \eqdef \Op^* \Op_I^{+,*} s_I
	} 
  is a valid certificate, i.e. one needs that $\normi{\fuchsdiscIC} < 1$. This condition is often called the \textit{irrepresentability condition} in the statistics literature (see for instance~\cite{Zhao-irrepresentability}). It implies that the support $I$ is stable for small noise. Unfortunately, it is easy to verify that for the deconvolution problem, in general, this condition does not hold when the grid stepsize is small enough (see~\cite[Section 5.3]{2013-duval-sparsespikes}), so that one cannot use the initial result. This motivates our additional study of the extended support $\ext(\vecdiscO) \supset I$, which is always stable to small noise. While this new result is certainly very intuitive, to the best of our knowledge, it is the first time it is stated and proved, with explicit values of the stability constant involved. 
\end{rem}

\begin{rem}
  Theorem~\ref{thm-stability-dbp} guarantees that the support of the reconstructed signal $\soldisc$ at low noise is equal to the extended support. The required condition $v_j\neq 0$ in Theorem~\ref{thm-stability-dbp} is tight in the sense that if $v_j=0$ for some $j\in J\setminus I$, then the saturation point of $\certdisc_\la$ may be strictly included in $J$. Indeed, it is possible, using similar calculations as above, to construct $w$ such that $\supp \soldisc\subsetneq J$ with $\la$ and $\|w\|_2/\la$ arbitrarily small.
\end{rem}

\section{\lasso on thin Grids}
\label{sec-discbp-thin}

In this section, we focus on inverse problems with smooth kernels, such as for instance the deconvolution problem. 
Our aim is to recover a measure $m_0\in\Mm(\TT)$ from the observation $y_0=\Phi m_0$ or $y=\Phi m_0+w$, where $\varphi\in \Cont^k(\TT\times \TT)$ ($k\geq 2$), $w\in L^2(\TT)$ and 
\begin{align}
\forall x\in \TT,\  (\Phi m)(x) \eqdef \int_\TT \varphi(x,y) \d m(y),
\end{align}
so that $\Phi:\Mm(\TT)\rightarrow L^2(\TT)$ is a bounded linear operator. Observe that $\Phi$ is in fact weak* to weak continuous and its adjoint is compact (see Lemma~\ref{lem-phi-compact} in Appendix).

Typically, we assume that the unknown measure $m_0$ is sparse, in the sense that it is of the form $m_0=\sum_{\nu=1}^N \alpha_{0,\nu}x_{0,\nu}$ for some $N\in\NN^*$, here $\alpha_{0,\nu}\in \RR^*$ and the $x_{0,\nu}\in \TT$ are pairwise distinct. 

The first approach we study is the one of the (discrete) Basis Pursuit. We look for measures that have support on a certain discrete grid $\Gg\subset \TT$, and we want to recover the original signal by solving an instance of~\eqref{eq-abstract-bp} or~\eqref{eq-abstract-lasso} on that grid. Specifically, we aim at analyzing the behavior of the solutions at low noise regimes (\textit{i.e.} when the noise $w$ is small and $\la$ well chosen) as the grid gets thinner and thinner.
To this end, we take advantage of the characterizations given in Section~\ref{sec-discrete-lasso} with $\Hh \eqdef L^2(\TT)$, regardless of the grid, and we use the Beurling \lasso~\eqref{eq-blasso} as a limit of the discrete models.

\subsection{Notations and preliminaries}

For the sake of simplicity we only study uniform grids, i.e. $\Gg \eqdef \enscond{ih}{i\in \seg{0}{\taillegrid-1}}$ where $\stepsize \eqdef \frac{1}{\taillegrid}$ is the stepsize. Moreover, we shall consider sequences of grids $(\Gg_n)_{n\in\NN}$ such that the stepsize vanishes ($\stepsizen=\frac{1}{\taillegridn}\to 0$ as $n\to+\infty$) and to ensure monotonicity, we assume that $\Gg_{n+1}\subset \Gg_n$. For instance, the reader may think of a dyadic grid (\ie $\stepsizen=\frac{\stepsize_0}{2^n}$).
We shall identify in an obvious way measures with support in $\Gg_n$ (\ie of the form $\sum_{k=0}^{\taillegridn-1} \vecthin_k \delta_{k\stepsizen}$) and vectors $a\in\RR^\taillegridn$.

The problem we consider is a particular instance of~\eqref{eq-abstract-lasso} (or~\eqref{eq-abstract-bp}) when choosing $\Op$ as the restriction of $\Phi$ to measures with support in the grid~$\Gg_n$,
\begin{align}
  \Op\eqdef\Phi_{\Gg_n} = \begin{pmatrix}
    \phi(\cdot,0),\ldots,\phi(\cdot,(\taillegrid-1)\stepsizen)
  \end{pmatrix}.
\end{align}
More explicitely, on the grid $\Gg_n$, we solve
\begin{align}
  \umin{\vecthin \in \RR^\taillegridn} \frac{1}{2}\norm{y-\Phi_{\Gg_n} \vecthin}^2 + \la \norm{\vecthin}_1, \tag{$\Pp_\la^n(y)$}\label{eq-thin-lasso}\\
  \mbox{and }  \umin{\vecthin\in \RR^\taillegridn} \norm{\vecthin}_1 \mbox{ such that } \Phi_{\Gg_n} \vecthin=y_0.  \tag{$\Pp_0^n(y_0)$}\label{eq-thin-bp}
\end{align}

We say that a measure $m_0=\sum_{\nu=1}^N \alpha_{0,\nu} \delta_{x_{0,\nu}}$ (with $\alpha_{0,\nu}\neq 0$ and the $x_{0,\nu}$'s pairwise distinct) is identifiable through~\eqref{eq-thin-bp} if it can be written as $m_0=\sum_{k=0}^{\taillegridn-1} \vecthin_i \delta_{i\stepsizen}$ and that the vector $\vecthin$ is identifiable using~\eqref{eq-thin-bp}. 

As before, given $\vecthin\in \RR^{\taillegridn}$, we shall write $I(\vecthin) \eqdef \enscond{i\in \seg{0}{\taillegridn-1}}{\vecthin_i\neq 0}$ or simply $I$ when the context is clear.

The optimality conditions~\eqref{eq-optimal-lagrange} amount to the existence of some $p_\la\in L^2(\TT)$ such that
  \begin{align}
    \max_{0\leq k\leq \taillegridn-1} |(\Phi^*p_\la)(k\stepsizen)|\leq 1,\qandq (\Phi^*p_\la)(\gIn)&=\sign(\solthinI), \label{eq-optimal-thin-subdiff}\\
    \la  \Phi^*p_\la + \Phi^*(\Phi \vecthin_\la-y) &=0.\label{eq-optimal-thin-lagrange}
    \end{align}
Similarly the optimality condition~\eqref{eq-optimal-source} is equivalent to the existence of $p\in L^2(\TT)$ such that
    \begin{align}
    \max_{0\leq k\leq \taillegridn-1}|(\Phi^*p)(k\stepsizen)| \leq 1 \qandq (\Phi^* p)(\gIn)&=\sign(\vecthinOI).\label{eq-optimal-thin-source}
\end{align}
Notice that the dual certificates are naturally given by the sampling of continuous functions $\certthin=\Phi^*p: \TT\rightarrow \RR$, and that the notation $\certthin(\gIn)$ or $(\Phi^* p)(\gIn)$ stands for $(\certthin(i\stepsizen))_{i\in I}$ where $I=I(\vecthinO)$ (and similarly for $\certthin_\la=\Phi^*p_\la$ and $I(\solthin)$).

If $m_0$ is identifiable through~\eqref{eq-thin-bp}, the minimal norm certificate for the problem~\eqref{eq-thin-bp} (see Section~\ref{sec-discrete-lasso}) is denoted by $\certthinO$, whereas the extended support on $\Gg_n$ is defined as 
\begin{align}
  \ext_n m_0 \eqdef \enscond{t\in \Gg_n}{\certthinO(t)=\pm 1}.
\end{align}
From Section~\ref{sec-discrete-lasso}, we know that the extended support is the support of the solutions at low noise.

\subsection{The limit problem: the Beurling lasso}
\label{subsec-beurl-lasso}

It turns out that Problems~\eqref{eq-thin-lasso} and~\eqref{eq-thin-bp} have natural limits when the grid gets thin.
Embedding those problems into the space $\Mm(\TT)$ of Radon measures, the present authors have studied in~\cite{2013-duval-sparsespikes} their convergence towards the Beurling-\lasso used in~\cite{deCastro-beurling,Candes-toward,Bredies-space-measures,Tang-linea-spectral}.

The idea is to recover the measure $m_0$ using the following variants of~\eqref{eq-abstract-lasso} and~\eqref{eq-abstract-bp}:
\begin{align}
  \umin{m\in \Mm(\TT)} \frac{1}{2}\norm{y-\Phi m}^2 + \la |m|(\TT), \tag{$\Pp_\la^\infty(y)$}\label{eq-beurl-lasso}\\
 \qandq  \umin{m\in \Mm(\TT)} |m|(\TT) \quad\mbox{such that}\quad \Phi m=y_0,  \tag{$\Pp_0^\infty(y_0)$}\label{eq-beurl-bp}
\end{align}
where $|m|(\TT)$ refers to the total variation of the measure $m$
\begin{align}
  |m|(\TT) \eqdef \sup\enscond{\int_\TT \psi(x) \d m(x)}{\psi\in C(\TT)\mbox{ and } \|\psi\|_\infty\leq 1}.
\end{align}
Observe that in this framework, the notation $\|\psi\|_\infty$ stands for $\sup_{t\in\TT}|\psi(t)|$.
When $m$ is of the form $m=\sum_{\nu=1}^N \alpha_{\nu}x_{\nu}$ where $\alpha_{\nu}\in\RR^*$ and $x_{\nu}\in\TT$ (with the $x_{\nu}$'s pairwise distinct), $|m|(\TT)=\sum_{\nu=1}^N |\alpha_\nu|$, so that those problems are natural extensions of~\eqref{eq-abstract-lasso} and~\eqref{eq-abstract-bp}.
This connection is emphasized in~\cite{2013-duval-sparsespikes} by embedding~\eqref{eq-thin-lasso} and~\eqref{eq-thin-bp} in the space of Radon measures $\Mm(\TT)$, using the fact that
\begin{align*}
  &\sup\enscond{\int_\TT \psi(x) \d m(x)}{\psi\in C(\TT), \forall k\in\seg{0}{\taillegridn-1}\  |\psi|(k\stepsizen)\leq 1 }\\
  &\qquad = \left\{\begin{array}{ll}
    \|\vecthin\|_1 \mbox{ if } m=\sum_{k=0}^{\taillegridn-1} \vecthin_k \delta_{k\stepsizen},\\
    +\infty \mbox{ otherwise.}
  \end{array}\right.
\end{align*}

We say that $m_0$ is identifiable through~\eqref{eq-beurl-bp} if it is the unique solution of~\eqref{eq-beurl-bp}.
A striking result of~\cite{Candes-toward} is that when $\Phi$ is the ideal low-pass filter and that the spikes $m_0=\sum_{\nu=1}^N \alpha_{0,\nu}x_{0,\nu}$ are sufficiently far from one another, the measure $m_0$ is identifiable through~\ref{eq-beurl-bp}. 

The optimality conditions for~\eqref{eq-beurl-lasso} and~\eqref{eq-beurl-bp} are similar to those of the abstract \lasso (respectively \eqref{eq-optimal-subdiff}, \eqref{eq-optimal-lagrange} and \eqref{eq-optimal-source}). The corresponding dual problems are 
\begin{align}
  \inf_{p\in C^\infty} &\left\|\frac{y}{\la}-p  \right\|_2^2 \tag{$\Dd^\infty_\la(y)$},\label{eq-beurling-dual-lasso}\\
  \mbox{resp. }\quad  \sup_{p\in C^\infty} &\langle y_0, p\rangle, \tag{$\Dd^\infty_0(y_0)$}\label{eq-beurling-dual-bp}\\
  \qwhereq C^\infty& \eqdef \enscond{p\in L^2(\TT)}{\|\Phi^*p\|_{\infty} \leq 1}.
\end{align}
The source condition associated with~\eqref{eq-beurl-bp} is of particular interest. It amounts to the existence of some $p\in L^2(\TT)$ such that
 \begin{align}
   \|\Phi^*p\|_\infty \leq 1\qandq (\Phi^*p)(x_{0,\nu}) =\sign(\vecbeurlOnu) \mbox{ for all } \nu \in\{1,\ldots,N\}.
\label{eq-optimal-beurl-source}
 \end{align}
 Here, $\|\Phi^*p\|_\infty= \sup_{t\in\TT}|(\Phi^*p)(t)|$.  Moreover, if such $p$ exists and satisifies $|(\Phi^*p)(t)|<1$ for all $t\in\TT\setminus \{x_{0,1},\ldots ,x_{0,N} \}$, and $\Phi_{x_0}$ has full rank, then $m_0$ is the \textit{unique} solution to~\eqref{eq-beurl-bp} (\ie $m_0$ is identifiable). 
 
 Observe that in this infinite dimensional setting, the source condition~\eqref{eq-optimal-beurl-source} implies the optimality of $m_0$ for~\eqref{eq-beurl-bp} but the converse is not true (see~\cite{2013-duval-sparsespikes}). 

\begin{rem}\label{rem-beurl-vs-thin}
  A simple but crucial remark made in~\cite{Candes-toward} is that if $m_0$ is identifiable through~\eqref{eq-beurl-bp} and that $\supp m_0\subset \Gg_n$, then $m_0$ is identifiable for~\eqref{eq-thin-bp}. 
Similarly, observe that the source condition for~\eqref{eq-beurl-bp} implies the source condition for the~\eqref{eq-thin-bp}.
\end{rem}

If we are interested in noise robustness, a stronger assumption is the \textit{Non Degenerate Source Condition} which relies on the notion of minimal norm certificate for~\eqref{eq-beurl-bp}. When there is a solution to~\eqref{eq-beurling-dual-bp}, the one with minimal $L^2$ norm, $p_0^\infty$, determines the \textit{minimal norm certificate} $\eta_0^\infty\eqdef \Phi^*p_0^\infty$. When $m_0$ is a solution to~\eqref{eq-beurl-bp}, the \textit{minimal norm certificate} can be characterized as 
\begin{align}
  \certbeurlO&=\Phi^*\pbeurlO \qwhereq \label{eq-certbeurl0}\\
  \pbeurlO &= \uargmin{p\in L^2(\TT)} \enscond{\norm{p}_2}{ \normi{\Phi^* p} \leq 1, \: (\Phi^* p)(x_{0,\nu}) = \sign(\alpha_{0,\nu}), 1\leq \nu\leq N}.
\end{align}
 As with the discrete \lasso problem, a notion of extended (signed) support $\extpm_\infty$ may be defined and the minimal norm certificate governs the behavior of the solutions at low noise (see~\cite{2013-duval-sparsespikes} for more details).

 \begin{defn}\label{defn-ndsc-bp}
   Let $m_0=\sum_{\nu=1}^N\alpha_{0,\nu} \delta_{x_{0,\nu}}$ an identifiable measure for~\eqref{eq-beurl-bp}, and $\certbeurlO\in C(\TT)$ its minimal norm certificate. We say that $m_0$ satisfies the \textit{Non Degenerate Source Condition} if
\begin{itemize}
  \item $|\certbeurlO(t)|<1$ for all $t\in \TT\setminus \{x_{0,1},\ldots x_{0,N} \}$,
  \item $\certbeurlO''(x_{0,\nu})\neq 0$ for all $\nu\in\{1,\ldots,N\}$.
\end{itemize}
 \end{defn}

The Non Degenerate Source Condition might seem difficult to check in practice. The following proposition shows that it is in fact easy to check numerically on the vanishing derivatives precertificate.

\begin{defn}\label{defn-vanishing-bp}
  Let $m_0=\sum_{\nu=1}^N\alpha_{0,\nu} \delta_{x_{0,\nu}}$ an identifiable measure for~\eqref{eq-beurl-bp} such that $\Gamma_{x_0}\eqdef \begin{pmatrix} \Phi_{x_0} & \Phi'_{x_0}\end{pmatrix}$ has full rank. We define the vanishing derivatives precertificate as $\certvanishing \eqdef \Phi^*\pvanishing$ where 
\begin{align}
  \pvanishing \eqdef \uargmin{p\in L^2(\TT)} \enscond{\norm{p}_2}{(\Phi^* p)(x_{0,\nu}) = \sign(\alpha_{0,\nu}), (\Phi^* p)'(x_{0,\nu})=0, \ 1\leq \nu\leq N}.
  \label{eq-vanishing-p}
\end{align}
\end{defn}

The following proposition shows that this precertificate is easily computed by solving a linear system in the least square sense. 

\begin{prop}[\cite{2013-duval-sparsespikes}]
  Let $m_0=\sum_{\nu=1}^N\alpha_{0,\nu} \delta_{x_{0,\nu}}$ an identifiable measure for the problem~\eqref{eq-beurl-bp} such that $\Gamma_{x_0}$ has full rank.
  
  Then, the vanishing derivatives precertificate can be computed by 
  \eql{\label{eq-vanishing-expression}
		\eta_V^\infty \eqdef \Phi^*\pvanishing \qwhereq \pvanishing \eqdef \Gamma_{x_0}^{+,*}\begin{pmatrix} \sign(\vecbeurl_{0,\cdot})\\0
      \end{pmatrix}, 
     }
     and $\Gamma_{x_0}^{+,*}=\Gamma_{x_0}(\Gamma_{x_0}^*\Gamma_{x_0})^{-1}$.
Moreover, the following conditions are equivalent:
  \begin{enumerate}
    \item $m_0$ satisfies the Non Degenerate Source Condition.
    \item The vanishing derivatives precertificate satisfies:
      \begin{itemize}
        \item $|\certvanishing(t)|<1$ for all $t\in \TT\setminus \{x_{0,1},\ldots x_{0,N} \}$,
        \item $\certvanishing''(x_{0,\nu})\neq 0$ for all $\nu\in\{1,\ldots,N\}$.
      \end{itemize}
   \end{enumerate}
   And in that case, $\certvanishing$ is equal to the minimal norm certificate $\certbeurlO$.
   \label{prop-etav-nonvanish}
\end{prop}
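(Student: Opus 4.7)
The plan is to prove the closed-form expression first, then derive both directions of the equivalence by showing that each condition forces $\pbeurlO = \pvanishing$. The closed form~\eqref{eq-vanishing-expression} follows from recognizing that~\eqref{eq-vanishing-p} is the minimization of $\|p\|_2^2$ on the affine set $\{p \in L^2(\TT) : \Gamma_{x_0}^* p = (\sign(\alpha_{0,\cdot}), 0)\}$, which is non-empty because $\Gamma_{x_0}$ has full rank. The unique minimizer is the Hilbert-space pseudo-inverse solution, and it lies in $\mathrm{Range}(\Gamma_{x_0})$ since the minimum-norm element of an affine subspace is orthogonal to its direction $\ker \Gamma_{x_0}^*$.

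The first key observation, used in both directions, is that identifiability of $m_0$ already forces $\pbeurlO$ to be feasible for~\eqref{eq-vanishing-p}. Indeed, $|\certbeurlO(t)| \leq 1$ on $\TT$ together with $\certbeurlO(x_{0,\nu}) = \pm 1$ makes each $x_{0,\nu}$ an interior extremum of the $C^2$ function $\certbeurlO$, so that $(\certbeurlO)'(x_{0,\nu}) = 0$, i.e.\ $(\Phi'_{x_0})^* \pbeurlO = 0$. Hence $\Gamma_{x_0}^* \pbeurlO = (\sign(\alpha_{0,\cdot}), 0)$, and by minimality of $\pvanishing$ on the same affine set, $\|\pvanishing\|_2 \leq \|\pbeurlO\|_2$.

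For the direction $(ii) \Rightarrow (i)$ (together with $\certvanishing = \certbeurlO$), I would observe that $|\certvanishing(t)| < 1$ off $\{x_{0,\nu}\}_\nu$, combined with $\certvanishing(x_{0,\nu}) = \sign(\alpha_{0,\nu})$ by construction, gives $\|\certvanishing\|_\infty \leq 1$. Thus $\pvanishing$ is admissible for~\eqref{eq-certbeurl0}, yielding $\|\pbeurlO\|_2 \leq \|\pvanishing\|_2$. Combined with the preceding paragraph, the norms are equal; since both elements belong to the affine set and the minimum-norm element there is unique by strict convexity of $\|\cdot\|_2^2$, we obtain $\pvanishing = \pbeurlO$. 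The conditions of (ii) then transfer to $\certbeurlO$ and furnish the NDSC.

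For the converse $(i) \Rightarrow (ii)$, the plan is to show $\pbeurlO \in \mathrm{Range}(\Gamma_{x_0})$: once this is established, combined with $\pvanishing \in \mathrm{Range}(\Gamma_{x_0})$ and the fact that both share the same image under $\Gamma_{x_0}^*$, invertibility of $\Gamma_{x_0}^* \Gamma_{x_0}$ forces $\pvanishing = \pbeurlO$, and then the NDSC transfers to condition (ii) on $\certvanishing$. Under NDSC the active set $\{t \in \TT : |\certbeurlO(t)| = 1\}$ reduces to the discrete set $\{x_{0,\nu}\}_\nu$, so invoking KKT/strong duality for the semi-infinite program~\eqref{eq-certbeurl0} produces non-negative Radon Lagrange multipliers which, by complementary slackness, are supported on this active set and hence are finite sums of Diracs at $x_{0,\nu}$. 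The stationarity condition then writes $\pbeurlO$ as a linear combination of $\phi(\cdot, x_{0,\nu})$, placing it in $\mathrm{Range}(\Phi_{x_0}) \subset \mathrm{Range}(\Gamma_{x_0})$. I expect the main difficulty to be the rigorous justification of these measure-valued Lagrange multipliers for the continuum of inequality constraints $|(\Phi^*p)(t)| \leq 1$; this should be handled either via a standard semi-infinite programming duality theorem or by a direct projection argument onto the closed convex set $\{p \in L^2(\TT) : \|\Phi^*p\|_\infty \leq 1\}$, exploiting the discreteness of the active set under NDSC.
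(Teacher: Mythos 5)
Your closed-form derivation and the direction $(ii)\Rightarrow(i)$ are fine: the interior-extremum argument does force $(\certbeurlO)'(x_{0,\nu})=0$, so $\pbeurlO$ is feasible for~\eqref{eq-vanishing-p}, and the two inequalities $\|\pbeurlO\|_2\le\|\pvanishing\|_2$ (admissibility of $\pvanishing$ for~\eqref{eq-certbeurl0}) and $\|\pvanishing\|_2\le\|\pbeurlO\|_2$, together with uniqueness of the minimum-norm element of a closed affine subspace, give $\pvanishing=\pbeurlO$.

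The direction $(i)\Rightarrow(ii)$, however, has a genuine flaw, and it is not just the technicality you flag. If the measure-valued multipliers you invoke existed and were (by complementary slackness under the NDSC) supported on $\{x_{0,1},\dots,x_{0,N}\}$, stationarity would yield $\pbeurlO\in\mathrm{span}\{\phi(\cdot,x_{0,\nu})\}_\nu=\Im\Phi_{x_0}$. But that conclusion is too strong: by Remark~\ref{rem-vanishing-formula}, $\pvanishing=\pfuchsbeurl-\Pi\Phi_{x_0}'(\Phi_{x_0}'^*\Pi\Phi_{x_0}')^{-1}\Phi_{x_0}'^*\pfuchsbeurl$, and the correction term lies in $(\Im\Phi_{x_0})^\perp$ and is nonzero exactly when $\Phi_{x_0}'^*\pfuchsbeurl\neq 0$, i.e.\ whenever the Fuchs vector does not already have vanishing derivatives at the spikes — which is precisely the regime this paper cares about. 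In such cases $\pvanishing\notin\Im\Phi_{x_0}$, yet under the NDSC one needs $\pbeurlO=\pvanishing$; so the KKT route proves something false, and must itself be invalid. What breaks is dual attainment: the equalities force $\Phi^*p(x_{0,\nu})=\pm 1$ on the boundary of the $L^\infty$ ball, so there is no Slater point, and the ``multiplier'' one would need involves derivatives of Dirac masses, which are not Radon measures.

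The fallback you mention, a direct perturbation of the projection, is the right tool and should replace the KKT argument. Suppose $\|\pvanishing\|_2<\|\pbeurlO\|_2$ and set $p_\theta=(1-\theta)\pbeurlO+\theta\pvanishing$; then $p_\theta$ still satisfies $\Phi_{x_0}^*p_\theta=\sign(\alpha_{0,\cdot})$ and $\Phi_{x_0}'^*p_\theta=0$, and $\|p_\theta\|_2<\|\pbeurlO\|_2$ for $\theta\in(0,1]$. One then shows $\|\Phi^*p_\theta\|_\infty\le 1$ for $\theta$ small: away from $\bigcup_\nu(x_{0,\nu}-r,x_{0,\nu}+r)$ the compactness bound $\sup|\certbeurlO|<1$ absorbs the $O(\theta)$ perturbation; near each $x_{0,\nu}$, both $\certbeurlO$ and $\certvanishing$ are critical there with value $\sign(\alpha_{0,\nu})$, and the sign of $(\certbeurlO)''(x_{0,\nu})$ guaranteed by the NDSC makes a second-order Taylor bound on $\Phi^*p_\theta$ stay inside $[-1,1]$ once $\theta$ is small enough. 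This contradicts the minimality of $\pbeurlO$ over the feasible set of~\eqref{eq-certbeurl0}, so $\pbeurlO=\pvanishing$ and the NDSC conditions transfer verbatim to $\certvanishing$.
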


\begin{rem}\label{rem-vanishing-formula}
  Using the block inversion formula in~\eqref{eq-vanishing-expression}, it is possible to check that 
  \begin{align}
    \pvanishing&=\Phi_{x_0}^{+,*}\sign(\vecbeurl_{0,\cdot}) - \Pi \Phi_{x_0}' ({\Phi_{x_0}'}^*\Pi \Phi_{x_0}')^{-1}{\Phi_{x_0}'}^*\Phi_{x_0}^{+,*}\sign(\vecbeurl_{0,\cdot}),\label{eq-vanishing-formula}
  \end{align} where $\Pi$ is the orthogonal projector onto $(\Im \Phi_{x_0})^\perp$. If we denote by $\pfuchsbeurl$ the vector introduced by Fuchs (see~\eqref{eq-fuchs-precertif}), which turns out to be
  \begin{align*}
    \pfuchsbeurl= \uargmin{p\in L^2(\TT)} \enscond{\norm{p}_2}{(\Phi^* p)(x_{0,\nu}) = \sign(\alpha_{0,\nu}), \ 1\leq \nu\leq N},
  \end{align*}
  we observe that $\pvanishing = \pfuchsbeurl - \Pi \Phi_{x_0}' ({\Phi_{x_0}'}^*\Pi \Phi_{x_0}')^{-1}{\Phi_{x_0}'}^*\pfuchsbeurl$.
\end{rem}

 \begin{rem}
   At this stage, we see that two different minimal norm certificates appear: the one for the discrete problem~\eqref{eq-thin-bp} which should satisfy~\eqref{eq-optimal-thin-source} on a discrete grid $\Gg_n$, and the one for gridless problem~\eqref{eq-beurl-bp} which should satisfy~\eqref{eq-optimal-beurl-source}. One should not mingle them.
 \end{rem}

\subsection{The \lasso on thin grids for fixed $\la>0$}

As hinted by the notation, Problem~\eqref{eq-beurl-lasso} is the limit of Problem~\eqref{eq-thin-lasso} as the stepsize of the grid vanishes (\ie $n\to+\infty$).
Indeed, we may identify each vector $a\in\RR^{\taillegridn}$ with the measure $m_a= \sum_{k=0}^{\taillegridn-1} a_k\delta_{k\stepsizen}$ (so that $\|a\|_1=|m_a|(\TT)$) and embed~\eqref{eq-thin-lasso} into the space of Radon measures. With this identification, the Problem~\eqref{eq-thin-lasso} $\Gamma$-converges towards Problem~\eqref{eq-beurl-lasso} (see the definition below), and as a result, any accumulation point of the minimizers of~\eqref{eq-thin-lasso} is a minimizer of~\eqref{eq-beurl-lasso}.

\begin{rem}
  The space $\Mm(\TT)$ endowed with the weak* topology is a topological vector space which does not satisfy the first axiom of countability (\ie the existence of a countable base of neighborhoods at each point). However, each solution $m^n_\la$ of~\eqref{eq-thin-lasso} (resp. $m^\infty_\la$ of~\eqref{eq-beurl-lasso}) satisfies 
  \begin{align}
\la |m^n_\la|(\TT)\leq \la |m^n_\la|(\TT)+\frac{1}{2}\norm{\Phi m^n_\la-y}^2 \leq \frac{1}{2}\norm{y}^2.
\label{eq-X-bounded}
  \end{align}
  Hence we may restrict those problems to the set
  	\eq{
  		X \eqdef \enscond{m\in \Mm(\TT)}{\la |m|(\TT)\leq \frac{1}{2}\norm{y}^2}
	} 
	which is a metrizable space for the weak* topology. As a result, we shall work with the definition of $\Gamma$-convergence in metric spaces, which is more convenient than working with the general definition~\cite[Definition~4.1]{Dalmaso1993}). For more details about $\Gamma$-convergence, we refer the reader to the monograph~\cite{Dalmaso1993}.
\end{rem}

\begin{defn}
  We say that the Problem~\eqref{eq-thin-lasso} $\Gamma$-converges towards Problem~\eqref{eq-beurl-lasso} if, for all $m\in X$, the following conditions hold
  \begin{itemize}
    \item{(Liminf inequality)} for any sequence of measures $(m^n)_{n\in\NN}\in X^\NN$ such that $\supp(m^n)\subset\Gg_n$ and that $m^n$ weakly* converges towards $m$, 
      \begin{align*}
        \liminf_{n\to +\infty}\left(\la |m^n|(\TT)+\frac{1}{2}\norm{\Phi m^n-y}^2\right) \geq \la |m|(\TT) + \frac{1}{2}\norm{\Phi m-y}^2.
      \end{align*}
    \item{(Limsup inequality)} there exists a sequence of measures $(m^n)_{n\in \NN}\in X^\NN$ such that $\supp(m^n)\subset \Gg_n$, $m^n$ weakly* converges towards $m$  and
      \begin{align*}
        \limsup_{n\to +\infty} \left(\la |m^n|(\TT)+\frac{1}{2}\norm{\Phi m^n-y}^2\right) \leq \la |m|(\TT) + \frac{1}{2}\norm{\Phi m-y}^2.
      \end{align*}
  \end{itemize}
  \label{def-gammacv}
\end{defn}

The following proposition shows the $\Gamma$-convergence of the discretized problems toward the Beurling \lasso problem.  This  ensures in particular the convergence of the minimizers, which was already proved in~\cite{TangConvergence}. 

\begin{prop}
  The Problem~\eqref{eq-thin-lasso} $\Gamma$-converges towards \eqref{eq-beurl-lasso}, and 
\begin{align}
  \lim_{n\to +\infty} \inf \eqref{eq-thin-lasso} = \inf \eqref{eq-beurl-lasso}.\label{eq-cvinf-lasso}
\end{align}
Each sequence $(m^n_\la)_{n\in\NN}$ such that $m^n_\la$ is a minimizer of~\eqref{eq-thin-lasso} has accumulation points (for the weak*) topology, and each of these accumulation point is a minimizer of~\eqref{eq-beurl-lasso}.
\label{prop-gammacv-lasso}
\end{prop}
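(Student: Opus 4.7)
The plan is to verify the two halves of Definition~\ref{def-gammacv} and then extract the convergence of infima and the characterization of weak* accumulation points via standard $\Gamma$-convergence arguments on the weak* compact metrizable set $X$.

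\textbf{Liminf inequality.} Given $(m^n)\in X^\NN$ with $\supp m^n\subset \Gg_n$ and $m^n\to m$ weak*, I would combine two ingredients recalled in the Notations and preliminaries. First, the total variation is sequentially weak* lower semicontinuous, so $\liminf_n |m^n|(\TT)\geq |m|(\TT)$. Second, by Lemma~\ref{lem-phi-compact} the operator $\Phi$ is weak* to weak continuous, so $\Phi m^n\to \Phi m$ weakly in $L^2(\TT)$, and the weak lower semicontinuity of $u\mapsto \|u-y\|^2$ yields $\liminf_n \|\Phi m^n-y\|^2\geq \|\Phi m-y\|^2$. Adding the two bounds gives the claim.

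\textbf{Limsup inequality.} Given $m\in X$, I would construct a recovery sequence by nearest-point aggregation. Let $V_i^n\eqdef [i\stepsizen-\stepsizen/2,\, i\stepsizen+\stepsizen/2)$ be the half-open Voronoi cell around the $i$-th grid point (interpreted modulo $1$ on $\TT$), and set $m^n\eqdef \sum_{i=0}^{\taillegridn-1} m(V_i^n)\,\delta_{i\stepsizen}$. Then $\supp m^n\subset \Gg_n$ and $|m^n|(\TT)\leq \sum_i |m|(V_i^n)=|m|(\TT)$, so $m^n\in X$. For any $\psi\in C(\TT)$,
\begin{align*}
\left|\int_\TT \psi\, d m^n-\int_\TT \psi\, dm\right|
=\left|\sum_i \int_{V_i^n}[\psi(i\stepsizen)-\psi(t)]\,dm(t)\right|
\leq \omega_\psi(\stepsizen/2)\cdot |m|(\TT),
\end{align*}
where $\omega_\psi$ is the modulus of continuity of $\psi$, so $m^n\to m$ weak*. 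Applied to $t\mapsto \varphi(x,t)$ together with the uniform continuity of $\varphi$ on the compact $\TT\times\TT$, the same computation yields $\|\Phi m^n-\Phi m\|_\infty\leq \omega_\varphi(\stepsizen/2)\cdot |m|(\TT)\to 0$. Hence the energies of $m^n$ converge to that of $m$, and the limsup inequality holds with equality.

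\textbf{Convergence of infima and minimizers.} Existence of a minimizer $m^n_\la$ for the finite-dimensional coercive convex problem~\eqref{eq-thin-lasso} is standard, and by~\eqref{eq-X-bounded} the family $(m^n_\la)$ lies in the weak* compact metrizable set $X$, hence admits weak* accumulation points. Let $m^*$ be such a limit along a subsequence $(m^{n_k}_\la)$. For any candidate $m\in X$, let $(\tilde m^n)$ be a recovery sequence supplied by the limsup inequality; since $\supp \tilde m^{n_k}\subset \Gg_{n_k}$, the measure $\tilde m^{n_k}$ is feasible for~\eqref{eq-thin-lasso} at step $n_k$, so the optimality of $m^{n_k}_\la$ combined with the liminf inequality applied to $(m^{n_k}_\la)$ gives
\begin{align*}
\la |m^*|(\TT)+\frac{1}{2}\|\Phi m^*-y\|^2
&\leq \liminf_k\left(\la |m^{n_k}_\la|(\TT)+\frac{1}{2}\|\Phi m^{n_k}_\la-y\|^2\right)\\
&\leq \liminf_k\left(\la |\tilde m^{n_k}|(\TT)+\frac{1}{2}\|\Phi \tilde m^{n_k}-y\|^2\right)\\
&\leq \la |m|(\TT)+\frac{1}{2}\|\Phi m-y\|^2.
\end{align*}
The bound~\eqref{eq-X-bounded} applied to solutions of~\eqref{eq-beurl-lasso} shows its infimum is attained in $X$, so taking the infimum over $m\in X$ proves that $m^*$ minimizes~\eqref{eq-beurl-lasso} and that $\inf\eqref{eq-thin-lasso}\to \inf\eqref{eq-beurl-lasso}$.

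The main obstacle is not conceptual but bookkeeping: one must check that the Voronoi cells tile the periodic torus $\TT$ without overlap and that atoms sitting on cell boundaries are handled unambiguously by the half-open convention. All quantitative bounds then follow from uniform continuity of $\psi$ and $\varphi$ on the compact sets $\TT$ and $\TT\times\TT$.
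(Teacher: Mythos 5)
Your proof is correct and follows essentially the same strategy as the paper: lower semicontinuity of total variation and the squared norm together with weak*-to-weak continuity of $\Phi$ for the liminf, and a nearest-grid-point aggregation of $m$ for the recovery sequence. The only cosmetic differences are that you center the aggregation cells on grid points rather than using $[k\stepsizen,(k+1)\stepsizen)$ as in the paper, and that you unfold the standard $\Gamma$-convergence consequence explicitly instead of invoking the Dal Maso reference.
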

In particular, if the solution $m_\la$ to~\eqref{eq-beurl-lasso} is unique, the minimizers of \eqref{eq-thin-lasso} converge towards $m_\la$.
\begin{proof}
  The liminf inequality of Definition~\eqref{def-gammacv} is a consequence of the lower semi-continuity of the total variation and the $L^2$ norm (since $\Phi$ is weak* to weak continuous, $\Phi_{\Gg_n}m^n -y$ weakly converges towards $\Phi m-y$): 
  \begin{align*}
    & \liminf_{n\to +\infty} \left(\la |m^n|(\TT)+\frac{1}{2}\norm{\Phi m^n-y}^2 \right) \\
    & \qquad \geq \la \liminf_{n\to +\infty} (|m^n|(\TT))+\frac{1}{2}\liminf_{n\to +\infty}\left(\norm{\Phi m^n-y}^2 \right)\\
    & \qquad \geq \la |m|(\TT) + \frac{1}{2}\norm{\Phi m-y}^2.
  \end{align*}

  As for the limsup inequality, we approximate $m$ with the measure $m^n= \sum_{k=0}^{\taillegridn-1} b_k \delta_{k\stepsizen}$, where $b_k=m([k\stepsizen,(k+1)\stepsizen))$. Then, for any $\psi\in C(\TT)$,
  \begin{align*}
    \left|\int_{\TT} \psi \d m - \int_{\TT} \psi\d m^n \right| &= \left|\sum_{k=0}^{\taillegridn-1} \int_{[k\stepsizen,(k+1)\stepsizen)} (\psi(x)-\psi(k\stepsizen))\d m \right|\\
    &\leq \omega_\psi(\stepsizen)|m|(\TT),
  \end{align*}
  where $\omega_\psi: t\mapsto \sup_{|x'-x|\leq t}|\psi(x)-\psi(x')|$ is the modulus of continuity of $\psi$. Therefore, $\lim_{n\to +\infty} \langle m^n, \psi\rangle = \langle m,\psi\rangle$, and $m^n$ weakly* converges towards $m$. Incidentally, observe that $|m^n|(\TT)\leq |m|(\TT)$, so that using the liminf inequality we get $\lim_{n\to+\infty} |m^n|(\TT)= |m|(\TT)$. Moreover, by similar majorizations, one may prove that $\Phi m^n$ converges strongly in $L^2(\TT)$ towards $\Phi m$. As a result $\lim_{n\to +\infty} \norm{\Phi m^n-y}^2 =\norm{\Phi m-y}^2$, and the limsup inequality is proved.

  Eventually, from~\eqref{eq-X-bounded} we deduce the compactness of $X$, hence the existence of accumulation points,  and~\cite[Theorem~7.8]{Dalmaso1993} implies that accumulation points of $(m^n_\la)_{n\in\NN}$ are minimizers of \eqref{eq-beurl-lasso}, as well as~\eqref{eq-cvinf-lasso}.
\end{proof}

The weak* convergence of the minimizers of~\eqref{eq-thin-lasso} can be described more accurately by studying the dual certificates $p_\la$ and looking at the support of the solutions $m_\la^n$ to~\eqref{eq-thin-lasso} (see~\cite[Section~5.4]{2013-duval-sparsespikes}). One may prove that $m_\la^n$ is generally composed of at most one pair of Dirac masses in the neighborhood of each Dirac mass of the solution $m_{\la}^\infty=\sum_{\nu=1}^{N_\la} \alpha_{\la,\nu}\delta_{x_{\la,\nu}}$ to~\eqref{eq-beurl-lasso}. More precisely, 
\begin{prop}
  Let $\la >0$, and assume that there exists a solution to~\eqref{eq-beurl-lasso} which is a sum of a finite number of Dirac masses: $m_{\la}^\infty=\sum_{\nu=1}^{N_\la} \alpha_{\la,\nu}\delta_{x_{\la,\nu}}$ (where $\alpha_\nu\neq 0$). Assume that the corresponding dual certificate $\eta_\la^{\infty}=\Phi^*p_{\la}^\infty$ satisfies $|\eta_\la^\infty(t)|<1$ for all $t\in \TT\setminus \{x_1,\ldots ,x_N \}$.
    
  Then any sequence of solution $m_{\la}^n=\sum_{i=0}^{\taillegridn-1} \veccont_{\la,i} \delta_{i\stepsizen}$ to~\eqref{eq-thin-lasso} satisfies 
  \begin{align*}
    \limsup_{n\to+\infty}\left(\supp(m_{\la}^n)\right)\subset \{x_1, \ldots x_N\}.
  \end{align*}
  If, moreover, $m_{\la}^\infty$ is the unique solution to~\eqref{eq-beurl-lasso}, 
  \begin{align}
    \lim_{n\to+\infty}\left(\supp(m_{\la}^n)\right)= \{x_1, \ldots x_N\}.
  \end{align}

  If, additionally, $(\eta_\la^{\infty})''(x_\nu)\neq 0$ for some $\nu \in\{1,\ldots,N\}$, then for all $n$ large enough, the restriction of $m_{\la}^{n}$ to $(x_\nu-r,x_\nu+r)$ (with $0<r<\frac{1}{2} \min_{\nu-\nu'}|x_{\la,\nu}-x_{\la,\nu'}|$) is a sum of Dirac masses of the form $ \vecthin_{\la,i}\delta_{i\stepsizen}+ \vecthin_{\la,i+\varepsilon_{i,n}}\delta_{(i+\varepsilon_{i,n})\stepsizen}$ with $\varepsilon_{i,n}\in\{-1,1\}$, $\vecthin_{\la,i}\neq 0$ and $\sign(\vecthin_{\la,i})=\sign(\alpha_{\la,\nu})$. Moreover, if $\vecthin_{\la,i+\varepsilon_{i,n}}\neq 0$, $\sign(\vecthin_{\la,i+\varepsilon_{i,n}})=\sign(\alpha_{\la,\nu})$.
\end{prop}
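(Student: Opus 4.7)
The plan is to combine the $\Gamma$-convergence from Proposition~\ref{prop-gammacv-lasso} with a uniform $C^2$-convergence of the dual certificates $\eta_\la^n \eqdef \Phi^* p_\la^n$ towards $\eta_\la^\infty = \Phi^* p_\la^\infty$. To establish the convergence of the certificates, observe that the dual~\eqref{eq-abstract-dual-lasso} of~\eqref{eq-thin-lasso} identifies $p_\la^n$ as the projection of $y/\la$ onto the closed convex set $C_n = \{p \in L^2(\TT) : \max_k |(\Phi^* p)(k\stepsizen)| \leq 1\}$. Since $\stepsizen \to 0$, $\bigcup_n \Gg_n$ is dense in $\TT$, and the continuity of $\Phi^* p$ yields $\bigcap_n C_n = C^\infty$. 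A standard argument on projections onto nested convex sets (extract a weak $L^2$ limit of the bounded sequence $(p_\la^n)$, which lies in $C^\infty$, and compare norms using $p_\la^\infty \in C^\infty \subset C_n$) then gives $p_\la^n \to p_\la^\infty$ strongly in $L^2(\TT)$. Because $\varphi \in C^2(\TT \times \TT)$, Proposition~\ref{lem-phi-compact} implies that $\Phi^* : L^2(\TT) \to C^2(\TT)$ is continuous, so $\eta_\la^n \to \eta_\la^\infty$ in $C^2(\TT)$.

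For the limsup inclusion, the optimality condition~\eqref{eq-optimal-thin-subdiff} gives $\supp(m_\la^n) \subset \{t \in \Gg_n : |\eta_\la^n(t)| = 1\}$. Any compact $K \subset \TT \setminus \{x_1, \ldots, x_{N_\la}\}$ satisfies $\max_K |\eta_\la^\infty| < 1$ by assumption, so the uniform convergence of $\eta_\la^n$ yields $\max_K |\eta_\la^n| < 1$ for $n$ large, hence $\supp(m_\la^n) \cap K = \emptyset$. If moreover $m_\la^\infty$ is the unique solution, Proposition~\ref{prop-gammacv-lasso} ensures that the whole sequence $m_\la^n$ weak* converges to $m_\la^\infty$. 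Testing against a continuous bump supported in a small neighborhood $U_\nu$ of $x_\nu$ (disjoint from the other spikes) then forces $m_\la^n(U_\nu) \to \alpha_{\la,\nu} \neq 0$, so $\supp(m_\la^n) \cap U_\nu \neq \emptyset$ for $n$ large, yielding the reverse inclusion $\{x_1, \ldots, x_{N_\la}\} \subset \liminf_n \supp(m_\la^n)$.

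Finally, assume $(\eta_\la^\infty)''(x_\nu) \neq 0$. Up to sign, suppose $\eta_\la^\infty(x_\nu) = +1$, so that $\sign(\alpha_{\la,\nu}) = +1$ and $(\eta_\la^\infty)''(x_\nu) < 0$. Pick $r > 0$ small enough that $(x_\nu - r, x_\nu + r)$ isolates $x_\nu$ among the spikes and that $(\eta_\la^\infty)'' < 0$ on $[x_\nu - r, x_\nu + r]$; by the $C^2$-convergence, $(\eta_\la^n)'' < 0$ on the same interval for all large $n$, so $\eta_\la^n$ is strictly concave there. For any three consecutive grid points $i\stepsizen, (i+1)\stepsizen, (i+2)\stepsizen$ in this interval, strict concavity yields $\eta_\la^n((i+1)\stepsizen) > \frac{1}{2}(\eta_\la^n(i\stepsizen) + \eta_\la^n((i+2)\stepsizen))$. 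Since $|\eta_\la^n| \leq 1$ at every grid point, this rules out three consecutive saturations, and also rules out two non-consecutive grid saturations in $(x_\nu - r, x_\nu + r)$ (the intermediate grid point would have to exceed $1$). Thus at most two consecutive grid points in $(x_\nu - r, x_\nu + r)$ satisfy $\eta_\la^n = 1$; these contain $\supp(m_\la^n) \cap (x_\nu - r, x_\nu + r)$, the corresponding coefficients all have sign $+1 = \sign(\alpha_{\la,\nu})$ by~\eqref{eq-optimal-thin-subdiff}, and at least one of them is nonzero by the liminf step.

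The main obstacle is the upgrade from $L^2$-convergence of $(p_\la^n)$ to uniform $C^2$-convergence of $(\eta_\la^n)$: it relies on the smoothness assumption on $\varphi$ and the continuity/compactness properties of $\Phi^*$ gathered in Proposition~\ref{lem-phi-compact}. Once that is in hand, the remainder combines strict concavity at grid points with standard weak* compactness of Radon measures.
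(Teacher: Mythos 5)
Your proposal is correct and follows essentially the route the paper has in mind: the paper omits this proof, deferring to \cite[Section~5.4]{2013-duval-sparsespikes} and to the analogous C-BP argument (Propositions~\ref{prop-cbp-cvdual} and~\ref{prop-cbp-thin-supplambda} via Lemmas~\ref{lem-cbp-cvset} and~\ref{lem-cbp-dualspikes}), which uses exactly your ingredients — set convergence of the dual feasible sets implying strong convergence of the projections $p_\la^n\to p_\la^\infty$, compactness of $\Phi^{(k),*}$ to upgrade to uniform $C^2$-convergence of the certificates, local strict concavity to limit saturations to at most two consecutive grid points, and the $\Gamma$-convergence/weak* limit to force the support to meet each neighborhood of $x_\nu$.
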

We skip the proof as it is very close to the arguments of~\cite[Section~5.4]{2013-duval-sparsespikes}. Moreover the proof of Proposition~\ref{prop-cbp-thin-supplambda} below for the C-BP is quite similar.



\subsection{Convergence of the extended support}
Now, we focus on the study of low noise regimes. The convergence of the extended support for~\eqref{eq-thin-bp} towards the extended support of~\eqref{eq-beurl-bp} is analyzed by the following proposition.

From now on, we assume that the source condition for~\eqref{eq-beurl-bp} holds, and that $\supp m_0\subset \Gg_n$ for $n$ large enough (in other words, $y_0=\Phi_{\Gg_n} \vecthinO$ for some $\vecthinO\in \RR^{\taillegridn}$), so that $m_0=\sum_{\nu=1}^N\alpha_{0,i}\delta_{x_{0,\nu}}$ is a solution of~\eqref{eq-thin-bp}. Moreover we assume that $n$ is large enough so that $|x_{0,\nu}-x_{0,\nu'}|> {2}{\stepsizen}$ for $\nu'\neq \nu$.

\begin{prop}[\cite{2013-duval-sparsespikes}]
  The following result holds:
  \begin{align}
    \lim_{n\to +\infty} \certthinO = \certbeurlO,
  \end{align}
  in the sense of the uniform convergence (which also holds for the first and second derivatives). Moreover, if $m_0$ satisfies the Non Degenerate Source Condition, for $n$ large enough, there exists $\varepsilon^n\in\{-1,0,+1\}^N$ such that
\begin{align}
\extpm^{n}(m_0) =\suppm(m_0) \cup \left(\suppm(m_0) + \varepsilon^n \stepsizen \right), 
\end{align}
where $\suppm(m_0) + \varepsilon^n \stepsizen \eqdef \enscond{ (x_{0,\nu}+\varepsilon^n_\nu h_n,\certbeurlO(x_{0,\nu})) }{ 1\leq \nu\leq N }$.
\label{prop-extsupp}
\end{prop}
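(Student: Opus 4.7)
The plan is to split the proposition into two parts: first establish the $\Cont^2$ convergence $\certthinO \to \certbeurlO$ from the variational characterizations of both certificates, then combine the Non Degenerate Source Condition with the local strict concavity/convexity of $\certbeurlO$ near each $x_{0,\nu}$ to pin down the structure of the saturation set on the grid when $n$ is large.

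For the convergence, the key remark is that $\pthinO$ and $\pbeurlO$ are both minimum $L^2$-norm vectors satisfying the same interpolation constraints $(\Phi^* p)(x_{0,\nu}) = \sign(\alpha_{0,\nu})$, the only difference being that the dual bound $|\Phi^* p| \leq 1$ is imposed pointwise on $\Gg_n$ for $\pthinO$ and on the whole torus for $\pbeurlO$. Since $\pbeurlO$ is feasible for the discrete problem, $\|\pthinO\|_2 \leq \|\pbeurlO\|_2$, so $\pthinO$ is bounded in $L^2(\TT)$. After extraction of a weakly convergent subsequence $\pthinO \rightharpoonup p^\star$, the smoothness $\phi \in \Cont^2$ implies that $\Phi^*$ maps bounded $L^2$-sets into relatively compact subsets of $\Cont^2(\TT)$ (Arzel\`a--Ascoli applied to $\partial_2^k \phi$ for $k \in \{0,1,2\}$), so that $\Phi^* \pthinO \to \Phi^* p^\star$ in $\Cont^2(\TT)$ along the subsequence. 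Since $\Gg_n$ becomes dense in $\TT$, the discrete bound passes to $\|\Phi^* p^\star\|_\infty \leq 1$, and the interpolation conditions pass to the limit as well, so $p^\star$ is feasible for the minimization~\eqref{eq-certbeurl0} defining $\pbeurlO$. The chain of inequalities $\|\pbeurlO\|_2 \leq \|p^\star\|_2 \leq \liminf_n \|\pthinO\|_2 \leq \|\pbeurlO\|_2$ forces equality, and the uniqueness of the minimum-norm element yields $p^\star = \pbeurlO$. Every subsequence produces the same limit, hence the whole sequence converges and $\certthinO \to \certbeurlO$ in $\Cont^2(\TT)$.

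For the structure of the extended support, I first observe that $\|\certbeurlO\|_\infty \leq 1$ together with $\certbeurlO(x_{0,\nu}) = \sign(\alpha_{0,\nu})$ forces $x_{0,\nu}$ to be a strict extremum of $\certbeurlO$, so the NDSC assumption $\certbeurlO''(x_{0,\nu}) \neq 0$ yields $\certbeurlO''(x_{0,\nu}) \sign(\alpha_{0,\nu}) < 0$. Choose $\delta > 0$ such that the intervals $(x_{0,\nu} - \delta, x_{0,\nu} + \delta)$ are pairwise disjoint, on each of them $\certbeurlO'' \sign(\alpha_{0,\nu}) \leq -c$ for some $c > 0$, and $|\certbeurlO| \leq 1 - \eta$ on the complement for some $\eta > 0$. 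The uniform $\Cont^2$ convergence then ensures, for $n$ large, that $|\certthinO| < 1$ off the union of these intervals (so no grid point there belongs to the extended support), and that $\certthinO \sign(\alpha_{0,\nu})$ is strictly concave on each interval while saturating at $x_{0,\nu}$. A strictly concave function equal to $\sign(\alpha_{0,\nu})$ at $x_{0,\nu}$ and bounded by $\sign(\alpha_{0,\nu})$ at the other grid points of the interval can saturate at most at one additional grid point, and this point has to be an immediate neighbour $x_{0,\nu} \pm h_n$: otherwise, applying the strict concavity inequality to $x_{0,\nu}$ and $x_{0,\nu} + k h_n$ with $|k| \geq 2$ would force a value strictly larger than $\sign(\alpha_{0,\nu})$ at an intermediate grid point, contradicting the pointwise constraint. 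This leaves exactly the three possibilities $\varepsilon^n_\nu \in \{-1, 0, +1\}$, which matches the announced form of $\extpm^n(m_0)$.

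The main obstacle is the justification of the strong $\Cont^2$ convergence, namely that $\Phi^*$ sends bounded $L^2$-sets into relatively compact subsets of $\Cont^2(\TT)$: this is where the smoothness $\phi \in \Cont^2$ is essential, invoked via an Arzel\`a--Ascoli argument on both spatial derivatives of the kernel. Once this is secured, the remainder is a routine quadratic Taylor expansion combined with elementary strict concavity.
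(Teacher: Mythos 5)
Your proof is correct and follows the same strategy as the paper's: the proposition is cited from the authors' earlier work~\cite{2013-duval-sparsespikes}, and the present paper reproduces exactly this pattern when proving the analogous C-BP statement (Proposition~\ref{prop-cv-cbpcertif}) — feasibility of the limit certificate gives a uniform $L^2$ bound, weak compactness plus compactness of $\Phi^{(k),*}$ (Lemma~\ref{lem-phi-compact}) yields $\Cont^2$ convergence along a subsequence, the limit inherits the constraints and matches the minimal-norm element by the squeeze on norms, and the usual subsequence argument upgrades this to the full sequence. Your local strict-concavity argument for the second part is likewise the standard device (compare Lemma~\ref{lem-cbp-dualspikes}, item~2), and correctly rules out both non-adjacent saturations and saturations on both sides of $x_{0,\nu}$ simultaneously.
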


That result ensures that on thin grids, there is a low noise regime for which the solutions are made of the same spikes as the original measure, plus possibly one immediate neighbor of each spike with the same sign. However, it does not predict which neighbors may appear and where (is it at the left or at the right of the original spike?).

The following Theorem refines that result by giving a sufficient condition for the spikes to appear in pairs (\ie $\epsilon_\nu=\pm 1$ for $1\leq \nu\leq N$). Moreover, it shows that the value of $\varepsilon^n$ does not depend on $n$, and it gives the explicit positions of the added spikes $\varepsilon_\nu$, for $1\leq \nu\leq N$.

\begin{thm}\label{thm-lasso-extended}
  Assume that the  operator $\Gamma_{x_0}=\begin{pmatrix}
  \Phi_{x_0} & \Phi_{x_0}'
\end{pmatrix}$ has full rank, and that $m_0$ satisfies the Non-Degenerate Source Condition. Moreover, assume that all the components of the natural shift
\begin{align}\label{eq-dfn-rho}
\rho\eqdef  (\Phi_{x_0}'^*\Pi\Phi_{x_0}')^{-1}\Phi_{x_0}'^*\Phi_{x_0}^{+,*}\sign(m_0(x_0))
\end{align}
 are nonzero, where $\Pi$ is the orthogonal projector onto $(\Im \Phi_{x_0})^\perp$. 

Then, for $n$ large enough, the extended signed support of $m_0$ on $\Gg_n$ has the form 
\begin{align}
  \extpm^n(m_0)&= \{(x_\nu, \sign(\alpha_{0,\nu}))\}_{1\leq \nu\leq N}\cup \{(x_\nu+\varepsilon_\nu \stepsizen, \sign(\alpha_{0,\nu})\}_{1\leq \nu \leq N}   \\
  \mbox{where } \varepsilon &= \sign\left(\diag(\sign(\alpha_0))\rho\right).
\end{align}
\end{thm}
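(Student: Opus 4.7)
The plan is to identify the extended signed support by exhibiting an explicit candidate $(J^n, s_{J^n})$ and checking that it meets the characterization in Lemma~\ref{lem-eta0}. Proposition~\ref{prop-extsupp} already guarantees $\extpm^n(m_0) = \suppm(m_0)\cup (\suppm(m_0)+\varepsilon^n h_n)$ for some $\varepsilon^n\in\{-1,0,+1\}^N$, so the task reduces to determining the nonzero values. Accordingly I would set $\varepsilon_\nu\eqdef \sign(\alpha_{0,\nu})\sign(\rho_\nu)$ (well defined since $\rho_\nu\neq 0$), $K^n\eqdef \{x_{0,\nu}+\varepsilon_\nu h_n\}_{1\leq \nu\leq N}$, $J^n\eqdef I\cup K^n$, $s_{J^n}\eqdef (s_I,s_I)$, and aim to prove that $\extpm^n(m_0)=(J^n,s_{J^n})$ for $n$ large.

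The main technical step is the asymptotic analysis of $v^n\eqdef (\Phi_{J^n}^*\Phi_{J^n})^{-1} s_{J^n}$, since $\Phi_{J^n}^*\Phi_{J^n}$ becomes singular as $h_n\to 0$. The remedy is a desingularizing change of variables: write $v^n_{K^n,\nu}=(\varepsilon_\nu h_n)^{-1} b^n_\nu$ and $v^n_{I,\nu}=a^n_\nu-(\varepsilon_\nu h_n)^{-1} b^n_\nu$. A first-order Taylor expansion of $\phi(\cdot,x_{0,\nu}+\varepsilon_\nu h_n)$ around $x_{0,\nu}$ then yields
\begin{align*}
  \Phi_{J^n} v^n = \Phi_{x_0} a^n + \Phi_{x_0}' b^n + O(h_n)\|b^n\|.
\end{align*}
Setting $p^n\eqdef \Phi_{J^n} v^n$, the identity $\Phi_{J^n}^* p^n = s_{J^n}$ splits into $\Phi_{x_0}^* p^n=s_I$ and $\Phi_{K^n}^* p^n=s_I$; subtracting these relations and dividing by $h_n$ forces $\Phi_{x_0}'^* p^* = 0$ in the limit. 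Combined with $\Phi_{x_0}^* p^* = s_I$, and using the full-rank assumption on $\Gamma_{x_0}$, this identifies the limit as $p^*=\Gamma_{x_0}^{+,*}\begin{pmatrix} s_I\\0\end{pmatrix}=\pvanishing$, and hence $(a^*,b^*)=(\Gamma_{x_0}^*\Gamma_{x_0})^{-1}\begin{pmatrix} s_I\\0\end{pmatrix}$. A block inversion with Schur complement $\Phi_{x_0}'^*\Pi\Phi_{x_0}'$ yields $b^*=-(\Phi_{x_0}'^*\Pi\Phi_{x_0}')^{-1}\Phi_{x_0}'^*\Phi_{x_0}^{+,*}s_I=-\rho$.

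With this asymptotics at hand, $v^n_{K^n,\nu}\sim -(\varepsilon_\nu h_n)^{-1}\rho_\nu$ is nonzero for $n$ large (using $\rho_\nu\neq 0$) and of sign $-\varepsilon_\nu\sign(\rho_\nu)$. The sign condition $s_{K^n,\nu}=-\sign(v^n_{K^n,\nu})$ from Lemma~\ref{lem-eta0} then reads $\sign(\alpha_{0,\nu})=\varepsilon_\nu\sign(\rho_\nu)$, which is precisely the choice made for $\varepsilon_\nu$. For the off-support condition $\|\Phi_{(J^n)^c}^* p^n\|_\infty<1$, I would appeal to the $\Cont^2$-convergence $\certthinO\to\certbeurlO$ recalled in Proposition~\ref{prop-extsupp}: on grid points away from $\supp m_0$ the strict inequality in the Non-Degenerate Source Condition gives $|\certthinO|<1$ for $n$ large, while near each $x_{0,\nu}$ a local Taylor expansion using $\certbeurlO''(x_{0,\nu})\neq 0$ confines the saturation grid points to the pair $\{x_{0,\nu},x_{0,\nu}+\varepsilon_\nu h_n\}$.

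The main obstacle is handling the singular change of variables uniformly in $n$: one must make sure that the $O(h_n)$ remainders arising when inverting the Gram matrix do not destroy either the sign of $v^n_{K^n,\nu}$ or the strict inequality at the remaining grid points. Once these asymptotics are controlled, Lemma~\ref{lem-eta0} identifies $\extpm^n(m_0)$ with the proposed $(J^n, s_{J^n})$ and delivers the announced closed-form expression for $\varepsilon$.
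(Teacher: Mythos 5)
Your proposal is correct and follows essentially the same strategy as the paper's proof: fix the candidate pair $(J^n,s_{J^n})$ with $\varepsilon_\nu=\sign(\alpha_{0,\nu})\sign(\rho_\nu)$, verify the sign condition on $v^n_{J^n\setminus I}$ and the strict off-support inequality via Lemma~\ref{lem-eta0}, and establish the asymptotics by a Taylor expansion of $\Phi_{J^n}$ together with the $\Cont^2$ convergence $\certthinO\to\certbeurlO$ plus local concavity near each $x_{0,\nu}$. The only cosmetic difference is that your singular change of variables $v^n\mapsto(a^n,b^n)$ re-derives by hand what the paper packages as Lemma~\ref{lem-apx-asymptolasso} (the block-Schur expansion of the inverse Gram matrix), and you would still need to justify that $p^n\to p^*\in\Im\Gamma_{x_0}$ to identify $p^*$ with $\pvanishing$, which is exactly what that lemma supplies.
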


In the above theorem, observe that $\Phi_{x_0}'^*\Pi\Phi_{x_0}'$ is indeed invertible since $\Gamma_{x_0}$ has full rank.

\begin{cor}\label{cor-lasso-extended}
  Under the hypotheses of Theorem~\ref{thm-lasso-extended}, for $n$ large enough, there exists constants $C^{(1)}_n>0$, $C^{(2)}_n>0$ such that 
  for $\la\leq C^{(1)}_n \min_{1\leq \nu\leq N} |\alpha_{0,\nu}|$, and for all $w\in L^2(\TT)$ such that $\|w\|_2\leq C^{(2)}_n \la$, the solution to~\eqref{eq-thin-lasso} is unique, and reads $m_\la=\sum_{\nu=1}^N (\alpha_{\la,\nu}\delta_{x_{0,\nu}}  + \beta_{\la,\nu}\delta_{x_{0,\nu}+\varepsilon\stepsizen})$, where 
\begin{align*}
  \begin{pmatrix}
    \alpha_\la\\\beta_\la
  \end{pmatrix}&= \begin{pmatrix}
    \alpha_0\\0
  \end{pmatrix}+ \Phi_{\ext_n}^+ w - \la (\Phi_{\ext_n}^* \Phi_{\ext_n})^{-1} \sign\begin{pmatrix}
    \alpha_0\\ \alpha_0
  \end{pmatrix},\\
\qwhereq   \ext_n(m_0)&= \{x_\nu\}_{1\leq \nu\leq N}\cup \{x_\nu+\varepsilon_\nu \stepsizen\}_{1\leq \nu \leq N},\\
  \varepsilon&= \sign\left(\diag(\sign(\alpha_0))\rho\right),\\
  \sign(\alpha_{\la,\nu})&= \sign(\beta_{\la,\nu})=\sign(\alpha_{0,\nu}).
\end{align*}
\end{cor}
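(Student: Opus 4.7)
The plan is to deduce the corollary as a direct application of the abstract stability result Theorem~\ref{thm-stability-dbp} in the Hilbert space $\Hh=L^2(\TT)$ with $\Op=\Phi_{\Gg_n}$, once the structural information provided by Theorem~\ref{thm-lasso-extended} has been inserted.

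First I would observe that the Non Degenerate Source Condition for~\eqref{eq-beurl-bp} together with $\supp m_0\subset \Gg_n$ implies, via Remark~\ref{rem-beurl-vs-thin}, that $m_0$ is identifiable through~\eqref{eq-thin-bp}, so the abstract framework of Section~\ref{sec-discrete-lasso} applies with $I=\supp(m_0)$ and $s_I=\sign(\alpha_0)$. Then Theorem~\ref{thm-lasso-extended} identifies the extended signed support on $\Gg_n$ as $\extpm^n(m_0)=\{(x_{0,\nu},\sign\alpha_{0,\nu})\}\cup\{(x_{0,\nu}+\varepsilon_\nu\stepsizen,\sign\alpha_{0,\nu})\}$, and in particular fixes the set $J=\ext_n(m_0)$ and the sign vector $s_J=\sign(\certthinO|_J)$ that enter Theorem~\ref{thm-stability-dbp}.

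Next I would verify the two hypotheses of Theorem~\ref{thm-stability-dbp}, namely that $\Phi_{\ext_n}$ has full rank and that all components of $v_J=(\Phi_{\ext_n}^*\Phi_{\ext_n})^{-1} s_J$ indexed by $J\setminus I$ are nonzero. For the rank, since $\varepsilon_\nu=\pm 1$, a first order Taylor expansion gives $\phi(\cdot,x_{0,\nu}+\varepsilon_\nu\stepsizen)=\phi(\cdot,x_{0,\nu})+\varepsilon_\nu\stepsizen \,\partial_2\phi(\cdot,x_{0,\nu})+O(\stepsizen^2)$, so after an invertible column rescaling $\Phi_{\ext_n}$ converges to $\Gamma_{x_0}$ which has full rank by assumption; hence $\Phi_{\ext_n}$ is injective for $n$ large. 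For the nonvanishing of $v_J$, the same scaling argument together with the identification used in Theorem~\ref{thm-lasso-extended} identifies the block of $v_J$ indexed by $J\setminus I$ with a multiple (by $\pm 1/\stepsizen$) of $\diag(\sign\alpha_0)\,\rho$ up to $O(\stepsizen)$, so $\rho\ne 0$ componentwise gives the required nonvanishing.

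With these hypotheses in force, I would then invoke Theorem~\ref{thm-stability-dbp} for the operator $\Op=\Phi_{\Gg_n}$: there exist constants $C^{(1)}_n,C^{(2)}_n>0$ such that for $\la\le C^{(1)}_n \min_\nu |\alpha_{0,\nu}|$ and $\|w\|_2\le C^{(2)}_n\la$, the solution $\solthin$ to~\eqref{eq-thin-lasso} is unique, supported exactly on $\ext_n(m_0)$, and given by
\begin{align*}
\solthin_{\ext_n}=a_{0,\ext_n}+\Phi_{\ext_n}^+ w-\la(\Phi_{\ext_n}^*\Phi_{\ext_n})^{-1}s_J.
\end{align*}
Re-indexing the coordinates of $\ext_n(m_0)$ as $(x_{0,\nu})_\nu$ and $(x_{0,\nu}+\varepsilon_\nu\stepsizen)_\nu$ yields the stated block decomposition $(\alpha_\la,\beta_\la)$, with $\alpha_0$ in the first block and $0$ in the second, and the sign predictions $\sign(\alpha_{\la,\nu})=\sign(\beta_{\la,\nu})=\sign(\alpha_{0,\nu})$ follow directly from the sign conclusion of Theorem~\ref{thm-stability-dbp} combined with the fact that both entries of $\extpm^n(m_0)$ above $x_{0,\nu}$ carry the sign $\sign(\alpha_{0,\nu})$.

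The main obstacle I expect is the rigorous identification between $v_J$ and $\rho$ in the limit $\stepsizen\to 0$: the straightforward expansion of $\Phi_{\ext_n}^*\Phi_{\ext_n}$ produces a nearly singular matrix (columns associated with $x_{0,\nu}$ and $x_{0,\nu}+\varepsilon_\nu\stepsizen$ become colinear), and one has to perform the correct block renormalization by $\diag(I,\stepsizen^{-1}I)$ before passing to the limit to recognize the block appearing in~\eqref{eq-dfn-rho}; this is essentially the same computation that underlies Theorem~\ref{thm-lasso-extended}, so no genuinely new ingredient is required beyond carefully reusing it.
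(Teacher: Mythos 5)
Your proposal is correct and follows exactly the route the paper intends: the corollary is the direct combination of Theorem~\ref{thm-lasso-extended} (which identifies $J=\ext_n(m_0)$, the shifts $\varepsilon$, and the signs $s_J=\sign(\alpha_0,\alpha_0)$) with the abstract low-noise result Theorem~\ref{thm-stability-dbp} applied to $\Op=\Phi_{\Gg_n}$, the full-rank and $v_{J\setminus I}\neq 0$ hypotheses being supplied by the same Taylor expansion and Lemma~\ref{lem-apx-asymptolasso} computation already carried out in the proof of Theorem~\ref{thm-lasso-extended}. No gap.
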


\begin{proof}[Proof of Theorem~\ref{thm-lasso-extended}]
  We define a good candidate for $\certthinO$ and using Lemma~\ref{lem-eta0} we prove that it is indeed equal to $\certthinO$ when the grid is thin enough.
  
  	To comply with the notations of Section~\ref{sec-discrete-lasso}, we write 
  	\eq{
  		\sum_{\nu=1}^N\alpha_{0,i}\delta_{x_{0,\nu}}= \sum_{k=0}^{\taillegridn-1} a_{0,k}\delta_{k\stepsizen}, 
	}
	and we let $I\eqdef\enscond{i\in\seg{0}{\taillegridn-1}}{a_{0,i}\neq 0}$. Moreover, for any choice of sign $(\epsilon_i)_{i\in I}\in \{-1,+1\}^{N}$, we set $J\eqdef\bigcup_{i\in I} \{i, i+\varepsilon_i\}$ and $s_J=(s_j)_{j\in J}$ where $s_i\eqdef s_{i+\epsilon_i}\eqdef \sign(a_{0,i})$ for $i\in I$.
  Since $|x_{0,\nu}-x_{0,\nu'}|> {2}{\stepsizen}$ for $\nu'\neq \nu$, we have $\Card J=2\times \Card I=2N$. 
  
Recalling that $\Op=\begin{pmatrix}
\phi(\cdot,0),\ldots \phi(\cdot, (\taillegridn-1)\stepsizen)
  \end{pmatrix}$, we consider the submatrices 
  \begin{align*}
  	\Op_I &\eqdef \begin{pmatrix} \phi(\cdot, i\stepsizen) \end{pmatrix}_{i\in I}
		= \begin{pmatrix} \phi(\cdot, x_{0,1}), & \ldots \phi(\cdot, x_{0,N}) \end{pmatrix} \qandq 
	\Op_{J\setminus I} &\eqdef \begin{pmatrix} \phi(\cdot, (i+\epsilon_i)\stepsizen) \end{pmatrix}_{i\in I}
  \end{align*}
  	so that up to a reordering of the columns $\Op_J=\begin{pmatrix} \Op_I & \Op_{J\setminus I} \end{pmatrix}$. 
	In order to apply Lemma~\ref{lem-eta0}, we shall exhibit a choice of $(\epsilon_i)_{i\in I}$ such that $\Op_J$ has full rank, that $v\eqdef(\Op_J^*\Op_J)^{-1}s_J$ satisfies $\sign(v_j)=-s_j$ for $j\in J\setminus I$ and $\|\Op_{J^c}^* \Op_J v\|_{\infty} <1$ . 

The following Taylor expansion holds for $\Op_{J\setminus I}$ as $n \to \infty$:
  \begin{align*}
    \Op_{J\setminus I} &= A_0 + \stepsizen (B_0 +O(\stepsizen)), \quad    \mbox{ with }   A_0=\Op_I=\Phi_{x_0}\\
      \mbox{and }   
      B_{0} &= \begin{pmatrix}
      	(\partial_2)\phi(\cdot, x_{0,1}) & \ldots & (\partial_2) \phi(\cdot, x_{0,N}) 
    	\end{pmatrix} \diag\left((\epsilon_{i_1}),\ldots (\epsilon_{i_N})\right) \\
 		&=\Phi_{x_0}'\diag\left((\epsilon_{i_1}),\ldots (\epsilon_{i_N})\right).
  \end{align*}

By Lemma~\ref{lem-apx-asymptolasso} in Appendix, the Gram matrix $\Op_J^*\Op_J$ is invertible for $n$ large enough, and
\begin{align*}
  (\Op_J^*\Op_J)^{-1} \begin{pmatrix}
    s_I\\s_I
  \end{pmatrix}=   \frac{1}{\stepsizen} \begin{pmatrix}
(\diag(\varepsilon_{i_1},\ldots, \varepsilon_{i_N}))^{-1} \rho \\
-(\diag(\varepsilon_{i_1},\ldots, \varepsilon_{i_N}))^{-1} \rho
  \end{pmatrix}+O(1),
\end{align*}
where $\rho$ is defined in~\eqref{eq-dfn-rho}, 
where $\Pi$ is the orthogonal projector onto $(\Im \Phi_{x_0})^\perp$, and for $\nu\in \seg{1}{N}$, $i_\nu$ refers to the index $i\in I$ such that $i\stepsizen=x_{0,\nu}$.
Therefore, $v_{J\setminus I}$ has the sign of $-\diag(\varepsilon_{i_1},\ldots \varepsilon_{i_N})\rho $, and it is sufficient to choose 
$\varepsilon_{i_\nu} = s_{i_\nu}\times \sign(\rho_\nu)$ to ensure that  $\sign v_{J\setminus I}=-s_{J\setminus I}$ for $n$ large enough. 

With that choice of $\varepsilon$, it remains to prove that  $\|\Op_{J^c}^* \Op_J v\|_{\infty} <1$. Let us write $\tilde{p}_n\eqdef \Op_J v=\Op_J^{+,*}\begin{pmatrix}
    s_I\\s_I
  \end{pmatrix}$. It is equivalent to prove that for $k\in J^c$,  $|\Phi^* \tilde{p}_n (k\stepsizen)|<1$. 
  Using the above Taylor expansion and Lemma~\ref{lem-apx-asymptolasso} in Appendix, we obtain that 
  \begin{align*}
    \lim_{n\to+\infty} \tilde{p}_n &= A_0^{+,*}s_I -\Pi B_0(B_0^*\Pi B_0)^{-1}B_0^*A_0^{+,*}s_I\\
    &=  \Phi_{x_0}^{+,*}\sign(\vecbeurl_{0,\cdot}) - \Pi \Phi_{x_0}' ({\Phi_{x_0}'}^*\Pi \Phi_{x_0}')^{-1}{\Phi_{x_0}'}^*\Phi_{x_0}^{+,*}\sign(\vecbeurl_{0,\cdot})\\
    &= \pvanishing \mbox{ (by~\eqref{eq-vanishing-formula}).}
  \end{align*}

Hence, $\Phi^*\tilde{p}_{n}$ and its derivatives converge to those of $\certvanishing=\certbeurlO$, and there exists $r>0$ such that for all $n$ large enough, for all $1\leq \nu\leq N$, $\Phi^*\tilde{p}_{n}$ is strictly concave (or stricly convex, depending on the sign of $\certbeurlO''(x_{0,\nu})$) in $(x_{0,\nu}-r,x_{0,\nu}+r)$. Hence, for $t\in (x_{0,\nu}-r,x_{0,\nu}+r)\setminus [x_{0,\nu},x_{0,\nu}+\varepsilon_{i(\nu)}\stepsizen]$, we have $|\Phi^* \tilde{p}_n (t)|<1$.
Since by compactness 
\eq{
	\max \enscond{ |\certbeurlO(t)|}{t\in \TT\setminus \bigcup_{\nu=1}^N (x_{0,\nu}-r,x_{0,\nu}+r)}<1
} 
we also see that for $n$ large enough 
\eq{
	\max \enscond{ |\Phi^* \tilde{p}_n(t)|}{t\in \TT\setminus \bigcup_{\nu=1}^N (x_{0,\nu}-r,x_{0,\nu}+r)}<1.
}
As a consequence, for $k\in J^c$,  $|\Phi^* \tilde{p}_n (k\stepsizen)|<1$, and from Lemma~\ref{lem-eta0}, we obtain that $\Phi^* \tilde{p}_n = \certthinO$ and $\bigcup_{\nu=1}^N \{x_{0,\nu},x_{0,\nu}+\varepsilon_{i(\nu)}\stepsizen\}$ is the extended support on $\Gg_n$.
\end{proof}

\subsection{Asymptotics of the constants}

To conclude this section, we examine the decay of the constants $C^{(1)}_n$, $C^{(2)}_n$ in Corollary~\ref{cor-lasso-extended} as $n\to +\infty$. For this we look at the values of $c_1, \ldots, c_5$ given in the proof of Theorem~\ref{thm-stability-dbp}.

By Lemma~\ref{lem-apx-asymptolasso} applied to $\Phi_{\ext_n(m_0)}=\begin{pmatrix} \Phi_{x_0} & \Phi_{x_0}+\stepsizen (\Phi_{x_0}'+O(\stepsizen))
\end{pmatrix}$, we see that 
\begin{align}
	\label{eq-lipsch-lasso-cst-1}
c_{1,n}&\eqdef \norm{R_I \Phi_{\ext_n(m_0)}^+}_{\infty,2}\sim \frac{1}{\stepsizen}\norm{(\Phi_{x_0}'^*\Pi\Phi_{x_0}'^*)^{-1}\Phi_{x_0}'^*\Pi}_{\infty,2},\\
	\label{eq-lipsch-lasso-cst-2}
c_{2,n}&\eqdef \norm{v_I}_\infty=  \normb{R_I(\Phi_{\ext_n(m_0)}^*\Phi_{\ext_n(m_0)})^{-1} \begin{pmatrix}
    s_I\\s_I
  \end{pmatrix}}_\infty \sim \frac{1}{\stepsizen}\norm{\rho}_{\infty},\\ 
  \label{eq-lipsch-lasso-cst-3}
c_{3,n}&\eqdef (\norm{R_K \Phi_{\ext_n(m_0)}^+}_{\infty,2})^{-1}\left(\umin{k \in K} |v_{k}|\right) 
	\sim \frac{\min_{k \in K} |\rho_k|}{\norm{(\Phi_{x_0}'^*\Pi\Phi_{x_0}'^*)^{-1}\Phi_{x_0}'^*\Pi}_{\infty,2}}.
\end{align} 
  However, the expressions of $c_4$ and $c_5$ lead to an overly pessimistic bound on the signal-to-noise ratio. Indeed the majorization used in~\eqref{eq-rough-snr} is too rough in this framework: it does not distinguish between neighborhoods of $x_{0,\nu}$'s, where the certificate is close to $1$, and the rest of the domain. 

\begin{prop}\label{prop-asympto-constant-lasso}
  The constants $C^{(1)}_n, C^{(2)}_n$ in Corollary~\ref{cor-lasso-extended} can be chosen as $C^{(1)}_n=O(\stepsizen)$ and $C^{(2)}_n=O(1)$, and one has
  	\eql{\label{eq-lipsch-lasso}
		\normb{ 	
  		\begin{pmatrix} \alpha_\la\\\beta_\la \end{pmatrix}
		- 
		\begin{pmatrix} \alpha_0\\0 \end{pmatrix}
 		}_{\infty} = O\pa{ \frac{w}{\stepsizen}, \frac{\la}{\stepsizen} }.
	}
\end{prop}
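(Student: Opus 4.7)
The estimates~\eqref{eq-lipsch-lasso-cst-1}--\eqref{eq-lipsch-lasso-cst-3} already give $c_{1,n}, c_{2,n} = O(1/h_n)$ and $c_{3,n} = O(1)$. To conclude, I must show that the admissible noise ratio from Theorem~\ref{thm-stability-dbp} satisfies $\|w\|/\lambda \leq C$ uniformly in $n$, since the bound~\eqref{eq-rough-snr} is too pessimistic: the constant $c_{5,n} = 1 - \|\eta_{0,J^c}^n\|_\infty$ vanishes as $n\to+\infty$ because grid points in $J^c$ may lie arbitrarily close to the saturation points of $\eta_0^n$. The plan is to replace~\eqref{eq-rough-snr} with a pointwise analysis of $|\eta_0^n(kh_n) + f(kh_n)/\lambda|<1$ on $J^c$, where $f(t) \eqdef (\Phi^* P_{\ker(\Phi_J^*)}w)(t)$, by splitting $J^c = J_{\mathrm{far}}^c \cup J_{\mathrm{near}}^c$ according to whether $kh_n$ lies beyond a small fixed distance $r>0$ from $\supp m_0$.

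On $J^c_{\mathrm{far}}$, compactness and the Non Degenerate Source Condition give $|\eta_0^\infty| \leq 1 - 2\delta$ outside $\bigcup_\nu B(x_{0,\nu}, r)$ for some $\delta > 0$, and the uniform convergence $\eta_0^n \to \eta_0^\infty$ of Proposition~\ref{prop-extsupp} yields $|\eta_0^n(kh_n)| \leq 1 - \delta$ for $n$ large. Combined with the crude bound $|f(kh_n)| \leq \|\Phi^*\|_{\infty,2}\|w\|$, the strict inequality holds as soon as $\|w\|/\lambda \leq \delta/\|\Phi^*\|_{\infty,2}$, uniformly in $n$.

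The main obstacle lies in $J^c_{\mathrm{near}}$. Around each spike $x_{0,\nu}$, set $a \eqdef x_{0,\nu}$ and $b \eqdef x_{0,\nu} + \varepsilon_\nu h_n$. By Theorem~\ref{thm-lasso-extended}, $\eta_0^n(a) = \eta_0^n(b) = s_\nu$, and since $\Phi_J^* P_{\ker(\Phi_J^*)} = 0$ one has $f(a) = f(b) = 0$. The Lagrange remainder formula (valid because $\phi \in \Cont^2$) gives, for some $\xi_t, \zeta_t$ in the convex hull of $\{a,b,t\}$,
\begin{align*}
f(t) = \tfrac{1}{2}f''(\xi_t)(t-a)(t-b), \qquad \eta_0^n(t)-s_\nu = \tfrac{1}{2}(\eta_0^n)''(\zeta_t)(t-a)(t-b).
\end{align*}
Now $|f''(\xi_t)| \leq \|\partial_2^2\phi\|_{\infty,2}\|w\|$, and by Proposition~\ref{prop-extsupp} the second derivative $(\eta_0^n)''(\zeta_t)$ converges uniformly on $|t-x_{0,\nu}| \leq r$ to $(\eta_0^\infty)''(x_{0,\nu})$, which is nonzero of sign $-s_\nu$ by the NDSC. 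Hence, for $n$ large, there are constants $C_1, C_2 > 0$ independent of $n$ such that for every $kh_n \in J^c_{\mathrm{near}}$ (necessarily outside $[a,b]$, so that $(kh_n-a)(kh_n-b) > 0$ and $s_\nu\eta_0^n(kh_n) < 1$),
\begin{align*}
|f(kh_n)| \leq C_1\|w\|(kh_n-a)(kh_n-b), \qquad 1-|\eta_0^n(kh_n)| \geq C_2(kh_n-a)(kh_n-b).
\end{align*}
The two common quadratic factors cancel in $|\eta_0^n(kh_n)| + |f(kh_n)|/\lambda < 1$, leaving the condition $\|w\|/\lambda < C_2/C_1$, uniform in $n$. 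Combining the two regimes gives $C^{(2)}_n = O(1)$.

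Plugging $\|w\| \leq C^{(2)}_n \lambda$ with $C^{(2)}_n=O(1)$ into the constraint $c_{1,n}\|w\| + c_{2,n}\lambda < T$ and using $c_{1,n}, c_{2,n} = O(1/h_n)$ forces $\lambda \leq C^{(1)}_n T$ with $C^{(1)}_n = O(h_n)$. The Lipschitz estimate~\eqref{eq-lipsch-lasso} then follows from the closed form in Corollary~\ref{cor-lasso-extended}:
\begin{align*}
\Bigl\|\begin{pmatrix}\alpha_\lambda\\\beta_\lambda\end{pmatrix}-\begin{pmatrix}\alpha_0\\0\end{pmatrix}\Bigr\|_\infty \leq \|\Phi_{\ext_n(m_0)}^+\|_{\infty,2}\|w\| + \lambda\|v_J\|_\infty,
\end{align*}
together with~\eqref{eq-lipsch-lasso-cst-1}--\eqref{eq-lipsch-lasso-cst-2}. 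The key insight is that both the saturation gap $1 - |\eta_0^n|$ and the noise perturbation $f$ vanish quadratically near $\supp m_0$ with the \emph{same} pair of roots $\{a,b\}$, so that their ratio stays bounded uniformly in $n$ even though each factor is individually of order $h_n^2$ at nearby grid points.
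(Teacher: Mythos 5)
Your proposal is correct and follows essentially the same route as the paper: a far/near split of $J^c$, with the near-spike case resolved by observing that both the noise term $\Phi^* P_{\ker\Phi_J^*}w$ and the certificate deficit $1-\eta_0^n$ vanish at the two neighboring saturation points, so the common quadratic factor $(t-a)(t-b)$ cancels and only second-derivative bounds remain. The paper applies this quadratic comparison directly to the combined function $\omega + \lambda(\eta_0^n-1)$ rather than to the two pieces separately as you do, but the mechanism is the same; note also that you correctly identify the projector as $P_{\ker\Phi_J^*}$, whereas the paper's statement ``$\Pi$ is the orthogonal projector onto $(\Im\Phi_{x_0})^\perp$'' is a small slip — it must be $\ker\Phi_{\ext_n(m_0)}^*$ for $\omega$ to vanish at $x_{0,\nu}+\varepsilon_\nu h_n$.
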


\begin{proof}  
	The proof of~\eqref{eq-lipsch-lasso} follows from applying~\eqref{eq-lipsch-lasso-cst-1} and~\eqref{eq-lipsch-lasso-cst-2} in the expression for $\al_\la$ and $\be_\la$ provided by Corollary~\ref{cor-lasso-extended}. 
  Let $\omega\eqdef\Phi^*\Pi w$, where $\Pi$ is the orthogonal projector onto $(\Im \Phi_{x_0})^\perp= \ker \Phi_{x_0}^*$. In order to ensure~\eqref{eq-abstract-strict} we may ensure that :
  \begin{align}
    |\omega(j\stepsizen) + \la\certthinO(j\stepsizen)| -\la<0,
    \label{eq-tight-snr}
  \end{align}
  for all $j\in J^c$ (that is ($j\stepsizen\notin \ext_n(m_0)$). 
  
By the Non-Degenerate Source Condition, there exists $r>0$ such that for all $\nu\in\{1,\ldots, N\}$, 
\begin{align*}
	\forall t\in (x_{0,\nu}-r,x_{0,\nu}+r),\ |\certbeurlO(t)|>0.95 \qandq |(\certbeurlO)''(t)|>\frac{3}{4}|(\certbeurlO)''(x_{0,\nu})|,
\end{align*}
and by compactness $\sup_{\TT\setminus \bigcup_{\nu=1}^N (x_{0,\nu}-r,x_{0,\nu}+r)} |\certbeurlO|<1$.
Since $\certthinO \to \certbeurlO$ (with uniform convergence of all the derivatives), for $n$ large enough,
\eq{
    \forall \nu\in\{1,\ldots, N\}, \ \forall t\in (x_{0,\nu}-r,x_{0,\nu}+r),\ |\certthinO(t)|>0.9 \qandq |(\certthinO)''(t)|>\frac{1}{2}|(\certbeurlO)''(x_{0,\nu})|,
}
(with equality of the signs) and 
\eq{
	 \sup_{\TT\setminus \bigcup_{\nu=1}^N (x_{0,\nu}-r,x_{0,\nu}+r)} |\certthinO|\leq k \eqdef\frac{1}{2} \left(\sup_{\TT\setminus \bigcup_{\nu=1}^N (x_{0,\nu}-r,x_{0,\nu}+r)} |\certbeurlO|+1\right)<1.
}

First, for $j$ such that  $j\stepsizen\in \TT\setminus \bigcup_{\nu=1}^N (x_{0,\nu}-r,x_{0,\nu}+r)$, we see that it is sufficient to assume $\norm{\Phi^*}_{\infty,2}\norm{w}_2< (1-k)\la$ to obtain~\eqref{eq-tight-snr}.

Now, let $\nu\in \{1,\ldots, N\}$ and assume that $\certbeurlO(x_{0,\nu})=1$  (so that $(\certbeurlO)''(x_{0,\nu})<0$) and that $\epsilon_{\nu}=1$, the other cases being similar.
We make the following observation: if a function $f\colon(-r,+r)\rightarrow \RR$ satisfies $f''(t)\leq C$ for some $C<0$ and $f(0)=f(\stepsizen)=0$, then $f(t)\leq \frac{C}{2} t(t-\stepsizen)<0$ for $t\in (-r,0]\cup [\stepsizen,r)$.

Notice that $\omega=\Phi^*\Pi w$ is a $C^2$ function which vanishes on $\ext_n(m_0)$ (hence at $x_{0,\nu}$ and $x_{0,\nu}+\stepsizen$), and that its second derivative is bounded by $\norm{(\Phi'')^*}_{\infty,2}\norm{w}_2$.
  Moreover, $\certthinO(x_{0,\nu})=\certthinO(x_{0,\nu}+\stepsizen)=1$ and $\sup_{(x_{0,\nu}-r,x_{0,\nu}+r)}(\certthinO)''\leq\frac{1}{2}(\certbeurlO)''(x_{0,\nu})<0$. Thus, for $\frac{\norm{w}_2}{\la}< \frac{|(\certbeurlO)''(x_{0,\nu})|}{2\norm{(\Phi'')^*}_{\infty,2}}$, we may apply the observation to $\omega(\cdot -x_{0,\nu})+ \la (\certthinO(\cdot -x_{0,\nu})-1)$ so as to get 
\begin{align*}
  \omega(t)+ \la (\certthinO(t)-1)\leq \left(\norm{(\Phi'')^*}_{\infty,2}\norm{w}_2 +\la \frac{1}{2}(\certbeurlO)''(x_{0,\nu}) \right) (t-x_{0,\nu})(t-x_{0,\nu}-\stepsizen) <0
\end{align*}
for $t\in (x_{0,\nu}-r,x_{0,\nu}]\cup[x_{0,\nu}+\stepsizen,x_{0,\nu}+r)$. 
  
On the other hand, the inequality $-\omega(t) -\la (\certthinO(t)+1)<0$ holds for $\norm{\Phi^*}_{\infty,2}\norm{w}_2 <1.9\la$.  
As a result~\eqref{eq-tight-snr} holds for all $j$ such that $j\stepsizen\in (x_{0,\nu}-r,x_{0,\nu}+r)$, provided that the signal-to-noise ratio satisfies $\frac{\norm{w}_2}{\la}\leq c$, where $c>0$ is a constant which only depends on $\min_{\nu}|(\certbeurlO)''(x_{0,\nu})|$, $\norm{\Phi^*}_{\infty,2}$, $\norm{(\Phi'')^*}_{\infty,2}$ and $\sup_{\TT\setminus \bigcup_{\nu=1}^N (x_{0,\nu}-r,x_{0,\nu}+r)} |\certbeurlO|$.
In other words, including the condition involving $c_{3,n}$, we may choose $C^{(2)}_n=\min(c_{3,n},c) = O(1)$.
\end{proof}

\section{Abstract analysis of the \lasso with cone constraint}
\label{sec-continuous-abstract}

This section studies a simple variant of the \lasso with cone constraint in an abstract setting. The results stated here shall be useful in Section~\ref{sec-contbp-thin}, since this variant turns out to be the Continuous Basis-Pursuit when the degradation operator is a convolution with an impulse response and its derivative.
Similarly to Section~\ref{sec-discrete-lasso}, we consider in this section observations in an arbitrary Hilbert space $\Hh$.

\subsection{Notations}

Given a parameter $\stepsize>0$, we consider the cone generated by the vectors $(1,\frac{\stepsize}{2})$ and $(1,-\frac{\stepsize}{2})$,
\eql{\label{eq-C-cone}
	\coneh \eqdef \enscond{(c,d) \in \RR \times \RR}{ c\geq 0 \qandq  -c\frac{\stepsize}{2}+|d|\leq 0 }.
 }
We also define the cone $\coneh^\taillegrid$ as the set of vectors $(\veccont,\shiftcont)\in \RR^\taillegrid\times \RR^\taillegrid$ such that for all $k\in\seg{0}{\taillegrid-1}$ $(\veccont_k,\shiftcont_k)\in \coneh$.

Now, given a vector $(\veccontO,\shiftcontO)\in \coneh^\taillegrid$ (\ie  $\forall k\in \seg{0}{\taillegrid-1}$, $\veccont_{0,k}\geq \frac{2}{\stepsize}|\shiftcont_{0,k}|$), we observe $y_0=\OpU \veccontO + \OpD \shiftcontO$, where $\OpU: \RR^\taillegrid\rightarrow \Hh$ and $\OpD: \RR^\taillegrid\rightarrow \Hh$ are linear operators, or its noisy version $y=y_0+w$ where $w\in \Hh$.
To recover $(\veccontO,\shiftcontO)$ from $y$ or $y_0$, we consider the following reconstruction problems:
\begin{align}
  \umin{(\veccont,\shiftcont)\in \coneh^\taillegrid} \frac{1}{2}\normd{y-\OpU \vecdisc-\OpD\shiftcont}^2 + \la \norm{\veccont}_1, \tag{$\Qq_\la(y)$}\label{eq-abstract-cbpasso}
\end{align}
 and for $\la =0$,
\begin{align}
  \umin{(\veccont,\shiftcont)\in \coneh^\taillegrid} \norm{\veccont}_1 \mbox{ such that } \OpU \veccont+ \OpD \shiftcont=y_0.  \tag{$\Qq_0(y_0)$}\label{eq-abstract-cbp}
\end{align}

Our main focus is on the support recovery properties of~\eqref{eq-abstract-cbpasso}. Precisely, we split the ``support'' of $(\veccont,\shiftcont)\in\coneh^\taillegrid$ into several parts:
\begin{align}
  I~&\eqdef \supp(\veccont) \eqdef \enscond{i \in \seg{0}{\taillegrid-1}}{\veccont_i > 0}\\
&=\Iup\cup\Idown\\
\mbox{where } \Iup &\eqdef \enscond{i\in I}{\veccont_i +\frac{2}{\stepsize}\shiftcont_i >0 },\quad 
	\Idown \eqdef \enscond{i\in I}{\veccont_i -\frac{2}{\stepsize}\shiftcont_i >0 }.
\end{align}
Observe that in general $\Iup\cap \Idown\neq \emptyset$. If $(\veccont_\la, \shiftcont_\la)$ is a solution of~\eqref{eq-abstract-cbpasso}, we say that we have exact support recovery provided that $\Iup(\veccont_\la, \shiftcont_\la)= \Iup(\veccontO, \shiftcontO)$ and $\Idown(\veccont_\la, \shiftcont_\la)= \Idown(\veccontO, \shiftcontO)$. 

\begin{rem}
  The notation $\Iup$, $\Idown$, which might seem a bit obscure at this point, shall become clearer in the next section.
  It turns out that when considering the Continuous Basis-Pursuit on a grid with stepsize $\stepsize>0$, points $i$ in $\Iup$ correspond to Dirac masses which ``tend to be on the right'', that is they do not coincide with the left half-grid point $i\stepsize-\frac{\stepsize}{2}$. Similarly, points in $\Idown$ correspond to Dirac masses which ``tend to be on the left'', as they do not coincide with the right half-grid point $i\stepsize+\frac{\stepsize}{2}$. In fact, if $i\in \Iup\setminus\Idown$, it correponds to a Dirac mass at the right half-grid point: $\delta_{i\stepsize+\frac{\stepsize}{2}}$, and if $i\in \Idown\setminus\Iup$, it correponds to a Dirac mass  at the left half-grid point: $\delta_{i\stepsize-\frac{\stepsize}{2}}$. If $i\in \Iup\cap \Idown$, it correponds to a Dirac mass which may belong ``freely'' to the interval $(i\stepsize-\frac{\stepsize}{2},i\stepsize+\frac{\stepsize}{2})$.  
\end{rem}

\subsection{Parametrization as a positive \lasso }
\label{sec-cbp-another-param}

To characterize the solutions of~\eqref{eq-abstract-cbpasso} and~\eqref{eq-abstract-cbp}, it is convenient to reparametrize the problem as a \lasso with positivity constraint. Indeed, let us write for all $i\in\seg{0}{\taillegrid-1}$,
\begin{align}
	\begin{pmatrix}
\veccont_i\\\shiftcont_i
\end{pmatrix}  \eqdef  \begin{pmatrix}
    1&1\\ \frac{\stepsize}{2}&-\frac{\stepsize}{2}
  \end{pmatrix} \begin{pmatrix}
    \parU_i\\ \parD_i
  \end{pmatrix}
 	\quad\text{or}\quad
 	\begin{pmatrix}
    \parU_i\\\parD_i
  \end{pmatrix}=\frac{1}{2} \begin{pmatrix}
    \veccont_i+\frac{2}{\stepsize} \shiftcont_i\\
    \veccont_i-\frac{2}{\stepsize} \shiftcont_i\\
  \end{pmatrix}. 
\label{eq-cbp-cdv}
 \end{align}
In the following, we define the linear map
\eql{\label{eq-def-H_h}
	H_\stepsize: 
	\begin{pmatrix} \parU \\ \parD \end{pmatrix}
	\longmapsto
	\begin{pmatrix} \veccont \\ \shiftcont \end{pmatrix}
} 
 
It is clear that $(\veccont_i, \shiftcont_i)\in\coneh$ if and only if $\parU_i\geq 0$ and $\parD_i\geq 0$. Moreover, given $(\veccont, \shiftcont)\in \coneh^\taillegrid$,
  \begin{align*}
    I^c=\enscond{i\in\seg{0}{\taillegrid-1}}{(\parU_i,\parD_i)=(0,0)},\quad  \Iup &=\enscond{i\in\seg{0}{\taillegrid-1}}{\parU_i> 0},\\
\qandq    \Idown&=\enscond{i\in\seg{0}{\taillegrid-1}}{\parD_i>0}.
  \end{align*}


Therefore, Problems~\eqref{eq-abstract-cbpasso} and~\eqref{eq-abstract-cbp} are respectively equivalent to the \lasso and Basis Pursuit with positivity constraint:
\begin{align}
  	\umin{(\parU,\parD)\in (\RR_+)^{\taillegrid}\times(\RR_+)^{\taillegrid}}  
    \la \normb{ \begin{pmatrix}\parU\\ \parD \end{pmatrix} }_1 + \frac{1}{2}\left\|y-\Cop\begin{pmatrix}\parU\\ \parD \end{pmatrix}\right\|_2^2  \tag{$\tilde{\Qq}_\la(y)$}\label{eq-reparam-cbpasso}
\end{align}
\begin{align}
\qandq
  \umin{(\parU,\parD)\in (\RR_+)^{\taillegrid}\times (\RR_+)^{\taillegrid}} 
  	\normb{\begin{pmatrix}\parU\\ \parD \end{pmatrix}}_1 
	\quad\mbox{such that}\quad 
  \Cop\begin{pmatrix}\parU\\ \parD \end{pmatrix}=y_0,  \tag{$\tilde{\Qq}_0(y_0)$}\label{eq-reparam-cbp}
\end{align}
where $\Cop \eqdef \begin{pmatrix}\OpU+\frac{h}{2} \OpD & \OpU-\frac{h}{2}\OpD\end{pmatrix}:\RR^{2\taillegrid}\rightarrow \Hh$.

Observe that there is ``support recovery'' of $(\veccontO,\shiftcontO)$ through~\eqref{eq-abstract-cbpasso} if and only if there is support recovery of $(\parU_0,\parD_0)$ through~\eqref{eq-reparam-cbpasso}.
But precisely, as we shall explain below, the characterization of minimizers and the support recovery properties of the \lasso with positivity constraint~\eqref{eq-reparam-cbpasso} are quite similar to those exposed in Section~\ref{sec-discrete-lasso}. 
 
The regularization term may be written as $J:\RR^\taillegrid\rightarrow \RR\cup\{+\infty\}$, where for all $(\parU,\parD) \in \RR^{\taillegrid} \times \RR^\taillegrid$,
\begin{align*}
  \quad J(\parU,\parD) &\eqdef \left\{\begin{array}{cl}
  	\sum_{i=0}^{\taillegrid-1} (\parU_i+\parD_i) &\mbox{ if } \parU_i\geq 0 \mbox{ and } \parD_i\geq 0 \mbox{ for all } i\in\seg{0}{\taillegrid-1},\\
    +\infty     & \mbox{ otherwise.}\end{array} \right.\\
  &= \sum_{i=0}^{\taillegrid-1} j(\parU_i) + \sum_{i=0}^{\taillegrid-1}j(\parD_i),
\end{align*}
\eq{  
  \qwithq j(x) \eqdef \sup 
  	\enscond{ qx }{q\leq 1} = 
	\left\{
		\begin{array}{cc}x & \mbox{if } x\geq 0,\\+\infty &\mbox{otherwise.}\end{array}
	\right.
}
Hence, the subdifferential of $J$ is the product of the subdifferentials $\partial j(\parU_i)$ and $\partial j(\parD_i)$ for $1\leq i\leq \taillegrid-1$, where
\begin{align*}
\partial j(x)&=\left\{\begin{array}{cc}\{1\} & \mbox{if } x> 0,\\ (-\infty, 1] &\mbox{if } x=0.\end{array}\right.
\end{align*}
That is quite similar to the subdifferential of $|\cdot|$ at $x\in \RR$ which is $-1$, $[-1,1]$ or $1$ if $x<0$, $x=0$ or $x>0$ respectively, and one may adapt all the results of Section~\ref{sec-discrete-lasso} to the \lasso with positivity constraint. It essentially amounts to  replacing the conditions $\normi{\eta}\leq 1$ with $\max \eta \leq 1$ (and similarly for strict inequalities) wherever they appear. We leave the detail to the reader, and in the following, we use those results freely to derive the properties of the \lasso with cone constraint~\eqref{eq-abstract-cbpasso}.

\subsection{Optimality conditions}

Applying the results (or their straightforward adaptations) of Section~\ref{sec-discrete-lasso} to~\eqref{eq-reparam-cbpasso} and~\eqref{eq-reparam-cbp}, then composing by $H_h$, we immediately get the following results.

\begin{prop}
  Let $y\in \Hh$, $(\veccont_\la,\shiftcont_\la) \in \coneh^\taillegrid$, and $I=I(\veccont_\la,\shiftcont_\la)$. Then $(\veccont_\la,\shiftcont_\la)$ is a solution to~\eqref{eq-abstract-cbpasso} if and only if there exists $q_\la\in \Hh$ such that
\begin{align}
  \max\left( (\OpU^*+\frac{h}{2}\OpD^*)q_\la\right) \leq 1, \qandq \max\left( (\OpU^*-\frac{h}{2}\OpD^*)q_\la\right) \leq 1,\\
  (\OpU_{\Iup}^*+\frac{h}{2}\OpD_{\Iup}^*)q_\la =\bun_{\Iup}, \qandq (\OpU_{\Idown}^*-\frac{h}{2}\OpD_{\Idown}^*)q_\la = \bun_{\Idown},\label{eq-optimal-cbp-subdiff}\\
  \la \begin{pmatrix}\OpU^* \\ \OpD^* \end{pmatrix}q_\la + \begin{pmatrix}\OpU^* \\ \OpD^*\end{pmatrix}(\OpU \veccont_\la +\OpD \shiftcont_\la-y) =0. \label{eq-optimal-cbp-lagrange}
\end{align}
Similarly,  $(\veccontO,\shiftcontO)\in \coneh^\taillegrid$ is a solution to~\eqref{eq-abstract-cbp} if and only if $\OpU\veccontO + \OpD\shiftcontO=y_0$ and there exists $q\in \Hh$ such that
 \begin{align}
   \max\left( (\OpU^*+\frac{h}{2}\OpD^*)q\right) \leq 1, \qandq \max\left((\OpU^*-\frac{h}{2}\OpD^*)q\right) \leq 1,\\
  (\OpU_{\Iup}^*+\frac{h}{2}\OpD_{\Iup}^*)q =\bun_{\Iup}, \qandq (\OpU_{\Idown}^*-\frac{h}{2}\OpD_{\Idown}^*)q \leq \bun,
\end{align}
  where $I=I(\veccontO,\shiftcontO)$.
 \label{prop-optim-cbp}
\end{prop}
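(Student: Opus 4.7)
The plan is to derive the optimality conditions by going through the positive LASSO reparametrization $(\tilde{\Qq}_\la(y))$ and $(\tilde{\Qq}_0(y_0))$ exposed in Section~\ref{sec-cbp-another-param}, then pulling everything back through the linear change of variables $H_\stepsize$. Since the reparametrized problem is a LASSO with positivity constraint, the adaptation of Section~\ref{sec-discrete-lasso} announced by the authors applies directly: $(\parU_\la,\parD_\la)$ is optimal if and only if there exists $q_\la\in\Hh$ such that the dual certificate $\tilde\eta_\la \eqdef \Cop^* q_\la$ is componentwise $\leq 1$ everywhere, equal to $1$ on the support of $(\parU_\la,\parD_\la)$, and satisfies the gradient-of-datafit identity $\la \Cop^* q_\la + \Cop^*(\Cop\binom{\parU_\la}{\parD_\la}-y)=0$.

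Next I would unpack the block structure of $\Cop = \begin{pmatrix}\OpU+\tfrac{h}{2}\OpD & \OpU-\tfrac{h}{2}\OpD\end{pmatrix}$. The constraint $\tilde\eta_\la\leq 1$ splits into the two inequalities $\max(\OpU^*+\tfrac{h}{2}\OpD^*)q_\la\leq 1$ and $\max(\OpU^*-\tfrac{h}{2}\OpD^*)q_\la\leq 1$. The support equalities then read $(\OpU_{\Iup}^*+\tfrac{h}{2}\OpD_{\Iup}^*)q_\la=\bun_{\Iup}$ and $(\OpU_{\Idown}^*-\tfrac{h}{2}\OpD_{\Idown}^*)q_\la=\bun_{\Idown}$, using the characterization of $\Iup$ and $\Idown$ as $\{\parU_i>0\}$ and $\{\parD_i>0\}$ given in Section~\ref{sec-cbp-another-param}. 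This already yields the subdifferential conditions~\eqref{eq-optimal-cbp-subdiff}.

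For the Lagrangian identity, writing $\Cop = \begin{pmatrix}\OpU & \OpD\end{pmatrix}H_\stepsize$ gives $\Cop^* = H_\stepsize^T\begin{pmatrix}\OpU^*\\\OpD^*\end{pmatrix}$, and the residual rewrites as $\Cop\binom{\parU_\la}{\parD_\la}-y = \OpU\veccont_\la+\OpD\shiftcont_\la-y$. Thus the gradient condition becomes
\begin{align*}
H_\stepsize^T\left[\la\begin{pmatrix}\OpU^*\\\OpD^*\end{pmatrix}q_\la + \begin{pmatrix}\OpU^*\\\OpD^*\end{pmatrix}(\OpU\veccont_\la+\OpD\shiftcont_\la-y)\right]=0.
\end{align*}
Since $H_\stepsize$ is invertible (its determinant is $-\stepsize$), pre-multiplying by $(H_\stepsize^T)^{-1}$ yields precisely~\eqref{eq-optimal-cbp-lagrange}. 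Conversely, any $(\veccont_\la,\shiftcont_\la)\in\coneh^\taillegrid$ and $q_\la\in\Hh$ satisfying~\eqref{eq-optimal-cbp-subdiff}--\eqref{eq-optimal-cbp-lagrange} produce, via $H_\stepsize^{-1}$, admissible $(\parU_\la,\parD_\la)$ satisfying the KKT system for $(\tilde{\Qq}_\la(y))$, hence optimal by convexity.

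The same argument applied to the constrained problem $(\tilde{\Qq}_0(y_0))$ gives the second half of the proposition. The only non-routine point is the translation between $H_\stepsize$-coordinates and the original $(\veccont,\shiftcont)$-coordinates in the Lagrangian equation, but once one observes that the $H_\stepsize^T$ factor cancels by invertibility, everything reduces to bookkeeping. I do not anticipate a substantive obstacle; the substance of the result is already contained in the positive-LASSO reformulation.
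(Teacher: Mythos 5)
Your proposal is correct and takes essentially the same route as the paper: the authors state the result simply by invoking the adapted optimality conditions of Section~\ref{sec-discrete-lasso} applied to the positive-\lasso reformulation $(\tilde{\Qq}_\la(y))$, $(\tilde{\Qq}_0(y_0))$, then composing by $H_\stepsize$, and your write-up is an explicit unpacking of that one-line argument. Note only that for the Basis-Pursuit case your argument yields the equality $(\OpU_{\Idown}^* - \tfrac{h}{2}\OpD_{\Idown}^*)q = \bun_{\Idown}$ on $\Idown$, which is the condition the subdifferential characterization actually gives; the ``$\leq \bun$'' printed in the proposition is weaker and appears to be a typo.
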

The corresponding dual problems are given by
\begin{align}
  \inf_{q\in D} &\left\|\frac{y}{\la}-q  \right\|^2 \tag{$\Ee_\la(y)$}\label{eq-abstract-dual-cbpasso}\\
  \sup_{q\in D} &\langle y, q\rangle \tag{$\Ee_0(y)$}\label{eq-abstract-dual-cbp}\\
  \text{where } D&\eqdef\enscond{q\in \Hh}{\!\!\!\!\max_{k\in\seg{0}{\taillegridn-1}}(\OpU^*q)_k+\frac{\stepsize}{2}|(\OpD^*q)_k| \leq 1},
\end{align}

Again, if the inequalities outside the support are strict, it is possible to ensure the uniqueness of the solution.

\begin{prop}
  Under the hypotheses of Proposition~\ref{prop-optim-cbp}, if $\begin{pmatrix}(\OpU+\frac{\stepsize}{2}\OpD)_{\Iup} &(\OpU-\frac{\stepsize}{2}\OpD)_{\Idown}\end{pmatrix}$ has full rank and if $q_\la$ (resp. $q$) satisfies
  \begin{align}
\forall k\in I^c, \quad  (\OpU^*q_\la)_k+ \frac{\stepsize}{2}|(\OpD^*q_\la)_k| <1,\\
\forall i\in \Idown\setminus\Iup,\quad  ((\OpU^* + \frac{\stepsize}{2}\OpD^*)q_\la)_i <1,\\
\forall i\in \Iup\setminus\Idown,\quad  ((\OpU^*- \frac{\stepsize}{2}\OpD^*)q_\la)_i <1,
   \end{align}
   then $(\veccont_\la,\shiftcont_\la)$ (resp. $(\veccontO,\shiftcontO)$) is the unique solution to~\eqref{eq-abstract-cbpasso} (resp.~\eqref{eq-abstract-cbp}).
\label{prop-cbp-strict}
\end{prop}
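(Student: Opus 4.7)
The plan is to reduce to the positive-LASSO reformulation of Section~\ref{sec-cbp-another-param} and invoke the straightforward adaptation of the Fuchs-type uniqueness argument from Section~\ref{sec-discrete-lasso}. Concretely, I would first change variables via $(\parU_\la,\parD_\la) \eqdef H_\stepsize^{-1}(\veccont_\la,\shiftcont_\la)\in (\RR_+)^\taillegrid\times(\RR_+)^\taillegrid$, observing from Section~\ref{sec-cbp-another-param} that $\supp(\parU_\la)=\Iup$, $\supp(\parD_\la)=\Idown$, and that the columns of $\Cop = \begin{pmatrix}\OpU+\frac{\stepsize}{2}\OpD & \OpU-\frac{\stepsize}{2}\OpD\end{pmatrix}$ indexed by $\supp(\parU_\la,\parD_\la)$ are precisely those appearing in the full-rank hypothesis.

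Second, I would translate the hypothesized strict inequalities into strict inequalities for the positive-LASSO dual certificate $\tilde{\eta}\eqdef \Cop^* q_\la=(\tilde\eta^+,\tilde\eta^-)$ at every zero coordinate of $(\parU_\la,\parD_\la)$. This is a short case analysis: for $k\in I^c$, both $\parU_{\la,k}$ and $\parD_{\la,k}$ vanish, and the bound
\[
\tilde\eta^{\pm}_k = (\OpU^*q_\la)_k \pm \tfrac{\stepsize}{2}(\OpD^* q_\la)_k \leq (\OpU^*q_\la)_k + \tfrac{\stepsize}{2}|(\OpD^* q_\la)_k| < 1
\]
(using the corresponding hypothesis) yields both $\tilde\eta_k^+<1$ and $\tilde\eta_k^-<1$; for $k\in \Idown\setminus\Iup$ only $\parU_{\la,k}$ vanishes and the required $\tilde\eta_k^+<1$ is exactly what is assumed; the case $k\in \Iup\setminus\Idown$ is symmetric. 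A completely analogous translation applies to the Basis-Pursuit certificate~$q$.

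Third, I would run the standard Fuchs argument, adapted to the subdifferential of $J$ described in Section~\ref{sec-cbp-another-param}. The certificate $\tilde\eta$ is uniquely determined by~\eqref{eq-optimal-cbp-lagrange}, so any other minimizer $(\parU',\parD')$ of~\eqref{eq-reparam-cbpasso} must share it and satisfy $\tilde\eta \in \partial J(\parU',\parD')$. Since each $\partial j$ forces the corresponding coordinate to vanish wherever the associated component of $\tilde\eta$ is strictly less than $1$, the previous step yields $\supp(\parU')\subset \Iup$ and $\supp(\parD')\subset \Idown$. Writing~\eqref{eq-optimal-cbp-lagrange} for both minimizers and subtracting gives $\Cop^*\Cop\,((\parU',\parD')-(\parU_\la,\parD_\la))=0$ on $\Iup\cup \Idown$, so the linear independence of the corresponding columns forces $(\parU',\parD')=(\parU_\la,\parD_\la)$, and applying $H_\stepsize$ yields uniqueness of $(\veccont_\la,\shiftcont_\la)$. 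The Basis-Pursuit case is handled identically, using the constraint $\OpU\veccont+\OpD\shiftcont=y_0$ in place of the last step.

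I do not expect any genuine obstacle here. The only point that requires care is verifying that the three families of strict inequalities (one per subset $I^c$, $\Idown\setminus\Iup$, $\Iup\setminus\Idown$) exactly cover, after the change of variables $H_\stepsize$, all the zero coordinates of $(\parU_\la,\parD_\la)$; once the case analysis of paragraph two is in place, everything else is a mechanical rerun of the Fuchs proof in the positivity-constrained setting.
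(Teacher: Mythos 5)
Your proof is correct and follows exactly the route the paper intends: reparametrize via $H_\stepsize$ to the positive \lasso~\eqref{eq-reparam-cbpasso}, note that the three families of strict inequalities translate precisely into $\tilde\eta^+_k<1$ for $k\notin\Iup$ and $\tilde\eta^-_k<1$ for $k\notin\Idown$, and run the standard Fuchs uniqueness argument with the full-rank hypothesis. The paper itself gives no explicit proof here — it declares the adaptation of the Section~\ref{sec-discrete-lasso} results to the positive \lasso to be straightforward and "leaves the detail to the reader" — so your write-up is effectively the missing argument.
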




\subsection{Low noise behavior of C-BP}

The Theorem of Fuchs~\cite{fuchs2004on-sp} for the Lasso (see Remark~\ref{rem-fuchs}) extends to this setting as follows. 

\begin{prop}\label{prop-fuchs-cbp}
  Let $(\veccontO,\shiftcontO)\in \coneh^\taillegrid\setminus\{0\}$ such that 
  \eq{
  	\Coph\eqdef\begin{pmatrix}(\OpU+\frac{\stepsize}{2}\OpD)_{\Iup} &(\OpU-\frac{\stepsize}{2}\OpD)_{\Idown}\end{pmatrix}
	} 
	has full rank, and let 
\begin{align}
	T& \eqdef \min \enscond{\shiftcontOi+\frac{2}{\stepsize}\veccontOi}{i\in \Iup(\shiftcontO,\veccontO)}\cup \enscond{\shiftcontOi-\frac{2}{\stepsize}\veccontOi }{i\in\Idown(\shiftcontO,\veccontO)}.\end{align}
Then there exists constants $C^{(1)}>0,C^{(2)}>0$ such that for $\la \leq C^{(1)} T$, $\|w\|< C^{(2)}\la$, the solution to~$(\Qq_\la(y_0+w))$ is unique, satisfies $\Iup(\shiftcont_\la,\veccont_\la)=
\Iup(\shiftcontO,\veccontO)$, $\Idown(\shiftcont_\la,\veccont_\la)=\Idown(\shiftcontO,\veccontO)$, and it reads:
\begin{align*}
  \begin{pmatrix}
  \veccont_\la \\ \shiftcont_\la
\end{pmatrix} = \begin{pmatrix}
  \veccontO \\ \shiftcontO
\end{pmatrix} +H_\stepsize\Coph^+ w - \la H_\stepsize(\Coph^*\Coph)^{-1}s,
\end{align*}
where $H_\stepsize$ is defined in~\eqref{eq-def-H_h}.
\end{prop}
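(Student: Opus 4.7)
The plan is to reduce to a \lasso with positivity constraint through the change of variables of Section~\ref{sec-cbp-another-param} and then mimic the Fuchs-type construction from the proof of Theorem~\ref{thm-stability-dbp} in its stable regime $J=I$. Writing $(\parU_0,\parD_0)=H_\stepsize^{-1}(\veccontO,\shiftcontO)$, the supports of $\parU_0$ and $\parD_0$ are exactly $\Iup$ and $\Idown$, and the smallest values of $2\parU_{0,i}$ on $\Iup$ and $2\parD_{0,i}$ on $\Idown$ are precisely what $T$ measures. The full-rank hypothesis on $\Coph$ plays the role of the injectivity of $\Op_I$ in Theorem~\ref{thm-stability-dbp}.

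I would then construct a candidate $(\hat\parU,\hat\parD)$ supported on $\Iup$ and $\Idown$ (zero elsewhere) via the closed form
\eq{
	\begin{pmatrix} \hat\parU_{\Iup} \\ \hat\parD_{\Idown} \end{pmatrix}
	= \begin{pmatrix} \parU_{0,\Iup} \\ \parD_{0,\Idown} \end{pmatrix}
	+ \Coph^+ w - \la (\Coph^*\Coph)^{-1}\bun,
}
so that by construction $\Coph^*\bigl(y - \Coph(\hat\parU_{\Iup},\hat\parD_{\Idown})\bigr) = \la \bun$, i.e. the equality part of the optimality conditions of Proposition~\ref{prop-optim-cbp} holds automatically on the active set. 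Controlling the entries of $\Coph^+ w$ and $(\Coph^*\Coph)^{-1}\bun$ by their $\infty,2$- and $\infty$-operator norms, strict positivity of $\hat\parU_{\Iup}$ and $\hat\parD_{\Idown}$ is ensured as soon as $c_1\|w\|+c_2\la<T$ for constants $c_1, c_2$ depending only on $\Coph$; this produces part of the thresholds $C^{(1)}, C^{(2)}$ and guarantees that the candidate lies in $\coneh^\taillegrid$ with the prescribed $\Iup, \Idown$.

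The main obstacle is then to verify the three strict inequalities of Proposition~\ref{prop-cbp-strict} for the candidate dual vector
\eq{
	q_\la \eqdef \frac{1}{\la}\bigl(y-\OpU \hat\veccont - \OpD \hat\shiftcont\bigr) = \frac{1}{\la} P_{\ker \Coph^*} w + \Coph^{+,*}\bun.
}
As $\la\to 0$ with $\|w\|/\la$ bounded, $q_\la$ converges to the pre-certificate $\Coph^{+,*}\bun$, which takes the value $1$ on $\Iup$ through the combination $(\OpU^*+\frac{\stepsize}{2}\OpD^*)$ and on $\Idown$ through $(\OpU^*-\frac{\stepsize}{2}\OpD^*)$; the persistence of strict inequality on $I^c$, on $\Iup\setminus\Idown$ and on $\Idown\setminus\Iup$ therefore requires the pre-certificate itself to satisfy them strictly. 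This is the cone-constraint analogue of the irrepresentability condition and is implicit in the existence of positive $C^{(1)}, C^{(2)}$. Once that holds, continuity yields an additional threshold of the form $\|w\|<c_3\la$ preserving all three strict inequalities; Proposition~\ref{prop-cbp-strict} then gives uniqueness of the C-BP solution, and by linearity of $H_\stepsize$ together with the definition of the candidate, the stated closed form for $(\veccont_\la,\shiftcont_\la)$ follows.
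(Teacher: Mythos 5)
Your proposal is correct and follows exactly the route the paper takes (and leaves implicit): reparametrize through $H_\stepsize$ to a positive \lasso, apply the Fuchs-type construction of Theorem~\ref{thm-stability-dbp} in the stable case $J=I$, and push the closed form back through $H_\stepsize$. The paper explicitly states after Section~\ref{sec-cbp-another-param} that one should ``adapt all the results of Section~\ref{sec-discrete-lasso}'' and ``use those results freely,'' so no proof is written out there; you have supplied precisely the missing details. You are also right to flag that the strict inequalities of Proposition~\ref{prop-cbp-strict} must hold for the pre-certificate $\Coph^{+,*}\bun$ itself: this is the cone analogue of the irrepresentability condition, and the paper acknowledges it immediately after the statement by noting ``In general, the conditions of Proposition~\ref{prop-fuchs-cbp} do not hold.'' Two small points to tighten: (i) in your candidate step you interpret $T$ as the minimum of $2\parU_{0,i}$ over $\Iup$ and $2\parD_{0,i}$ over $\Idown$, which is what the change of variables yields, i.e. $\veccontOi \pm \frac{2}{\stepsize}\shiftcontOi$, whereas the paper's displayed $T$ swaps $\veccontO$ and $\shiftcontO$ (an apparent typo in the source that your reading corrects); and (ii) ``as $\la\to 0$ with $\|w\|/\la$ bounded'' should read ``with $\|w\|/\la$ sufficiently small,'' since $\frac{1}{\la}P_{\ker\Coph^*}w$ is only controlled by, not annihilated by, that ratio.
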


In general, the conditions of Proposition~\ref{prop-fuchs-cbp} do not hold, and the support at low noise is strictly larger than $(\Iup(\veccontO,\shiftcontO),\Idown(\veccontO,\shiftcontO))$. This support is governed by the minimal norm certificate.

\begin{defn}[Minimal norm certificate]
  Let $(\veccontO,\shiftcontO)\in \coneh^\taillegrid$. Its minimal norm certificate is $\veccertcontO \eqdef \begin{pmatrix}
    \OpU^* +\frac{\stepsize}{2} \OpD^*\\
    \OpU^* -\frac{\stepsize}{2} \OpD^*
  \end{pmatrix} q_0$ where $q_0$ is the solution to~\eqref{eq-abstract-dual-cbp} with minimal $L^2$ norm.
  The extended support is $\extud (\veccontO,\shiftcontO)=\left(\extu(\veccontO,\shiftcontO),\extd(\veccontO,\shiftcontO)\right)$, where 
\begin{align}
  \extu(\veccontO,\shiftcontO)&=\enscond{j\in\seg{0}{\taillegrid-1}}{((\OpU^* +\frac{\stepsize}{2} \OpD^*)q_0)_j=1  },\\
  \extd(\veccontO,\shiftcontO)&=\enscond{j\in\seg{0}{\taillegrid-1}}{ ((\OpU^* -\frac{\stepsize}{2} \OpD^*)q_0)_j=1}.
\end{align}
\end{defn}

From the optimality conditions, if $(\veccontO,\shiftcontO)$ is a solution of~\eqref{eq-abstract-cbp} then $\Iup \subset \extu(\veccontO,\shiftcontO)$ and $\Idown\subset \extd(\veccontO,\shiftcontO)$
(where $I=I(\veccontO,\shiftcontO)$), and $q_0$ can be characterized as
 \begin{align}
    q_0 &= \uargmin{q\in \Hh} \enscond{\|q\|_2}{\begin{pmatrix}
    \OpU^* +\frac{\stepsize}{2} \OpD^*\\
    \OpU^* -\frac{\stepsize}{2} \OpD^*
  \end{pmatrix} q\in\partial J(\parUO,\parDO)}.
 \end{align}

\begin{lem}\label{lem-mu0}
  Let $\Jup,\Jdown \subset \seg{0}{\taillegrid-1}$, and $(\veccontO,\shiftcontO)\in\coneh^\taillegrid$. Assume that $(\Iup,\Idown)\eqdef (\Iup(\veccontO,\shiftcontO),\Idown(\veccontO,\shiftcontO))$  is such that 
  	\eq{
	 	\Iup\subset \Jup,  \quad
		\Idown\subset \Jdown
		\qandq \Copt\eqdef\begin{pmatrix}(\OpU+\frac{\stepsize}{2}\OpD)_{\Jup} &(\OpU-\frac{\stepsize}{2}\OpD)_{\Jdown}\end{pmatrix}
	} 
	has full rank. Define $\begin{pmatrix} u_{\Jup}\\v_{\Jdown} \end{pmatrix} \eqdef -(\Copt^*\Copt)^{-1}s$ where $s \eqdef \begin{pmatrix}1\\\vdots\\ 1\end{pmatrix}\in \RR^{|\Jup|+|\Jdown|}$. Then $(\Jup,\Jdown)$ is the extended support of $(\veccontO,\shiftcontO)$ if and only if the following two conditions hold:
\begin{itemize}
  \item for all $j\in \Jup\setminus \Iup$, $u_j \geq 0$, and for all $j\in \Jdown\setminus \Idown$, $v_j\geq 0$.
  \item $\max \left[\begin{pmatrix}(\OpU+\frac{\stepsize}{2}\OpD)_{\Jup^c}^* \\(\OpU-\frac{\stepsize}{2}\OpD)_{\Jdown^c}^* \end{pmatrix} \Copt \begin{pmatrix} u_{\Jup}\\v_{\Jdown} \end{pmatrix}\right]<1$.
\end{itemize}
Moreover, in that case, the minimal norm certificate is given by 
\eq{
	\veccertcontO =-\Cop^*(\Copt^*\Copt)^{-1}s.
}
\end{lem}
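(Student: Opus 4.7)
The plan is to mirror the strategy of Lemma~\ref{lem-eta0}, adapting it to the positive \lasso reformulation of the problem. Recall that, by the change of variables~\eqref{eq-cbp-cdv}, the minimal norm certificate corresponds to the solution $q_0$ of
\[
	\umin{q \in \Hh} \enscondb{\tfrac{1}{2}\norm{q}^2}{ (\OpU^*+\tfrac{\stepsize}{2}\OpD^*)q \leq \bun,\ (\OpU^*-\tfrac{\stepsize}{2}\OpD^*)q \leq \bun,\ \text{with equality on }\Iup,\ \Idown}.
\]
The KKT conditions for this convex quadratic program with affine constraints state that $q$ equals $q_0$ if and only if the primal constraints hold and there exist multipliers $(u_j)_{j=0}^{P-1}$, $(v_j)_{j=0}^{P-1}$ with $u_j \geq 0$ for $j \notin \Iup$, $v_j \geq 0$ for $j \notin \Idown$ (unconstrained signs on $\Iup$, $\Idown$), the complementary slackness conditions $u_j((\OpU^*+\tfrac{h}{2}\OpD^*)q - 1)_j = 0$ and similarly for $v_j$, and the stationarity equation
\[
	q + (\OpU+\tfrac{\stepsize}{2}\OpD) u + (\OpU - \tfrac{\stepsize}{2}\OpD) v = 0.
\]

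First I would prove the direct implication: suppose $(\Jup,\Jdown) = \extud(\veccontO,\shiftcontO)$, so the active set of the first (resp.\ second) constraint at $q_0$ is exactly $\Jup$ (resp.\ $\Jdown$). Complementary slackness forces $u_j = 0$ for $j \notin \Jup$ and $v_j = 0$ for $j \notin \Jdown$, so stationarity reduces to $q_0 = -\Copt \begin{pmatrix} u_{\Jup}\\ v_{\Jdown}\end{pmatrix}$. Testing against $\Copt$ and using that $\Copt^* q_0 = -s$ (since all those constraints are active) gives, by injectivity of $\Copt$,
\[
	\begin{pmatrix} u_{\Jup} \\ v_{\Jdown} \end{pmatrix} = -(\Copt^*\Copt)^{-1} s,
\]
which is exactly the vector in the statement. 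The sign constraints on the multipliers translate to the first bullet; the strict inequality of the second bullet follows from the fact that indices outside $\Jup \cup \Jdown$ correspond to inactive constraints, i.e.\ strictly satisfied ones, and reads precisely as $\max [\,\cdots\,] < 1$ after substituting the expression of $q_0$. This also gives $\veccertcontO = -\Cop^*(\Copt^*\Copt)^{-1}s$.

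For the converse, assume $(\Jup,\Jdown)$ satisfies the two bullets together with $\Iup \subset \Jup$, $\Idown \subset \Jdown$. Set $q \eqdef -\Copt (\Copt^*\Copt)^{-1}s$, and define the multipliers $u,v$ by the displayed formula on $\Jup, \Jdown$ and $0$ elsewhere. The computation $\Copt^* q = -s$ shows that $(\OpU^*\pm\tfrac{h}{2}\OpD^*)q$ equals $1$ on $\Jup$ (resp.\ $\Jdown$); the second bullet ensures strict inequality on the complement, so all primal constraints hold (in particular those of $\Iup$, $\Idown$, since $\Iup\subset\Jup$, $\Idown\subset\Jdown$). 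The first bullet, combined with the fact that the multipliers indexed by $\Iup$ and $\Idown$ are unconstrained in sign, gives the nonnegativity of $u_j$ and $v_j$ wherever required. Complementary slackness holds by construction, and stationarity is just the defining relation for $q$. Hence $q = q_0$, and in particular the active sets of the constraints at $q_0$ are exactly $\Jup, \Jdown$, i.e.\ $(\Jup,\Jdown) = \extud(\veccontO,\shiftcontO)$.

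The main bookkeeping obstacle is correctly accounting for the split $\Iup \cup \Idown$ versus $\Jup \cup \Jdown$ and the overlap $\Iup \cap \Idown$: multipliers indexed by the original support are Lagrange multipliers for equality constraints (hence unconstrained in sign), while those indexed by $\Jup \setminus \Iup$ and $\Jdown \setminus \Idown$ come from inequality constraints and thus yield the sign conditions of the first bullet. Once this is set up correctly, both directions follow directly from the KKT system, just as in Lemma~\ref{lem-eta0}.
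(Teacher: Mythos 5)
Your proposal follows exactly the route the paper intends: the authors omit the proof of Lemma~\ref{lem-mu0} on the grounds that it is identical to that of Lemma~\ref{lem-eta0}, and your KKT argument for the minimal-norm dual problem (stationarity $q_0=-\Cop\begin{pmatrix}u\\v\end{pmatrix}$, complementary slackness, nonnegativity imposed only on the multipliers of the inequality-constrained indices $\Jup\setminus\Iup$ and $\Jdown\setminus\Idown$) is precisely that adaptation. The only blemish is a recurring sign slip: since the active constraints read $(\Cop^*q_0)_j=+1$, one has $\Copt^*q_0=s$ (not $-s$), and in the converse direction the candidate should be $q=+\Copt(\Copt^*\Copt)^{-1}s=-\Copt\begin{pmatrix}u_{\Jup}\\v_{\Jdown}\end{pmatrix}$; this does not affect your final formula for the multipliers, and it mirrors an analogous sign inconsistency already present in the paper's own proof of Lemma~\ref{lem-eta0} (where $p=-\Op_Jv_J$ appears in the converse while $p_0=\Op_Jv_J$ is used in the direct implication).
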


The proof is identical to the one of Lemma~\ref{lem-eta0}, therefore we omit it. We are now in position to describe the behavior of~\eqref{eq-abstract-cbpasso} at low  noise in the generic case:

\begin{thm}\label{thm-abstract-cbp}
  Let $(\veccontO,\shiftcontO)\in (\RR_+)^2\setminus\{0\}$ be an identifiable signal, $(\Jup,\Jdown)=\extud(\veccontO,\shiftcontO)$ such that
   $\Copt\eqdef\begin{pmatrix}(\OpU+\frac{\stepsize}{2}\OpD)_{\Jup} &(\OpU-\frac{\stepsize}{2}\OpD)_{\Jdown}\end{pmatrix}$ has full rank. Let  
   $\begin{pmatrix} u_{\Jup}\\v_{\Jdown} \end{pmatrix}\eqdef-(\Copt^*\Copt)^{-1}s$ where $s \eqdef \begin{pmatrix}1\\\vdots\\ 1\end{pmatrix}\in \RR^{|\Jup|+|\Jdown|}$,
  and assume that for all $j\in \Jup\setminus \Iup$, $u_j>0$, and that for all $j\in \Jdown\setminus \Idown$, $v_j> 0$.

  Then, there exists constants $C^{(1)}>0$, $C^{(2)}>0$ such that for 
\begin{align}
  \la &\leq C^{(1)} \min \enscond{\shiftcontOi+\frac{2}{\stepsize}\veccontOi}{i\in \Iup(\veccontO,\shiftcontO)}\cup \enscond{\shiftcontOi-\frac{2}{\stepsize}\veccontOi }{i\in\Idown(\veccontO,\shiftcontO)}\end{align}
and $\norm{w}_{2}\leq C^{(2)}\la$, the solution $(\veccont_\la,\shiftcont_\la)$ to~\eqref{eq-abstract-cbpasso} is unique, $\Iup(\veccont_\la,\shiftcont_\la)=\Jup$, $\Idown(\veccont_\la,\shiftcont_\la)=\Jdown$, and it reads
\begin{align*}
  \begin{pmatrix}
  \veccont_\la \\ \shiftcont_\la
\end{pmatrix} = \begin{pmatrix}
  \veccontO \\ \shiftcontO
\end{pmatrix} +H_\stepsize\Copt^+ w - \la H_\stepsize(\Copt^*\Copt)^{-1}s,
\end{align*}
where $H_\stepsize$ is defined in~\eqref{eq-def-H_h}.
\end{thm}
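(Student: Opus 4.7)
The plan is to reduce~\eqref{eq-abstract-cbpasso} to the positive Lasso via the reparametrization of Section~\ref{sec-cbp-another-param} and then follow the proof of Theorem~\ref{thm-stability-dbp}, transporting the result back through $H_\stepsize$. Writing $(\veccont,\shiftcont) = H_\stepsize(\parU,\parD)$, Problem~\eqref{eq-abstract-cbpasso} becomes~\eqref{eq-reparam-cbpasso}, with support recovery of $(\Iup,\Idown)$ for $(\veccont_\la,\shiftcont_\la)$ equivalent to support recovery of $(\Iup,\Idown)$ for $(\parU_\la,\parD_\la)$ in the positive-constrained Lasso. Everything reduces to exhibiting a correct primal--dual pair for the positive Lasso with design $\Cop$ and support $(\Jup,\Jdown)$.

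First, I would construct a candidate $(\hat\parU,\hat\parD)$ supported exactly on $(\Jup,\Jdown)$ whose restriction to this support equals the extension by zero of $(\parU_0,\parD_0)$ plus the affine correction $\Copt^+ w + \la (u_\Jup,v_\Jdown)$, where $(u_\Jup,v_\Jdown) = -(\Copt^*\Copt)^{-1}s$ as in Lemma~\ref{lem-mu0}. By construction, the gradient equation $\Copt^*(\Cop(\hat\parU,\hat\parD) - y) + \la s = 0$ is satisfied on $(\Jup,\Jdown)$, so the first-order condition of Proposition~\ref{prop-optim-cbp} on the support holds automatically.

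Second, I would verify the positivity of $(\hat\parU,\hat\parD)$ on $(\Jup,\Jdown)$. On the original support, the change of variables~\eqref{eq-cbp-cdv} identifies $2\parU_{0,i}$ and $2\parD_{0,i}$ with the two quantities in the definition of $T$, so for $i\in\Iup$ (resp.\ $i\in\Idown$) the candidate remains positive as soon as $\norm{\Copt^+}_{\infty,2}\norm{w} + \la(\max|u|\vee\max|v|) < T/2$. On the new indices $j\in\Jup\setminus\Iup$ (resp.\ $\Jdown\setminus\Idown$) the unperturbed value is zero but the $\la$-correction contributes $+\la u_j$ (resp.\ $+\la v_j$), which is strictly positive by the hypothesis $u_j,v_j>0$; hence positivity holds whenever $\norm{w} \leq c\la$ for some constant $c$ that depends on $\min u_j$, $\min v_j$ and $\norm{\Copt^+}_{\infty,2}$.

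Third, I would check the strict dual inequality required by Proposition~\ref{prop-cbp-strict}, namely that on $(\Jup^c,\Jdown^c)$ both components of $\begin{pmatrix}(\OpU+\frac{\stepsize}{2}\OpD)^* \\ (\OpU-\frac{\stepsize}{2}\OpD)^*\end{pmatrix} q_\la$ are strictly less than $1$, where $q_\la = \frac{1}{\la}(y - \Cop(\hat\parU,\hat\parD))$. A direct computation as in the proof of Theorem~\ref{thm-stability-dbp} gives $q_\la = \la^{-1}P_{\ker \Copt^*}w - \Copt^{+,*}s$, so at $w=0$ this is exactly the certificate of Lemma~\ref{lem-mu0}, which realizes the strict inequality by definition of the extended support. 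A linear perturbation bound then yields the desired strict inequality whenever $\norm{w}$ is sufficiently small compared to $\la$. Combining the three sets of affine inequalities on $(\norm{w},\la,T)$ gives the constants $C^{(1)},C^{(2)}$, and Proposition~\ref{prop-cbp-strict} delivers uniqueness; applying $H_\stepsize$ to the explicit formula for $(\hat\parU,\hat\parD)$ yields the announced expression for $(\veccont_\la,\shiftcont_\la)$. The main bookkeeping difficulty is keeping track of the four index sets $\Iup,\Idown,\Jup,\Jdown$ and their interactions with positivity, but no ingredient beyond the reparametrization and the argument of Theorem~\ref{thm-stability-dbp} is required.
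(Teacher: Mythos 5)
Your proposal is correct and takes the route the paper intends: Theorem~\ref{thm-abstract-cbp} is stated without proof precisely because it follows from applying the method of Theorem~\ref{thm-stability-dbp} (candidate solution on the extended support, positivity/sign check, strict dual inequality) to the positive-Lasso reparametrization~\eqref{eq-reparam-cbpasso} with design $\Cop$, and then transporting back through $H_\stepsize$, which is exactly what you do. One small sign slip to fix in the write-up: the dual vector should be $q_\la = \la^{-1}P_{\ker \Copt^*}w + \Copt^{+,*}s$ (not $-\Copt^{+,*}s$), so that $\Copt^*q_\la = s$ as the optimality condition on the support requires; this mirrors the expression $P_{\ker(\Op_J^*)}w + \la\Op_J^{+,*}s_J$ appearing in the proof of Theorem~\ref{thm-stability-dbp}.
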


\section{Continuous-Basis Pursuit on thin grids}
\label{sec-contbp-thin}

Facing the same inverse problem as in Section~\ref{sec-discbp-thin}, but this time assuming that each $\alpha_\nu$ ($1\leq \nu \leq N$) is positive, we aim at recovering $m_0$ using the Continuous Basis-Pursuit proposed in~\cite{Ekanadham-CBP}. Given a grid $\Gg_n$ as in Section~\ref{sec-discbp-thin}, the goal is to reconstruct a measure $m=\sum_{i=0}^{\taillegridn-1} \veccont_{i} \delta_{i\stepsizen+t_i}$ where $t_i\in[-\frac{\stepsizen}{2},\frac{\stepsizen}{2}]$ which estimates $m_0$. Applying a Taylor expansion and setting $\shiftcont_i= t_i\veccont_i$, the authors of~\cite{Ekanadham-CBP} are led to solve
\begin{align}
  \umin{(\veccont,\shiftcont)\in \conehn^\taillegridn} \frac{1}{2}\norm{y-\Phi_{\Gg_n} \veccont-\Phi_{\Gg_n}'\shiftcont}^2 + \la \norm{\veccont}_1 \tag{$\Qq^n_\la(y)$}\label{eq-thin-cbpasso}\\
  \umin{(\veccont,\shiftcont)\in \conehn^\taillegridn} \norm{\veccont}_1 \mbox{ such that } \Phi_{\Gg_n} \veccont+ \Phi'_{\Gg_n} \shiftcont=y_0.  \tag{$\Qq^n_0(y_0)$}\label{eq-thin-cbp}
\end{align}
which are particular instances of~\eqref{eq-abstract-cbpasso} and~\eqref{eq-abstract-cbp}.
The dual problems are respectively:
\begin{align}
  \inf_{q\in D^n} &\left\|\frac{y}{\la}-q  \right\|^2 \tag{$\Ee_\la^n(y)$}\label{eq-thin-dual-cbpasso}\\
  \sup_{q\in D^n} &\langle y_0, q\rangle \tag{$\Ee_0^n(y_0)$}\label{eq-thin-dual-cbp}
\end{align}
\eql{\label{eq-defn-Dn}
	\qwhereq 
	D^n\eqdef\enscond{q\in L^2(\TT)}{\!\!\!\!\max_{k\in\seg{0}{\taillegridn-1}}(\Phi^*q)(k\stepsizen)+\frac{\stepsizen}{2}|(\Phi^*q)'(k\stepsizen)| \leq 1},
}
To study the behavior of the solutions to these problems as $n$ increases, we aim at applying the results of the previous section, and in particular Lemma~\ref{lem-mu0}, in the setting where $\Hh = L^2(\TT)$ and $(A,B) = (\Phi_{\Gg_n},\Phi_{\Gg_n}')$.

\subsection{The positive Beurling \lasso}

The situation with the continuous basis pursuit on thin grids is quite similar to the situation of the \lasso.
Still, the non-negativity constraint on the components $(a_k)_{0\leq k\leq \taillegridn}$ passes to the limit, and the appropriate limit model is the \textit{positive Beurling \lasso},
\begin{align}
  	\umin{m\in \Mm^+(\TT)} \frac{1}{2}\norm{y-\Phi m}^2 + \la m(\TT), 
	\tag{$\Qq_\la^\infty(y)$}\label{eq-beurl-cbpasso}\\
 	\qandq
	\umin{m\in \Mm^+(\TT)} m(\TT) \mbox{ such that } \Phi m=y_0,  \tag{$\Qq_0^\infty(y_0)$}\label{eq-beurl-cbp}
\end{align}
where $\Mm^+(\TT)$ refers to the space of positive Radon measures. The indicator function of positive measures plus the total mass may be encoded in the quantity:
\begin{align}
  m(\TT)+ \iota_{\Mm^+(\TT)}(m)=\sup\enscond{\int_\TT \psi(t)\d m(t)}{\psi\in C(\TT)\mbox{ and } \max_{t\in\TT} \psi(t)\leq 1}.
\end{align}
As a result, the characterization of optimality, the notions of minimal norm certificates and extended support may be adapted from Section~\ref{subsec-beurl-lasso} in a straightforward manner, replacing condition $\|\eta\|_{\infty}\leq 1$ by $\sup_{t\in\TT} \mu(t)\leq 1$ where $\mu=\Phi^*q$ for $q\in L^2(\TT)$.
For instance, up to the addition of a constant, the dual problems to~\eqref{eq-beurl-cbpasso} and~\eqref{eq-beurl-cbp} are respectively:
\begin{align}
  \inf_{q\in D^\infty} \left\|\frac{y}{\la}-p  \right\|^2 \tag{$\Ee_\la^\infty(y)$}\label{eq-beurl-dual-cbpasso}\\
  \sup_{q\in D^\infty} \langle y_0, q\rangle \tag{$\Ee_0^\infty(y_0)$}\label{eq-beurl-dual-cbp}
\end{align}
\eql{	\label{eq-defn-Dinf}
  \qwhereq D^\infty \eqdef \enscond{q\in L^2(\TT)}{\max_{t\in\TT} (\Phi^*q)(t)\leq 1}. 
}

\subsection{The limit problem for thin grids}

To consider the limit of~\eqref{eq-thin-cbpasso}, let us recall that we obtain a measure from the vector $(\veccont,\shiftcont)\in \conehn^\taillegridn$ by setting
\begin{align}
  m=\sum_{i=0}^{\taillegridn-1} \veccont_{i} \delta_{i\stepsizen+\shiftcont_i/\veccont_i } \label{eq-cbp-measure}
\end{align}
with the convention that $\shiftcont_i/\veccont_i=0$ if $\veccont_i=0$. It should be noticed that $\shiftcont_i/\veccont_i\in [-\frac{\stepsizen}{2},\frac{\stepsizen}{2}]$.

We rely again on the notion on $\Gamma$-convergence to express the convergence of~\eqref{eq-thin-cbpasso} towards~\eqref{eq-beurl-cbpasso}. As before, we may restrict the problems to $X_+ \eqdef \enscond{m\in \Mm^+(\TT)}{\la |m|(\TT)\leq \frac{1}{2}\norm{y}^2}$ which is metrizable for the weak* topology.

\begin{defn}
  We say that the Problem~\eqref{eq-thin-cbpasso} $\Gamma$-converges towards Problem~\eqref{eq-beurl-cbpasso} if, for all $m\in X_+$, the following conditions hold
  \begin{itemize}
    \item{(Liminf inequality)} for any sequence of measures $(m^n)_{n\in\NN}\in X_+^\NN$ of the form~\eqref{eq-cbp-measure} with $(\veccont^{(n)},\shiftcont^{(n)})\in \conehn^\taillegridn$ such that $m^n$ weakly* converges towards $m$, 
      \begin{align*}
        \liminf_{n\to +\infty}\left(\la \|\veccont^{(n)}\|_1+\frac{1}{2}\norm{\Phi_{\Gg_n}\veccont^{(n)}+\Phi'_{\Gg_n}\shiftcont^{(n)}-y}^2\right) \geq \la m(\TT) + \frac{1}{2}\norm{\Phi m-y}^2.
      \end{align*}
    \item{(Limsup inequality)} there exists a sequence of measures $(m^n)_{n\in \NN}\in X_+^\NN$ of the form~\eqref{eq-cbp-measure} with $(\veccont^{(n)},\shiftcont^{(n)})\in \conehn^\taillegridn$ such that $m^n$ weakly* converges towards $m$  and
      \begin{align*}
        \limsup_{n\to +\infty}\left(\la \|\veccont^{(n)}\|_1+\frac{1}{2}\norm{\Phi_{\Gg_n}\veccont^{(n)}+\Phi'_{\Gg_n}\shiftcont^{(n)}-y}^2\right)\leq \la m(\TT) + \frac{1}{2}\norm{\Phi m-y}^2.
      \end{align*}
  \end{itemize}
  \label{def-gammacv-cbp}
\end{defn}

\begin{prop}
  The Problem~\eqref{eq-thin-cbpasso} $\Gamma$-converges towards \eqref{eq-beurl-cbpasso}, and 
\begin{align}
  \lim_{n\to +\infty} \inf \eqref{eq-thin-cbpasso} = \inf \eqref{eq-beurl-cbpasso}.\label{eq-cvinf-cbpasso}
\end{align}
Each sequence $(m^n_\la)_{n\in\NN}$ such that $m^n_\la$ is a minimizer of~\eqref{eq-thin-cbpasso} has accumulation points (for the weak*) topology, and each of these accumulation points is a minimizer of~\eqref{eq-beurl-cbpasso}.
\label{prop-cbpthin-convergence}
\end{prop}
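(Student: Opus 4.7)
The plan is to follow the structure of Proposition~\ref{prop-gammacv-lasso}: prove the two inequalities of the $\Gamma$-convergence separately, then invoke compactness of $X_+$ together with~\cite[Theorem~7.8]{Dalmaso1993}. The new ingredient compared to the plain \lasso case is a second-order Taylor expansion around the grid points, whose remainder is controlled by the cone constraint $|\shiftcont_i| \leq (\stepsizen/2)\veccont_i$. Throughout, write $\tau_i^{(n)} \eqdef \shiftcont_i^{(n)}/\veccont_i^{(n)} \in [-\stepsizen/2, \stepsizen/2]$, and let $m^n$ be the measure defined from $(\veccont^{(n)},\shiftcont^{(n)})$ by~\eqref{eq-cbp-measure}. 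Since $\veccont^{(n)} \geq 0$ one has $\|\veccont^{(n)}\|_1 = m^n(\TT)$, and a Taylor expansion of $\phi(x,\cdot)$ at $i\stepsizen$, multiplied by $\veccont_i^{(n)}$ and summed, gives
\begin{align*}
\norm{\Phi m^n - \Phi_{\Gg_n}\veccont^{(n)} - \Phi'_{\Gg_n}\shiftcont^{(n)}}_\infty \leq \tfrac{1}{8}\norm{\partial_{22}\phi}_\infty \stepsizen^2 \, m^n(\TT),
\end{align*}
which is the key uniform estimate.

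For the \textbf{liminf inequality}, assume $m^n \rightharpoonup^* m$ with $m^n \in X_+$. Since $m^n(\TT)$ is bounded on $X_+$, the Taylor estimate above tends to zero, and combined with the weak* to weak $L^2$ continuity of $\Phi$ (Lemma~\ref{lem-phi-compact}) I obtain $\Phi_{\Gg_n}\veccont^{(n)} + \Phi'_{\Gg_n}\shiftcont^{(n)} \rightharpoonup \Phi m$ weakly in $L^2$. Lower semi-continuity of the total mass (applied to $m^n(\TT) = \|\veccont^{(n)}\|_1$) and of $\|\cdot - y\|^2$ for the weak $L^2$ topology then yield the claimed inequality.

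For the \textbf{limsup inequality}, I would construct an explicit recovery sequence. On the torus set $C_k^n \eqdef [k\stepsizen - \stepsizen/2, k\stepsizen + \stepsizen/2)$ and
\begin{align*}
\veccont_k^{(n)} \eqdef m(C_k^n), \qquad \shiftcont_k^{(n)} \eqdef \int_{C_k^n}(t - k\stepsizen)\,\d m(t).
\end{align*}
Positivity of $m$ ensures $\veccont_k^{(n)} \geq 0$ and $|\shiftcont_k^{(n)}| \leq (\stepsizen/2)\veccont_k^{(n)}$, so $(\veccont^{(n)},\shiftcont^{(n)}) \in \conehn^{\taillegridn}$, with $\|\veccont^{(n)}\|_1 = m(\TT)$. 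For any $\psi\in C(\TT)$, the bound $|\psi(k\stepsizen + \tau_k^{(n)}) - \psi(t)| \leq \omega_\psi(\stepsizen)$ for $t \in C_k^n$ yields $|\langle m^n,\psi\rangle - \langle m,\psi\rangle| \leq \omega_\psi(\stepsizen)\, m(\TT) \to 0$, hence $m^n \rightharpoonup^* m$. The Taylor estimate above (applied directly to $\Phi m = \sum_k \int_{C_k^n}\phi(\cdot,t)\,\d m(t)$) gives strong $L^2$ convergence of $\Phi_{\Gg_n}\veccont^{(n)} + \Phi'_{\Gg_n}\shiftcont^{(n)}$ to $\Phi m$, and combined with $\la \|\veccont^{(n)}\|_1 = \la m(\TT)$ this closes the limsup step (with equality in the limit).

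Compactness and~\eqref{eq-cvinf-cbpasso} then follow exactly as in Proposition~\ref{prop-gammacv-lasso}: the bound $\la m^n_\la(\TT) \leq \tfrac{1}{2}\|y\|^2$ confines the minimizers to $X_+$, which is weak* compact and metrizable, so accumulation points exist; standard $\Gamma$-convergence theory~\cite[Theorem~7.8]{Dalmaso1993} ensures each of them minimizes~\eqref{eq-beurl-cbpasso} and that the infima converge. The main obstacle is the uniform Taylor estimate in the liminf step: it crucially uses the cone constraint $|\tau_i^{(n)}| \leq \stepsizen/2$ so that the pointwise remainder of order $(\tau_i^{(n)})^2$ combines with the partition of unity $\sum_i \veccont_i^{(n)} = m^n(\TT)$ to produce the $O(\stepsizen^2)$ bound, and it is precisely this cone constraint that makes $(\Phi_{\Gg_n},\Phi'_{\Gg_n})$ a faithful proxy for $\Phi$ on $\conehn^{\taillegridn}$.
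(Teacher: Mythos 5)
Your proof is correct, and it reaches the conclusion by a route that differs from the paper's in both key steps. For the liminf inequality, the paper does not Taylor-expand at all: it splits $\Phi_{\Gg_n}\veccont^{(n)}+\Phi'_{\Gg_n}\shiftcont^{(n)}$ into $\Phi\bigl(\sum_i \veccont^{(n)}_i\delta_{i\stepsizen}\bigr)+\Phi'\bigl(\sum_i \shiftcont^{(n)}_i\delta_{i\stepsizen}\bigr)$, kills the second term strongly via the first-order mass bound $\sum_i|\shiftcont^{(n)}_i|\leq \tfrac{\stepsizen}{2}\sum_i\veccont^{(n)}_i\to 0$, and shows the first term converges weakly to $\Phi m$ by comparing $\sum_i \veccont^{(n)}_i\delta_{i\stepsizen}$ with $m^n$ through the modulus of continuity of test functions. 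You instead compare $\Phi_{\Gg_n}\veccont^{(n)}+\Phi'_{\Gg_n}\shiftcont^{(n)}$ directly with $\Phi m^n$ via a second-order Taylor remainder of size $O(\stepsizen^2\, m^n(\TT))$; this is valid (the paper assumes $\phi\in\Cont^2$) and arguably more transparent about why the pair $(\Phi_{\Gg_n},\Phi'_{\Gg_n})$ is the right first-order proxy on the cone, at the cost of invoking $\partial_{22}\phi$ where the paper only needs boundedness of $\Phi'$ as an operator. For the limsup inequality, the paper simply reuses the \lasso recovery sequence with $\shiftcont^{(n)}=0$ and $\veccont^{(n)}_k=m([k\stepsizen,(k+1)\stepsizen))$, whereas you build the moment-matched sequence $\shiftcont^{(n)}_k=\int_{C_k^n}(t-k\stepsizen)\,\d m(t)$; both satisfy the cone constraint and give $\|\veccont^{(n)}\|_1=m(\TT)$ exactly, so both close the argument, yours yielding a better ($O(\stepsizen^2)$ rather than $O(\stepsizen)$) approximation of $\Phi m$ that the statement does not actually require. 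The compactness and \cite[Theorem~7.8]{Dalmaso1993} endgame is identical in both.
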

In particular, if the solution $m^\infty_\la$ to $\eqref{eq-beurl-cbpasso}$ is unique, the whole sequence $(m^n_\la)_{n\in\NN}$ converges towards $m^\infty_\la$.

\begin{proof}
  The proof is the same as for Proposition~\ref{prop-gammacv-lasso} with minor adaptations,  observing that $\|\veccont^{(n)}\|_1=m_n(\TT)$.
  For the liminf inequality, let $(m^n)_{n\in\NN}$ be of the form~\eqref{eq-cbp-measure} which weakly* converges towards $m$. We notice that $\Phi'_{\Gg_n}\shiftcont^{(n)}= \Phi' (\sum_{i=0}^{\taillegridn-1}\shiftcont^{(n)}_i\delta_{i\stepsizen})$, and
\begin{align*}
  \left|\sum_{i=0}^{\taillegridn-1}\shiftcont^{(n)}_i\delta_{i\stepsizen}\right|(\TT) &\leq \frac{\stepsizen}{2}\left(\sum_{i=0}^{\taillegridn-1}\veccont^{(n)}_i\right)  
    \leq \frac{\stepsizen}{2\la}\left(\frac{1}{2}\norm{y}^2+1\right)\to 0,
  \end{align*}
so that $\Phi'_{\Gg_n}\shiftcont^{(n)}$ (strongly) converges towards $0$ in $L^2(\TT)$.
Moreover, $\Phi_{\Gg_n}\veccont^{(n)}= \Phi (\sum_{i=0}^{\taillegridn-1}\veccont^{(n)}_i\delta_{i\stepsizen})$ and for all $\psi\in C(\TT)$, 
  \begin{align*}
    \left|\left\langle\sum_{i=0}^{\taillegridn-1} \veccont_{i}^{(n)} \delta_{i\stepsizen+\shiftcont_i/\veccont_i }- \sum_{i=0}^{\taillegridn-1} \veccont_{i} \delta_{i\stepsizen}, \psi \right\rangle\right| &= \left|\sum_{i=0}^{\taillegridn-1} \veccont_{i}^{(n)} (\psi(i\stepsizen+\shiftcont_i/\veccont_i)-\psi(i\stepsizen))\right|\\
    &\leq \sum_{i=0}^{\taillegridn-1} \veccont_{i}^{(n)} \omega_\psi\left(\frac{\stepsizen}{2} \right) \to 0
  \end{align*}
  where $\omega_\psi: t\mapsto \sup_{|x'-x|\leq t}|\psi(x)-\psi(x')|$ is the modulus of continuity of $\psi$. As a result, $\sum_{i=0}^{\taillegridn-1} \veccont_{i}^{(n)} \delta_{i\stepsizen}-m^n \stackrel{*}{\rightharpoonup} 0$ and $\sum_{i=0}^{\taillegridn-1} \veccont_{i}^{(n)} \delta_{i\stepsizen}$ weakly* converges to $m$. Hence, $\Phi_{\Gg_n}\veccont^{(n)}$ weakly converges towards $\Phi m$ in $L^2(\TT)$.
To sum up, $\Phi_{\Gg_n}\veccont^{(n)}+\Phi'_{\Gg_n}\shiftcont^{(n)}-y$ weakly converges towards $\Phi m-y$ and we conclude as before.
  
For the limsup inequality, the only difference is in the construction of $m^n$ for the limsup inequality:
it is sufficient to choose $\veccont^{(n)}_k=m([k\stepsizen,(k+1)\stepsizen))$ and $\shiftcont^{(n)}_k=0$ for all $k\in \seg{0}{\taillegridn-1}$.
\end{proof}

\subsection{Asymptotics of the support: generalities}

Though Proposition~\ref{prop-cbpthin-convergence} states the convergence of the solutions of~\eqref{eq-thin-cbpasso} towards those of~\eqref{eq-beurl-cbpasso}, it does not describe the supports of the solutions. We now study the convergence of those supports using dual certificates and the optimality conditions (Proposition~\ref{prop-optim-cbp}). In this context, a dual certificate is determined by a function $\mu=\Phi^*q\in\Cont(\TT)$ where $q\in L^2(\TT)$, and 
\begin{align*}
  \Iup=\enscond{i\in\seg{0}{\taillegridn-1}}{\shiftcont_i>-\frac{\stepsizen}{2}a_i} \subset \enscond{i\in\seg{0}{\taillegridn-1}}{\left(\mu+\frac{\stepsizen}{2}\mu'\right)(i\stepsizen)=1},\\
  \Idown=\enscond{i\in\seg{0}{\taillegridn-1}}{\shiftcont_i<\frac{\stepsizen}{2}a_i} \subset \enscond{i\in\seg{0}{\taillegridn-1}}{\left(\mu-\frac{\stepsizen}{2}\mu'\right)(i\stepsizen)=1}.
\end{align*}
To sum up, we shall exploit the following observations
\begin{itemize}
  \item if $\left(\mu+\frac{\stepsizen}{2}\mu'\right)(i\stepsizen)=1$ but $\left(\mu-\frac{\stepsizen}{2}\mu'\right)(i\stepsizen)<1$, a spike may appear at $i\stepsizen+\frac{\stepsizen}{2}$,
  \item if $\left(\mu-\frac{\stepsizen}{2}\mu'\right)(i\stepsizen)=1$ but $\left(\mu+\frac{\stepsizen}{2}\mu'\right)(i\stepsizen)<1$, a spike may appear at $i\stepsizen-\frac{\stepsizen}{2}$,
  \item if $\left(\mu+\frac{\stepsizen}{2}\mu'\right)(i\stepsizen)=1$ and $\left(\mu-\frac{\stepsizen}{2}\mu'\right)(i\stepsizen)=1$, a spike may appear anywhere in the interval $[i\stepsizen-\frac{\stepsizen}{2},i\stepsizen+\frac{\stepsizen}{2}]$.
\end{itemize}

The following lemma is central in our analysis. We consider a sequence of functions $(\mu^n)_{n\in\NN}$ and for $0<r<\frac{1}{2} \min_{\nu\neq \nu'} |x_\nu - x_{\nu'}|$, $\nu\in\{1,\ldots,N\}$, we study:
\begin{align*}
    \Sright(r)\eqdef\enscond{t\in \Gg_n\cap(x_\nu-r,x_\nu+r)}{\left(\mu^n+\frac{\stepsizen}{2}{\mu^n}'\right)(t)=1},\\
    \Sleft(r)\eqdef\enscond{t\in \Gg_n\cap(x_\nu-r,x_\nu+r)}{\left(\mu^n-\frac{\stepsizen}{2}{\mu^n}'\right)(t)=1}.
\end{align*}

\begin{lem}
  Let $(x_1,\ldots, x_N)\in \TT^N$ pairwise distinct, and let $\{\mu^n\}_{n\in\NN}\in (\Cont^3(\TT))^\NN$ be a sequence of functions which converges uniformly towards some $\mu^\infty$ (and similarly for the derivatives) such that for all $\nu\in\{1,\ldots, N\}$, $\mu^\infty(x_\nu)=1$ and for all $t\in \TT\setminus\{x_1,\ldots,x_N\}$, $\mu^\infty(t)<1$.

\begin{enumerate}
  \item Then 
\begin{align}
  \limsup_{n\to +\infty} \enscond{t\in \Gg_n}{\mu^n(t)+\frac{\stepsizen}{2}|{\mu^n}'(t)|=1}\subset \{x_1,\ldots, x_N\}.\label{eq-cbp-limsup}
\end{align} In particular 
there exists $n_0\in\NN$ such that for $n\geq n_0$
 \begin{align*}
   \enscond{t\in \Gg_n}{\mu^n(t)+\frac{\stepsizen}{2}|{\mu^n}'(t)|=1} = \bigcup_{\nu=1}^N  \left(\Sright(r)\cup  \Sleft(r)\right) \subset \bigcup_{\nu=1}^N(x_\nu-r,x_\nu+r).
 \end{align*}
\end{enumerate}
Assume moreover that for all $n\in\NN$ and all $t\in\Gg_n$, $\mu^n(t)+\frac{\stepsizen}{2}|{\mu^n}'(t)| \leq 1$. For each $\nu \in\{1,\ldots,N\}$:
\begin{enumerate}
\setcounter{enumi}{1}
\item\label{item-deriv2} If $(\mu^{\infty})''(x_\nu)\neq 0$, then there exists $n_0\in\NN$ such that for $n\geq n_0$, each set $\Sright(r)$ and $\Sleft(r)$ is of the form
$\emptyset$, $\{i\stepsizen\}$, or $\{i\stepsizen,(i+1)\stepsizen\}$, and if both sets are nonempty:
\begin{align*}
  \max \Sright(r) \leq \min \Sleft(r).
\end{align*}

\item If $(\mu^\infty)^{(3)}(x_\nu)\neq 0$, then there exists $n_0\in\NN$ such that for each $n\geq n_0$, $\Sright(r)=\emptyset$ or $\Sleft(r)=\emptyset$.
\item\label{item-deriv4} If $(\mu^\infty)^{(4)}(x_\nu)\neq 0$, the set of $n\in\NN$ such that $\Sright(r)=\{(i_n-1)\stepsizen,i_n\stepsizen\}$ and $\Sleft (r)= \{i_n\stepsizen,(i_n+1)\stepsizen\}$ (with the same $i_n\in\seg{0}{\taillegridn-1}$) is finite.
\end{enumerate}
\label{lem-cbp-dualspikes}
\end{lem}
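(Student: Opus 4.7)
The plan is to prove~(1)--(4) sequentially via Taylor expansions of $f_n^\pm(t)\eqdef\mu^n(t)\pm\frac{\stepsizen}{2}(\mu^n)'(t)$ near $x_\nu$, exploiting the $C^3$ uniform convergence $\mu^n\to\mu^\infty$ throughout to pass to the limit. For~(1), I would note that the saturation identity $\mu^n(t)+\frac{\stepsizen}{2}|(\mu^n)'(t)|=1$ combined with the uniform boundedness of $(\mu^n)'$ forces $\mu^n(t)\to 1$; since $\mu^\infty<1$ off $\{x_1,\ldots,x_N\}$, any accumulation point of the saturation set lies in $\{x_1,\ldots,x_N\}$. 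This yields~\eqref{eq-cbp-limsup}, and a standard compactness argument then confines the saturation set to $\bigcup_{\nu=1}^N(x_\nu-r,x_\nu+r)$ for $n$ large.

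For~(2), since $x_\nu$ is an interior maximizer of $\mu^\infty$, the hypothesis $(\mu^\infty)''(x_\nu)\neq 0$ forces $(\mu^\infty)''(x_\nu)<0$, and uniform convergence of derivatives makes $(f_n^\pm)''=(\mu^n)''\pm\frac{\stepsizen}{2}(\mu^n)^{(3)}$ uniformly strictly negative on $(x_\nu-r,x_\nu+r)$ for $n$ large (after possibly shrinking $r$), so $f_n^\pm$ is strictly concave there. The standing grid constraint $f_n^\pm\leq 1$ then forces $\{f_n^\pm=1\}\cap\Gg_n$ on this interval to have at most two adjacent elements, since any third saturation point would sandwich an intermediate grid point where $f_n^\pm>1$. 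For the ordering, suppose $t>t'$ with $f_n^+(t)=1=f_n^-(t')$; the pointwise identity $f_n^+-f_n^-=\stepsizen(\mu^n)'$ combined with the grid inequalities gives $(\mu^n)'(t)\geq 0\geq(\mu^n)'(t')$, contradicting the strict decrease of $(\mu^n)'$ on the interval.

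For~(3), by~(2) the extremal pair $t\eqdef\max\Sright(r)$, $t'\eqdef\min\Sleft(r)$ satisfies $t'-t\in\{0,\stepsizen,2\stepsizen\}$, and in each configuration a Taylor expansion of the relevant saturation equalities together with the grid inequalities at the intervening grid points, carried through order $\stepsizen^4$, yields a relation of the form $\pm\frac{\stepsizen^3}{12}(\mu^n)^{(3)}(t)\leq O(\stepsizen^4)$ or $\frac{\stepsizen^3}{12}(\mu^n)^{(3)}(t)=O(\stepsizen^4)$. In the limit these force $(\mu^\infty)^{(3)}(x_\nu)=0$, contradicting the hypothesis: for instance, when $t=t'$ one uses $\mu^n(t)=1$ and $(\mu^n)'(t)=0$ together with the inequalities $f_n^\pm(t\mp\stepsizen)\leq 1$, while the case $t'=t+\stepsizen$ follows directly by subtracting $f_n^+(t)=1$ and $f_n^-(t+\stepsizen)=1$ before expanding.

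Finally, for~(4), the given configuration places $t_0\eqdef i_n\stepsizen$ in $\Sright(r)\cap\Sleft(r)$, so $\mu^n(t_0)=1$ and $(\mu^n)'(t_0)=0$; expanding $f_n^+(t_0-\stepsizen)=1$ and $f_n^-(t_0+\stepsizen)=1$ through order $\stepsizen^4$ yields the pair $(\mu^n)^{(3)}(t_0)\mp\frac{\stepsizen}{2}(\mu^n)^{(4)}(t_0)=O(\stepsizen^2)$, and adding and subtracting these two identities produces $(\mu^n)^{(3)}(t_0)=O(\stepsizen^2)$ and $(\mu^n)^{(4)}(t_0)=O(\stepsizen)$; any subsequence realizing the configuration would thus force $(\mu^\infty)^{(4)}(x_\nu)$ to vanish, contradicting the hypothesis. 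The most delicate step will be~(3): the case $t'-t=2\stepsizen$ requires combining the two saturation equalities with the strict inequalities $f_n^\pm(t+\stepsizen)<1$ on the intervening grid point, and the leading $\stepsizen^2$ terms cancel only after careful use of these strict inequalities, leaving the third-derivative term visible at order $\stepsizen^3$.
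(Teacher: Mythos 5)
Your treatment of parts (1), (2) and (4) is correct and follows essentially the same route as the paper: compactness plus uniform convergence for (1); strict concavity of $f_n^\pm\eqdef\mu^n\pm\frac{\stepsizen}{2}(\mu^n)'$ together with the identity $f_n^+-f_n^-=\stepsizen(\mu^n)'$ for (2) (your sign argument for the ordering is a slightly slicker variant of the paper's, which locates the unique zero $\xi$ of $(\mu^n)'$ and shows $\max\Sright(r)\leq\xi\leq\min\Sleft(r)$); and for (4) your two fourth-order expansions at $i_n\stepsizen\pm\stepsizen$, added and subtracted, reproduce the paper's computation.

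Part (3), however, contains a genuine gap. You assert that ``by (2) the extremal pair $t=\max\Sright(r)$, $t'=\min\Sleft(r)$ satisfies $t'-t\in\{0,\stepsizen,2\stepsizen\}$''. This does not follow from (2): part (2) only says that each of $\Sright(r)$ and $\Sleft(r)$ consists of at most two adjacent grid points and that $\max\Sright(r)\leq\min\Sleft(r)$; it gives no upper bound on the gap $\min\Sleft(r)-\max\Sright(r)$, which may span arbitrarily many grid steps (both points tend to $x_\nu$, but nothing bounds $j-i$; indeed the configuration $\Sright(r)=\{i\stepsizen\}$, $\Sleft(r)=\{j\stepsizen\}$ with $i<j$ unrestricted is one of the admissible cases listed in Table~\ref{tab-nbdirac}). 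Your case analysis is therefore incomplete, and the case you flag as delicate ($t'-t=2\stepsizen$) is not the last one. The repair --- and the paper's actual argument --- is to decouple the two points entirely: from the saturation $f_n^+(i\stepsizen)=1$ and the standing inequality $f_n^-((i+1)\stepsizen)\leq 1$, a third-order expansion around $i\stepsizen$ gives $f_n^-((i+1)\stepsizen)=1-\frac{\stepsizen^3}{12}(\mu^n)^{(3)}(i\stepsizen)+O(\stepsizen^4)$, hence $(\mu^n)^{(3)}(i\stepsizen)\geq O(\stepsizen)$; here the $\stepsizen^2$ terms cancel identically through the coefficient $\frac{1}{2!}-\frac{1}{2}=0$, not ``after careful use of the strict inequalities''. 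Symmetrically, $f_n^-(j\stepsizen)=1$ and $f_n^+((j-1)\stepsizen)\leq 1$ give $(\mu^n)^{(3)}(j\stepsizen)\leq O(\stepsizen)$. Since both $i\stepsizen$ and $j\stepsizen$ converge to $x_\nu$ by (1), passing to the limit along the contradictory subsequence yields $(\mu^\infty)^{(3)}(x_\nu)=0$, with no case distinction on the gap.
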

\begin{proof}
   Observe that both $\mu^{n} + \frac{\stepsizen}{2}{\mu^{n}}'$ and $\mu^{n} - \frac{\stepsizen}{2}{\mu^{n}}'$ converge uniformly towards $\mu^{\infty}$ as $n\to +\infty$ (and similarly for the derivatives).
   \begin{enumerate}
     \item  For all $\tilde{r}\in(0,r)$, by compactness, $\sup \enscond{\mu^\infty(t) }{t\in\TT\setminus\bigcup (x_\nu-\tilde{r},x_\nu+\tilde{r})}<1$. Thus by uniform convergence there exists  $n_0\in \NN$ such that for all $n\geq n_0$, $(\mu^{n} \pm \frac{\stepsizen}{2}{\mu^{n}}')<1$ on $\TT\setminus\bigcup_{\nu=1}^N (x_\nu-\tilde{r},x_\nu+\tilde{r})$, and the first claim is proved.

     \item If moreover $(\mu^{\infty})''(x_\nu)\neq 0$, it is in fact negative. Choosing $\tilde{r}\in(0,r)$ small enough and then $n$ large enough, we may assume that ${\mu^{n}}'' <-k_0$ in $(x_\nu-\tilde{r},x_\nu+\tilde{r})$, for some $k_0>0$, and by~\eqref{eq-cbp-limsup} that $\Sright(r)\cup\Sleft(r)\subset (x_{\nu}-\tilde{r},x_{\nu}+\tilde{r})$. By uniform convergence, ${\mu^{n}}''+\frac{\stepsizen}{2}|\mu^{n (3)}|<-\frac{k_0}{2}$ for $n$ large enough, so that both functions $\mu^{n} + \frac{\stepsizen}{2}{\mu^{n}}'$ and $\mu^{n} - \frac{\stepsizen}{2}{\mu^{n}}'$ are strictly concave in $(x_\nu-\tilde{r},x_\nu+\tilde{r})$. This implies that $\Sright(r)$ (resp. $\Sleft(r)$) is of the form $\emptyset$, $\{i\stepsizen\}$, or $\{i\stepsizen,(i+1)\stepsizen\}$.

Observe also that $\mu^{n} + \frac{\stepsizen}{2}{\mu^{n}}' -(\mu^{n} - \frac{\stepsizen}{2}{\mu^{n}}')= \stepsizen{\mu^n}'$.
Since the function ${\mu^{n}}'$ is strictly decreasing in $(x_\nu-\tilde{r},x_\nu+\tilde{r})$, it vanishes at most once. If $\Sright(r)\neq\emptyset$ and $\Sleft(r)\neq\emptyset$, it must change sign in $(x_\nu-\tilde{r},x_\nu+\tilde{r})$ and thus it vanishes exactly once, at some $\xi\in (x_\nu-\tilde{r},x_\nu+\tilde{r})$. 
Then for $t\in (x_\nu-\tilde{r},\xi)$, 
\begin{align*}
 (\mu^{n} - \frac{\stepsizen}{2}{\mu^{n}}')(t)=(\mu^{n} + \frac{\stepsizen}{2}{\mu^{n}}')(t)- \stepsizen (\mu^{n})'(t)\leq 1 - \stepsizen (\mu^{n})'(t)<1
\end{align*}
so that $\min \Sleft(r)\geq\xi$. Similarly $\max \Sright(r)\leq \xi$.

\item By contradiction, assume that the set of $n'\in\NN$ such that $\Srightp(r)\neq \emptyset$ and $\Sleftp(r)\neq \emptyset$ is infinite. We may extract a subsequence $n=n'(m)$ such that there exists $i_n,j_n\in\seg{0}{\taillegridn-1}$ (denoted hereafter $i,j$) with $i\stepsizen\in \Sright(r_m)$, $j\stepsizen\in \Sleft$.
 Combining the Taylor expansions of $\mu^n$ and $({\mu^n})'$ around $i\stepsizen$ (resp. $j\stepsizen$), we get
\begin{align*}
  1&\geq \mu^n((i+1)\stepsizen)-\frac{\stepsizen}{2}{\mu^n}'((i+1)\stepsizen)\\
  & =  \underbrace{\mu^n(i\stepsizen) +\stepsizen{\mu^n}'(i\stepsizen)(1-\frac{1}{2} )}_{=1} +\stepsizen^2{\mu^n}''(i\stepsizen)\underbrace{\left(\frac{1}{2!} -\frac{1}{2}\right)}_{=0} + \stepsizen^3 {\mu^n}^{(3)}(i\stepsizen)\al_3\\
  & \qquad\qquad+ \stepsizen^4\int_0^1 {\mu^n}^{(4)}(i\stepsizen+t\stepsizen)\left(\frac{(1-t)^3}{3!} -\frac{(1-t)^2}{2!\times 2}\right)\d t, \mbox{ and }\\
  1&\geq \mu^n((j-1)\stepsizen)+\frac{\stepsizen}{2}{\mu^n}'((j-1)\stepsizen)\\
  & =  \underbrace{\mu^n(j\stepsizen) -\stepsizen{\mu^n}'(j\stepsizen)(1-\frac{1}{2} )}_{=1} +\stepsizen^2{\mu^n}''(j\stepsizen)\underbrace{\left(\frac{1}{2!} -\frac{1}{2}\right)}_{=0} -\stepsizen^3 {\mu^n}^{(3)}(j\stepsizen)\al_3\\
  & \qquad\qquad+ \stepsizen^4\int_0^1 {\mu^n}^{(4)}(j\stepsizen-t\stepsizen)\left(\frac{(1-t)^3}{3!} -\frac{(1-t)^2}{2!\times 2}\right)\d t
\end{align*}
where $\al_k$ is defined in~\eqref{eq?defn-alphak}.
Now, let $n\to +\infty$. By~\eqref{eq-cbp-limsup}, $i\stepsizen\to x_{\nu}$ and $j\stepsizen\to x_\nu$, and using the uniform convergence of $(\mu^n)^{(k)}$ towards $ (\mu^\infty)^{(k)}$, dividing by $\stepsizen^3$, we obtain respectively $0\geq -(\mu^\infty)^{(3)}(x_\nu)\times\frac{1}{12}$ and $0\geq (\mu^\infty)^{(3)}(x_\nu)\times \frac{1}{12}$, thus $(\mu^\infty)^{(3)}(x_\nu)=0$.

\item Assume by contradiction, that the mentioned set is infinite. For such $n$, a Taylor expansion at $i\stepsizen$ yields (we write $i$ for $i_n$):
\begin{align*}
1&=\mu^n((i+1)\stepsizen)-\frac{\stepsizen}{2}{\mu^n}'((i+1)\stepsizen)\\
& =  \underbrace{\mu^n(i\stepsizen) +\frac{\stepsizen}{2}{\mu^n}'(i\stepsizen)}_{=1} + \gamma_3\stepsizen^3 {\mu^n}^{(3)}(i\stepsizen)+ \gamma_4\stepsizen^4 {\mu^n}^{(4)}(i\stepsizen)\\
&\quad\quad+\stepsizen^5\int_0^1 {\mu^n}^{(5)}(i\stepsizen+t\stepsizen)\left(\frac{(1-t)^4}{4!} -\frac{(1-t)^3}{3!\times 2}\right)\d t, \mbox{ and }\\
1&= \mu^n((i-1)\stepsizen)+\frac{\stepsizen}{2}{\mu^n}'((i-1)\stepsizen)\\
& =  \underbrace{\mu^n(i\stepsizen) -\frac{\stepsizen}{2}{\mu^n}'(i\stepsizen)}_{=1} - \gamma_3\stepsizen^3 {\mu^n}^{(3)}(i\stepsizen)+ \gamma_4\stepsizen^4 {\mu^n}^{(4)}(i\stepsizen)\\
&\quad \quad+  \stepsizen^5\int_0^1 {\mu^n}^{(5)}(i\stepsizen+t\stepsizen)\left(\frac{(1-t)^4}{4!} -\frac{(1-t)^3}{3!\times 2}\right)\d t,
\end{align*}
with $\gamma_k=\frac{1}{k!}- \frac{1}{(k-1)!\times 2}$. Summing both equalities, dividing by $\stepsizen^4$ and taking the limit $n\to+\infty$ yields $(\mu^\infty)^{(4)}(x_\nu)=0$.
   \end{enumerate}
\end{proof}

This other lemma focusses on the limit of the sets $D^n$ defined in~\eqref{eq-defn-Dn}.

\begin{lem}
As $n\to+\infty$, the sets $D^n$ converge towards $D^\infty$ defined in~\eqref{eq-defn-Dinf} (in the sense of set convergence).
\label{lem-cbp-cvset}
\end{lem}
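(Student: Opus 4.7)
The proof will proceed by establishing the two defining inclusions of set convergence in $L^2(\TT)$, namely $\limsup_n D^n \subset D^\infty$ and $D^\infty \subset \liminf_n D^n$.

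For the first inclusion, I would take $q \in \limsup_n D^n$. By definition of the limsup of sets, there exist a subsequence $(n_k)_{k\in\NN}$ and elements $q_k \in D^{n_k}$ with $q_k \to q$ in $L^2(\TT)$. Lemma~\ref{lem-phi-compact} states that $\Phi^*$ is compact from $L^2(\TT)$ into $\Cont(\TT)$; combined with $\varphi \in \Cont^2$, the same compactness argument applied to the first-derivative kernel $\partial_2\varphi$ upgrades this to uniform convergence of $(\Phi^* q_k)'$ towards $(\Phi^* q)'$. Fix any $t \in \TT$ and pick $t_k \in \Gg_{n_k}$ with $|t_k - t| \leq \stepsize_{n_k}/2$, so $t_k \to t$. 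From $q_k \in D^{n_k}$ and the non-negativity of the derivative term in the defining constraint, $(\Phi^* q_k)(t_k) \leq 1$. Passing to the limit (using uniform convergence of $\Phi^* q_k$ and continuity of $\Phi^* q$) yields $(\Phi^* q)(t) \leq 1$; since $t$ was arbitrary, $q \in D^\infty$.

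For the reverse inclusion, given $q \in D^\infty$, I would build an approximating sequence by a slight shrinking. Set
\[
	c_n \eqdef \frac{\stepsize_n}{2}\|(\Phi^* q)'\|_\infty \quad\text{and}\quad q_n \eqdef \frac{q}{1+c_n}.
\]
The quantity $c_n$ is finite because $\varphi \in \Cont^2$ makes $(\Phi^* q)'$ continuous on the compact $\TT$, and $c_n \to 0$ since $\stepsize_n \to 0$. For every grid point $k\stepsize_n$ one checks
\[
	(\Phi^* q_n)(k\stepsize_n) + \frac{\stepsize_n}{2}\bigl|(\Phi^* q_n)'(k\stepsize_n)\bigr| \leq \frac{1}{1+c_n}\Bigl(1 + \frac{\stepsize_n}{2}\|(\Phi^* q)'\|_\infty\Bigr) = 1,
\]
using $(\Phi^* q)(t) \leq 1$ for all $t$ together with the uniform bound on $(\Phi^* q)'$. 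Hence $q_n \in D^n$, and since $\|q_n - q\|_2 = \frac{c_n}{1+c_n}\|q\|_2 \to 0$, we have $d(q, D^n) \to 0$, giving $q \in \liminf_n D^n$.

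The bookkeeping is routine; the one place that requires care is the limsup direction, where the compactness of $\Phi^*$ (and of its analogue with kernel $\partial_2\varphi$) is essential to convert the $L^2$-convergence of the $q_k$ into the pointwise control of $(\Phi^* q_k)$ needed to read off the constraint at an arbitrary $t \in \TT$ via an approximating sequence of grid points.
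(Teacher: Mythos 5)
Your proof is correct, but it follows a different route from the paper's. The paper sandwiches $D^n$ between a non-decreasing sequence $E^n$ (the same constraint imposed at \emph{every} $t\in\TT$ rather than only at grid points) and a non-increasing sequence $F^n$ (only the function-value constraint at grid points, with the derivative term dropped), identifies $D^\infty$ as the limit of both monotone sequences using the facts recalled in the preliminaries ($\lim_n C_n=\overline{\bigcup_n C_n}$ for non-decreasing sequences, $\lim_n C_n=\bigcap_n\overline{C_n}$ for non-increasing ones), and concludes by squeezing. You instead verify the two inclusions $\limsup_n D^n\subset D^\infty$ and $D^\infty\subset\liminf_n D^n$ directly from the definition: the first via uniform convergence of $\Phi^*q_k$ and density of the grids, the second via the explicit shrinkage $q_n=q/(1+c_n)$. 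The two arguments have essentially the same content --- your shrinkage step is precisely what is needed to justify the paper's unproved claim that $\overline{\bigcup_n E^n}=D^\infty$, and your limsup step is the analogue of $\bigcap_n F^n=D^\infty$ --- so your version is more self-contained, at the cost of being longer. Two minor remarks: in the limsup direction compactness of $\Phi^*$ is not needed, since boundedness of $\Phi^*:L^2(\TT)\to(\Cont(\TT),\normi{\cdot})$ already converts strong $L^2$ convergence of $q_k$ into uniform convergence of $\Phi^*q_k$; and the uniform convergence of the derivatives $(\Phi^*q_k)'$ that you invoke is never used, since you discard the (nonnegative) derivative term before passing to the limit.
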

\begin{proof}
  We observe that $E^n \subset D^n\subset F^n$, where
  \begin{align*}
    E^n&\eqdef \enscond{q\in L^2(\TT)}{\max_{t\in \TT}(\Phi^*q)(t)+\frac{\stepsizen}{2}|(\Phi^*q)'(t)| \leq 1},\\
    F^n&\eqdef \enscond{q\in L^2(\TT)}{\max_{k\in\seg{0}{\taillegridn-1}}\Phi^*q(k\stepsizen)\leq 1}
  \end{align*}
so that it suffices to prove that $E^n$ and $F^n$ converge towards $D^\infty$. 
On the one hand, it is clear that $D^\infty=\bigcap_{n\in\NN} F^n$, and the sequence $F^n$ is non-increasing.
On the other hand, it is possible to check that $D^\infty=\overline{\bigcup_{n\in \NN}E^{n}}$, and the sequence $E^n$ is non-decreasing.
As a consequence, the claimed set convergences hold.
\end{proof}

\subsection{Asymptotics of the support for fixed $\la>0$}

Let us recall that the dual problem to~\eqref{eq-thin-cbpasso} is the projection onto the closed convex set
\begin{align*}
  D^n &\eqdef\enscond{q\in L^2(\TT)}{(\Phi^* q)(i\stepsizen) + \frac{\stepsizen}{2}|(\Phi^*q)'(i\stepsizen)|\leq 1}.
\end{align*}

Since the set convergence of $D^n$ (see Lemma~\ref{lem-cbp-cvset}) implies the convergence of the projections onto~$D^n$ (see~\cite{rockafellarwets}, or~\cite{2013-duval-sparsespikes} for a direct proof in a similar context), we obtain:

\begin{prop}\label{prop-cbp-cvdual}
  Let $q_\la^n$ (resp. $q_\la^\infty$) be a solution of~\eqref{eq-thin-dual-cbpasso} (resp.~\eqref{eq-beurl-dual-cbpasso}), and $\mu_\la^n=\Phi^*q_\la^n$ (resp. $\mu_\la^\infty=\Phi^*q_\la^\infty$). Then 
  \begin{align*}
    \lim_{n\to +\infty} q_\la^{n} &=q_\la^\infty \mbox{ for the $L^2(\TT)$ strong topology,}\\ 
    \lim_{n\to +\infty} \mu_\la^{n (k)} &=\mu_\la^{\infty (k)} \mbox{ in the sense of the uniform convergence, for all $k\in \NN$.} 
\end{align*}
\end{prop}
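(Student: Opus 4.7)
The plan is to identify $q_\la^n$ and $q_\la^\infty$ as the orthogonal projections of $y/\la$ onto the closed convex sets $D^n$ and $D^\infty$ respectively, invoke the set convergence provided by Lemma~\ref{lem-cbp-cvset} together with the continuity of the metric projection to obtain strong $L^2$ convergence, and finally transfer this convergence to $\mu_\la^n$ and its derivatives via the regularity of the kernel $\varphi$.

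First, reading off the dual problems~\eqref{eq-thin-dual-cbpasso} and~\eqref{eq-beurl-dual-cbpasso}, one sees that $q_\la^n$ minimizes $\|y/\la - q\|_{L^2}^2$ over the closed convex set $D^n$, and similarly $q_\la^\infty$ over $D^\infty$. Both sets are nonempty (they contain $q=0$), so the metric projections $P_{D^n}(y/\la)$ and $P_{D^\infty}(y/\la)$ are well-defined and coincide with $q_\la^n$ and $q_\la^\infty$ respectively.

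The core step relies on Lemma~\ref{lem-cbp-cvset}: one has $D^n\to D^\infty$ in the sense of set convergence. By a classical continuity result for the metric projection under this type of convergence of closed convex sets in a Hilbert space (see~\cite{rockafellarwets}, or the direct argument given in~\cite{2013-duval-sparsespikes} in an essentially identical setting), one concludes that $q_\la^n \to q_\la^\infty$ strongly in $L^2(\TT)$. This is the step I expect to be the main obstacle; the other two steps are essentially formal once this continuity principle is invoked.

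Finally, to deduce the uniform convergence of $\mu_\la^{n(k)}$, I would observe that, for every order $k$ compatible with the smoothness of $\varphi$, the operator $q\mapsto (\Phi^* q)^{(k)} = \int_\TT \partial_1^k \varphi(\cdot,y)\, q(y)\, \d y$ is bounded (and in fact compact by Arzel\`a--Ascoli, since $\partial_1^k\varphi$ is uniformly continuous on the compact set $\TT\times\TT$) from $L^2(\TT)$ into $C(\TT)$. Strong $L^2$ convergence of $q_\la^n$ therefore yields uniform convergence of $(\Phi^* q_\la^n)^{(k)}$ towards $(\Phi^* q_\la^\infty)^{(k)}$, which gives the second assertion and completes the argument.
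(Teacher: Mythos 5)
Your proposal follows essentially the same route as the paper: identify the dual solutions as metric projections of $y/\la$ onto $D^n$ and $D^\infty$, invoke Lemma~\ref{lem-cbp-cvset} and the continuity of projections under set convergence (with the same references), and then pass to uniform convergence of $\mu_\la^{n(k)}$ via the compactness of $\Phi^{(k),*}$ from Lemma~\ref{lem-phi-compact}. There is only a small notational slip in the last step, where the kernel derivative should be $\partial_2^k\varphi$ integrated against $q$ in the first variable (as in Lemma~\ref{lem-phi-compact}), not $\partial_1^k\varphi(\cdot,y)q(y)\,\d y$; this does not affect the argument.
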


The following proposition states that in the generic case, one may observe up to two pairs of spikes for each spike of the solution of the positive Beurling-lasso.
As before, $r$ is chosen such that $0<r<\frac{1}{2} \min_{\nu\neq\nu'}|x_{\nu}-x_{\nu'}|$.

\begin{prop}
  Let $\la >0$, and assume that there exists a solution to~\eqref{eq-beurl-cbpasso} which is a sum of a finite number of (positive) Dirac masses: $m^\infty_\la= \sum_{\nu=1}^N \alpha_\nu \delta_{x_\nu}$ where $\alpha_\nu> 0$. Assume that $\mu_\la^{\infty}$ satisfies $|\mu_\la^{\infty}(t)|<1$ for all $t\in \TT\setminus \{x_1,\ldots ,x_N \}$. 
  
  Then any sequence of solution $m_{\la}^n=\sum_{i=0}^{\taillegridn-1} \veccont_{\la,i} \delta_{i\stepsizen+\shiftcont_{\la,i}/\veccont_{\la,i} }$ to~\eqref{eq-thin-cbpasso} satisfies 
  \begin{align*}
    \limsup_{n\to+\infty}\left(\supp m_{\la}^n\right)\subset \{x_1, \ldots x_N\}.
  \end{align*}
  If, moreover, $m^\infty_\la$ is the unique solution to~\eqref{eq-beurl-cbpasso}, 
  \begin{align}
    \lim_{n\to+\infty}\left(\supp (m_{\la}^n)\right)= \{x_1, \ldots x_N\}.
  \end{align}

If, additionally, $(\mu_\la^{\infty})''(x_\nu)\neq 0$ for some $\nu \in\{1,\ldots,N\}$, then for all $n$ large enough, the restriction of $m_{\la}^n$ to $(x_\nu-r,x_\nu+r)$ is a sum of Dirac masses whose configuration is given in Table~\ref{tab-nbdirac}, and if $(\mu_\la^{\infty})^{(3)}(x_\nu)\neq 0$, then only the cases indicated with $(\ast)$ may appear. 
\label{prop-cbp-thin-supplambda}
\end{prop}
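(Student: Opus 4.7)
The plan is to combine the uniform convergence of the dual certificates (Proposition~\ref{prop-cbp-cvdual}) with the classification of saturation points provided by Lemma~\ref{lem-cbp-dualspikes}, and then translate the structure of the saturation set $\{t\in \Gg_n : \mu^n_\la(t)+\frac{\stepsizen}{2}|{\mu^n_\la}'(t)|=1\}$ into primal spike configurations via the optimality conditions of Proposition~\ref{prop-optim-cbp}.

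First, by Proposition~\ref{prop-cbp-cvdual}, $\mu^n_\la$ converges to $\mu^\infty_\la$ uniformly with all derivatives, and since $\mu^n_\la\in D^n$ one has $\mu^n_\la(i\stepsizen)+\frac{\stepsizen}{2}|{\mu^n_\la}'(i\stepsizen)|\leq 1$. The hypotheses then match those of Lemma~\ref{lem-cbp-dualspikes}, so part~(1) of that lemma yields that for $n$ large, saturation can only occur in $\bigcup_\nu(x_\nu-r,x_\nu+r)$. By the optimality characterization in Proposition~\ref{prop-optim-cbp}, any index $i$ with $\veccont_{\la,i}>0$ satisfies $i\in \Iup\cup\Idown$, hence $(\mu^n_\la+\frac{\stepsizen}{2}{\mu^n_\la}')(i\stepsizen)=1$ or $(\mu^n_\la-\frac{\stepsizen}{2}{\mu^n_\la}')(i\stepsizen)=1$. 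Each Dirac of $m^n_\la$ lies at distance at most $\stepsizen/2$ from such a grid point $i\stepsizen$, so $\limsup_n \supp m^n_\la\subset\{x_1,\ldots,x_N\}$.

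For the full limit under the uniqueness assumption, Proposition~\ref{prop-cbpthin-convergence} gives $m^n_\la \stackrel{\ast}{\rightharpoonup} m^\infty_\la$. For any $\nu$ and any $\epsilon>0$, testing against a continuous bump supported in $(x_\nu-\epsilon,x_\nu+\epsilon)$ shows that $m^n_\la$ eventually has positive mass arbitrarily close to $x_\nu$, giving $\{x_1,\ldots,x_N\}\subset \liminf_n \supp m^n_\la$, hence equality.

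For the detailed structure near a fixed $x_\nu$ with $(\mu_\la^\infty)''(x_\nu)\neq 0$: since $x_\nu$ is an interior maximum of $\mu_\la^\infty$, this second derivative is strictly negative, so Lemma~\ref{lem-cbp-dualspikes}(2) applies and tells us that each of $\Sright(r)$, $\Sleft(r)$ is empty, a single grid point, or two consecutive grid points, with $\max \Sright(r)\leq \min \Sleft(r)$ when both are nonempty. One then enumerates all admissible pairs $(\Sright,\Sleft)$ and converts them into primal spike configurations via the rules stated in the paragraph preceding Lemma~\ref{lem-cbp-dualspikes}: indices in $\Sright\setminus\Sleft$ correspond to Diracs forced at $i\stepsizen+\stepsizen/2$, indices in $\Sleft\setminus\Sright$ to Diracs forced at $i\stepsizen-\stepsizen/2$, and indices in $\Sright\cap\Sleft$ to Diracs free to sit anywhere in $[i\stepsizen-\stepsizen/2,i\stepsizen+\stepsizen/2]$. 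This yields exactly the configurations of Table~\ref{tab-nbdirac}. The final claim follows by invoking Lemma~\ref{lem-cbp-dualspikes}(3), which, under $(\mu_\la^\infty)^{(3)}(x_\nu)\neq 0$, forbids $\Sright(r)$ and $\Sleft(r)$ to be simultaneously nonempty, thereby eliminating all cases except those marked $(\ast)$.

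The main obstacle is bookkeeping: one must enumerate the admissible patterns of $(\Sright,\Sleft)$ allowed by Lemma~\ref{lem-cbp-dualspikes}(2), check that each pattern is consistent with the geometric constraint $\max \Sright\leq \min \Sleft$, merge the overlapping index in $\Sright\cap\Sleft$ (which produces a single ``free'' Dirac rather than two forced half-grid Diracs), and verify that the resulting tally of Dirac masses matches every row of Table~\ref{tab-nbdirac}. The rest of the argument is a direct application of the previously established machinery.
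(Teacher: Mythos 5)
Your proposal is correct and follows essentially the same route as the paper: uniform convergence of dual certificates (Proposition~\ref{prop-cbp-cvdual}), the saturation-set classification of Lemma~\ref{lem-cbp-dualspikes}, the weak* convergence from Proposition~\ref{prop-cbpthin-convergence} for the $\liminf$ inclusion, and the translation of saturation patterns into primal configurations via Proposition~\ref{prop-optim-cbp}. The only stylistic difference is that you argue the $\liminf$ inclusion directly by testing bump functions against $m^n_\la \stackrel{\ast}{\rightharpoonup} m^\infty_\la$, whereas the paper phrases the same weak*-convergence fact as a proof by contradiction.
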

\begin{table}
\begin{center}
    \renewcommand{\arraystretch}{1.5}\small
    \begin{tabular}{|m{1.6cm}|>{\centering\scriptsize}m{4.3cm}|>{\scriptsize}m{8.1cm}|}
    \hline
    \textbf{Number of Dirac masses } & \textbf{\small Saturations of the certificates} \\ $(\Sright /  \Sleft)$ & \textbf{\small Possible Dirac Locations}\\
    \hline
    \multirow{2}{*}{\textbf{One}} & $\{i\stepsizen\} /\emptyset $ or $\emptyset/\{i\stepsizen\}$  &$i\stepsizen+\varepsilon_n\frac{\stepsizen}{2}$, with $\varepsilon_n\in\{-1,1\}$\quad\hfill $(\ast)$ \\
& $\{i\stepsizen\}/\{i\stepsizen\}$ &  $i\stepsizen+t_i$, with $-\frac{\stepsizen}{2}\leq t_i\leq \frac{\stepsizen}{2}$\\\hline
\multirow{4}{*}{\textbf{Two}} & $\{(i-1)\stepsizen,i\stepsizen\}/\emptyset$\\ or \\ $\emptyset/\{i\stepsizen,(i+1)\stepsizen\}$ & $\left((i-\varepsilon_n)\stepsizen+\varepsilon\frac{\stepsizen}{2},i\stepsizen+\varepsilon_n\frac{\stepsizen}{2}\right)$, with $\varepsilon_n\in\{-1,1\}$\quad\hfill $(\ast)$\\
& $\{i\stepsizen\}/\{j\stepsizen\}$ & $\left(i\stepsizen+\frac{\stepsizen}{2}, j\stepsizen-\frac{\stepsizen}{2}\right)$, $i<j$\\
& $\{i\stepsizen\}/\{j\stepsizen,(i+1)\stepsizen\}$\\ or \\ $\{(i-1)\stepsizen,i\stepsizen\}/\{i\stepsizen\}$  & $\left((i-\varepsilon_n)\stepsizen+\varepsilon_n\frac{\stepsizen}{2},i\stepsizen+t_i\right)$,  $\varepsilon_n\in\{-1,1\}$, $-\frac{\stepsizen}{2}\leq t_i\leq\frac{\stepsizen}{2}$\\\hline
\multirow{3}{*}{\textbf{Three}} & $\{i\stepsizen\}/\{j\stepsizen,(j+1)\stepsizen\}$ & $\left(i\stepsizen+\frac{\stepsizen}{2},j\stepsizen-\frac{\stepsizen}{2}, (j+1)\stepsizen-\frac{\stepsizen}{2}\right)$, with $i<j$\\
& $\{(i-1)\stepsizen,i\stepsizen\}/\{j\stepsizen\}$ & $\left((i-1)\stepsizen+\frac{\stepsizen}{2},i\stepsizen+\frac{\stepsizen}{2},j\stepsizen-\frac{\stepsizen}{2}\right)$, with $i<j$\\
& $\{(i-1)\stepsizen,i\stepsizen\}/\{i\stepsizen,(i+1)\stepsizen\}$ & $\left((i-1)\stepsizen+\frac{\stepsizen}{2},i\stepsizen+t_i,(i+1)\stepsizen-\frac{\stepsizen}{2}\right)$,  $-\frac{\stepsizen}{2}\leq t_i\leq \frac{\stepsizen}{2}$\\\hline
\textbf{Four} & $\{(i-1)\stepsizen,i\stepsizen\}/\{j\stepsizen,(j+1)\stepsizen\}$ & $\left((i-1)\stepsizen+\frac{\stepsizen}{2},i\stepsizen+\frac{\stepsizen}{2}, j\stepsizen-\frac{\stepsizen}{2},(j+1)\stepsizen-\frac{\stepsizen}{2}\right)$, $i<j$\\\hline
 \end{tabular}
 \caption{Number of Dirac masses that may appear if $(\mu_\la^{\infty})''(x_\nu)\neq 0$. For the sake of the simplicity of the table, and since we focus on the saturations of dual certificates, we regard sums like $\delta_{i\stepsizen+\stepsizen/2}+\delta_{(i+1)\stepsizen-\stepsizen/2}$ as ``two'' Dirac masses.}\label{tab-nbdirac}
\end{center}
\end{table}

\begin{proof}
  By Proposition~\ref{prop-cbp-cvdual}, we know that the dual certificates $\mu_\la^{n}$ converge towards $\mu_\la^{\infty}$. By Lemma~\ref{lem-cbp-dualspikes} and the optimality conditions, we have thus $\limsup_{n\to+\infty} (\supp(m^n_\la))\subset \{x_1,\ldots,x_N\}$.
  If $m^\infty_\la$ is the unique solution, assume by contradiction that $\liminf (\supp (m^n_\la))\subsetneq \{x_1,\ldots,x_N\}$. Then there is some $\nu$, some $\varepsilon>0$ such that (up to a subsequence) $(\supp(m^n_\la))\cap (x_\nu-\varepsilon,x_\nu+\varepsilon)=\emptyset$. This contradicts the $\Gamma$-convergence result (Prop.~\ref{prop-cbpthin-convergence}) which ensures that $m^n_\la$ converges towards $m^\infty_\la$ for the weak* topology. As a result $\lim_{n\to+\infty} (\supp (m^n_\la))=\{x_1,\ldots,x_N\}$.

  If $(\mu_\la^{\infty})''(x_\nu)\neq 0$, Lemma~\ref{lem-cbp-dualspikes} ensures that the sets $\Sright(r)$ and $\Sleft(r)$ are of the form $\emptyset$, $\{i\stepsizen\}$, or $\{i\stepsizen,(i+1)\stepsizen\}$. Moreover, since $\lim_{n\to+\infty} (\supp m^n_\la)=\{x_1,\ldots,x_N\}$ we must have $\Sright(r)\neq \emptyset$ or $\Sleft(r)\neq \emptyset$. Using the fact that $\max \Sright(r)  \leq \min \Sleft(r)$, one may check that the only possible saturation points of $\mu_\la^{n}+\frac{\stepsizen}{2}{\mu_\la^{n}}'$ and  $\mu_\la^{n}-\frac{\stepsizen}{2}{\mu_\la^{n}}'$ are given in Table~\ref{tab-nbdirac}. The optimality conditions of Proposition~\ref{prop-optim-cbp} imply that $m^n_\la$ is at most a sum of Dirac masses at those locations.

If $(\mu_\la^{\infty})^{(3)}(x_\nu)\neq 0$ the third point of Lemma~\ref{lem-cbp-dualspikes} implies that for $n$ large enough, $\Sright(r)=\emptyset$ or $\Sright(r)=\emptyset$ (but not both). Hence there are at most two (successive) saturations, produced either by $\mu_\la^{n}+\frac{\stepsizen}{2} {\mu_\la^{n}}'$ or by $\mu_\la^{n}-\frac{\stepsizen}{2} {\mu_\la^{n}}'$.
\end{proof}

\begin{rem}
  Proposition~\ref{prop-cbp-thin-supplambda} states that the support of the C-BP on thin grids actually depends on the properties of the dual certificate $\mu_\la^\infty$ of the (positive) Beurling \lasso. The condition $(\mu_\la^\infty)''(x_\nu)\neq 0$ seems to be overwhelming, if not generic, and it is ensured for instance if $\la$ is small and the Non-Degenerate Source Condition holds (see~\cite{2013-duval-sparsespikes}).
  As for the condition $(\mu_\la^{\infty})^{(3)}(x_\nu)\neq 0$, it also seems to be generic, as there is nothing to impose $(\mu_\la^{\infty})^{(3)}(x_\nu)= 0$ in the positive Beurling \lasso. As a result, in practice, \textit{one does not observe all the configurations given in Table}~\ref{tab-nbdirac}, and \textit{only the cases indicated with $(\ast)$ appear}, the case of two spikes being again overwhelming. 
  
  This means that when approximating the positive Beurling \lasso with the Continuous Basis-Pursuit, one generally sees two spikes instead of one, and those spikes are at successive half-grid points: $(i\stepsize+\frac{\stepsize}{2}, (i+1)\stepsize+\frac{\stepsize}{2})$ or $(i\stepsize-\frac{\stepsize}{2},(i+1)\stepsize-\frac{\stepsize}{2})$.
\label{rem-cbp-nbspikes}
\end{rem}

\subsection{Asymptotic of the low noise support}
Now, we focus on the behavior of the Continuous Basis Pursuit at low noise. As for the \lasso, this analysis is more difficult in whole generality, since it involves the minimal norm solutions of nonlinear problems, in which it is difficult to pass to the limit. Therefore, we are led to assume that $\{x_{0,1}, \ldots , x_{0,N} \}\subset \Gg_n$, and the measure now reads $m_0=\sum_{i=0}^{\taillegridn-1} \veccont_{0,i} \delta_{i\stepsizen}$.

The following property ensures that $m_0$ is a solution to~\eqref{eq-thin-cbp} for each $n$ large enough.

\begin{lem}
  Assume that there exists a function $\mu\in \Im \Phi^*$, such that for all $t\in\TT\setminus \{x_{0,1},\ldots ,x_{0,N} \}$, $\mu(t)<1$ and 
\begin{align}
  \forall \nu\in \{1,\ldots, N\}, \ \mu(x_{0,\nu})=1, \ \mu''(x_{0,\nu})\neq 0, \ \mu^{(3)}(x_{0,\nu})=0,\ \mu^{(4)}(x_{0,\nu})>0.
\end{align}
Then, for all $n$ large enough, $\mu$ is a dual certificate for $m_0= \sum_{\nu=1}^N \alpha_{0,\nu}\delta_{x_{0,\nu}}$ for~\eqref{eq-thin-cbp}, and $m_0$ is a solution to~\eqref{eq-thin-cbp}. Moreover, if $\Gamma_{x_0}$ has full rank, this solution is unique.
\label{lem-source-cbp}
\end{lem}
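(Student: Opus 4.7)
The plan is to verify that $\mu$ (extracted from $q$ such that $\mu = \Phi^*q$) satisfies the optimality conditions of Proposition~\ref{prop-optim-cbp} for $m_0$ seen as an element of $\conehn^{\taillegridn}$ via $(\veccont_{0,i},\shiftcont_{0,i}) = (\alpha_{0,\nu},0)$ at indices $i$ with $i\stepsizen = x_{0,\nu}$ and zero elsewhere. Because $\shiftcont_{0,i} = 0$ whenever $\veccont_{0,i} > 0$, the index $i$ belongs to both $\Iup$ and $\Idown$, so the saturation constraints at the support read $\mu(x_{0,\nu}) + \frac{\stepsizen}{2}\mu'(x_{0,\nu}) = 1$ and $\mu(x_{0,\nu}) - \frac{\stepsizen}{2}\mu'(x_{0,\nu}) = 1$. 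The first is immediate from $\mu(x_{0,\nu})=1$ combined with $\mu'(x_{0,\nu})=0$, the latter being automatic because $\mu$ achieves its maximum value $1$ at $x_{0,\nu}$ by assumption and is smooth. Thus only the inequality $\mu(k\stepsizen) + \frac{\stepsizen}{2}|\mu'(k\stepsizen)| \leq 1$ for $k\stepsizen \notin \{x_{0,1},\ldots,x_{0,N}\}$ remains to be checked.

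I split the set of grid points in two. Fix $r > 0$ smaller than half the minimal inter-spike distance; on $\TT \setminus \bigcup_\nu (x_{0,\nu}-r, x_{0,\nu}+r)$ compactness gives $\sup \mu < 1$, while $\mu'$ is bounded, so for $n$ large enough $\mu(k\stepsizen) + \frac{\stepsizen}{2}|\mu'(k\stepsizen)| \leq 1$ at every such grid point. For grid points inside a small window $(x_{0,\nu}-r,x_{0,\nu}+r)$, I perform a Taylor expansion at $x_{0,\nu}$ using $\mu(x_{0,\nu})=1$, $\mu'(x_{0,\nu})=0$, $\mu^{(3)}(x_{0,\nu})=0$. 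Setting $t = k\stepsizen - x_{0,\nu}$, a direct computation gives
\begin{align*}
\mu(x_{0,\nu}+t) + \tfrac{\stepsizen}{2}|\mu'(x_{0,\nu}+t)| = 1 + \tfrac{1}{2}\mu''(x_{0,\nu})\bigl(t^2 - \stepsizen|t|\bigr) + \tfrac{1}{24}\mu^{(4)}(x_{0,\nu})\bigl(t^4 - 2\stepsizen |t|^3\bigr) + o(\stepsizen^4),
\end{align*}
since $\mu'(x_{0,\nu}+t)$ has the sign of $-t$ near $0$ (because $\mu''(x_{0,\nu})<0$, forced by the local maximum). For $|t| = k\stepsizen$ with $k \geq 2$ the bracket $t^2 - \stepsizen|t| = (k^2-k)\stepsizen^2$ is strictly positive, so the leading $\stepsizen^2$ term is negative and dominates. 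For $|t| = \stepsizen$ (the delicate case) the quadratic bracket vanishes and the second bracket equals $-\stepsizen^4$, yielding
\begin{align*}
\mu(x_{0,\nu} \pm \stepsizen) + \tfrac{\stepsizen}{2}|\mu'(x_{0,\nu}\pm \stepsizen)| = 1 - \tfrac{1}{24}\mu^{(4)}(x_{0,\nu})\stepsizen^4 + o(\stepsizen^4),
\end{align*}
which is $< 1$ precisely thanks to $\mu^{(4)}(x_{0,\nu})>0$. This is the crux of the lemma and the only place where the fourth-order hypothesis is used; it corresponds exactly to the obstruction identified in item~\ref{item-deriv4} of Lemma~\ref{lem-cbp-dualspikes}. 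Putting the two regimes together, $\mu$ is a valid dual certificate for $m_0$ in~\eqref{eq-thin-cbp}, so $m_0$ solves~\eqref{eq-thin-cbp}.

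For uniqueness, I invoke Proposition~\ref{prop-cbp-strict}. The strict inequality required on $I^c$ is exactly the one just established, and the conditions on $\Idown \setminus \Iup$ and $\Iup \setminus \Idown$ are vacuous since $\Iup = \Idown = I$ here. The full-rank matrix that appears reduces, after elementary column operations on the $I$-block, to $\bigl(\Phi_{x_0}\ \Phi_{x_0}'\bigr) = \Gamma_{x_0}$, which is assumed to have full rank. Hence Proposition~\ref{prop-cbp-strict} applies and yields uniqueness.

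The main obstacle is the delicate nearest-neighbor case $|t|=\stepsizen$: the $\mathcal{O}(\stepsizen^2)$ contributions coming from $\mu''$ cancel exactly, and one has to push the expansion through fourth order, exploiting the vanishing of $\mu^{(3)}$ and the strict positivity of $\mu^{(4)}$ to secure a strict inequality at order $\stepsizen^4$. The remainder of the argument is a routine combination of compactness and Taylor estimates.
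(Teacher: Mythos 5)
Your proof is correct and reaches the same conclusion, but the central step is handled by a genuinely different argument than the paper's. The paper introduces the auxiliary function $\theta(t)=\mu(t)-\frac{t}{2}\mu'(t)$ (after translating $x_{0,\nu}$ to the origin) and shows, via $\theta'(t)=\frac{1}{2}\int_0^t(\mu''(u)-\mu''(t))\,du$, that $\theta$ is strictly decreasing on $[0,\tilde r)$ and strictly increasing on $(-\tilde r,0]$ --- this is where $\mu^{(3)}(x_{0,\nu})=0$ and $\mu^{(4)}>0$ enter, through the strict monotonicity of $\mu''$ near $x_{0,\nu}$; the inequality $\mu(k\stepsizen)+\frac{\stepsizen}{2}|\mu'(k\stepsizen)|<1$ then follows for every grid point in the window in one stroke from $\mu(k\stepsizen)\mp\frac{\stepsizen}{2}\mu'(k\stepsizen)=\theta(k\stepsizen)+\frac{(k\mp 1)\stepsizen}{2}\mu'(k\stepsizen)$, with both summands controlled by sign. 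You instead expand $\mu$ and $\mu'$ to fourth order and treat the nearest neighbours $|t|=\stepsizen$ (where the $\stepsizen^2$ contributions cancel and the sign is decided by $-\frac{1}{24}\mu^{(4)}(x_{0,\nu})\stepsizen^4$) separately from $|t|\geq 2\stepsizen$ (where the negative $\stepsizen^2$ term wins). Your route is more computational but makes completely transparent why the fourth-derivative hypothesis is exactly what is needed at the nearest grid neighbour, in line with item 4 of Lemma~\ref{lem-cbp-dualspikes}; the paper's route avoids the case distinction and any discussion of Taylor remainders. One point to tighten: the remainder in your displayed expansion is $o(t^4)$, not $o(\stepsizen^4)$, so it is not uniform over the whole window $(x_{0,\nu}-r,x_{0,\nu}+r)$ with $r$ fixed. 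You should shrink the window to a radius $\delta$ depending only on $\mu$, small enough that the fourth-order terms and remainder are dominated by $\frac{1}{4}|\mu''(x_{0,\nu})|\,t^2$ whenever $2\stepsizen\leq|t|\leq\delta$, and absorb $\delta\leq|t|<r$ into the compactness regime. Your uniqueness argument via Proposition~\ref{prop-cbp-strict}, with the column reduction of $\begin{pmatrix}\Phi_{x_0}+\frac{\stepsizen}{2}\Phi'_{x_0} & \Phi_{x_0}-\frac{\stepsizen}{2}\Phi'_{x_0}\end{pmatrix}$ to $\Gamma_{x_0}$, is correct and in fact more explicit than the paper, which leaves that step implicit.
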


\begin{rem} The condition $\mu^{(3)}(x_{0,\nu})=0$ is natural since our aim is to build a certificate which is valid for all $n$, hence Lemma~\ref{lem-cbp-dualspikes} applies with $\Sright(r)\neq \emptyset$ and $\Sleft(r)\neq \emptyset$.  
\end{rem}

\begin{proof}
  Let $\nu\in \{1,\ldots, N\}$ and $r_\nu\in (0,r)$ such that $\mu''(t)<0$, and $\mu^{(4)}(t)>0$ in $(x_{0,\nu}-r_\nu,x_{0,\nu}+r_\nu)$.
  We shall prove that $\mu(k\stepsizen)+\frac{\stepsizen}{2} |\mu'(k\stepsizen)|<1$ for all $k$ such that $k\stepsizen\in (x_{0,\nu}-r_\nu,x_{0,\nu}+r_\nu)\setminus \{x_{0,\nu}\}$.
  To simplify the notation, we assume without loss of generality that $x_{0,\nu}=0$ and we write $\tilde{r}=r_\nu$. The variations of $\mu$ and its derivatives are given by the table below:

\begin{center}
\begin{tikzpicture}
  \tkzTabInit[deltacl=1]{$t$ / 1,$\mu^{(4)}$ /1, $\mu^{(3)}$/1,$\mu''$ /1.5,$\mu'$ /1, $\mu$ /1.5 }{$-\tilde{r}$, $0$ , $\tilde{r}$}%
  \tkzTabLine{,,+,}
  \tkzTabVar{-/,R/$0$ ,+/}\tkzTabVal{1}{3}{0.5}{0}{0}
  \tkzTabVar{+/$\mu''(-\tilde{r})<0$,-/$\mu''(0)$ ,+/$\mu''(\tilde{r})<0$}
  \tkzTabVar{+/,R/$0$ ,-/}\tkzTabVal{1}{3}{0.5}{0}{0}
  \tkzTabVar{-/,+/1,-/}
\end{tikzpicture}
\end{center}
Let us observe that the function $\theta : t \mapsto \mu(t)-\frac{t}{2} \mu'(t)$ is (strictly) decreasing in $[0,\tilde{r})$, since 
  \begin{align}
    \forall t\in (0,\tilde{r}),\    \theta'(t)=\frac{1}{2} \left(\mu'(t)-t\mu''(t) \right)=\frac{1}{2} \int_0^t\underbrace{(\mu''(u)-\mu''(t))}_{<0}du<0.
  \end{align}
  Hence, for all $k$ such that $k\stepsizen\in (0,\tilde{r})$, 
  \begin{align}
    \mu(k\stepsizen)-\frac{\stepsizen}{2} \mu'(k\stepsizen) &=\underbrace{\mu(k\stepsizen)-\frac{k\stepsizen}{2} \mu'(k\stepsizen)}_{=\theta(k\stepsizen)<\theta(0)=1} + \underbrace{\frac{(k-1)\stepsizen}{2} \mu'(k\stepsizen)}_{<0}<1.
  \end{align}

On the other hand, $\theta$ is (strictly) increasing on $(-\tilde{r}, 0]$ since 
  \begin{align}
    \forall t\in (-\tilde{r},0),\    \theta'(t)=\frac{1}{2} \left(\mu'(t)-t\mu''(t) \right)=\frac{1}{2} \int_0^t\underbrace{(\mu''(u)-\mu''(t))}_{<0}du>0.
  \end{align}
  As a consequence, for all $k$ such that $k\stepsizen\in (-\tilde{r},0)$, 
  \begin{align}
    \mu(k\stepsizen)+\frac{\stepsizen}{2} \mu'(k\stepsizen) &=\underbrace{\mu(k\stepsizen)-\frac{k\stepsizen}{2} \mu'(k\stepsizen)}_{=\theta(k\stepsizen)<\theta(0)=1} + \underbrace{\frac{(k+1)\stepsizen}{2} \mu'(k\stepsizen)}_{\leq 0}<1.
  \end{align}
  Thus we see that $\mu(k\stepsizen)+\frac{\stepsizen}{2} |\mu'(k\stepsizen)|<1$ for all $k\stepsizen \in(-\tilde{r},\tilde{r})\setminus\{0\}$, and we proceed similarly on all the intervals of the form $(x_{0,\nu}-r_\nu,x_{0,\nu}+r_\nu)$. By a compactness argument, there exists a constant $\beta<1$ such that $\mu(t)\leq \beta$ for all $t\in \TT\setminus \bigcup_{\nu=1}^N(x_{0,\nu}-r_{\nu},x_{0,\nu}+r_{\nu})$. For $n$ large enough, the inequality $\frac{\stepsizen}{2} \left(\sup_{t\in \TT} |\mu'(t)|\right)<1-\beta$ holds, and we see that $\mu(k\stepsizen) + \frac{\stepsizen}{2} |\mu'(k\stepsizen)|<1$ for all $t\in \TT\setminus \bigcup_{\nu=1}^N(x_{0,\nu}-r_{\nu},x_{0,\nu}+r_\nu)$.

  As a conclusion, we see that $\mu$ is a valid certificate for $(\veccontO,0)$ (see the optimality conditions of Proposition~\ref{prop-optim-cbp}), thus  $(\veccontO,0)$ is a solution of~\eqref{eq-thin-cbp}.  
\end{proof}

Now, we consider the limit of the minimal norm solutions of~\eqref{eq-thin-dual-cbp}. In general, they do not converge towards the minimal norm solution of~\eqref{eq-beurl-dual-cbp}, and we are led to introduce a new variational problem to carry the study further.

\begin{defn}[Third derivative precertificate]\label{defn-third-deriv-precertif}
Given $m_0\in \Mm(\TT)$, we define the \textit{third derivative precertificate} as $\mu_T \eqdef \Phi^*q_T$ where 
  \begin{align}
    q_T \eqdef \uargmin{q\in\Ldeux(\TT)}\bigg{\{}\|q\|_2; \ & \forall i\in \{1,\ldots, N\},\ (\Phi^*q)(x_i)=1, \nonumber\\
    &  \quad \quad (\Phi^*q)'(x_i)=0 \mbox{ and } (\Phi^*q)^{(3)}(x_i)=0 \bigg{\}}, \label{eq-def-thirdderivcertif}
  \end{align}
whenever the above set is not empty.
\end{defn}

It is clear that the set defined in~\eqref{eq-def-thirdderivcertif} is a closed convex set. It is nonempty for instance if the conditions of Lemma~\ref{lem-source-cbp} hold. Note that $q_T$ corresponds to a quadratic minimization under linear constraint, and can hence be computed by solving a linear system,
\begin{align}
  q_T &= \begin{pmatrix}\Gamma_{x_0}^*\\{\Phi_{x_0}^{(3)}}^* \end{pmatrix}^+
  \begin{pmatrix}\begin{pmatrix}\bun_N\\0\end{pmatrix}\\ 0\end{pmatrix}
  = \Gamma_{x_0}^{+,*}\begin{pmatrix}\bun_N\\0\end{pmatrix}-\tilde{\Pi} {\Phi_{x_0}^{(3)}}^* ({\Phi_{x_0}^{(3)}}^*\tilde{\Pi}{\Phi_{x_0}^{(3)}}^*)^{-1}{\Phi_{x_0}^{(3)}}^* \Gamma_{x_0}^{+,*}\begin{pmatrix}\bun_N\\0\end{pmatrix}\label{eq-qt-expression}\\
&= \pvanishing -\tilde{\Pi} {\Phi_{x_0}^{(3)}}^* ({\Phi_{x_0}^{(3)}}^*\tilde{\Pi}{\Phi_{x_0}^{(3)}}^*)^{-1}{\Phi_{x_0}^{(3)}}^*\pvanishing,
\end{align}
where $\tilde{\Pi}$ is the orthogonal projector onto $(\Im \Gamma_{x_0})^\perp$, and $\Gamma_{x_0}=\begin{pmatrix} \Phi_{x_0} & \Phi_{x_0}'\end{pmatrix}$.

\begin{defn}[Twice Non-Degenerate Source Condition]\label{defn-TNDSC}
  We say that $m_0$ satisfies the \textit{Twice Non-Degenerate Source Condition} (TNDSC) if $q_T$ in~\eqref{eq-def-thirdderivcertif} is well defined and if it satisfies, for $\mu_T=\Phi^*q_T$,
  \begin{align*}
    \forall t\in \TT\setminus \{x_{0,1},\ldots ,x_{0,N} \},\ \mu_T(t)<1,\\
    \forall \nu\in \{1,\ldots, N\},\quad    \mu_T''(x_{0,\nu})<0 \qandq \mu_T^{(4)}(x_{0,\nu})>0.
  \end{align*}
\end{defn}
Observe that if the Twice Non-Degenerate Source Condition holds, the hypotheses of Lemma~\ref{lem-source-cbp} are satisfied and $m_0$ is a solution to~\eqref{eq-thin-cbp} for $n$ large enough. In fact the associated minimal norm certificates (which thus exist) converge towards $\mu_T$.

\begin{prop}\label{prop-cv-cbpcertif}
  Let $m_0\in \Mm(\TT)$ satisfy the \textit{Twice Non-Degenerate Source Condition} (and $\mu_T$ the corresponding Third derivative (pre)certificate). Let $q_0^n$ be the minimal norm solution of~\eqref{eq-thin-dual-cbp}, and $\mu_0^n=\Phi^*q_0^n$. Then,
\begin{align}
    \lim_{n\to +\infty} q_0^{n} &=q_T \mbox{ for the $L^2(\TT)$ strong topology,}\\ 
    \lim_{n\to +\infty} \mu_0^{n (k)} &=\mu_T^{(k)} \mbox{ in the sense of the uniform convergence, for all $k\in \NN$.} 
\end{align}
\end{prop}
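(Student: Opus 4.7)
The plan is to reproduce the classical argument for the convergence of minimum-norm dual certificates, with one essential new ingredient: the third-derivative constraint in~\eqref{eq-def-thirdderivcertif} must be recovered in the limit from the fact that, at the saturating grid points, the linear constraints defining $D^n$ couple $\mu$ and $\mu'$ on a length scale $h_n$. The proof proceeds in four steps: (i) exhibit $q_T$ as a uniformly admissible competitor for the minimum-norm dual problem on every fine enough grid; (ii) pass to weak $L^2$ subsequential limits; (iii) show that the limit inherits the vanishing third derivative at the spikes via a Taylor expansion at neighbouring grid points; (iv) identify the limit and upgrade weak to strong convergence.

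\textbf{Step 1: $q_T$ is admissible for the problem defining $q_0^n$.} The Twice Non-Degenerate Source Condition puts $\mu_T$ in the exact setting of Lemma~\ref{lem-source-cbp}, so for $n$ large enough $q_T\in D^n$ and $m_0$ is the unique primal solution to~\eqref{eq-thin-cbp}. Because $m_0=\sum_\nu\alpha_{0,\nu}\delta_{x_{0,\nu}}$ with $\alpha_{0,\nu}>0$ and $x_{0,\nu}\in\Gg_n$, the reparametrisation of Section~\ref{sec-cbp-another-param} yields $\parUO_{i_\nu}=\parDO_{i_\nu}=\tfrac{1}{2}\alpha_{0,\nu}>0$ at the indices $i_\nu$ with $i_\nu h_n=x_{0,\nu}$. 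Proposition~\ref{prop-optim-cbp} then shows that any dual optimum $q$ of~\eqref{eq-thin-dual-cbp} must satisfy the active saturations $(\Phi^*q)(x_{0,\nu})=1$ and $(\Phi^*q)'(x_{0,\nu})=0$. Since $q_T$ satisfies both of these by definition, it is admissible for the minimum-norm problem producing $q_0^n$, giving the uniform bound $\|q_0^n\|_{2}\leq \|q_T\|_{2}$.

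\textbf{Steps 2 and 3: weak limit and vanishing of the third derivative.} The previous bound allows one to extract a subsequence with $q_0^n\rightharpoonup q^*$ weakly in $L^2(\TT)$. Compactness of $\Phi^*$ (Lemma~\ref{lem-phi-compact}) and of each of its derivatives (which are integral operators with continuous kernels $\partial_1^j\phi$) gives $\mu_0^{n (j)}\to (\Phi^*q^*)^{(j)}$ uniformly for every $j$ compatible with the smoothness of $\phi$; in particular $\mu^*(x_{0,\nu})=1$ and $(\mu^*)'(x_{0,\nu})=0$. To recover the third derivative condition, introduce $F(t)\eqdef\mu_0^n(t)-\tfrac{h_n}{2}(\mu_0^n)'(t)$ and $G(t)\eqdef\mu_0^n(t)+\tfrac{h_n}{2}(\mu_0^n)'(t)$. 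Both take the value $1$ at $x_{0,\nu}$, and the saturations $(\mu_0^n)(x_{0,\nu})=1$, $(\mu_0^n)'(x_{0,\nu})=0$ kill all contributions up to order $h_n^2$ in a fourth-order Taylor expansion at $x_{0,\nu}$, giving
\[
F(x_{0,\nu}+h_n)=1-\frac{h_n^3}{12}(\mu_0^n)^{(3)}(x_{0,\nu})+O(h_n^4),\qquad G(x_{0,\nu}-h_n)=1+\frac{h_n^3}{12}(\mu_0^n)^{(3)}(x_{0,\nu})+O(h_n^4),
\]
with $O(h_n^4)$ remainders uniformly bounded since $\|\mu_0^n\|_{\Cont^4}\leq \|q_T\|_2\,\|\Phi^*\|_{L^2\to\Cont^4}$. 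Both values are forced to lie below $1$ by the definition of $D^n$, hence $|(\mu_0^n)^{(3)}(x_{0,\nu})|=O(h_n)$, and passing to the limit gives $(\mu^*)^{(3)}(x_{0,\nu})=0$. The limit $q^*$ is therefore admissible in the minimisation~\eqref{eq-def-thirdderivcertif} defining $q_T$.

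\textbf{Step 4: identification and strong convergence.} Weak lower semicontinuity of $\|\cdot\|_2$ combined with Step 1 yields $\|q^*\|_2\leq\liminf\|q_0^n\|_2\leq\|q_T\|_2$, while the minimality of $q_T$ over the admissible set (to which $q^*$ belongs, by Step 3) gives the reverse inequality. By strict convexity of $\|\cdot\|_2^2$ on the closed convex constraint set, $q_T$ is unique and $q^*=q_T$; the usual subsequence argument shows that the full sequence $q_0^n$ converges weakly to $q_T$, and since $\|q_0^n\|_2\to\|q_T\|_2$, the convergence is strong in $L^2(\TT)$. Continuity of $\Phi^*:L^2(\TT)\to\Cont^k(\TT)$ then delivers the claimed uniform convergence of $\mu_0^{n(k)}$ to $\mu_T^{(k)}$ for each available $k$. \emph{The main obstacle} is Step 3: the constraints defining $D^n$ involve only $\mu$ and $\mu'$, but cancellations orchestrated by the saturation at $x_{0,\nu}$ and the evaluation on adjacent grid points cause the obstruction to first appear at order $h_n^3$ with coefficient proportional to $(\mu_0^n)^{(3)}(x_{0,\nu})$ with opposite signs on each side, pinning it to $0$ in the limit; tracking the cancellations and getting a uniform control on the remainder are the only technical points.
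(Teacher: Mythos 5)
Your proof is correct and follows essentially the same route as the paper's: admissibility of $q_T$ on each fine grid via Lemma~\ref{lem-source-cbp} gives the uniform bound $\|q_0^n\|_2\le\|q_T\|_2$, weak compactness plus compactness of $\Phi^{(k),*}$ gives uniform convergence of the certificates and their derivatives, the third-derivative constraint is recovered in the limit, and minimality with strict convexity identifies the limit and upgrades to strong convergence. The only cosmetic difference is that your Step 3 re-derives by hand the Taylor-expansion cancellation at order $h_n^3$, which the paper obtains by invoking point (iii) of Lemma~\ref{lem-cbp-dualspikes} (whose proof is exactly your computation).
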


\begin{proof}
  As mentioned above, the Twice Non-Degenerate Source Condition implies that $\mu_T$ is a function admissible for Lemma~\ref{lem-source-cbp}, hence a certificate for~\eqref{eq-thin-cbp}.
  As a result, $\|q_0^n\|_2\leq \|q_T\|_2$ and the sequence $(q_0^n)_{n\in\NN}$ is bounded in $L^2(\TT)$. We may extract a subsequence $q_0^{n'}$ which weakly converges towards some $\tilde{q}\in L^2(\TT)$, and then $\|\tilde{q}\|_2\leq \liminf_{n'\to +\infty}\|q_0^n\|_2\leq \|q_T\|_2$. Since $\Phi^*$ and $\Phi^{(k),*}$ are compact (see Lemma~\eqref{lem-phi-compact} in Appendix), we obtain that $\mu_0^{n' (k)}=(\Phi^*q_0^{n'})^{(k)}$ converges toward $\tilde{\mu}=\Phi^* \tilde{q}$ for the (strong) topology of the uniform convergence. We immediately obtain that $\tilde{\mu}(t)\leq 1$ for all $t\in \TT$, and $\tilde{\mu}(x_{0,\nu})=1$, $\tilde{\mu}(x_{0,\nu})=0$ for all $\nu\in \{1,\ldots ,N \}$. 
  
  Moreover, applying Lemma~\ref{lem-cbp-dualspikes} to $\Phi^*q_0^n$ (observing that $x_\nu \in \Sright(r)\cap \Sleft(r)$), we get $\tilde{\mu}^{(3)}(x_{0,\nu})=0$. As a result, $\tilde{q}$ is admissible for~\eqref{eq-def-thirdderivcertif}, hence $\|q_T\|_2\leq \|\tilde{q}\|_2$. Thus in fact $\|q_T\|_2= \|\tilde{q}\|_2$ and $q_T=\tilde{q}$. Since the limit of the extracted subsequence does not depend on the choice of the subsequence, in fact the whole sequence converges. Moreover, the convergence is strong in $L^2(\TT)$ since $\lim_{n\to +\infty}\|q_0^n\|_2=\|q_T\|_2$. 
\end{proof}
As a consequence of the above convergence result, the third derivative precertificate controls the extended support on thin grids. 

\begin{prop}
  Let $m_0\in \Mm(\TT)$ (with $\{x_{0,1},\ldots x_{0,N}\}\subset \Gg_n$) such that the Twice Non Degenerate Source Condition holds.
  Then, for $n$ large enough, $m_0$ is a solution to~\eqref{eq-thin-cbp} and its extended support is given by:
  \begin{align}
    \extun (m_0)= \bigcup_{\nu=1}^N \Sright(r), \qandq  \extdn (m_0)= \bigcup_{\nu=1}^N \Sleft(r),
  \end{align}
  where
  \begin{itemize}
    \item $\Sright(r)$ is equal to $\{x_{0,\nu}\}$ or $\{x_{0,\nu}-\stepsizen,x_{0,\nu}\}$,
    \item $\Sleft(r)$ is  equal to $\{x_{0,\nu}\}$ or $\{x_{0,\nu},x_{0,\nu}+\stepsizen\}$.
  \end{itemize}   
  Moreover, one cannot have simultaneously $\Sright(r)=\{x_{0,\nu}-\stepsizen,x_{0,\nu}\}$ and $\Sright(r)=\{x_{0,\nu},x_{0,\nu}+\stepsizen\}$.
  \label{prop-cbp-thin-extendedsup}
\end{prop}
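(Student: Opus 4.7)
The strategy is to combine Proposition~\ref{prop-cv-cbpcertif}, which provides convergence of the minimal norm certificates $\mu_0^n\to \mu_T$ (with all derivatives, uniformly), with the combinatorial/analytic restrictions given by Lemma~\ref{lem-cbp-dualspikes}. The TNDSC hypothesis is precisely what is needed to feed that lemma with the right non-vanishing derivatives, and Lemma~\ref{lem-source-cbp} (whose hypotheses are also guaranteed by TNDSC) provides the fact that $m_0$ is actually a solution of~\eqref{eq-thin-cbp} for $n$ large.

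First, I would invoke Lemma~\ref{lem-source-cbp} applied to $\mu_T$: by Definition~\ref{defn-TNDSC} we have $\mu_T\in\Im\Phi^*$, $\mu_T(x_{0,\nu})=1$, $\mu_T'(x_{0,\nu})=0$, $\mu_T^{(3)}(x_{0,\nu})=0$, $\mu_T''(x_{0,\nu})<0$, $\mu_T^{(4)}(x_{0,\nu})>0$, and $\mu_T(t)<1$ off the support. Thus $m_0$ is a solution to~\eqref{eq-thin-cbp} for $n$ large, so the minimal norm certificate $\mu_0^n$ is well-defined; moreover by Proposition~\ref{prop-cv-cbpcertif} $\mu_0^n\to\mu_T$ uniformly together with all derivatives.

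Next, I would observe that from the optimality conditions (Proposition~\ref{prop-optim-cbp}) applied to the reconstruction $(\veccontO,0)$ with $\Iup=\Idown=\{i:i\stepsizen\in\{x_{0,1},\dots,x_{0,N}\}\}$, any dual certificate $q$ (in particular the minimal norm one) satisfies both $(\mu_0^n+\tfrac{\stepsizen}{2}(\mu_0^n)')(x_{0,\nu})=1$ and $(\mu_0^n-\tfrac{\stepsizen}{2}(\mu_0^n)')(x_{0,\nu})=1$, so $\mu_0^n(x_{0,\nu})=1$ and $(\mu_0^n)'(x_{0,\nu})=0$. In particular $x_{0,\nu}\in \Sright(r)\cap \Sleft(r)$. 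Apply item 1 of Lemma~\ref{lem-cbp-dualspikes} (with $\mu^n=\mu_0^n$, $\mu^\infty=\mu_T$) to conclude that for $n$ large all saturations lie in $\bigcup_\nu(x_{0,\nu}-r,x_{0,\nu}+r)$, so $\extun(m_0)=\bigcup_\nu \Sright(r)$ and $\extdn(m_0)=\bigcup_\nu \Sleft(r)$.

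Then I would apply item 2 of the same lemma, valid since $\mu_T''(x_{0,\nu})<0$: for $n$ large each $\Sright(r)$ and $\Sleft(r)$ is empty, a singleton, or two consecutive grid points, with the ordering $\max \Sright(r)\leq \min \Sleft(r)$ whenever both are nonempty. Since $x_{0,\nu}$ belongs to both and lies on the grid, writing $i_n$ for the index with $i_n\stepsizen=x_{0,\nu}$, this forces $\Sright(r)\in\{\{x_{0,\nu}\},\{x_{0,\nu}-\stepsizen,x_{0,\nu}\},\{x_{0,\nu},x_{0,\nu}+\stepsizen\}\}$; but $\max \Sright(r)\leq \min \Sleft(r)\leq x_{0,\nu}$ rules out the last possibility, yielding the two stated options, and symmetrically for $\Sleft(r)$. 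Finally, to establish the exclusion clause, I would apply item 4 of Lemma~\ref{lem-cbp-dualspikes}, which is available because $\mu_T^{(4)}(x_{0,\nu})\neq 0$: the simultaneous occurrence of $\Sright(r)=\{(i_n-1)\stepsizen,i_n\stepsizen\}$ and $\Sleft(r)=\{i_n\stepsizen,(i_n+1)\stepsizen\}$ happens only for finitely many $n$, so for $n$ large enough it cannot occur. The main technical subtlety is matching indices when invoking item 4 — one must check that the common index $i_n$ in the statement of item 4 is precisely the one with $i_n\stepsizen=x_{0,\nu}$, which is immediate since $x_{0,\nu}$ sits in the intersection $\Sright(r)\cap\Sleft(r)$ by the previous step.
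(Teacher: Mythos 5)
Your proof is correct and follows essentially the same route as the paper's: Lemma~\ref{lem-source-cbp} plus the TNDSC to get that $m_0$ solves~\eqref{eq-thin-cbp}, Proposition~\ref{prop-cv-cbpcertif} for the convergence $\mu_0^n\to\mu_T$, the extremality relations to place $x_{0,\nu}$ in $\Sright(r)\cap\Sleft(r)$, and items 1, 2 and 4 of Lemma~\ref{lem-cbp-dualspikes} to pin down the possible forms and exclude the doubly-saturated case. Your explicit use of the ordering $\max\Sright(r)\leq\min\Sleft(r)$ to eliminate $\{x_{0,\nu},x_{0,\nu}+\stepsizen\}$ as a candidate for $\Sright(r)$ just spells out what the paper leaves implicit.
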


\begin{proof}
By Lemma~\ref{lem-source-cbp}, $m_0$ is a solution to~\eqref{eq-thin-cbp} and $\mu_T$ is a solution to~\eqref{eq-thin-dual-cbp}. 
Applying Lemma~\ref{lem-cbp-dualspikes} to $\mu_0^{n}$, $\mu_T$, we see that $\Sright(r)$ is of the form $\emptyset$, $\{i\stepsizen\}$ or $\{(i-1)\stepsizen,i\stepsizen\}$, and that $\Sleft(r)$ is of the form $\emptyset$, $\{j\stepsizen\}$ or $\{j\stepsizen,(j+1)\stepsizen\}$, with $i\leq j$.
On the other hand, by the extremality relations between $\mu_0^{n}$ (solution of~\eqref{eq-thin-dual-cbp}) and $m_0^{n}$ (solution of~\eqref{eq-thin-cbp}), $x_{0,\nu}\in\Sright(r)$ and $x_{0,\nu}\in \Sleft(r)$. As a consequence $\Sright(r)$ is equal to $\{x_{0,\nu}\}$ or $\{x_{0,\nu}-\stepsizen,x_{0,\nu}\}$, and $\Sleft(r)$ is  equal to $\{x_{0,\nu}\}$ or $\{x_{0,\nu},x_{0,\nu}+\stepsizen\}$.

Now, since $\mu_T^{4}(0)\neq 0$, the fourth point of Lemma~\ref{lem-cbp-dualspikes} ensures that for $n$ large enough, one cannot have simultaneously $\Sright(r)=\{x_{0,\nu}-\stepsizen,x_{0,\nu}\}$ and $\Sleft(r)=\{x_{0,\nu},x_{0,\nu}+\stepsizen\}$.
\end{proof}

\begin{rem}
  As Proposition~\ref{prop-cbp-thin-extendedsup} shows, for each original spike, at most one pair of spikes appears at low noise : the original spike slightly shifted and either the immediate left neighbor shifted by $+\stepsizen/2$ or the immediate right neighbor shifted by $-\stepsizen/2$.
\end{rem}

We are now in position to provide a sufficient condition for the spikes to appear in pair, with a prediction on the location of the neighbor.

\begin{thm}\label{thm-cbpasso-extended}
  Assume that the operator $\begin{pmatrix}\Phi_{x_0}&\Phi_{x_0}'&\Phi_{x_0}^{(3)}\end{pmatrix}$ has full rank and that the Twice Non Degenerate Source condition (Definiton~\ref{defn-TNDSC}) holds.
  Moreover, assume that all the components of the natural shift
  \begin{align}
    \rho \eqdef  (\Phi_{x_0}^{(3)*}\tilde{\Pi}\Phi_{x_0}^{(3)})^{-1}\Phi_{x_0}^{(3)*}\Gamma_{x_0}^{+,*}\begin{pmatrix}
      \bun_{N}\\0
    \end{pmatrix}
  \end{align}
  are nonzero.
  Then, for $n$ large enough, and all $\nu\in\{1,\ldots,N\}$,
\begin{align}
  \mbox{If } \rho_\nu>0,& \mbox{ then }\Sright(r) = \{x_{0,\nu}-\stepsizen,x_{0,\nu}\}, \qandq \Sleft(r)=\{x_{0,\nu}\}, \\
  \mbox{If } \rho_\nu<0,& \mbox{ then }\Sleft(r) = \{x_{0,\nu}\}, \qandq \Sleft(r)=\{x_{0,\nu},x_{0,\nu}+\stepsizen\},
\end{align}  
  so that the extended support of $m_0$ on the grid $\Gg_n$ has the form
  \begin{align*}
    \extun(m_0) &=\{x_{0,1},\ldots, x_{0,N}\}\cup \enscond{x_{0,\nu}-\stepsizen}{\nu\in\seg{1}{N}\qandq \rho_\nu>0} \\
    \extdn(m_0) &=\{x_{0,1},\ldots, x_{0,N}\}\cup \enscond{x_{0,\nu}+\stepsizen}{\nu\in\seg{1}{N}\qandq \rho_\nu<0}.
  \end{align*}
\end{thm}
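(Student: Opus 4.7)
The plan is to mirror the strategy of Theorem~\ref{thm-lasso-extended}: guess a candidate extended (signed) support $(\Jup,\Jdown)$ dictated by the signs of $\rho$, apply Lemma~\ref{lem-mu0} to certify that this is indeed the extended support of $m_0$ on $\Gg_n$, and read off the characterization of the minimal norm certificate. By Proposition~\ref{prop-cbp-thin-extendedsup}, for $n$ large enough the only freedom is, for each $\nu$, to choose between (a)~no additional index, (b)~adding $x_{0,\nu}-\stepsizen$ to $\Jup$, or (c)~adding $x_{0,\nu}+\stepsizen$ to $\Jdown$ (cases (b) and (c) being mutually exclusive). The guess prescribed by the theorem is: take~(b) when $\rho_\nu>0$ and~(c) when $\rho_\nu<0$.

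With that guess fixed, I would write the operator
$\Copt=\bigl(\Op_1\mid \Op_2\bigr)$ where the block $\Op_1=\Phi_{x_0}$ collects the columns at the original spike locations (which belong to both $\Jup$ and $\Jdown$) and $\Op_2$ collects the $N$ extra columns $(\Phi+\tfrac{\stepsizen}{2}\Phi')(\cdot,x_{0,\nu}-\stepsizen)$ or $(\Phi-\tfrac{\stepsizen}{2}\Phi')(\cdot,x_{0,\nu}+\stepsizen)$ according to the sign of $\rho_\nu$. Taylor-expanding each of these at $x_{0,\nu}$ (keeping terms up to $\stepsizen^3$, since first and second orders cancel because $\mu_T'(x_{0,\nu})=0$ and $\mu_T^{(3)}(x_{0,\nu})=0$ enter the analysis only at third order), one finds $\Op_2=\Phi_{x_0}+0\cdot\stepsizen+0\cdot\stepsizen^2+\tfrac{\stepsizen^3}{?}\Phi_{x_0}^{(3)}D_\varepsilon + O(\stepsizen^4)$, where $D_\varepsilon$ is a diagonal sign matrix encoding the choice (b)/(c). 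Applying a block-inversion argument analogous to Lemma~\ref{lem-apx-asymptolasso} to the perturbed Gram matrix $\Copt^*\Copt$, the leading behaviour of $(u_{\Jup},v_{\Jdown})=-(\Copt^*\Copt)^{-1}s$ is
\begin{align*}
  \text{(extra entries)} \;\sim\; \frac{\text{const}}{\stepsizen^3}\,D_\varepsilon^{-1}\,\rho,
\end{align*}
with $\rho$ given by~\eqref{eq-dfn-rho} adapted to the present setting, i.e. $\rho=(\Phi_{x_0}^{(3)*}\tilde\Pi\Phi_{x_0}^{(3)})^{-1}\Phi_{x_0}^{(3)*}\Gamma_{x_0}^{+,*}\binom{\mathbf 1_N}{0}$. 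Hence the extra components have the sign of $D_\varepsilon^{-1}\rho$, which is positive precisely when the signs of $\rho_\nu$ are compatible with our choice of (b)/(c) — which is exactly how the guess was made.

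It remains to verify the strict-inequality condition of Lemma~\ref{lem-mu0}, i.e. $\|\tilde p_n\|$-certificate evaluates to something strictly smaller than $1$ at every other grid point. For this I would identify $\tilde p_n\eqdef -(\Copt^*\Copt)^{-1}s$ combined with $\Copt$ (giving $\Phi^*\tilde p_n$ on the continuous side) and show, by the same Taylor/block-inversion bookkeeping, that $\Phi^*\tilde p_n$ converges uniformly on $\TT$ (together with its first two derivatives) to the third-derivative precertificate $\mu_T$ of Definition~\ref{defn-third-deriv-precertif} — the limit being exactly $\mu_T$ because the linear constraints imposed by the candidate $\Copt$ (unit value and vanishing first and third derivatives at each $x_{0,\nu}$) are those defining $q_T$ in~\eqref{eq-def-thirdderivcertif}. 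By the Twice Non-Degenerate Source Condition, $\mu_T<1$ off the support $\{x_{0,1},\ldots,x_{0,N}\}$, $\mu_T''(x_{0,\nu})<0$, and $\mu_T^{(4)}(x_{0,\nu})>0$; by the argument of Lemma~\ref{lem-source-cbp} (monotonicity of $t\mapsto\mu_T(t)-\tfrac{t}{2}\mu_T'(t)$ near each $x_{0,\nu}$) these suffice to give $(\Phi^*\tilde p_n\pm\tfrac{\stepsizen}{2}(\Phi^*\tilde p_n)')(k\stepsizen)<1$ for all $k\stepsizen$ outside the claimed extended support once $n$ is large enough. Lemma~\ref{lem-mu0} then concludes.

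The main obstacle is the asymptotic inversion of $\Copt^*\Copt$: the near-collinearity of each extra column with the corresponding column of $\Phi_{x_0}$ (degenerating at order $\stepsizen^3$ rather than $\stepsizen$ as in the \lasso case) forces a more delicate block-expansion than in Lemma~\ref{lem-apx-asymptolasso}, and it is essential to check that lower-order terms cancel so that the $\rho$ defined via $\Phi_{x_0}^{(3)}$ (and not via $\Phi_{x_0}'$ as in the \lasso) governs the sign of the extra components. The remaining verifications (full rank of $\Copt$ for $n$ large, uniform-in-$n$ strict inequality on compact subsets away from $\{x_{0,\nu}\}$) are routine consequences of the convergence $\Phi^*\tilde p_n\to\mu_T$ in $C^k(\TT)$ supplied by the compactness of $\Phi^*$ (Lemma~\ref{lem-phi-compact}) combined with Proposition~\ref{prop-cv-cbpcertif}.
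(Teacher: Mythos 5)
Your high-level strategy (guess the extended support, certify it via Lemma~\ref{lem-mu0}, read off the sign of the extra components from an asymptotic expansion of the Gram inverse, then verify the strict inequality everywhere else) is the same as the paper's. But several steps in your outline are incorrect as written and, in the C-BP setting, the errors are fatal.

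First, your block decomposition of $\Copt$ is wrong. You write $\Copt=(\Op_1\mid\Op_2)$ with $\Op_1=\Phi_{x_0}$, but $\Copt$ has $3N$ columns, not $2N$: each $x_{0,\nu}$ belongs to both $\Jup$ and $\Jdown$, contributing the two distinct columns $\phi(\cdot,x_{0,\nu})+\tfrac{\stepsizen}{2}\partial_2\phi(\cdot,x_{0,\nu})$ and $\phi(\cdot,x_{0,\nu})-\tfrac{\stepsizen}{2}\partial_2\phi(\cdot,x_{0,\nu})$ — neither of which equals the column of $\Phi_{x_0}$. Second, your Taylor expansion ``$\Op_2=\Phi_{x_0}+0\cdot\stepsizen+0\cdot\stepsizen^2+\ldots$'' is false: e.g.\ for $\varepsilon_\nu=-1$ the extra column is $\phi(\cdot,x_{0,\nu}-\stepsizen)+\tfrac{\stepsizen}{2}\partial_2\phi(\cdot,x_{0,\nu}-\stepsizen)=\phi-\tfrac{\stepsizen}{2}\phi'+\tfrac{\stepsizen^3}{12}\phi^{(3)}+O(\stepsizen^4)$ at $x_{0,\nu}$; the first-order term does not vanish. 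The actual degeneracy structure is two-level: the two same-location columns differ at order $\stepsizen$, and the extra column differs from the matching same-location column at order $\stepsizen^3$ (the paper makes this explicit by rewriting $\Capt=\bigl(A+\tfrac{h}{2}B\mid A-\tfrac{h}{2}B\mid A+\tfrac{h}{2}B+h^3C_h\bigr)$ with $A=\Phi_{x_0}$, $B=\Phi_{x_0}'\diag(\varepsilon)$, $C=\gamma_3\Phi_{x_0}^{(3)}\diag(\varepsilon)$). This is exactly what Lemma~\ref{lem-apx-cbpdl} is designed for; Lemma~\ref{lem-apx-asymptolasso} (a single $O(h)$ degeneracy) cannot handle it, and your proposed reduction to it is not possible.

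Two further gaps. (i) You invoke Proposition~\ref{prop-cv-cbpcertif} to get $\Phi^*\tilde p_n\to\mu_T$, but that proposition concerns the minimal-norm dual solutions $q_0^n$, and the whole point of the argument is to identify $\tilde q_n$ with $q_0^n$ — so the appeal is circular. The paper proves the convergence directly from the Gram-inverse asymptotics of Lemma~\ref{lem-apx-cbpdl}. (ii) The strict inequality at the opposite-side grid point $(i_\nu-\varepsilon_\nu)\stepsizen$ cannot be obtained from the monotonicity of $t\mapsto\mu(t)-\tfrac{t}{2}\mu'(t)$ (that argument, from Lemma~\ref{lem-source-cbp}, concerns $\mu_T$ itself). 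The auxiliary certificate $\Phi^*\tilde q_n$ satisfies equality at $(i_\nu+\varepsilon_\nu)\stepsizen$ by construction, and showing the inequality at $(i_\nu-\varepsilon_\nu)\stepsizen$ requires an $\stepsizen^4$-level Taylor cancellation that rests on $\mu_T^{(4)}(x_{0,\nu})>0$ from the Twice Non-Degenerate Source Condition, as in point~\eqref{item-deriv4} of Lemma~\ref{lem-cbp-dualspikes}. Your proposal never invokes this fourth-derivative hypothesis, even though it is essential.
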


\begin{cor}\label{cor-cbp-extended}
  Under the hypotheses of Theorem~\ref{thm-cbpasso-extended}, for $n$ large enough, there exists constants $C^{(1)}_n>0$, $C^{(2)}_n>0$ such that 
  for $\la\leq C^{(1)}_n \min_{1\leq \nu\leq N} |\alpha_{0,\nu}|$, and for all $w\in L^2(\TT)$ such that $\|w\|_2\leq C^{(2)}_n \la$, the solution to~\eqref{eq-thin-cbpasso} is unique, and reads $m_\la=\sum_{\nu=1}^N (\alpha_{\la,\nu}\delta_{x_{0,\nu}+t_\nu}  + \beta_{\la,\nu}\delta_{x_{0,\nu}+\varepsilon_\nu\stepsizen})$, where 
\begin{align*}
-\stepsizen/2<t_\nu<\stepsizen/2 \qandq \varepsilon&\eqdef -\sign\left(\rho\right).
  \end{align*}
\end{cor}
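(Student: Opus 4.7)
The plan is to combine the characterization of the extended support from Theorem~\ref{thm-cbpasso-extended} with the abstract low-noise stability result of Theorem~\ref{thm-abstract-cbp}, applied to the operator pair $(A,B)=(\Phi_{\Gg_n},\Phi'_{\Gg_n})$. Write $i_\nu\in\seg{0}{\taillegridn-1}$ for the index with $i_\nu\stepsizen=x_{0,\nu}$. For $n$ large enough, Theorem~\ref{thm-cbpasso-extended} gives explicitly $\Jup\eqdef\extun(m_0)$ and $\Jdown\eqdef\extdn(m_0)$: one has $i_\nu\in \Jup\cap\Jdown$ for every $\nu$, and exactly one extra index is added, namely $i_\nu-1\in\Jup\setminus\Jdown$ when $\rho_\nu>0$ and $i_\nu+1\in\Jdown\setminus\Jup$ when $\rho_\nu<0$.

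Next I would check the hypotheses of Theorem~\ref{thm-abstract-cbp}. The full-rank property of $\Copt$ for $n$ large follows from a Taylor expansion of its columns in powers of $\stepsizen$ (as in the proof of Theorem~\ref{thm-lasso-extended}, using Lemma~\ref{lem-apx-asymptolasso}), combined with the standing hypothesis that $\begin{pmatrix}\Phi_{x_0}&\Phi'_{x_0}&\Phi^{(3)}_{x_0}\end{pmatrix}$ has full rank. The strict positivity of $u_j$ for $j\in\Jup\setminus\Iup$ and of $v_j$ for $j\in\Jdown\setminus\Idown$ follows from the same asymptotic expansion: the leading-order term of $-(\Copt^*\Copt)^{-1}s$ on the added indices is controlled by the components of $\rho$, which are nonzero by assumption. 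This is essentially the computation already carried out in the proof of Theorem~\ref{thm-cbpasso-extended}.

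Theorem~\ref{thm-abstract-cbp} then provides, for $\la\le C^{(1)}_n\min_\nu \alpha_{0,\nu}$ (the minimum in Theorem~\ref{thm-abstract-cbp} reduces to this quantity since $\shiftcontO=0$) and $\norm{w}_2\le C^{(2)}_n\la$, a unique solution $(\veccont_\la,\shiftcont_\la)$ with $\Iup(\veccont_\la,\shiftcont_\la)=\Jup$ and $\Idown(\veccont_\la,\shiftcont_\la)=\Jdown$, given in closed form via $H_\stepsize$ and $\Copt^+$. It remains to read off the measure $m_\la=\sum_i \veccont_{\la,i}\delta_{i\stepsizen+\shiftcont_{\la,i}/\veccont_{\la,i}}$ from this vector. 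For the index $i_\nu\in\Jup\cap\Jdown$, both $\parU_{i_\nu}>0$ and $\parD_{i_\nu}>0$, so $|\shiftcont_{\la,i_\nu}/\veccont_{\la,i_\nu}|<\stepsizen/2$, producing the first Dirac at $x_{0,\nu}+t_\nu$ with $-\stepsizen/2<t_\nu<\stepsizen/2$. When $\rho_\nu>0$, the added index $i_\nu-1$ lies only in $\Jup$, forcing $\parD_{i_\nu-1}=0$, hence $\shiftcont_{\la,i_\nu-1}/\veccont_{\la,i_\nu-1}=\stepsizen/2$, so the second Dirac sits at $(i_\nu-1)\stepsizen+\stepsizen/2=x_{0,\nu}-\stepsizen/2=x_{0,\nu}+\varepsilon_\nu\stepsizen/2$. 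The case $\rho_\nu<0$ is symmetric, with $i_\nu+1$ in $\Jdown\setminus\Jup$ giving a Dirac at $x_{0,\nu}+\stepsizen/2$. In both cases $\varepsilon_\nu=-\sign(\rho_\nu)$.

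The only nontrivial step is the asymptotic verification of the full-rank and strict-positivity conditions, which reuses the Taylor expansion machinery already established for the earlier theorems; once this is in hand, the corollary is just the translation of the abstract solution $(\veccont_\la,\shiftcont_\la)$ from Theorem~\ref{thm-abstract-cbp} into the measure representation, with careful bookkeeping of which indices in the extended support correspond to $\Jup\setminus\Jdown$ versus $\Jdown\setminus\Jup$ and hence to a forced half-grid-point location.
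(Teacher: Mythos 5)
Your route is the intended one: apply the abstract low-noise result of Theorem~\ref{thm-abstract-cbp}, with $(A,B)=(\Phi_{\Gg_n},\Phi'_{\Gg_n})$, to the extended support $(\Jup,\Jdown)$ identified by Theorem~\ref{thm-cbpasso-extended}, then translate the vector $(\veccont_\la,\shiftcont_\la)$ back into a measure via~\eqref{eq-cbp-measure}. The paper omits a written proof of this corollary, so there is no argument to compare against, but the logic you lay out is exactly what the text leaves implicit, and your bookkeeping of which added indices lie in $\Jup\setminus\Jdown$ versus $\Jdown\setminus\Jup$, and hence which saturate at $+\stepsizen/2$ versus $-\stepsizen/2$, is correct.

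Two remarks. First, a small citation point: the full-rank and strict-positivity check on $\Copt$ rests on Lemma~\ref{lem-apx-cbpdl} (the three-block expansion $\begin{pmatrix}A+\tfrac{h}{2}B & A-\tfrac{h}{2}B & A+\tfrac{h}{2}B+h^3C_h\end{pmatrix}$), not on Lemma~\ref{lem-apx-asymptolasso}, which is the two-block version used for the ordinary \lasso; the computation you point to in the proof of Theorem~\ref{thm-cbpasso-extended} uses the former. Second, and more substantively, you derive that the added Dirac sits at $x_{0,\nu}+\varepsilon_\nu\stepsizen/2$ (when $\rho_\nu>0$, $i_\nu-1\in\Jup\setminus\Jdown$ forces $\parD_{i_\nu-1}=0$, hence $\shiftcont_{\la,i_\nu-1}/\veccont_{\la,i_\nu-1}=+\stepsizen/2$ and the location is $(i_\nu-1)\stepsizen+\stepsizen/2=x_{0,\nu}-\stepsizen/2$). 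The corollary as printed says $x_{0,\nu}+\varepsilon_\nu\stepsizen$, which is off by a factor of $2$; your value is the right one and agrees with the formula given in the Contributions section of the paper, so you should flag this as a typo in the statement rather than silently matching the printed version.
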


\begin{proof}[Proof of Theorem~\ref{thm-cbpasso-extended}]
  We proceed as in the proof of Theorem~\ref{thm-lasso-extended} by building a good candidate for $\mu_0^n$ and using Lemma~\ref{lem-mu0}.
To comply with the notations of Lemma~\ref{lem-mu0}, let us write  $\sum_{\nu=1}^N\alpha_{0,i}\delta_{x_{0,\nu}}= \sum_{k=0}^{\taillegridn-1} a_{0,k}\delta_{k\stepsizen}$, and $\Iup\eqdef \Idown\eqdef I\eqdef\enscond{i\in\seg{0}{\taillegridn-1}}{a_{0,i}\neq 0}$. 

For any choice of shift $(\epsilon_i)_{i\in I}\in \{-1,+1\}^{N}$, we set $\Jup \eqdef \Iup\cup\enscond{i+\varepsilon_i}{i\in I \qandq\varepsilon_i=-1}$ and $\Jdown\eqdef \Idown\cup\enscond{i+\varepsilon_i}{i\in I\qandq\varepsilon_i=+1}$.
  Since $|x_{0,\nu}-x_{0,\nu'}|> {2}{\stepsizen}$ for $\nu'\neq \nu$ and $n$ large enough, we have $\Card \Jup+\Card \Jdown=3\times \Card I=3N$. 
We shall find a choice of $\varepsilon$ such that $u_j > 0$ for all $j\in \Jup\setminus I$, and $v_j> 0$ for all $j\in \Jdown\setminus J$, where
\begin{align*}
  \begin{pmatrix} u_{\Jup}\\v_{\Jdown} \end{pmatrix}&\eqdef-(\Copt^*\Copt)^{-1}\begin{pmatrix}
    \bun_{\Jup}\\\bun_{\Jdown}\end{pmatrix}, \quad
\Copt \eqdef\begin{pmatrix}(\OpU+\frac{\stepsize}{2}\OpD)_{\Jup} &(\OpU-\frac{\stepsize}{2}\OpD)_{\Jdown}\end{pmatrix}\\
\OpU &\eqdef \Phi_{\Gg_n} \qandq \OpD\eqdef\Phi'_{\Gg_n}.
\end{align*}

In this particular case where $\Iup= \Idown= I$, all $j$ in $(\Jup\setminus I)\cup (\Jdown\setminus I)$ may be uniquely written as $j=i+\varepsilon_i$ for some $i\in I$, where $\varepsilon_i\in\{-1,+1\}$.
We may swap the columns of $\Copt$ so as to reformulate the condition  $\begin{pmatrix} u_{\Jup}\\v_{\Jdown} \end{pmatrix}=-(\Copt^*\Copt)^{-1}\bun_{3N}$ into
\begin{align*}  
  \begin{pmatrix} \tilde{u}_I\\ \tilde{v}_I\\ \tilde{t}_I  \end{pmatrix}&=-(\Capt^*\Capt)^{-1}\begin{pmatrix}
    \bun_N\\\bun_N\ \\\bun_N 
  \end{pmatrix},
\end{align*}
where $\Capt \eqdef \begin{pmatrix} \OpU_I+ \frac{\stepsizen}{2}\OpD_I\diag(\varepsilon)  & \OpU_I-\frac{\stepsizen}{2}\OpD_I\diag(\varepsilon) & \OpU_{I+\varepsilon}-\frac{\stepsizen}{2}\OpD_{I+\varepsilon}\diag(\varepsilon)\end{pmatrix}$
and $\tilde{t}_i>0$ for all $i\in I$.
But a Taylor expansion yields
\begin{align*}
  \OpU_{I+\varepsilon}-\frac{\stepsizen}{2}\OpD_{I+\varepsilon}\diag(\varepsilon)&= \underbrace{\Phi_{x_0}}_{=\OpU_I}+ \frac{\stepsizen}{2}\underbrace{\Phi_{x_0}'\diag(\varepsilon)}_{=\OpD_I\diag(\varepsilon)}+ (\stepsizen)^3 \gamma_3\Phi^{(3)}_{x_0}\diag(\varepsilon)+o(\stepsizen^3),
\end{align*}
where we defined
\eql{\label{eq?defn-alphak}
	\gamma_k \eqdef \frac{1}{k!}- \frac{1}{(k-1)!\times 2}. 
}
Hence, we may apply Lemma~\ref{lem-apx-cbpdl} to $\Phi_{x_0}$, $\Phi_{x_0}'\diag(\varepsilon)$ and $\gamma_3\Phi^{(3)}_{x_0}\diag(\varepsilon)$ so as to obtain
\begin{align*}
  \tilde{t}_I &=  -\frac{1}{\gamma_3\stepsizen^3}\diag(\varepsilon) \rho +o\left(\frac{1}{\stepsizen^3}\right).
\end{align*}
Therefore it is sufficient to choose $\varepsilon = -\sign(\rho)$ to make all the components of $\tilde{t}_I$ nonnegative.


With that choice of $\varepsilon$, it remains to prove that
\eq{
	\max \left[\begin{pmatrix}(\OpU+\frac{\stepsize}{2}\OpD)_{\Jup^c}^* \\(\OpU-\frac{\stepsize}{2}\OpD)_{\Jdown^c}^* \end{pmatrix} \Copt \begin{pmatrix} u_{\Jup}\\v_{\Jdown} \end{pmatrix}\right]<1. 
}
Let us write $\tilde{q}_n\eqdef \Copt \begin{pmatrix} u_{\Jup}\\v_{\Jdown} \end{pmatrix}$. Since $\begin{pmatrix} u_{\Jup}\\v_{\Jdown} \end{pmatrix}=-(\Copt^*\Copt)^{-1}\bun_{3N}$, we get $\tilde{q}_n=\Copt^{+,*}\bun_{3N}$, and applying Lemma~\ref{lem-apx-cbpdl} to $\Phi_{x_0}$, $\Phi_{x_0}'\diag(\varepsilon)$ and $\gamma_3\Phi^{(3)}_{x_0}\diag(\varepsilon)$, we see that 
$\tilde{q}_n$ converges towards $q_T$ (using~\eqref{eq-qt-expression}).

By construction of $\tilde{q}_n$, 
\begin{align}
  &\forall j\in\Jup\setminus I, \ (\Phi^*\tilde{q}_n+\frac{\stepsizen}{2}(\Phi^*\tilde{q}_n)')(j\stepsizen)=1,\nonumber\\
  \qandq &\forall j\in\Jdown\setminus I, \  (\Phi^*\tilde{q}_n-\frac{\stepsizen}{2}(\Phi^*\tilde{q}_n)')(j\stepsizen)=1,\label{eq-cbp-pbaux}
  \intertext{which may be summarized as }
  &\forall i\in I, (\Phi^*\tilde{q}_n-\varepsilon_i\frac{\stepsizen}{2}(\Phi^*\tilde{q}_n)')((i+\varepsilon_i)\stepsizen)=1.\nonumber
   \end{align}

   Arguing as in the proof of point~\eqref{item-deriv4} in Lemma~\ref{lem-cbp-dualspikes} (replacing ``$1=\ldots$'' with ``$1\geq\ldots$'' and using that $\mu_T^{(4)}(x_{0,\nu})>0$), we may prove that for $n$ large enough, $(\Phi^*\tilde{q}_n+\varepsilon_i\frac{\stepsizen}{2}(\Phi^*\tilde{q}_n)')((i-\varepsilon_i)\stepsizen)<1$.

   Then, by the same argument of compactness and local concavity as in point~\eqref{item-deriv2} of Lemma~\ref{lem-cbp-dualspikes}, we observe that 
   \begin{align*}
  \enscond{k\in \seg{0}{\taillegridn-1}}{(\Phi^*\tilde{q}_n+\frac{\stepsizen}{2}(\Phi^*\tilde{q}_n)')(k\stepsizen)\geq 1}\subset \Jup,\\
 \enscond{k\in \seg{0}{\taillegridn-1}}{(\Phi^*\tilde{q}_n-\frac{\stepsizen}{2}(\Phi^*\tilde{q}_n)')(k\stepsizen)\geq 1}\subset \Jdown,
\end{align*}
and those inclusions are in fact equalities.
That precisely means that $\max \left[\begin{pmatrix}(\OpU+\frac{\stepsize}{2}\OpD)_{\Jup^c}^* \\(\OpU-\frac{\stepsize}{2}\OpD)_{\Jdown^c}^* \end{pmatrix}\tilde{q}_n\right]<1$.

Hence, by Lemma~\ref{lem-mu0}, $\Phi^*\tilde{q}_n$ is the minimal norm certificate $\mu_0^n$ and $(\Jup\stepsizen,\Jdown\stepsizen)$ is the extended support. This concludes the proof.
\end{proof}

\subsection{Asymptotics of the constants}

Again, we may examine the asymptotic behavior of the constants given in Corollary~\ref{cor-cbp-extended}. Those constants stem from Theorem~\ref{thm-abstract-cbp} which is itself a variant of Theorem~\ref{thm-stability-dbp} for the \lasso.

Replacing the constants $c_1,\ldots, c_3$ of the proof of Theorem~\ref{thm-stability-dbp} with the corresponding expressions for the C-BP, and using Lemma~\ref{lem-apx-cbpdl} we get
\begin{align}\label{eq-cbp-cst-asympt-1}
	c_{1,n}&= \norm{R_{\Iup\cup\Idown}\Copt^+}  
	\sim 
	\frac{1}{(\stepsizen)^3}\normb{\begin{pmatrix}
    (\Phi_{x_0}^{(3),*}\tilde{\Pi} \Phi_{x_0}^{(3)})^{-1}\Phi_{x_0}^{(3),*}\tilde{\Pi} \\
    0
  \end{pmatrix}}_{\infty,2}\\
  	\label{eq-cbp-cst-asympt-2}
  	c_{2,n}&= 
	\normb{\begin{pmatrix} \tilde{u}_I\\\tilde{v}_I \end{pmatrix}}
	\sim 
	\frac{1}{(\stepsizen)^3} 
	\normb{
		\begin{pmatrix}\rho \\ 0 \end{pmatrix}
	}_{\infty}\\
	\label{eq-cbp-cst-asympt-3}
  	c_{3,n} &= 
  	\left(\norm{R_{(\Jup\setminus\Iup)\cup(\Jdown\setminus\Idown)}\Copt^+}_{\infty,2}\right)^{-1}
	\left(\min_{i\in I}\tilde{t}_i\right)
		\sim 
		\frac{
			\min_i \left| \frac{1}{\gamma_3} \rho_i \right|  
		}{
			\norm{(\Phi_{x_0}^{(3),*}\tilde{\Pi} \Phi_{x_0}^{(3)})^{-1}\Phi_{x_0}^{(3),*}\tilde{\Pi}}_{\infty,2}
		}
\end{align}
where $\gamma_k$ is defined in~\eqref{eq?defn-alphak}.
As for $c_{4,n}$ and $c_{5_n}$, like in the case of the \lasso, their expression lead to a pessimistic bound for the low noise regime, and we are led to make finer majorizations.

\begin{prop}\label{prop-asympto-constant-cbp}
  The constants $C^{(1)}_n, C^{(2)}_n$ in Corollary~\ref{cor-cbp-extended} can be chosen as $C^{(1)}_n=O(\stepsizen^3)$ and $C^{(2)}_n=O(1)$, and one has
  	\eql{\label{eq-lipsch-cbp}
		\normb{ 	
  		\begin{pmatrix} \alpha_\la\\\beta_\la \end{pmatrix}
		- 
		\begin{pmatrix} \alpha_0\\0 \end{pmatrix}
 		}_{\infty} = O\pa{ \frac{w}{\stepsizen^3}, \frac{\la}{\stepsizen^3} }.
	}
\end{prop}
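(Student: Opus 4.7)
The estimate~\eqref{eq-lipsch-cbp} is immediate: taking the $\infty$-norm of the closed-form formula from Corollary~\ref{cor-cbp-extended} yields $\|(\alpha_\la-\alpha_0,\beta_\la)\|_\infty \leq c_{1,n}\|w\| + c_{2,n}\la$, and plugging in the asymptotics~\eqref{eq-cbp-cst-asympt-1} and~\eqref{eq-cbp-cst-asympt-2} gives $O(\|w\|/\stepsizen^3 + \la/\stepsizen^3)$. The bound $C^{(1)}_n = O(\stepsizen^3)$ then follows from the sign-recovery requirement $\|(\alpha_\la-\alpha_0,\beta_\la)\|_\infty < \min_\nu|\alpha_{0,\nu}|$ combined with $\|w\| \leq C^{(2)}_n \la$, once $C^{(2)}_n$ is shown to be $O(1)$.

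The heart of the proof is the bound $C^{(2)}_n = O(1)$, which mirrors the refinement carried out in Proposition~\ref{prop-asympto-constant-lasso}. Set $\omega \eqdef \Phi^*\Pi w$ with $\Pi$ the orthogonal projector onto $\ker(\Copt^*)$; then the candidate solution from Theorem~\ref{thm-abstract-cbp} gives rise to the candidate dual certificate $\omega/\la + \mu_0^n$, and by Proposition~\ref{prop-cbp-strict} this candidate is optimal as soon as the strict inequalities
\begin{align*}
  \bigl(\omega/\la + \mu_0^n\bigr)(k\stepsizen) \pm \tfrac{\stepsizen}{2}\bigl(\omega'/\la + (\mu_0^n)'\bigr)(k\stepsizen) < 1
\end{align*}
hold for every grid point $k\stepsizen$ outside the relevant part of the extended signed support. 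Outside the neighborhoods $(x_{0,\nu}-r, x_{0,\nu}+r)$, Proposition~\ref{prop-cv-cbpcertif} together with the Twice Non-Degenerate Source Condition yield $|\mu_0^n \pm \tfrac{\stepsizen}{2}(\mu_0^n)'| \leq k<1$ with $k$ independent of $n$, so that the strict inequality is satisfied as soon as $\|w\|/\la$ is bounded by an $O(1)$ constant.

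The main obstacle is the local analysis within each neighborhood $(x_{0,\nu}-r, x_{0,\nu}+r)$. By the Twice Non-Degenerate Source Condition, $\mu_T(x_{0,\nu})=1$, $\mu_T'(x_{0,\nu})=\mu_T^{(3)}(x_{0,\nu})=0$, $\mu_T''(x_{0,\nu})<0$ and $\mu_T^{(4)}(x_{0,\nu})>0$; while the very definition of $\Pi$ forces $\omega(x_{0,\nu})=\omega'(x_{0,\nu})=0$ together with an extra linear relation on $(\omega,\omega')$ at $x_{0,\nu}+\varepsilon_\nu\stepsizen$ (coming from membership of $i_\nu+\varepsilon_\nu$ in $\Jup$ or $\Jdown$). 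At the grid point adjacent to the extended support on the other side, a Taylor expansion of $\mu_T$ and $\mu_T'$ yields a deterministic margin of order $\tfrac{1}{24}\mu_T^{(4)}(x_{0,\nu})\stepsizen^4$, whereas a naive Taylor bound on the corresponding noise quantity $\omega \pm \tfrac{\stepsizen}{2}\omega'$ is only $O(\stepsizen^3\|w\|)$ through the term proportional to $\omega'''(x_{0,\nu})$. The key cancellation is that the extra constraint satisfied by $\omega$ at $x_{0,\nu}+\varepsilon_\nu\stepsizen$ combined with its double zero at $x_{0,\nu}$ forces, via a Taylor expansion, $\omega'''(x_{0,\nu}) = -\varepsilon_\nu\tfrac{\stepsizen}{2}\omega^{(4)}(x_{0,\nu}) + O(\stepsizen^2)\|w\|$, which reduces the noise contribution to $O(\stepsizen^4\|w\|)$. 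Balancing this against the $O(\stepsizen^4)$ deterministic margin produces the desired $\|w\|/\la = O(1)$ bound. The remaining grid points in $(x_{0,\nu}-r, x_{0,\nu}+r)$ have non-marginal deterministic margins (of order $\stepsizen^2$, as follows from $\mu_T''(x_{0,\nu})<0$) and are handled analogously with an even easier $O(1)$ constraint on the signal-to-noise ratio.
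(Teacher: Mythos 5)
Your plan is essentially the paper's proof, with the key Taylor cancellation organized slightly differently. The paper writes $f = \tfrac{\omega}{\la}+\mu_0^n$ and Taylor-expands $1-(f\mp\tfrac{h}{2}f')$ at \emph{both} $\pm\stepsizen$; adding the two expansions cancels the $f^{(3)}(x_{0,\nu})$ term outright (the $\gamma_3\stepsizen^3 f^{(3)}$ contributions appear with opposite signs), and the constraint at $x_{0,\nu}+\varepsilon_\nu\stepsizen$ kills one side, leaving a margin $2(-\gamma_4)\stepsizen^4 f^{(4)}+O(\stepsizen^5)$ at the other side. You instead extract the asymptotic $\omega'''(x_{0,\nu}) = -\varepsilon_\nu\tfrac{\stepsizen}{2}\omega^{(4)}(x_{0,\nu})+O(\stepsizen^2\|w\|)$ from the saturation constraint, then substitute into a single Taylor expansion; these are algebraically equivalent and lead to the same $O(\stepsizen^4)$ scales. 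Two small imprecisions worth flagging: (1) your deterministic margin should be computed from $\mu_0^n$, which satisfies the \emph{grid} saturation $\mu_0^n(x_{0,\nu}+\varepsilon_\nu\stepsizen)\mp\tfrac{\stepsizen}{2}(\mu_0^n)'(x_{0,\nu}+\varepsilon_\nu\stepsizen)=1$, not from $\mu_T$ which only satisfies $\mu_T'''(x_{0,\nu})=0$; the uniform convergence $\mu_0^n\to\mu_T$ is without an a-priori rate, so it cannot be invoked at the $\stepsizen^4$ level, and one must carry $(\mu_0^n)^{(4)}$ (then bound it below via $\mu_T^{(4)}>0$ and convergence). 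Doing so also changes your constant from $\tfrac{1}{24}$ to $2(-\gamma_4)=\tfrac{1}{12}$ (the grid constraint contributes an extra $\tfrac{\gamma_3}{2}\stepsizen^4$ term of the same sign). (2) Your treatment of the remaining grid points should appeal, as the paper does, to the strict concavity of $f\pm\tfrac{\stepsizen}{2}f'$ on $(x_{0,\nu}-r,x_{0,\nu}+r)$ (from $f''+\tfrac{\stepsizen}{2}|f'''|<-k_1$), which, combined with the already-established bound at the adjacent grid point $x_{0,\nu}-\varepsilon_\nu\stepsizen$ and the saturation at $x_{0,\nu}$ (and $x_{0,\nu}+\varepsilon_\nu\stepsizen$), handles all the other grid points in the neighborhood at once without a point-by-point balance. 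Neither issue undermines the $C^{(1)}_n=O(\stepsizen^3)$, $C^{(2)}_n=O(1)$ conclusion.
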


\begin{proof}
	The proof of~\eqref{eq-lipsch-cbp} follows from~\eqref{eq-cbp-cst-asympt-1} and~\eqref{eq-cbp-cst-asympt-2}. 
  Using the reformulation~\eqref{eq-reparam-cbpasso} of the C-BP as a (positive) \lasso, we have to ensure that~\eqref{eq-abstract-strict} holds, or more precisely,
\begin{align*}
    \max \left[  \begin{pmatrix}
(\OpU^*+\frac{\stepsizen}{2} \OpD^*)_{(\Jup)^c}\\ (\OpU^*-\frac{\stepsizen}{2} \OpD^*)_{(\Jdown)^c}
  \end{pmatrix}
  \begin{pmatrix}
    y- \OpU\veccont -\OpD\shiftcont
\end{pmatrix}\right]<\la 
\end{align*}
where $\OpU \eqdef \Phi_{\Gg_n}$, $\OpD\eqdef\Phi'_{\Gg_n}$. 
  Let $\Copt \eqdef\begin{pmatrix}(\OpU+\frac{\stepsize}{2}\OpD)_{\Jup} &(\OpU-\frac{\stepsize}{2}\OpD)_{\Jdown}\end{pmatrix}$, $\tilde{\Pi}$ be the orthogonal projector onto $\ker \Copt^*=(\Im \Copt)^\perp$, and $\omega= \Phi^*\tilde{\Pi} w$. Since 
  \eq{
  	y - \OpU\veccont -\OpD\shiftcont= w- \Copt(\Copt^*\Copt)^{-1}\Copt^*w + \la \Copt(\Copt^*\Copt)^{-1}\bun_{3N} = \tilde{\Pi} w  +\la \Copt^{+,*}\bun_{3N}, 
	}
	we are led to check that
  \begin{align}
    (\omega +\la \mu_0^n)(j\stepsizen) + \frac{\stepsizen}{2} (\omega +\la \mu_0^n)'(j\stepsizen) &<\la \quad \mbox{ for all } j\in (\Jup)^C,\label{eq-tight-snr-cbpU}\\
    (\omega +\la \mu_0^n)(j\stepsizen) - \frac{\stepsizen}{2} (\omega -\la \mu_0^n)'(j\stepsizen) &<\la \quad \mbox{ for all } j\in (\Jdown)^C\label{eq-tight-snr-cbpD},
  \end{align}
  where $\mu_0^n\eqdef \Phi^*(\Copt^*\Copt)^{-1})\bun_{3N}$ yields the minimal norm certificate 
  \eq{
  	\begin{pmatrix}
    (\mu_0^n  +\frac{\stepsizen}{2}\mu_0^n)(\Gg_n)\\  (\mu_0^n  -\frac{\stepsizen}{2}\mu_0^n)(\Gg_n) 
  \end{pmatrix}=\begin{pmatrix}(\OpU+\frac{\stepsize}{2}\OpD)^* \\(\OpU-\frac{\stepsize}{2}\OpD)^*\end{pmatrix}(\Copt^*\Copt)^{-1}\bun_{3N}.
  }

Given $0<r<\frac{1}{2}\min_{\nu\neq \nu'}|x_{0,\nu}-x_{0,\nu'}|$, let $N(r)\eqdef\bigcup_{\nu}(x_{0,\nu}-r,x_{0,\nu}+r)$ be a neighborhood of the $x_{0,\nu}$'s. By the Twice Non-Degenerate Source condition, we may choose $r>0$, such that
\begin{align*}
  -\tilde{k}_1\eqdef \sup_{t\in N(r)}\mu_T''(t)< 0,\qandq  \tilde{k}_{2}\eqdef \inf_{t\in N(r)}\mu_T^{(4)}(t)>0.
\end{align*}
By compactness, $\tilde{k}_3\eqdef \sup_{t\in\TT\setminus N(r)}\mu_T(t)<1$.

Let us recall that $\mu_0^n\to \mu_T$ in the sense of the uniform convergence (and similarly for the derivatives). As a result, for $n\in\NN$ large enough, 
\eql{
  	\sup_{t\in N(r)}(\mu_0^n)''(t)<-\frac{\tilde{k}_1}{2}<0,
	\:  
	\inf_{t\in N(r)}(\mu_0^n)^{(4)}(t)>\frac{\tilde{k}_{2}}{2}>0, 
	\: 
  	\sup_{t\in\TT\setminus N(r)} \mu_0^n(t) <\frac{1+\tilde{k}_3}{2}<1, \label{eq-cbp-const-muO}
}
\eq{
  	\frac{\stepsizen}{2} \norm{(\mu_0^n)^{(3)}}_{\infty} \leq \frac{\tilde{k}_1}{8},
	\qandq 
	\frac{\stepsizen}{2} \norm{(\mu_0^n)'}_{\infty} \leq \frac{1-\tilde{k}_3}{6}.
}
Now, we assume that $\frac{\norm{w}_2}{\la}$ is small enough, so that 
\begin{align*}
  \norm{(\Phi^{(k)})^*}_{\infty,2}\frac{\norm{w}_2}{\la} <\frac{\tilde{k}_1}{8}, 
  \mbox{ for } k\in\{2,3\}, \quad  
  \norm{(\Phi^{(4)})^*}_{\infty,2}\frac{\norm{w}_2}{\la} <\frac{\tilde{k}_2}{4},
\end{align*}
\begin{align}
	\qandq  \norm{(\Phi^{(k)})^*}_{\infty,2}\frac{\norm{w}_2}{\la} <\frac{1-\tilde{k}_3}{6}, 
	\mbox{ for } k\in\{0,1\},\label{eq-cbp-const-noise}
\end{align}

Then, using the fact that and $|\omega^{(k)}|(t)\leq \norm{(\Phi^{(k)})^*}_{\infty,2}\norm{w}_2$ and  $\stepsizen\leq 1$, we obtain
\begin{align*}
  \sup_{t\in \TT\setminus N(r)} \left(\frac{\omega}{\la} + \mu_0^n + \frac{\stepsizen}{2}\left|(\frac{\omega}{\la} + \mu_0^n)'\right|\right)(t)<1.
\end{align*}
Thus it remains to prove that for each $\nu\in\{1,\ldots, N\}$, 
\begin{align}
  \left(\frac{\omega}{\la} + \mu_0^n + \frac{\stepsizen}{2}(\frac{\omega}{\la} + \mu_0^n)'\right)(t)<1 \mbox{ for } t \in (x_{0,\nu}-r,x_{0,\nu}+r)\setminus \Sright(r),\label{eq-cbp-const-sat1}\\
\qandq \left(\frac{\omega}{\la} + \mu_0^n - \frac{\stepsizen}{2}(\frac{\omega}{\la} + \mu_0^n)'\right)(t)<1 \mbox{ for } t \in (x_{0,\nu}-r,x_{0,\nu}+r)\setminus \Sleft(r)\label{eq-cbp-const-sat2}.
\end{align} 
We only deal with the case  $\Sright(r)=\{x_{0,\nu} \}$, $\Sleft(r)=\{x_{0,\nu},x_{0,\nu}+\stepsizen\}$, the symmetric case being similar. 
Let $f\eqdef \frac{1}{\la}\omega(\cdot-x_{0,\nu})+\mu_0^n(\cdot-x_{0,\nu})$. By definition of $\tilde{\Pi}$, $\omega(x_{0,\nu})=\omega'(x_{0,\nu})=\omega(x_{0,\nu}+\stepsizen)-\frac{\stepsizen}{2}\omega(x_{0,\nu}+\stepsizen)=0$, so that
\eql{\label{eq-cbp-const-fval}
f(0)=1, \quad f'(0)=1, \qandq f(\stepsizen)-\frac{\stepsizen}{2}f'(\stepsizen)=1.
}
Moreover, from Eq.~\eqref{eq-cbp-const-muO} to~\eqref{eq-cbp-const-noise}, and letting $k_1=\frac{\tilde{k}_1}{8}$, $k_2=\frac{\tilde{k}_2}{4}$, we deduce that 
\eql{\label{eq-cbp-const-fsec}
 \forall t\in(-r,r),\quad f''(t)+\frac{\stepsize}{2}|f^{(3)}(t)| <-k_1<0, \qandq f^{(4)}(t)>k_2>0,
}
so that the strict concavity of $f-\frac{\stepsizen}{2} f'$ implies that $(f-\frac{\stepsizen}{2} f')(t)<1$ for $t\in (-r,-\stepsizen)\cup (0,r)$.

It remains to prove that  $(f+\frac{\stepsizen}{2} f')(t)<1$ for $t\in (-r,r)\setminus (-\stepsizen,0]$. A Taylor expansion of $f$ and $f'$ yields (writing as usual $\gamma_k=\frac{1}{k!}- \frac{1}{(k-1)!\times 2}$)
\begin{align*}
  \underbrace{1-(f(\stepsize)- \frac{\stepsizen}{2}f(\stepsizen))}_{=0} &= \underbrace{1- f(0)- \frac{\stepsizen}{2}f'(0)}_{=0} -\stepsizen^3\gamma_3 f^{(3)}(0) - \stepsizen^4\gamma_4f^{(4)}(0) +R_1(\stepsizen),\\
  1-(f(-\stepsizen)+\frac{\stepsizen}{2} f'(-\stepsizen))&=  \underbrace{1- f(0)+ \frac{\stepsizen}{2}f'(0)}_{=0} +\stepsizen^3\gamma_3 f^{(3)}(0) - \stepsizen^4\gamma_4f^{(4)}(0) + R_2(\stepsizen).
\end{align*}
Adding both equations we get 
\begin{align}
  1-(f(-\stepsizen)+\frac{\stepsizen}{2} f'(-\stepsizen))&= -2\stepsizen^4\gamma_4f^{(4)}(0)+ (R_1+R_2)(\stepsizen)\\ 
  \label{eq-minoration-dl} &\geq 2(-\gamma_4) k_2\stepsizen^4 + (R_1+R_2)(\stepsizen), 
\end{align}
where
\begin{align}R_1+R_2(h) = \stepsizen^5\int_0^1 (-f^{(5)}(s\stepsizen)+f^{(5)}(-s\stepsizen))\left(\frac{(1-s)^4}{4!} -\frac{(1-s)^3}{2\times 3!}\right) \d s,\label{eq-rest-integral}
\end{align}
\begin{align}  
	\mbox{with} \quad \norm{f^{(5)}}_{\infty} \leq \norm{\frac{1}{\la}\omega^{(5)}}_\infty+\norm{(\mu_0^n)^{(5)}}_\infty=O(1).
\end{align}
Hence, 
\begin{align}
  1-(f(-\stepsizen)+\frac{\stepsizen}{2} f'(-\stepsizen))&\geq \underbrace{2(-\gamma_4) k_2}_{>0}\stepsizen^4 + O(\stepsizen^5)>0.
\end{align}
Moreover,by the strict concavity of $f+\frac{\stepsizen}{2} f'$, we also deduce that $(f+\frac{\stepsizen}{2} f')(t)<1$ for $t\in (-r,-\stepsizen]\cup (0,r)$, thus we get the local inequalities~\eqref{eq-cbp-const-sat1} and~\eqref{eq-cbp-const-sat2}, hence the global inequalities~\eqref{eq-tight-snr-cbpU} and~\eqref{eq-tight-snr-cbpD}.

To conclude, the constants in the condition on $\frac{\norm{w}_2}{\la}$ are $O(1)$, and gathering the asympotics for $c_{1,n},c_{2,n},c_{3,n}$ we obtain $C^{(1)}_{n}=O(\stepsizen^3)$, $C^{(2)}_n= O(1)$.
\end{proof}

\section{Numerical illustrations}
\label{sec-numerics}

In this section, we illustrate the usefulness of our analysis to gain a precise understanding of the recovery performance of $\ell^1$-type methods (\lasso and C-BP) for both deconvolution and compressed sensing problems. The code to reproduce these numerical experiments is available online\footnote{\url{https://github.com/gpeyre/2015-IP-lasso-cbp/}}.

\if 0
\subsection{Proximal Splitting Algorithm}
\label{sec-prox-algo}

To simplify the exposition, we consider the notations introduced in Section~\ref{sec-proof-cbp}, and aim at solving the convex program~\eqref{eq-cbp}. We recall here the basics of proximal splitting methods, and focus our attention to the standard forward-backward splitting scheme, see for instance~\cite{BauschkeCombettes11}. Note that it is also possible to use accelerated forward-backward methods, or other splitting scheme, but it not the scope of this paper to review these methods. 

The main ingredient to develop these schemes is the so-called proximal operator associated to a proper lower semi-continuous convex function $f : \Hh \rightarrow \RR \cup \{+\infty\}$ (where $\Hh$ is some Hilbert space)
\eq{
	\foralls u \in \Hh, \quad \Prox_{\ga f}(u) = \umin{u' \in \Hh} 
	\frac{1}{2}\norm{u-u'}^2 + \ga f(u').
}

An interesting property of $J = \norm{\cdot}_1+\iota_{\Cc}$ defined in~\eqref{eq-J-gauge} on $\Hh=\RR^N \times \RR^N$ is that its proximal operator is actually equal to the composition of the proximal  operator of $\la \norm{\cdot}_1 = \dotp{\cdot}{\mathbb{I}}$ where $\mathbb{I} = (1,\ldots,1)^* \in \RR^N$ and the proximal operator of $\iota_\Cc$. This remarkable property is studied in a more general setting in~\cite{Yu-ProxCompos}. The following lemma capitalizes on this remark to compte in closed form the proximal operator of the $J$ functional.

\begin{lem}
	One has
	\eq{
		\foralls u \in \Hh, \quad \Prox_{\ga J}(u) = \Proj_\Cc\pa{
			u - \la m
		}
	}
	\eq{
		\qwhereq m = (1,\ldots,1, 0,\ldots,0) \in \RR^N \times \RR^N.
	}
	Here, $\Proj_\Cc$ is the orthogonal projector on the cone $\Cc$ (defined in~\eqref{eq-J-gauge}), and it can be computed as
	\eq{
		\foralls u = (a,b) \in \Hh, \: 
		\Proj_\Cc(u) = (\tilde a,\tilde b)
		\qwhereq
		\foralls i=1,\ldots,N, \:
		(\tilde a_i,\tilde b_i) = R^* ( R (a_i,b_i) )_+
	}
	where $R \in \RR^{2 \times 2}$ is the rotation of angle $\pi/4$, and 
	where we have noted
	\eq{
		\foralls (\al,\be) \in \RR^2, \quad (\al,\be)_+ = (\max(\al,0),\max(\be,0))
	}
\end{lem}

Starting with an initial $u^{(0)} \in \Hh$, the forward-backward iterations to solve~\eqref{eq-cbp} reads
\eq{
	u^{(\ell+1)} = \Prox_{\tau \la J}( u^{(\ell)} - \tau \Ga_z^* ( \Ga_z u^{(\ell)} - y ) ).
}
For a step size satisfying $0 < \tau < 2/\norm{\Ga_z}^2$, one can show that $u^{(\ell)} \rightarrow u^\star$ which is a solution of~\eqref{eq-cbp}, see for instance~\cite{BauschkeCombettes11}. 
\fi

\subsection{Convergence of pre-certificates}

In this section and in Section~\ref{sec-numeric-deconv}, we consider the deconvolution problems in the case where $\phi$ is an ideal filter, i.e. whose Fourier coefficients  
\eq{
	\foralls k \in \ZZ, \quad \hat \phi(k) \eqdef \int_\TT \phi(t) e^{-2\imath\pi k t} \d t
} 
satisfy $\hat\phi(k)=1$ if $k\in \{-f_c,\ldots,f_c\}$ and $\hat\phi(k)=0$ otherwise. This allows us to implement exactly the $\Phi$ operator appearing in the \lasso and C-BP problem since $\Im(\Phi)$ is a finite dimensional space of dimension $Q=2f_c+1$, i.e. it can be represented using a matrix of size $(Q,P)$ when evaluated on a grid of $P$ points. In Figures~\ref{fig-certificates} and ~\ref{fig-homotopy} we used $f_c=10$.

\newcommand{\myfigCertif}[2]{\includegraphics[width=0.46\linewidth]{certificates/certificates-k#1-#2}}

\begin{figure}[ht]
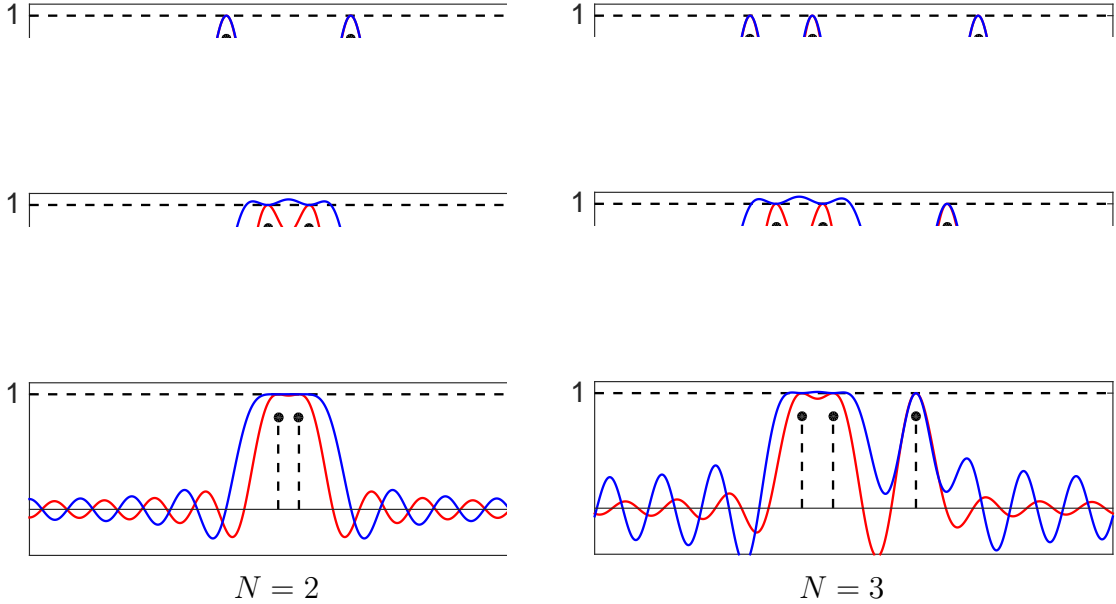

\centering
	\begin{tabular}{@{}c@{\hspace{3mm}}c@{}}
		\myfigCertif{2}{3} & \myfigCertif{3}{2} \\
		\myfigCertif{2}{5} & \myfigCertif{3}{3} \\
		\myfigCertif{2}{6} & \myfigCertif{3}{4} \\
		$N=2$ & $N=3$
	\end{tabular}
\caption{\label{fig-certificates} %
	Display of $\eta_V^\infty$ (red) and $\mu_T$ (blue) pre-certificate for different input positive measures $m_0$ (showed as black dots to symbolize the position of the Diracs).
	}
\end{figure}

Figure~\ref{fig-certificates} illustrates for the case of two ($N=2$) and three ($N=3$) spikes the behavior of the vanishing pre-certificate $\eta_V^\infty$ (see Definition~\ref{defn-vanishing-bp}) useful to analyze \lasso/\blasso problems and of the pre-certificate $\mu_T$ (see Definition~\ref{defn-third-deriv-precertif}) useful to analyze C-BP problems. 

We first notice that for all the (positive) input measures (i.e. whatever the spacing between the Diracs), $\eta_V^\infty$ is always a non-degenerate certificate (in the sense of Proposition~\ref{prop-etav-nonvanish}), meaning that one actually has $\eta_V^\infty=\eta_0^\infty$ (where the minimal norm certificate $\eta_0^\infty$ is defined in~\eqref{eq-certbeurl0}). This empirical finding is the subject of another recent work on the asymptotic of sparse recovery of positive measures when the spacing between the Diracs tends to zero~\cite{denoyelle2015asymptic}. Since $\eta_0^\infty$ is non-degenerate, one can thus apply Theorem~\ref{thm-lasso-extended} to analyze the extended support of the \lasso (see below Section~\ref{sec-numeric-deconv} for a numerical illustration).

For the C-BP problem, the situation is however more contrasted. We observe that when the Dirac masses are separated enough (first row) then the pre-certificate $\mu_T$ is a valid certificate, meaning the the Twice Non-Degenerate Source Condition (see Definition~\ref{defn-TNDSC}) holds. This means that Theorem~\ref{thm-cbpasso-extended} can be applied to analyze the extended support of C-BP (see Section~\ref{sec-numeric-deconv} below for a numerical illustration). But when the Dirac masses are too close (second and third rows), one has $\norm{\mu_T}_\infty>1$, so that one cannot ensure the support stability of the C-BP solution with our result.

\subsection{Extended support for deconvolution}
\label{sec-numeric-deconv}

\newcommand{\myfigHom}[1]{\includegraphics[width=0.46\linewidth]{homotopy/homotopy-#1}}

We still consider the case of an ideal low pass filter. Figure~\ref{fig-homotopy} displays the evolution, as a function of $\la$ (in abscissa) of the solutions $a_\la$ of~\eqref{eq-thin-lasso} and of $(a_\la,b_\la)$ of~\eqref{eq-thin-cbpasso}. We consider here the case of an input measure with two nearby Diracs (displayed as red/blue dots in the upper-left part of the Figure) and when there is no noise, i.e. $w=0$. Each 1-D curve (either plain or dashed) represents the evolution of a single coefficient, e.g. $(a_\la)_i$, for some index $i$ (only non-zero coefficients are displayed).

\begin{figure}[ht]
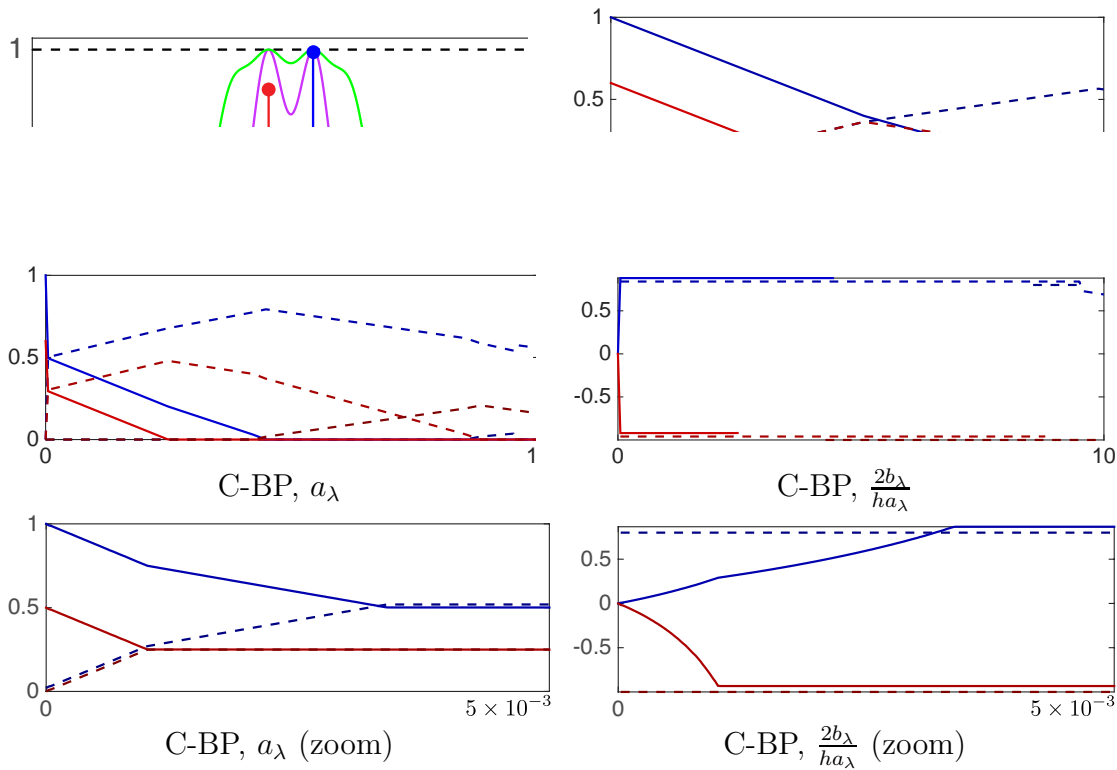

\centering
	\begin{tabular}{@{}c@{\hspace{3mm}}c@{}}
		\myfigHom{k2-d22-certificates} & \myfigHom{k2-d22-n256-bp-A-z0}\\
		Pre-certificates ${\color{magenta} \eta_V}$ and ${\color{green} \mu_T}$ & \lasso, $a_\la$ \\
		\myfigHom{k2-d22-n256-cbp-A-z0} & \myfigHom{k2-d22-n256-cbp-BA-z0} \\[-2mm]
		C-BP, $a_\la$ & C-BP, $\frac{2b_\la}{h a_\la}$ \\
		\myfigHom{k2-d22-n256-cbp-A-z1} & \myfigHom{k2-d22-n256-cbp-BA-z1} \\[-2mm]
		C-BP, $a_\la$ (zoom) & C-BP, $\frac{2b_\la}{h a_\la}$ (zoom)
	\end{tabular}
\caption{\label{fig-homotopy} %
	Display of the evolution as a function of $\la$ of the solutions of the \lasso and C-BP problems. 
	Note that dashed curved have been (artificially) slightly shifted to avoid that they overlap with the plain curve. 
	}
\end{figure}

The solutions path $\la \mapsto a_\la$ (for \lasso{}) and $\la \mapsto (a_\la,b_\la)$ (for C-BP) are continuous and piecewise affine, which is to be expected since the regularizations ($\ell^1$ and $\ell^1$ under conic constraints) are polyhedral. The upper-left plot in the figure displays the pre-certificate $\eta_V^\infty$ (in magenta, see Definition~\ref{defn-vanishing-bp}) and $\mu_T$ (in green, see Definition~\ref{defn-third-deriv-precertif}). This shows graphically that these two precertificates are non-degenerate (according to Definitions~\ref{defn-ndsc-bp} and~\ref{defn-TNDSC}) so that the results of Theorems~\ref{thm-lasso-extended} and~\ref{thm-cbpasso-extended} hold, hence precisely describing the evolution of the solution on the extended support when $\la$ is small. On these graphs, this corresponds to the first segment of the corresponding piecewise affine paths. 

The behavior for BP agrees with our analysis. As predicted by Theorem~\ref{thm-lasso-extended}, there exists a range of values $0 < \la < \la_0$ on which the solution is exactly supported on the extended support $J$, which is composed of four spikes (the plain curve corresponds to the support $I$ and the dashed curve corresponds to $J\backslash I$). Also, as predicted by Proposition~\ref{prop-asympto-constant-lasso} in the case $w=0$, we verify that $\la_0 = O(\stepsizen)$ and that the Lipschitz constant of $\la \mapsto a_\la$ is of order $O(1/\stepsizen)$. 

In sharp contrast, the behavior for C-BP is less regular, since the range $0 < \la < \la_0$ on which the solution is supported on the extended support is shorter, as it can be clearly seen on the zoom for very small values of $\la$. This is in agreement with Proposition~\ref{prop-asympto-constant-cbp} which shows that $\la_0$ is of the order of $O(\stepsizen^3)$ and that the Lipschitz constant of $\la \mapsto (a_\la,b_\la)$ is of order $O(1/\stepsizen^3)$.
On this range of small $\la$, as predicted by Theorem~\ref{thm-cbpasso-extended}, the support of the solutions (which correspond to the extended support $J$ described in Theorem~\ref{thm-cbpasso-extended}) is composed of one pair of neighboring spikes for each original spike. For indices on the support $i \in I$, one has $|(b_\la)_i|/(a_\la)_i<h/2$ (the constraint is non-saturating, and the spike moves ``freely'' inside $(i\stepsize-\frac{\stepsize}{2},i\stepsize+\frac{\stepsize}{2})$) while for indices on the extended part $i \in J \backslash I$, one has $|(b_\la)_i|/(a_\la)_i = h/2$ (the constraint is saturating, the spikes are fixed at half-grid points).
Another part of the path is interesting, for $\la$ not so small (say $\la>\la_1$), which is in fact the prominent regime in the non-zoomed figure. For this range of $\la$, there is still a pair of spikes for each original spike, but this time both spikes saturate, on same side. This observation should be related to Proposition~\ref{prop-cbp-thin-supplambda} and Remark~\ref{rem-cbp-nbspikes} which predict that, in the case where ${\mu_\la^\infty}^{(3)}(x_{\la,\nu})\neq 0$, the C-BP yields either one spike or a pair of spikes with the same shift (the latter case is in fact overwhelming).

\subsection{Extended support for compressed sensing}

To show the usefulness of our ``abstract'' support analysis of the \lasso problem (Section~\ref{sec-continuous-abstract}), we illustrate its use to analyze the performance of $\ell^1$ recovery in a compressed sensing setup. Compressed sensing corresponds to the recovery of a high dimensional (but hopefully sparse) vector $a_0 \in \RR^P$ from low resolution, possibly noisy, randomized observations $y=B a_0 +w \in \RR^Q$, see for instance~\cite{CandesWakin} for an overview of the literature on this topic. For simplicity, we assume that there is no noise ($w=0$) and we consider here the case where $B \in \RR^{Q \times P}$ is a realization from the Gaussian matrix ensemble, where the entries are independent and uniformly distributed according to a Gaussian $\Nn(0,1)$ distribution. This setting is particularly well documented, and it has been shown, assuming that $a_0$ is $s$-sparse (meaning that $|\supp(a_0)|=s$), that there are roughly three regimes:
\begin{rs}
	\item If $s < s_0 \eqdef \frac{Q}{2\log(P)}$, then $a_0$ is with ``high probability'' the unique solution of~\eqref{eq-abstract-bp} (it is identifiable), and the support is stable to small noise, because $\eta_F$ (as defined in~\eqref{eq-fuchs-precertif}) is a valid certificate, $\norm{\eta_F}_\infty \leq 1$. This is shown for instance in~\cite{wainwright-sharp-thresh,dossal2011noisy}. 
	\item If $s < s_1 \eqdef \frac{Q}{2\log(P/Q)}$, then $a_0$ is with ``high probability'' the unique solution of~\eqref{eq-abstract-bp}, but the support is not stable, meaning that $\eta_F$ is not a valid certificate. This phenomena is precisely analyzed in~\cite{chandrasekaran2012convex,amelunxen2013living} using tools from random matrix theory and so-called Gaussian width computations. 
	\item If $s > s_1$, then $a_0$ with ``high probability'' is not the solution of~\eqref{eq-abstract-bp}.
\end{rs}
We do not want to give details here on the precise meaning of with ``high probability'', but this can be precisely quantified in term of probability of success (with respect to the random draw of $B$) and one can show that a phase transition occurs, meaning that for large $(P,Q)$ the transition between these regimes is sharp. 

While the regime $s<s_0$ is easy to understand, a precise analysis of the intermediate regime $s_0<s<s_1$ in term of support stability is still lacking. Figure~\ref{fig-cs} shows how Theorem~\ref{thm-stability-dbp} allows us to compute numerically the size of the recovered support, hence providing a quantification of the degree of ``instability'' of the support when a small noise $w$ contaminates the observations. The simulation is done with $(P,Q)=(400,100)$. 

The left cuve shows, as a function of $s$ (in abscissa), the probability (with respect to a random draw of $\Phi$ and $a_0$ a $s$-sparse vector) of the event that $a_0$ is identifiable (plain curve) and of the event that $\eta_F$ is a valid certificate (dashed curve). This clearly highlights the phase transition phenomena between the three different regimes, and one roughly gets that $s_0 \approx 6$ and $s_1 \approx 20$, which is consistent with the theoretical asymptotic bounds found in the literature.

The right part of the figure, shows, for three different sparsity levels $s \in \{14,16,18\}$, the histogram of the repartition of $|J|$ where $J$ is the extended support, as defined in Theorem~\ref{thm-stability-dbp}. According to Theorem~\ref{thm-stability-dbp}, this histogram thus shows the repartition of the sizes of the supports of the solutions to~\eqref{eq-abstract-lasso} when the noise $w$ contaminating the observations $y=B a_0+w$ is small and $\la$ is chosen in accordance to the noise level. As one could expect, this histogram is more and more concentrated around the minimum possible value $s$ (since we are in the regime $s<s_1$ so that the support $I$ of size $s$ is included in the extended support $J$) as $s$ approaches $s_0$ (for smaller values, the histogram being only concentrated at $s$ since $J=I$ and the support is stable). Analyzing theoretically this numerical observation is an interesting avenue for future work that would help to better understand the performance of compressed sensing. 

\newcommand{\myfigCS}[1]{\includegraphics[width=0.46\linewidth]{compressed-sensing/distrib-extsup-s#1}}

\begin{figure}[ht]
\centering
	\begin{tabular}{@{}c@{\hspace{3mm}}c@{}}
		\multirow{6}{*}{\includegraphics[width=0.5\linewidth]{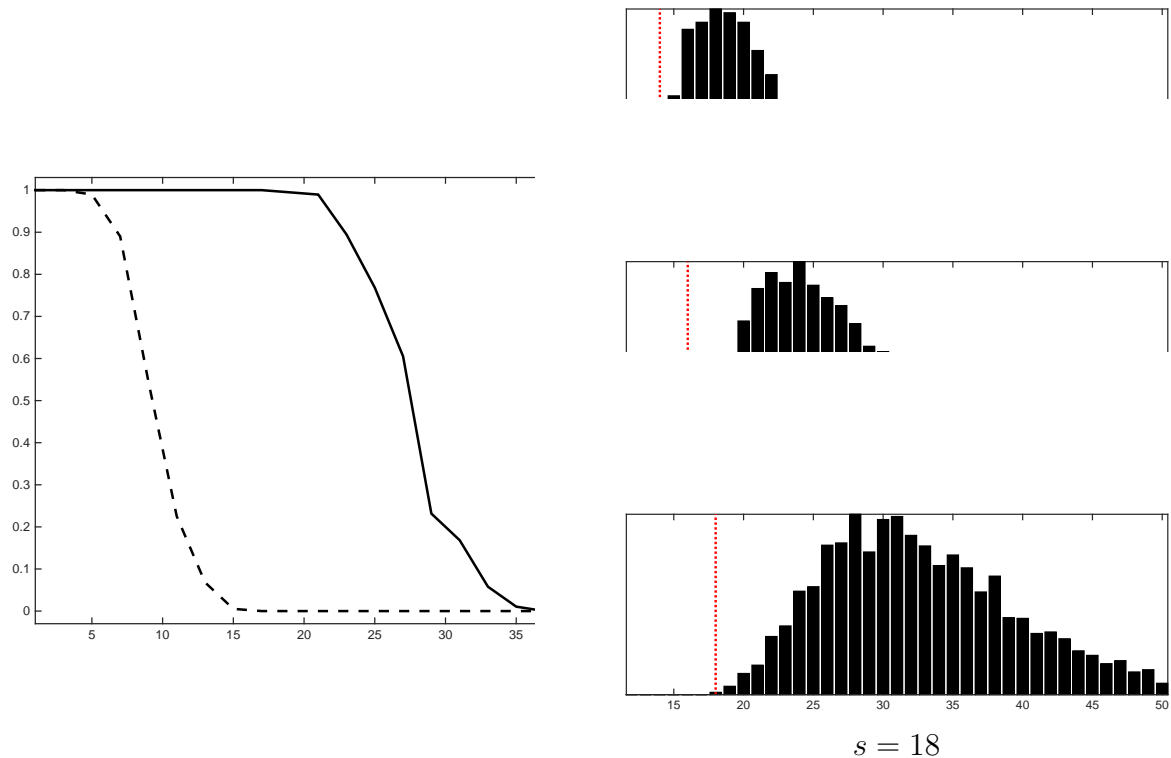}}
		 & \myfigCS{14} \\
		 & $s=14$ \\
		 & \myfigCS{16} \\
		 & $s=16$ \\
		 & \myfigCS{18} \\
		 & $s=18$ 
	\end{tabular}
\caption{\label{fig-cs} %
	\textit{Left:} probability as a function of $s$ of the event that $a_0$ is identifiable (plain curve) and of the even that its support is stable (dashed curve).
	\textit{Right:} for several value of $s$, display of histogram of repartition of the sizes $|J|$ of the extended support $J$. 
	}
\end{figure}

\section*{Conclusion}

In this work, we have provided a precise analysis of the properties of the solution path of $\ell^1$-type variational problems in the low-noise regime. This includes in particular the \lasso and the C-BP problems. A particular attention has been paid to the support set of this path, which in general cannot be expected to match the one of the sought after solution. Two striking examples support the relevance of this approach. For the deconvolution problem, we showed theoretically that in general this support is not stable, and we were able to derive in closed form the solution of the ``extended support'' that is twice larger, but is stable. In the compressed sensing scenario (i.e. when the operator of the inverse problem is random), we showed numerically how to leverage our theoretical findings and analyze the growth of the extended support size as the number of measurements diminishes. This analysis opens the doors for many new developments to better understand this extended support, both for deterministic operators (e.g. Radon transform in medical imaging) and random ones.

\section*{Acknowledgements} 

We would like to thank Charles Dossal, Jalal Fadili and Samuel Vaiter for stimulating discussions on the notion of extended support. This work has been supported by the European Research Council (ERC project SIGMA-Vision).

\appendix

\section{Useful properties of the integral transform}

\begin{lem}\label{lem-phi-compact}
  Let $K\in \NN^*$ and assume that $\phi\in \Cont^{K}(\TT\times \TT)$. Then for all $p\in L^2(\TT)$, $\Phi^*p\in \Cont^{K}(\TT)$ and for all $k\in\{0,1\ldots K\}$
  \begin{align}
    \forall y\in \TT,\  (\Phi^*p)^{(k)}(y)&= \int_\TT (\partial_2)^k\varphi(x,y)p(x)dx.
  \end{align}
  Moreover, the adjoint operator 
  \begin{align}
    \Phi^{(k),*} : \begin{array}{ccc} L^2(\TT) &\longrightarrow & C(\TT)\\
      p&\longmapsto & \int_\TT (\partial_2)^k\varphi(x,y)p(x)dx \end{array}     ,
  \end{align}
  is compact for all $k\in\{0,1,\ldots K\}$.
\end{lem}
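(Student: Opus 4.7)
The plan is to establish the two claims by standard techniques for integral operators with smooth kernels on a compact domain.

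First, for the smoothness of $\Phi^* p$ and the formula for its derivatives, I would proceed by induction on $k \in \{0,\ldots,K\}$. The base case $k=0$ reduces to observing that $y \mapsto \int_\TT \varphi(x,y) p(x) \d x$ is continuous, which follows from the uniform continuity of $\varphi$ on the compact set $\TT \times \TT$ combined with Cauchy--Schwarz. For the inductive step, given that $(\Phi^*p)^{(k-1)}(y) = \int_\TT (\partial_2)^{k-1}\varphi(x,y) p(x)\d x$, I would differentiate under the integral sign using the mean value theorem: write the difference quotient
\eq{
  \frac{(\Phi^*p)^{(k-1)}(y+\eta) - (\Phi^*p)^{(k-1)}(y)}{\eta} = \int_\TT (\partial_2)^{k}\varphi(x, y + \theta_{x,\eta}\eta) p(x) \d x,
}
and pass to the limit $\eta \to 0$ via dominated convergence, with the integrand controlled by $\sup_{\TT\times\TT}|(\partial_2)^k \varphi| \cdot |p(x)|$ (integrable because $p \in L^2(\TT)$ and $\TT$ has finite measure). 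Continuity of $y \mapsto (\Phi^*p)^{(k)}(y)$ then follows from uniform continuity of $(\partial_2)^k \varphi$ and Cauchy--Schwarz, exactly as in the base case.

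Second, for compactness of $\Phi^{(k),*}\colon L^2(\TT) \to \Cont(\TT)$, I would invoke the Arzelà--Ascoli theorem. Let $(p_n)_{n\in\NN}$ be a bounded sequence in $L^2(\TT)$ with $\|p_n\|_2 \leq M$. Uniform boundedness of $(\Phi^{(k),*}p_n)$ in $\Cont(\TT)$ follows from
\eq{
  |(\Phi^{(k),*}p_n)(y)| \leq \|(\partial_2)^k\varphi(\cdot,y)\|_2 \, \|p_n\|_2 \leq M \sup_{\TT\times\TT}|(\partial_2)^k\varphi|.
}
For equicontinuity, I use
\eq{
  |(\Phi^{(k),*}p_n)(y) - (\Phi^{(k),*}p_n)(y')| \leq \|(\partial_2)^k\varphi(\cdot,y) - (\partial_2)^k\varphi(\cdot,y')\|_2 \, M,
}
and the right-hand side tends to $0$ uniformly in $n$ as $|y - y'| \to 0$, thanks to the uniform continuity of $(\partial_2)^k \varphi$ on the compact set $\TT\times\TT$. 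Arzelà--Ascoli then yields a uniformly convergent subsequence, proving compactness.

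No step is really the main obstacle here --- the whole result is a textbook application of differentiation under the integral sign and Arzelà--Ascoli, the only subtlety being to use $p \in L^2(\TT)$ (rather than $L^1$) and a $\sup$-bound on the kernel derivatives, together with the compactness of $\TT$ to obtain uniform continuity for free. The only place where some care is needed is to verify the dominating function in the inductive differentiation step is integrable against $|p|$, which is immediate since $\TT$ has finite Lebesgue measure.
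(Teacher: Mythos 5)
Your proof is correct and follows essentially the same route as the paper's: the derivative formula via differentiation under the integral sign justified by dominated convergence, and compactness via uniform boundedness and equicontinuity (from uniform continuity of $(\partial_2)^k\varphi$ on the compact $\TT\times\TT$, plus Cauchy--Schwarz) followed by Arzel\`a--Ascoli. The only difference is cosmetic: you spell out the induction and the difference-quotient argument that the paper leaves implicit under the label ``standard application of the Lebesgue dominated convergence theorem.''
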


\begin{proof}
  The first part of the lemma is a standard application of the Lebesgue dominated convergence theorem. 

  As for the second part, it is a consequence of the Ascoli-Arzela theorem. Indeed, let $B_{L^2}(0,1)=\{p\in L^2(\TT); \ \|p\|_2\leq 1 \}$, and $\Ff=\{\Phi^*p; \ p\in B_{L^2}(0,1) \}$. Then $\Ff\subset C(\TT)$ is bounded since 
  \eq{
  	|\Phi^{(k),*} p(x)| \leq \sqrt{\int_\TT\left( (\partial_2)^k\varphi(x,y)\right)^2dx}\sqrt{\int_\TT (p(x))^2dx}
	\leq \sqrt{\| (\partial_2)^k\varphi\|_{\infty}}
	}
Moreover, it is equicontinuous since
\begin{align}
  |\Phi^{(k),*}p(x)-\Phi^{(k),*}p(x')| &=\left|\int_\TT ((\partial_2)^k\varphi(x,y)-(\partial_2)^k\varphi(x',y))p(x) \right|\\
  &\leq \sqrt{\omega_{(\partial_2)^k\varphi}(|x-x'|,0)},
\end{align}
where $\omega_{(\partial_2)^k\varphi}$ is the modulus of continuity of $(\partial_2)^k\varphi$. Thus Ascoli-Arzela's theorem ensures that $\Phi^{(k),*}B_{L^2}(0,1)$ is relatively compact, hence the result.
\end{proof}

An interesting consequence of the above lemma is the following. Given any bounded sequence $\{p_n\}_{n\in\NN}$ in $L^2(\TT)$, we may extract a subsequence $\{p_{n'}\}_{n'\in\NN}$ which converges weakly towards some $\tilde{p}\in L^2(\TT)$. Then, the (sub)sequence $\Phi^* p_{n'}$ converges towards $\Phi^*\tilde{p}$ for the (strong) uniform topology, and its derivatives $\Phi^{(k),*}p_{n'}$ also converge towards $\Phi^{(k),*}\tilde{p}$ for that topology.

\section{Asymptotic expansion of the inverse of a Gram matrix}

In this Appendix, we gather some useful lemmas on the asymptotic behavior of inverse Gram matrices.

\begin{lem}
  Let $A\colon \RR^N\rightarrow L^2(\TT)$, $B:\RR^N\rightarrow\RR^N$ be linear operators such that $A$ has full rank and $B$ is invertible. Then $(AB)^+=B^{-1}A^+$.
  \label{lem-apx-inverse}
\end{lem}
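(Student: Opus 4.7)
The plan is to exploit the fact that $A$ has full column rank, which is the standard setting where the Moore--Penrose pseudoinverse admits a clean closed form. Concretely, since $A^*A\colon\RR^N\to\RR^N$ is invertible, one has $A^+ = (A^*A)^{-1}A^*$. The first step is to verify that $AB$ also has full column rank: if $ABx = 0$ then, since $A$ is injective, $Bx=0$, and since $B$ is invertible, $x=0$. Consequently $(AB)^*AB = B^*A^*AB$ is invertible and $(AB)^+ = \bigl((AB)^*(AB)\bigr)^{-1}(AB)^*$ by the same closed-form formula.

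From there it is a one-line algebraic manipulation:
\begin{align*}
(AB)^+ &= (B^*A^*AB)^{-1}B^*A^* \\
&= B^{-1}(A^*A)^{-1}(B^*)^{-1}B^*A^* \\
&= B^{-1}(A^*A)^{-1}A^* = B^{-1}A^+,
\end{align*}
using the invertibility of $B$ (hence of $B^*$) to distribute the inverse across the product. No subtlety is really expected here; the only mild obstacle is remembering to justify that $AB$ has full column rank so that the closed-form expression for the pseudoinverse applies. Alternatively, one could verify the four Moore--Penrose axioms directly for the candidate $B^{-1}A^+$, but the closed-form computation above is shorter and sufficient.
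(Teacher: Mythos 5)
Your proof is correct and follows essentially the same route as the paper's: both apply the closed-form formula $(AB)^+=((AB)^*(AB))^{-1}(AB)^*$ and simplify using the invertibility of $B$. You are in fact slightly more careful than the paper in explicitly checking that $AB$ has full column rank before invoking that formula.
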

\begin{proof}
  It is sufficient to write 
  \eq{ 
  	\left((AB)^*(AB)\right)^{-1}(AB)^*)= B^{-1}(A^*A)^{-1}B^{-1,*}B^*A^*=B^{-1}A^+.
}
\end{proof}

\begin{lem}\label{lem-apx-asymptolasso}
  Let $A,B,B_h\colon \RR^N\rightarrow L^2(\TT)$ be linear operators such that $B_h=B +O(h)$ for $h>0$, and that $\begin{pmatrix}
    A & B
  \end{pmatrix}$ has full rank. Let $\Pi$ be the orthogonal projector onto $(\Im A)^\perp$, and let 
  \eq{
  	G_h\eqdef \begin{pmatrix}
    A^*\\
    A^*+hB_h^*
  \end{pmatrix}\begin{pmatrix}
    A& A+hB_h
  \end{pmatrix}
  } 
  and $s\in\RR^N$. Then for $h>0$ small enough, $G_h$ and $B^*\Pi B$ are invertible, and
  \begin{align}
G_h^{-1}\begin{pmatrix}
    s\\s
  \end{pmatrix}&=\frac{1}{h} \begin{pmatrix} (B^*\Pi B)^{-1}B^*A^{+,*}s\\-(B^*\Pi B)^{-1}B^*A^{+,*}s \end{pmatrix}+O(1),\\
  \begin{pmatrix}
    A& A+hB_h
  \end{pmatrix}^+&= \frac{1}{h}\begin{pmatrix}
  (B^*\Pi B)^{-1}B^*\Pi \\ -(B^*\Pi B)^{-1}B^*\Pi
\end{pmatrix}+O(1),\\
\mbox{but }\begin{pmatrix}
    A& A+hB_h
  \end{pmatrix}^{+,*}\begin{pmatrix}
    s\\s
  \end{pmatrix}
&= A^{+,*}s -\Pi B(B^*\Pi B)^{-1}B^*A^{+,*}s + O(h).
\end{align}
\end{lem}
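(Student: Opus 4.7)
The plan is to exploit a right-factorization that isolates all the $h$-dependence into an invertible $2N\times 2N$ matrix. The key observation is that
$$
\begin{pmatrix} A & A+hB_h \end{pmatrix} = \Phi_h M
\qwhereq
\Phi_h \eqdef \begin{pmatrix} A & B_h \end{pmatrix}, \quad
M \eqdef \begin{pmatrix} I_N & I_N \\ 0 & h\,I_N \end{pmatrix}.
$$
For $h>0$ the matrix $M$ is invertible with $M^{-1} = \begin{pmatrix} I_N & -h^{-1}I_N \\ 0 & h^{-1}I_N \end{pmatrix}$, and since $\begin{pmatrix} A & B \end{pmatrix}$ has full rank, a continuity argument shows that $\Phi_h$ also has full rank for all $h$ small enough. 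Lemma~\ref{lem-apx-inverse} then gives
$$
\begin{pmatrix} A & A+hB_h \end{pmatrix}^+ = M^{-1}\Phi_h^+,
\qquad
G_h = M^{*}(\Phi_h^{*}\Phi_h)M,
$$
so $G_h$ is invertible with $G_h^{-1} = M^{-1}(\Phi_h^{*}\Phi_h)^{-1}(M^{*})^{-1}$. The invertibility of $S\eqdef B^{*}\Pi B$ comes from the full rank of $[A,B]$, which forces $\Pi B$ to be injective and hence $(\Pi B)^{*}(\Pi B)$ invertible.

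Next, I would apply the standard block-inversion identity to the Gram matrix $\Phi_h^{*}\Phi_h = \begin{pmatrix} A^{*}A & A^{*}B_h \\ B_h^{*}A & B_h^{*}B_h \end{pmatrix}$, whose Schur complement of the top-left block is precisely $S_h\eqdef B_h^{*}\Pi B_h$. Since $B_h = B + O(h)$, we have $S_h = S + O(h)$ and therefore $S_h^{-1} = S^{-1}+O(h)$. The formula $\Phi_h^+ = (\Phi_h^{*}\Phi_h)^{-1}\Phi_h^{*}$ then yields the closed form
$$
\Phi_h^+ = \begin{pmatrix} A^+ - A^+B_h S_h^{-1} B_h^{*}\Pi \\ S_h^{-1} B_h^{*}\Pi \end{pmatrix},
$$
and an analogous expansion for $(\Phi_h^{*}\Phi_h)^{-1}$ which is a bounded $O(1)$ operator. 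Multiplying on the left by $M^{-1}$ and collecting the $1/h$ contributions coming from the lower-block factor $h^{-1}I_N$ of $M^{-1}$, one reads off directly the leading singular part of $\begin{pmatrix} A & A+hB_h \end{pmatrix}^+$ (second assertion).

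The observation that streamlines the first and third assertions is the algebraic identity $\begin{pmatrix} s \\ s \end{pmatrix} = M^{*}\begin{pmatrix} s \\ 0 \end{pmatrix}$, which cancels one factor of $M^{*}$. This gives
$$
G_h^{-1}\begin{pmatrix} s \\ s \end{pmatrix} = M^{-1}(\Phi_h^{*}\Phi_h)^{-1}\begin{pmatrix} s \\ 0 \end{pmatrix},
$$
and only the second component of $(\Phi_h^{*}\Phi_h)^{-1}\begin{pmatrix} s \\ 0 \end{pmatrix}$, namely $-S_h^{-1}B_h^{*}A^{+,*}s$, contributes to the $1/h$ singular term when hit by the lower block of $M^{-1}$; this produces the first claim. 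Dually, $(M^{*})^{-1}\begin{pmatrix} s \\ s \end{pmatrix} = \begin{pmatrix} s \\ 0 \end{pmatrix}$, so that the adjoint computation gives
$$
\begin{pmatrix} A & A+hB_h \end{pmatrix}^{+,*}\begin{pmatrix} s \\ s \end{pmatrix}
= \Phi_h^{+,*}\begin{pmatrix} s \\ 0 \end{pmatrix}
= A^{+,*}s - \Pi B_h S_h^{-1} B_h^{*}A^{+,*}s,
$$
which is manifestly $O(1)$; letting $h\to 0$ yields the third claim with remainder $O(h)$.

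The main obstacle is bookkeeping rather than any conceptual difficulty: the $h^{-1}$ factors in $M^{-1}$ interact with the $O(h)$ remainders hidden in $S_h^{-1}$, so one must carefully isolate the genuinely singular $1/h$ part from the bounded remainder and track each block through the multiplication. The critical cancellation is the identity $\begin{pmatrix} s \\ s \end{pmatrix} = M^{*}\begin{pmatrix} s \\ 0 \end{pmatrix}$, which removes a would-be $1/h^{2}$ blow-up in $G_h^{-1}\begin{pmatrix} s \\ s \end{pmatrix}$ and is ultimately what makes the first claim have the sharp rate $1/h$.
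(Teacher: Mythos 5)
Your proof is correct and follows essentially the same route as the paper's: the factorization $\begin{pmatrix} A & A+hB_h\end{pmatrix} = \begin{pmatrix} A & B_h\end{pmatrix}\begin{pmatrix} I_N & I_N \\ 0 & hI_N\end{pmatrix}$, block inversion of the Gram matrix via the Schur complement $B_h^*\Pi B_h$, and the cancellation $(M^*)^{-1}\begin{pmatrix} s\\ s\end{pmatrix}=\begin{pmatrix} s\\ 0\end{pmatrix}$ are exactly the steps of the paper's argument. No gap.
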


\begin{proof}
  Observe that $\begin{pmatrix}
    A& A+hB_h
  \end{pmatrix}=\begin{pmatrix}
    A& B_h
  \end{pmatrix}\begin{pmatrix}
    I_N & I_N\\ 0 & hI_N
  \end{pmatrix} $ 
  so that  
  \eq{
  	G_h=\begin{pmatrix}I_N & 0\\ I_N & hI_N\end{pmatrix}
  \begin{pmatrix} A^*A & A^*B_h\\B_h^*A & B_h^*B_h\end{pmatrix}
  \begin{pmatrix}I_N & I_N\\ 0 & hI_N\end{pmatrix}.
  } 
  Since $\begin{pmatrix}A & B\end{pmatrix}$
  has full rank, the middle matrix is invertible for $h$ small enough, and 
  \eq{
  	G_h^{-1}=\begin{pmatrix}I_N & -\frac{1}{h}I_N\\0 & \frac{1}{h}I_N\end{pmatrix}
  \begin{pmatrix} A^*A & A^*B_h\\B_h^*A & B_h^*B_h\end{pmatrix}^{-1}
  \begin{pmatrix}I_N & 0\\ -\frac{1}{h}I_N & \frac{1}{h} I_N\end{pmatrix}.
 }
  
Writing  $\begin{pmatrix} a & b\\ c & d\end{pmatrix}\eqdef\begin{pmatrix} A^*A & A^*B_h\\B_h^*A & B_h^*B_h\end{pmatrix}$, the block inversion formula yields
\eq{
	\begin{pmatrix} a & b\\ c & d\end{pmatrix}^{-1}=  \begin{pmatrix}
    a^{-1} + a^{-1}bS^{-1}ca^{-1} & -a^{-1}bS^{-1}\\
    -S^{-1}ca^{-1} & S^{-1}\end{pmatrix},
}
\eq{
	\qwhereq 
	S\eqdef d-ca^{-1}b = B_h^*B_h-B_h^*A(A^*A)^{-1}A^*B_h = B_h^*\Pi B_h
}
is indeed invertible for small $h$ since $\begin{pmatrix} A & B \end{pmatrix}$ has full rank. Moreover, $a^{-1}bS^{-1}= A^+B_h(B_h^*\Pi B_h)^{-1}$, and $S^{-1}ca^{-1}= (B_h^*\Pi B_h)^{-1}B_h^*A^{+,*}$.

Now, we evaluate $G_h^{-1}\begin{pmatrix} s\\s\end{pmatrix}=\begin{pmatrix}I_N & -\frac{1}{h}I_N\\0 & \frac{1}{h}I_N\end{pmatrix}
 \begin{pmatrix}
   a^{-1}s +  a^{-1}bS^{-1}ca^{-1}s\\
  -S^{-1}ca^{-1}s 
 \end{pmatrix}$. We obtain 
 \begin{align*}G_h^{-1}\begin{pmatrix} s\\s\end{pmatrix}= \frac{1}{h}  \begin{pmatrix}
   S^{-1}ca^{-1}s \\-S^{-1}ca^{-1}s 
 \end{pmatrix}+O(1) = \frac{1}{h} \begin{pmatrix} (B^*\Pi B)^{-1}B^*A^{+,*}s\\-(B^*\Pi B)^{-1}B^*A^{+,*}s \end{pmatrix}+O(1).
 \end{align*}

 Eventually, by Lemma~\ref{lem-apx-inverse}, $\begin{pmatrix}
    A& A+hB_h
  \end{pmatrix}^+= \begin{pmatrix}I_N & -\frac{1}{h}I_N\\0 & \frac{1}{h}I_N\end{pmatrix}
  \begin{pmatrix} A^*A & A^*B_h\\B_h^*A & B_h^*B_h\end{pmatrix}^{-1}
\begin{pmatrix}A^*\\B_h^*\end{pmatrix}$.
We obtain 
\begin{align*}\begin{pmatrix}
    A& A+hB_h
  \end{pmatrix}^+&= \begin{pmatrix}I_N & -\frac{1}{h}I_N\\0 & \frac{1}{h}I_N\end{pmatrix}
\begin{pmatrix}
  A^+ -A^+B_h(B_h^*\Pi B_h)^{-1}B_h^*\Pi\\
  -(B_h^*\Pi B_h)^{-1}B_h^*\Pi
\end{pmatrix}\\
\intertext{and we deduce}
\begin{pmatrix}
    A& A+hB_h
  \end{pmatrix}^+&= \frac{1}{h}\begin{pmatrix}
  (B^*\Pi B)^{-1}B^*\Pi \\ -(B^*\Pi B)^{-1}B^*\Pi
\end{pmatrix}+O(1),\\
\mbox{and }
\begin{pmatrix}
    A& A+hB_h
  \end{pmatrix}^{+,*}\begin{pmatrix}
    s\\s
  \end{pmatrix}&= \begin{pmatrix}
    A^{+,*}-\Pi B_h(B_h\Pi B_h)^{-1}B_h^*A^{+,*} & \Pi B_h(B_h^*\Pi B_h)^{-1}
  \end{pmatrix}\\
&\qquad \qquad \qquad\begin{pmatrix}I_N &0 \\ -\frac{1}{h}I_N & \frac{1}{h}I_N\end{pmatrix}
\begin{pmatrix}s\\s\end{pmatrix}\\
&= A^{+,*}s -\Pi B(B^*\Pi B)^{-1}B^*A^{+,*}s + O(h).
\end{align*}
\end{proof}

\begin{lem}\label{lem-apx-cbpdl}
Let $A,B,C,C_h\colon \RR^N\rightarrow L^2(\TT)$ be linear operators such that $C_h=C +o(1)$ for $h>0$, and that $\begin{pmatrix}
    A & B & C
  \end{pmatrix}$ has full rank. Let $\tilde{\Pi}$ be the orthogonal projector onto $(\Im \begin{pmatrix}A&B\end{pmatrix})^\perp$, and let 
  \eq{
  	G_h\eqdef \begin{pmatrix}
    (A+\frac{h}{2}B)^*\\
    (A-\frac{h}{2}B)^*\\
    (A+\frac{h}{2}B+h^3 C_h)^*
  \end{pmatrix}\begin{pmatrix}
   A+\frac{h}{2}B & A-\frac{h}{2}B & A+\frac{h}{2}B+ h^3 C_h
  \end{pmatrix}.
  } 
  Then for $h>0$ small enough, $G_h$ and $C^*\tilde{\Pi} C$ are invertible, and
\begin{align*}
 	G_h^{-1}\begin{pmatrix}\bun_N\\\bun_N\\\bun_N\end{pmatrix} 
	&= -\frac{1}{h^3}\begin{pmatrix} -I_N\\0\\I_N\end{pmatrix} (C^*\tilde{\Pi} C)^{-1}C^*\begin{pmatrix} A& B\end{pmatrix}^{+,*} \begin{pmatrix} \bun_N\\0\end{pmatrix} +o\left(\frac{1}{h^3}\right)
\end{align*}
\begin{align*}	
\begin{pmatrix}
 A+\frac{h}{2}B & A-\frac{h}{2}B & A+\frac{h}{2}B+h^3 C_h
\end{pmatrix}^+&= \frac{1}{h^3}  \begin{pmatrix}
    -(C^*\tilde{\Pi} C)^{-1}C^*\tilde{\Pi} \\ 0 \\(C^*\tilde{\Pi} C)^{-1}C^*\tilde{\Pi} 
    \end{pmatrix}+ o\left(\frac{1}{h^3}\right),
  \end{align*}
  but
\begin{align*}
    \begin{pmatrix}A^*+\frac{h}{2}B^* \\ A^*-\frac{h}{2}B^* \\ A^*+\frac{h}{2}B^*+h^3 C_h^*\end{pmatrix}^+
    \begin{pmatrix}\bun_N\\\bun_N\\\bun_N\end{pmatrix}
    = \begin{pmatrix}A^*\\ B^*\end{pmatrix}^{+}\begin{pmatrix}\bun_N\\0\end{pmatrix}-\tilde{\Pi}C(C^*\tilde{\Pi}C)^{-1}C^*\begin{pmatrix}A^*\\ B^*\end{pmatrix}^{+}\begin{pmatrix}\bun_N\\0\end{pmatrix}\\
    \qquad\qquad\qquad+o(1).
\end{align*}
\end{lem}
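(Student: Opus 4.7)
The strategy mirrors that of Lemma~\ref{lem-apx-asymptolasso}, extended to a three-operator setting. The key is to factor
$$\begin{pmatrix} A+\tfrac{h}{2}B & A-\tfrac{h}{2}B & A+\tfrac{h}{2}B+h^3 C_h \end{pmatrix} = E_h M_h,$$
where $E_h \eqdef \begin{pmatrix} A & B & C_h \end{pmatrix}$ stays well-conditioned as $h\to 0$ (its limit $\begin{pmatrix}A & B & C\end{pmatrix}$ has full rank by hypothesis, and $C_h = C + o(1)$), while
$$M_h = \begin{pmatrix} I_N & I_N & I_N \\ \tfrac{h}{2}I_N & -\tfrac{h}{2}I_N & \tfrac{h}{2}I_N \\ 0 & 0 & h^3 I_N \end{pmatrix},\qquad M_h^{-1} = \begin{pmatrix} \tfrac{1}{2}I_N & \tfrac{1}{h}I_N & -\tfrac{1}{h^3}I_N \\ \tfrac{1}{2}I_N & -\tfrac{1}{h}I_N & 0 \\ 0 & 0 & \tfrac{1}{h^3}I_N \end{pmatrix},$$
concentrates all the ill-conditioning. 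Setting $K_h \eqdef E_h^*E_h$, one has $G_h = M_h^* K_h M_h$, hence $G_h^{-1} = M_h^{-1} K_h^{-1} M_h^{-*}$; and by Lemma~\ref{lem-apx-inverse}, the pseudoinverse is $(E_h M_h)^+ = M_h^{-1} K_h^{-1} E_h^*$.

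Next I would apply block inversion (Schur complement) to $K_h$ with the splitting $E_h = \begin{pmatrix} P & C_h \end{pmatrix}$, where $P \eqdef \begin{pmatrix} A & B \end{pmatrix}$: the Schur complement is $S_h = C_h^* \tilde{\Pi} C_h \to C^* \tilde{\Pi} C$ (invertible because $\begin{pmatrix}A & B & C\end{pmatrix}$ has full rank), so $K_h^{-1}$ has explicit $O(1)$ entries with computable limits. For the first formula, note that $M_h^{-*}\bigl(\bun_N,\bun_N,\bun_N\bigr)^\top = \bigl(\bun_N, 0, 0\bigr)^\top$, and block inversion yields that the third $N$-block of $K_h^{-1}(\bun_N, 0, 0)^\top$ converges to $-(C^*\tilde{\Pi}C)^{-1} C^* P^{+,*}(\bun_N,0)^\top$ while the first two blocks stay $O(1)$; multiplying by $M_h^{-1}$, only the $\pm\frac{1}{h^3}I_N$ entries produce a $1/h^3$ contribution, giving exactly the claimed asymptotic. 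The second formula is handled identically: block inversion applied to $K_h^{-1} E_h^*$ shows its last $N$ rows converge to $(C^*\tilde{\Pi}C)^{-1} C^*\tilde{\Pi}$, and the $\frac{1}{h^3}I_N$ entries of $M_h^{-1}$ extract the leading order.

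For the third formula, one writes $(D_h^+)^*(\bun_N,\bun_N,\bun_N)^\top = E_h K_h^{-1} M_h^{-*}(\bun_N,\bun_N,\bun_N)^\top = E_h K_h^{-1}(\bun_N, 0, 0)^\top$; since $E_h \to E_0 = \begin{pmatrix}A & B & C\end{pmatrix}$ in operator norm and $K_h^{-1}(\bun_N,0,0)^\top$ has a finite limit, this quantity converges and only $o(1)$ remains. The main obstacle will be the algebraic step identifying the limit with the target: writing the limit as $P\bigl(\eta_0,\zeta_0\bigr)^\top + C\theta_0$ and using $PP^+ = I - \tilde{\Pi}$ together with $a_0^{-1} = (P^*P)^{-1}$, the cross-term $P a_0^{-1} P^* C S_0^{-1} C^* P^{+,*}(\bun_N,0)^\top$ combines with $C\theta_0 = -CS_0^{-1}C^* P^{+,*}(\bun_N,0)^\top$ to produce $-\tilde{\Pi} C (C^*\tilde{\Pi}C)^{-1} C^* P^{+,*}(\bun_N,0)^\top$, recovering the stated form. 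This is exactly the same telescoping as in the proof of Lemma~\ref{lem-apx-asymptolasso} but one block larger, so while routine it is where most of the bookkeeping lies.
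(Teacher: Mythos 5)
Your proposal is correct and follows essentially the same route as the paper: the factorization $E_hM_h$ with $E_h=\begin{pmatrix}A&B&C_h\end{pmatrix}$ is exactly the paper's $\begin{pmatrix}A&B&C_h\end{pmatrix}\diag(1,\tfrac{h}{2},h^3)T$ with the two right factors merged, the Schur-complement inversion of $K_h=E_h^*E_h$ with respect to the splitting $\begin{pmatrix}P&C_h\end{pmatrix}$ is the paper's block inversion, and the observations $M_h^{-*}(\bun_N,\bun_N,\bun_N)^\top=(\bun_N,0,0)^\top$ and the telescoping via $PP^+=I-\tilde{\Pi}$ for the third formula are identical to the paper's computation. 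No gaps.
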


\begin{proof}
    Observe that 
\begin{align*}
\begin{pmatrix}
 A+\frac{h}{2}B & \!A-\frac{h}{2}B & \!A+\frac{h}{2}B+ h^3 C_h
\end{pmatrix}
	\!=\!
	\begin{pmatrix} A\! & B\! & C_h  \end{pmatrix}
\diag\left(1, \frac{h}{2}, h^3 \right)
  \begin{pmatrix}I_N & I_N & I_N\\ I_N &-I_N & I_N\\ 0 & 0 &I_N  \end{pmatrix}
\end{align*}

As a result, for $h>0$ small enough $G_h$ is invertible and 
\begin{align*}
  G_h^{-1}&=\begin{pmatrix}\frac{1}{2} I_N & \frac{1}{2} I_N & -I_N\\ \frac{1}{2}I_N &-\frac{1}{2} I_N & 0\\ 0 & 0 &I_N  \end{pmatrix}
  \diag\left(1,\frac{2}{h},\frac{1}{h^3}\right)
  \begin{pmatrix} A^*A & A^*B & A^*C_h\\ B^*A & B^*B& B^*C_h\\ C_h^*A & C_h^*B & C_h^*C_h\end{pmatrix}^{-1}\\
 &\qquad \qquad\times\diag\left(1,\frac{2}{h},\frac{1}{h^3}\right)
  \begin{pmatrix}\frac{1}{2} I_N & \frac{1}{2} I_N & 0\\ \frac{1}{2}I_N &-\frac{1}{2} I_N & 0\\ -I_N & 0 &I_N  \end{pmatrix}
\end{align*}
the middle matrix being invertible from the full rank assumption on $\begin{pmatrix}A & B & C\end{pmatrix}$. Moreover, 
 writing  $\tilde{\Gamma}\eqdef\begin{pmatrix}A & B\end{pmatrix}$ and $\begin{pmatrix} a & b\\ c & d\end{pmatrix}\eqdef \begin{pmatrix} \tilde{\Gamma}^*\tilde{\Gamma} & \tilde{\Gamma}^*C_h \\ C_h^*\tilde{\Gamma} & C_h^*C_h\end{pmatrix}$,  we obtain 
\begin{align*}
  \begin{pmatrix} A^*A & A^*B & A^*C_h\\ B^*A & B^*B& B^*C_h\\ C_h^*A & C_h^*B & C_h^*C_h\end{pmatrix}^{-1}
 &= \begin{pmatrix}
    u & -a^{-1}bS^{-1}\\
    -S^{-1}ca^{-1} & S^{-1}\end{pmatrix}
\end{align*}
where $u \eqdef a^{-1} + a^{-1}bS^{-1}ca^{-1}$, $S \eqdef d-ca^{-1}b=C_h^*\tilde{\Pi} C_h$, $a^{-1}bS^{-1} = (\tilde{\Gamma}^*\tilde{\Gamma})^{-1} \tilde{\Gamma}^*C_h(C_h^*\tilde{\Pi} C_h)^{-1}$, $S^{-1}ca^{-1}=(C_h^*\tilde{\Pi} C_h)^{-1} C_h^*\tilde{\Gamma}(\tilde{\Gamma}^*\tilde{\Gamma})^{-1}$, and $\tilde{\Pi}$ is the orthogonal projector onto $(\Im \tilde{\Gamma})^\perp$.
Thus 
\begin{align*}
  G_h^{-1}\begin{pmatrix}\bun_N\\\bun_N\\\bun_N\end{pmatrix} &= \begin{pmatrix}\frac{1}{2} I_N & \frac{1}{2} I_N & -I_N\\ \frac{1}{2}I_N &-\frac{1}{2} I_N & 0\\ 0 & 0 &I_N  \end{pmatrix}
\diag\left(1,\frac{2}{h},\frac{1}{h^3}\right)
\begin{pmatrix}
    u & -a^{-1}bS^{-1}\\
    -S^{-1}ca^{-1} & S^{-1}\end{pmatrix}
\begin{pmatrix} \bun_N\\0\\0\end{pmatrix}\\
&=\frac{1}{h^3} \begin{pmatrix}0 & 0 & -I_N\\ 0&0& 0\\ 0 & 0 &I_N  \end{pmatrix}
\begin{pmatrix}
   u \begin{pmatrix} \bun_N\\0\end{pmatrix} \\ 
   -S^{-1}ca^{-1}\begin{pmatrix} \bun_N\\0\end{pmatrix}
\end{pmatrix}+o\left(\frac{1}{h^3}\right) \\
&= -\frac{1}{h^3}\begin{pmatrix} -I_N\\0\\I_N\end{pmatrix} (C^*\tilde{\Pi} C)^{-1}C^*\tilde{\Gamma}^{+,*} \begin{pmatrix} \bun_N\\0\end{pmatrix} +o\left(\frac{1}{h^3}\right).
\end{align*}

Eventually, one has 
\begin{align*}
  &\begin{pmatrix}
 A+\frac{h}{2}B & A-\frac{h}{2}B & A+\frac{h}{2}B+ h^3 C_h
\end{pmatrix}^+\\
&=\begin{pmatrix}\frac{1}{2} I_N & \frac{1}{2} I_N & -I_N\\ \frac{1}{2}I_N &-\frac{1}{2} I_N & 0\\ 0 & 0 &I_N  \end{pmatrix}
  \diag(1,\frac{2}{h},\frac{1}{h^3})
\begin{pmatrix}
  \tilde{\Gamma} & C_h
\end{pmatrix}^+ \\
&=\frac{1}{h^3} \begin{pmatrix}0 & 0 & -I_N\\ 0&0& 0\\ 0 & 0 &I_N  \end{pmatrix}\begin{pmatrix}
    u & -a^{-1}bS^{-1}\\
    -S^{-1}ca^{-1} & S^{-1}\end{pmatrix} \begin{pmatrix} \tilde{\Gamma}^*& C_h^*\end{pmatrix}+o\left(\frac{1}{h^3}\right)\\
  &= \frac{1}{h^3}  \begin{pmatrix}
    -(C^*\tilde{\Pi} C)^{-1}C^*\tilde{\Pi} \\ 0 \\ (C^*\tilde{\Pi} C)^{-1}C^*\tilde{\Pi} 
    \end{pmatrix}+ o\left(\frac{1}{h^3}\right),\\
    \intertext{and }
&\begin{pmatrix}
 A+\frac{h}{2}B & A-\frac{h}{2}B & A+\frac{h}{2}B+ h^3 C_h
\end{pmatrix}^{+,*} \begin{pmatrix}\bun_N\\\bun_N\\\bun_N\end{pmatrix}\\
&=\begin{pmatrix}\tilde{\Gamma}^* \\ C_h^*\end{pmatrix}^+
  \diag(1,\frac{2}{h},\frac{1}{h^3})
\begin{pmatrix}\frac{1}{2} I_N & \frac{1}{2} I_N & 0\\ \frac{1}{2}I_N &-\frac{1}{2} I_N & -I_N\\ 0 & 0 &I_N\end{pmatrix}
\begin{pmatrix}\bun_N\\\bun_N\\\bun_N\end{pmatrix}\\
&= \left[\tilde{\Gamma}^{+,*}-\tilde{\Pi}C_h(C_h^*\tilde{\Pi}C_h)^{-1}C_h^*\tilde{\Gamma}^{+,*}\right]\begin{pmatrix}\bun_N\\0\end{pmatrix}\\
&= \tilde{\Gamma}^{+,*}\begin{pmatrix}\bun_N\\0\end{pmatrix}-\tilde{\Pi}C(C^*\tilde{\Pi}C)^{-1}C^*\tilde{\Gamma}^{+,*}\begin{pmatrix}\bun_N\\0\end{pmatrix}+o(1)
\end{align*}


\end{proof}

\printbibliography

\end{document}